%
%
%
%
%
\documentclass[oneside,final,letter]{ucr}
\usepackage{anysize}
\usepackage[left=1.5in,top=1.5in,right=1in,bottom=1in]{geometry}
\usepackage{amssymb,amsmath,amsthm,mathrsfs,xspace,multicol,stmaryrd}
\usepackage[mathscr]{eucal}
\usepackage{amscd}

\usepackage{graphicx}
\usepackage{pdfsync}
\usepackage{fncylab}
\usepackage[usenames]{color}
\usepackage{multirow}
\usepackage{hyperref}
\usepackage[mathscr]{eucal}
\usepackage[all]{xy}
\usepackage{epsfig}

\setcounter{secnumdepth}{3}
\setcounter{tocdepth}{3}

\input{epsf}

\newtheorem{theorem}{Theorem}[chapter]

\newtheorem{corollary}[theorem]{Corollary}

\newtheorem{definition}[theorem]{Definition}

\newtheorem{lemma}[theorem]{Lemma}

\newtheorem{prop}[theorem]{Proposition}

\newtheorem{proposition}[theorem]{Proposition}


\newtheorem*{theorem*}{Theorem}
\newtheorem*{definition*}{Definition}
\newtheorem*{lemma*}{Lemma}
\newtheorem*{corollary*}{Corollary}
\newtheorem*{proposition*}{Proposition}
\newtheorem*{example*}{Example}
\newtheorem*{conjecture*}{Conjecture}
\newtheorem*{remark*}{Remark}
\newtheorem*{notation*}{Notation}
\newtheorem*{convention*}{Convention}




\theoremstyle{definition}
\newtheorem{example}[theorem]{Example}
\theoremstyle{remark}
\newtheorem{remark}[theorem]{Remark}


\newcommand{\blankbrac}{[ \cdot,\cdot ]}

\newcommand{\brac}[2]{\left \{ #1,#2 \right\}}
\newcommand{\bbrac}[2]{\left \llbracket #1,#2 \right \rrbracket}

\newcommand{\sbrac}[2]{\left [ #1,#2 \right ]}
\newcommand{\lbrac}[2]{\left [ #1,#2 \right ]_{L}}
\newcommand{\lpbrac}[2]{\left [ #1,#2 \right ]_{L'}}

\newcommand{\cbrac}[2]{\left [ #1,#2 \right ]_{C}}
\newcommand{\tcbrac}[2]{\left [ #1,#2 \right ]_{C}}
\newcommand{\abrac}[2]{\left [#1,#2 \right]_{A}}
\newcommand{\aibrac}[2]{\left [#1,#2 \right]_{i}}
\newcommand{\tabrac}[2]{\left [#1,#2 \right]_{A}}
\newcommand{\dbrac}[2]{\left \llbracket #1,#2 \right \rrbracket_{C}}
\newcommand{\tdbrac}[2]{\left \llbracket #1,#2 \right  \rrbracket_{C}}

\newcommand{\innerprodp}[2]{\bigl \langle #1  ,  #2 \bigr \rangle^{+}}
\newcommand{\innerprodm}[2]{\bigl \langle #1 ,  #2 \bigr \rangle^{-} }

\newcommand{\innerprod}[2]{\bigl \langle #1  ,  #2 \bigr \rangle}

\newcommand{\ham}{\Omega^{1}_{\mathrm{Ham}}(M)}
\newcommand{\hamn}[1]{\Omega^{#1}_{\mathrm{Ham}}\left(M\right)}
\newcommand{\X}{\mathfrak{X}}
\newcommand{\Xham}{\mathfrak{X}_{\mathrm{Ham}}(M)}
\newcommand{\XGham}{\mathfrak{X}_{\mathrm{Ham}}(G)}
\newcommand{\hamG}[1]{\Omega^{#1}_{\mathrm{Ham}}\left(M\right)^G}
\newcommand{\Gham}{\Omega^{1}_{\mathrm{Ham}}\left(G\right)}
\newcommand{\hamL}{\Omega^{1}_{\mathrm{Ham}}\left(G\right)^{L}}
\newcommand{\cinf}{C^{\infty}}

\newcommand{\cinfL}{C^{\infty}\left(G\right)^{L}}
\newcommand{\lie}[2]{\mathcal{L}\,_{v_{#1}}\, #2}
\renewcommand{\L}{\mathcal{L}}

\newcommand{\icinf}{C^{\infty}_{\mathrm{Im}}}

\renewcommand{\d}{\mathbf{d}}

\newcommand{\ip}[1]{\iota_{v_{#1}}}

\newcommand{\pback}[1]{{\mu_{#1}}^{\ast}}
\newcommand{\pfor}[1]{{\mu_{#1}}_{\ast}}

\newcommand{\F}{\mathsf{F}}
\newcommand{\G}{\mathsf{G}}
\renewcommand{\S}{\mathsf{S}}

\newcommand{\VectH}{\mathfrak{X}_{\mathrm{Ham}}}

\newcommand{\alphak}[1]{\alpha_{1} \tensor \cdots \tensor \alpha_{#1}}
\newcommand{\alphadk}[1]{\alpha_{1},\hdots,\alpha_{#1}}
\newcommand{\alphask}[1]{\alpha_{\sigma(1)} \tensor \cdots \tensor \alpha_{\sigma(#1)}}
\newcommand{\alphasdk}[1]{\alpha_{\sigma(1)}, \hdots,\alpha_{\sigma(#1)}}
\newcommand{\vk}[1]{v_{\alpha_{1}} \wedge \cdots \wedge  v_{\alpha_{#1}}}

\newcommand{\LX}{\mathfrak{X}^{\wedge {\bullet}}}

\newcommand{\Sh}{\mathrm{Sh}}
\newcommand{\Sn}{\mathcal{S}}
\renewcommand{\deg}[1]{\left \lvert #1 \right \rvert}

\newcommand{\Lie}{L_{\infty}}

\newcommand{\Co}{\mathrm{Co}}
\newcommand{\sh}[1]{\underline{#1}}

\newcommand{\cU}{\mathcal{U}}

\newcommand{\Op}{\mathsf{Open}}
\newcommand{\Cat}{\mathsf{Cat}}
\newcommand{\Set}{\mathsf{Set}}
\newcommand{\Tor}{\mathsf{Tor}}
\newcommand{\Shf}{\mathsf{Sh}}

\newcommand{\HVB}{\mathsf{Bund}}
\newcommand{\Quant}{\mathsf{Quant}}

\newcommand{\Des}{\mathsf{Des}}
\newcommand{\cOmega}{\Omega_{\mathrm{cl}}}

\DeclareMathOperator{\hol}{\mathrm{hol}}
\DeclareMathOperator{\Leib}{\mathrm{Leib}}

\newcommand{\N}{\mathbb{N}}
\newcommand{\R}{\mathbb{R}}
\newcommand{\C}{\mathbb{C}}
\newcommand{\Z}{\mathbb{Z}}
\newcommand{\Zi}{\mathbb{Z}(1)}
\newcommand{\Ri}{\mathbb{R}(1)}
\newcommand{\Cx}{\mathbb{C}^{\times}}


\newcommand{\inp}[2]{\langle #1,#2 \rangle}
\newcommand{\g}{\mathfrak{g}}

\newcommand{\su}{\mathfrak{su}}
\renewcommand{\u}{\mathfrak{u}}

\newcommand{\CP}{\mathbb{C}\mathrm{P}}
\newcommand{\U}{\mathrm{U}}
\newcommand{\SU}{\mathrm{SU}}

\DeclareMathOperator{\ad}{\mathrm{ad}}
\DeclareMathOperator{\Ad}{\mathrm{Ad}}

\newcommand{\tensor}{\otimes}
\renewcommand{\perp}{\bot}
\newcommand{\conn}{\nabla}

\newcommand{\maps}{\colon}
\renewcommand{\i}{\sqrt{-1}}
\newcommand{\del}{\partial}
\newcommand{\half}{\frac{1}{2}}

\newcommand{\epi}{\twoheadrightarrow}
\newcommand{\iso}{\stackrel{\sim}{\to}}
\newcommand{\xto}[1]{\xrightarrow{#1}}
\newcommand{\embed}{\hookrightarrow}
\renewcommand{\ss}{\subseteq}
\newcommand{\vphi}{\varphi}

\newcommand{\dlog}{d_{\log}}
\renewcommand{\b}{\bullet}

\newcommand{\noi}{\noindent}

\DeclareMathOperator{\tpi}{2\pi}
\DeclareMathOperator{\Aut}{\mathrm{Aut}}
\DeclareMathOperator{\Hom}{\mathrm{Hom}}
\DeclareMathOperator{\End}{\mathrm{End}}
\DeclareMathOperator{\id}{\mathrm{id}}
\DeclareMathOperator{\im}{\mathrm{im}}

\DeclareMathOperator{\cp}{\mathrm{c.\!p}}

\begin{document}


\title{Higher Symplectic Geometry}
\author{Christopher Lee Rogers}
\degreemonth{June}
\degreeyear{2011}
\degree{Doctor of Philosophy}
\chair{Professor John C.\ Baez}
\othermembers{Professor Julia Bergner\\
Professor Yat-Sun Poon}

\numberofmembers{3}
\field{Mathematics}
\campus{Riverside}

\maketitle
\copyrightpage{}
\approvalpage{}

\degreesemester{Spring}

\begin{frontmatter}

\begin{acknowledgements}
I wish to thank my advisor John Baez for his encouragement and guidance in
completing this thesis. I would also like to thank Julie Bergner and
Yat-Sun Poon for serving on my committee.

I wish to acknowledge the following individuals for
helpful discussions, comments, and suggestions:
Maarten Bergvelt, Henrique Bursztyn, Jim Dolan, Vasily Dolgushev,
Yael Fregier, Alex Hoffnung, Allen Knutson, Dmitry Roytenberg, Urs Schreiber,
Jim Stasheff, Danny Stevenson, Thomas Strobl,  Alan Weinstein, and Marco Zambon.

Part of this work was completed while I was a Junior Research Fellow at
the Erwin Schr\"{o}dinger International Institute for Mathematical
Physics. I thank them for their hospitality and
financial support. Additional support for this work was provided by
NSF grants PHY-0653646 and DMS-0856196, and FQXi grant RFP2-08-04.
\end{acknowledgements}

\begin{abstract}
 
  In higher symplectic geometry, we consider generalizations of symplectic manifolds
  called $n$-plectic manifolds. We say a manifold is
  $n$-plectic if it is equipped with a closed, nondegenerate form of
  degree $(n+1)$. We show that certain higher algebraic and geometric
  structures naturally arise on these manifolds. These structures can be
  understood as the categorified or homotopy analogues of important
  structures studied in symplectic geometry and geometric
  quantization. Our results imply that higher symplectic geometry
  is closely related to several areas of current interest including string
  theory, loop groups, and generalized geometry.

  We begin by showing that, just as a symplectic manifold gives a
  Poisson algebra of functions, any $n$-plectic manifold gives a Lie
  $n$-algebra containing certain differential forms which we call
  Hamiltonian.  Lie $n$-algebras are examples of strongly homotopy Lie
  algebras.  They consist of an $n$-term chain complex equipped with a
  collection of skew-symmetric multi-brackets that satisfy a
  generalized Jacobi identity.

  We then develop the machinery necessary to geometrically quantize
  $n$-plectic manifolds. In particular, just as a prequantized
  symplectic manifold is equipped with a principal $\U(1)$-bundle with
  connection, we show that a prequantized 2-plectic manifold is equipped with a
  $\U(1)$-gerbe with 2-connection. A gerbe is a
  categorified sheaf, or stack, which generalizes the notion of a principal
  bundle. Furthermore, over any 2-plectic manifold there is a vector
  bundle equipped with extra structure called a Courant
  algebroid.  This bundle is the 2-plectic analogue of the
  Atiyah algebroid over a prequantized symplectic manifold. Its space of
  global sections also forms a Lie 2-algebra. We use this Lie
  2-algebra to prequantize the Lie 2-algebra of Hamiltonian forms.
  
  Finally, we introduce the 2-plectic analogue of the Bohr-Sommerfeld
  variety associated to a real polarization, and use this to
  geometrically quantize 2-plectic manifolds. 
  For symplectic manifolds, the output from quantization is a Hilbert space of quantum
  states. Similarly, quantizing a 2-plectic manifold gives a category of quantum states. We consider a particular
  example in which the objects of this category can be identified with representations
  of the Lie group $\SU(2)$.
\end{abstract} 

\tableofcontents
\end{frontmatter}


\chapter{Introduction}
Higher symplectic geometry is a generalization of symplectic geometry which
begins with considering manifolds equipped with a closed nondegenerate form of
higher degree. This thesis explains how such a differential form gives
rise to algebraic and geometric structures which act as the higher
analogues of important structures found in symplectic geometry and geometric quantization.
Indeed, a recurring theme in this work is the idea that basic results in
symplectic geometry are specific instances of more general theorems which hold for a 
much larger class of structures. 

In particular, we focus on manifolds equipped with a closed nondegenerate
3-form. We call such manifolds `2-plectic'.  In this case, we see that higher symplectic geometry is intimately
related to string theory. We use ideas from higher category theory and homotopical algebra 
 to develop a geometric quantization procedure for 2-plectic manifolds. In doing so, we encounter 
structures known to play important roles in other string-inspired areas of current interest. These include
the theory of $L_{\infty}$-algebras, loop groups, gerbes, and generalized geometry. Our results shine new light on these structures, and suggest new relationships among the above fields. We invite the reader who has some familiarity with these ideas to skip ahead and browse Table \ref{intro_table}. There we list examples of such structures and the roles they play in the quantization of 2-plectic manifolds.

We wish to provide in this introductory chapter a gentle overview of the basic ideas behind higher symplectic geometry, and describe, with some detail, the main results of this thesis. We begin with a brief survey of symplectic geometry and geometric quantization which emphasizes the role played by classical and quantum mechanics.
Higher symplectic geometry is then introduced as a consequence of combining two known approaches to studying classical field theory: multisymplectic geometry and higher gauge theory. We conclude by providing a chapter-by-chapter summary of our main results.  

\section*{Symplectic geometry and geometric quantization}
Symplectic geometry is the study of manifolds equipped with a closed
nondegenerate 2-form. nondegeneracy, in this context,
means that the 2-form gives an isomorphism between the space of tangent vectors
and the space of 1-forms by contraction or ``lowering indices''. 
Such a 2-form produces a variety of interesting algebraic and
geometric structures. Symplectic manifolds
appear in many branches of mathematics and these structures often provide
useful characterizations of important phenomena. 
In particular, symplectic manifolds play a crucial role in 
classical mechanics and representation theory.

The origins of symplectic geometry, in fact, lie in classical
mechanics. In classical mechanics, one studies the physics of a system of
point-like particles. For many systems of interest, the state of the system at
any time is uniquely determined by specifying the position
and momentum of each particle. This state can be interpreted as a
point in a manifold called the `phase space' of the system.  The
time evolution of the system is therefore represented by a smooth path in
this manifold, which is a solution to an ordinary differential
equation called `Hamilton's equation'. Physical observables of the
system are smooth functions on the manifold. Measurement of an
observable corresponds to evaluating the function at a particular
a point of phase space.
Remarkably, the structures needed to guarantee a
solution to Hamilton's equation, and also to describe how measurements
change in time, are provided by equipping the manifold with a
symplectic 2-form.

For example, the nondegeneracy of the symplectic 2-form guarantees that
Hamilton's equations have, at least for some interval of time, a
solution. More interestingly, the symplectic structure
makes the space of functions on the manifold into a special kind of
Lie algebra called a Poisson algebra. The fact that the symplectic
2-form is closed implies that the corresponding bracket satisfies the
Jacobi identity. This Lie bracket is used to compute the time evolution of observables.

There are many systems of interest, however, which must be studied by
using quantum mechanics, instead of classical
mechanics.  In these cases, classical mechanics can be
understood as a very rough approximation to the true physical behavior of
the system. In their attempts to understand such quantum systems,
physicists developed a process called `quantization' in which
one first considers a system classically, and then replaces these structures with their
quantum analogues. Roughly speaking, in quantum mechanics 
the states of the system no longer correspond to points on a manifold, but
rather to vectors in a Hilbert space. Observables no longer correspond to
functions on a manifold, but rather to linear operators on the Hilbert space.
The time evolution of a system is given by
a solution to a partial differential equation called `Schr\"{o}dinger's
equation', rather than Hamilton's equation. The time evolution of
observables is now determined by the commutator bracket of operators,
rather than the Poisson bracket of functions. 

Hence, within the context of symplectic geometry, the physicists' findings
suggests that quantization is a procedure which involves assigning to
a symplectic manifold a Hilbert space, and to the Poisson algebra a
representation as linear operators on this space. This is, in fact, the first step
of a rigorous procedure called `geometric quantization' developed by
Kirillov \cite{Kirillov:2004}, Kostant \cite{Kostant:1970}, and
Souriau \cite{Souriau:1967} (KKS) in the 1960's. It is based on the
following facts: If a symplectic 2-form satisfies a certain
integrality condition, then it must be the curvature of a principal
$\U(1)$-bundle equipped with a connection living over the manifold. Such a
symplectic manifold is called `prequantizable'. Certain global
sections of the associated Hermitian line bundle form a Hilbert space
whose inner product is given by the symplectic structure.
The connection on the bundle then determines a faithful
representation of the Poisson algebra as operators on this prequantum Hilbert space. 

However, in practice, this Hilbert space is ``too large''.
The second step in the KKS procedure involves choosing an additional
structure on the manifold called a `polarization'. Roughly speaking, a
polarization on a symplectic manifold is a special kind of
integrable distribution \cite{Sniatycki:1980,Woodhouse:1991}. 
The size of the Hilbert space is reduced by
considering only those sections that are covariantly constant in the
directions given by vectors contained in the distribution. 
This smaller space is called the `quantum Hilbert space', or `space of
quantum states'.

Geometric quantization may appear, at first sight, to be a rather
mysterious procedure with limited applicability. Not every symplectic
manifold is prequantizable, and not every prequantized symplectic
manifold admits a polarization. Even when such structures do exist,
there are several non-canonical choices to be made. Furthermore,
the presence of certain topological obstructions often implies that additional
fine-tuning is required. Regardless, the KKS procedure is very powerful and has led
to a large number of important results, for example, in the representation theory of
Lie groups. Here, one typically studies the symmetry group of a geometric
object by first understanding the algebraic representation theory of the
group. Kirillov and Kostant's original motivation for developing
geometric quantization was, in some sense, the converse: to construct the
representations of groups as geometric objects.  Indeed, the central
tenet of Kirillov's orbit method \cite{Kirillov:2004} is, roughly, that an irreducible
representation of a Lie group corresponds to a particular symplectic manifold
equipped with an action of the group. The representation itself is
recovered as the quantum Hilbert space obtained from geometric
quantization.


\section*{Higher degree, higher dimension, and higher structure}
After digesting all of this, the curious reader
might ask a simple question: What is so special about 2-forms? 
After all, many manifolds admit interesting closed forms
of higher degrees, and some of these, such as volume forms, are
``nondegenerate''. It is also reasonable to ask how
much, if any, of the above story involving symplectic geometry and quantization
carries over to manifolds equipped with such forms. 
The main goal of this thesis is to address these questions.

At its most basic level, higher symplectic geometry involves studying
manifolds equipped with a closed, nondegenerate form of higher degree.
We call such a manifold `$n$-plectic' if the form has degree $(n+1)$, so
that a 1-plectic manifold is a symplectic manifold.
Here, nondegeneracy means that the $n$-plectic form injectively
maps the space of tangent vectors into the space of $n$-forms, again by contraction. In
contrast with the symplectic case, this injection is not necessarily an isomorphism.
Many examples of $n$-plectic manifolds appear ``in nature''. 
These include orientable manifolds, exterior powers of cotangent bundles, and compact simple Lie groups.

Usually, $n$-plectic manifolds go by the name of
multisymplectic manifolds \cite{Cantrijn:1999}. 
Just as symplectic geometry has its origins in the classical mechanics
of particles, multisymplectic geometry was initially
developed to study higher-dimensional classical field theories.
Let us briefly explain what this means. As previously mentioned, 
the time evolution of a point-like particle is described by a path which
depends on one variable: time. So, the `world-line' of a
zero-dimensional object is determined by a map from a one-dimensional
manifold. A physicist might call classical mechanics a
$(0+1)$-dimensional field theory. However, describing the behavior
of a higher-dimensional object, such as a string, requires
more variables. The amplitude of a vibrating string depends on both
time and the position along the string. Hence, the time evolution of the
one-dimensional string is described by a map from a 2-dimensional manifold or
`world-sheet'.  In this way, string theory  is a $(1+1)$-dimensional field theory.
In general, the physics of a $(n-1)$-dimensional object, or `brane',
is described by a $n$-dimensional field theory.

The basic ideas in multisymplectic geometry can be found in 
Weyl's 1935 work on the calculus of variations \cite{Weyl:1935}.
It was further developed in the 1970's mainly by the Polish school
of mathematical physics. The work of Kijowski \cite{Kijowski:1973gi},
Tulczyjew \cite{KT:1979}, and others \cite{Roman-Roy:2009}
showed that, just as symplectic manifolds can by used as phase spaces
for $(0+1)$-dimensional field theories, multisymplectic manifolds can
be used as `multiphase' spaces for higher-dimensional field
theories. Specifically, the multiphase space used to describe the
physics of an $(n-1)$-dimensional object is an $n$-plectic manifold.
A solution to a partial differential equation called the de
Donder-Weyl equation corresponds to a particular $n$-dimensional
submanifold of this space. The data encoded by these submanifolds
include the value of the field as well as the value of its
`multi-momentum' at each point in space and time.  The multi-momentum
is a quantity that is related to the time and spatial derivatives of
the field, in a manner similar to the relationship between the
velocity of a point particle and its momentum. This formalism has
several attractive mathematical features, but it still needs further development
before it can replace more common frameworks used by physicists to
study field theories.

The work of Baez and Schreiber \cite{BaezSchreiber:2005}, Freed \cite{Freed:1994},
Schreiber \cite{Schreiber:2005}, Sati, Schreiber, and Stasheff
\cite{SSS:2009} suggests that structures found in classical
mechanics can be generalized by using higher category and homotopy
theory and then applied to the study of higher-dimensional field
theories. 
So far this viewpoint has been most fruitful in studying the
string and brane-theoretic generalizations of gauge
theory. Although the details are quite technical, the basic philosophy
behind higher gauge theory is very simple. While a
classical particle has a position nicely modelled by an element 
of a set, namely a point in space:
\[          \bullet  \]
the position of a classical string is better modelled by a morphism in a 
category, namely an unparametrized path in space:
\[
\xymatrix{
   \bullet \ar@/^1pc/[rr]^{}
&& \bullet
}
\]
Similarly, the time evolution of a particle 
can be thought of as a morphism, while the time evolution of a string 
can be thought of as a 2-morphism, or 2-cell:
\[
\xymatrix{
   \bullet \ar@/^1pc/[rr]_{}="0"
           \ar@/_1pc/[rr]_{}="1"
           \ar@{=>}"0";"1"^{}
&& \bullet 
}
\]

So, both higher degree forms on manifolds and higher
structures can be used to study higher-dimensional field theories. 
Motivated by this idea, we suspect 
that the higher analogues of well-known structures on
symplectic manifolds should naturally arise on $n$-plectic manifolds. 
The work presented in this thesis confirms this hunch, and
we understand higher symplectic geometry as the formalism which completes the following
diagram:
\vspace{.5cm}
\[
\xymatrix{
& \text{ \begin{tabular}{c} \textbf{higher-dimensional}\\ \textbf{field theories} \end{tabular}}&\\
&&&\\
\ar[uur]^{\scriptstyle \text{\begin{tabular}{c} multisymplectic\\geometry \end{tabular}}} 
\text{\begin{tabular}{c} \textbf{higher-degree forms}\\\textbf{on
      manifolds} \end{tabular}}
\ar @{-->}[rr]^{\scriptstyle \text{\begin{tabular}{c} higher symplectic\\geometry \end{tabular}}} 
& & 
\ar[uul]_{\scriptstyle \text{\begin{tabular}{c} higher gauge\\theory \end{tabular}}} 
\text{\begin{tabular}{c} \textbf{higher category} \\ \textbf{and homotopy theory} \end{tabular}}
}
\]
\vspace{.5cm}\\

\begin{table} 
\begin{tabular}{c c c}
\cline{2-3}
& \multicolumn{1}{|c|}{{\bf symplectic geometry}} &
\multicolumn{1}{|c|}{{\bf 2-plectic geometry}} \\ \hline
\multicolumn{1}{|c|}{degree of} & \multicolumn{1}{|c|}{\multirow{2}{*}{2}} &
\multicolumn{1}{|c|}{\multirow{2}{*}{3}} \\ 
\multicolumn{1}{|c|}{differential form} & \multicolumn{1}{|c|}{} & \multicolumn{1}{|c|}{} \\ \hline
\multicolumn{1}{|c|}{\multirow{3}{*}{examples}} &
\multicolumn{1}{|c|}{cotangent bundles} &
\multicolumn{1}{|c|}{exterior square of cotangent bundles} \\ \cline{2-3}
 \multicolumn{1}{|c|}{} & \multicolumn{1}{|c|}{coadjoint orbits} &
 \multicolumn{1}{|c|}{\multirow{2}{*}{compact simple Lie groups}} \\
 \multicolumn{1}{|c|}{} &\multicolumn{1}{|c|}{of Lie
  groups}&\multicolumn{1}{|c|}{}\\ \hline \hline
\multicolumn{3}{|c|}{classical field theory} \\ \hline \hline
\multicolumn{1}{|c|}{physical objects} &
\multicolumn{1}{|c|}{particles} &
\multicolumn{1}{|c|}{strings} \\ \hline
\multicolumn{1}{|c|}{\multirow{2}{*}{observables}} & \multicolumn{1}{|c|}{\multirow{2}{*}{Lie algebra
of functions}} & \multicolumn{1}{|c|}{Lie 2-algebra of}\\
\multicolumn{1}{|c|}{} & \multicolumn{1}{|c|}{} &
\multicolumn{1}{|c|}{Hamiltonian 1-forms} \\ \hline
\multicolumn{1}{|c|}{\multirow{2}{*}{measurement}} & \multicolumn{1}{|c|}{$x=\text{point in phase space}$} & 
\multicolumn{1}{|c|}{$\gamma=\text{path in multiphase space}$} \\
\multicolumn{1}{|c|}{} & \multicolumn{1}{|c|}{$x \mapsto f(x)$} &
\multicolumn{1}{|c|}{$\gamma \mapsto \int_{\gamma} \alpha$} \\ \hline \hline
\multicolumn{3}{|c|}{prequantization} \\ \hline \hline
\multicolumn{1}{|c|}{\multirow{5}{*}{prequantum structure}} &
\multicolumn{1}{|c|}{principal $\U(1)$-bundle} &
\multicolumn{1}{|c|}{$\U(1)$-gerbe}  \\
\multicolumn{1}{|c|}{} & \multicolumn{1}{|c|}{with connection} &  \multicolumn{1}{|c|}{with 2-connection}\\
\multicolumn{1}{|c|}{}& \multicolumn{1}{|c|}{\bf or} &
\multicolumn{1}{|c|}{\bf or} \\
\multicolumn{1}{|c|}{} & \multicolumn{1}{|c|}{Hermitian line bundle} &
\multicolumn{1}{|c|}{2-line stack} \\
\multicolumn{1}{|c|}{} & \multicolumn{1}{|c|}{with connection} &
\multicolumn{1}{|c|}{with 2-connection}\\ \hline
\multicolumn{1}{|c|}{local data for} &
\multicolumn{1}{|c|}{Deligne 1-cocycle:} & \multicolumn{1}{|c|}{Deligne
2-cocycle:} \\
\multicolumn{1}{|c|}{prequantum} & \multicolumn{1}{|c|}{transition functions,}&
\multicolumn{1}{|c|}{transition functions,}\\ 
\multicolumn{1}{|c|}{structure} & \multicolumn{1}{|c|}{1-forms} & \multicolumn{1}{|c|}{1-forms, 2-forms}\\ \hline
\multicolumn{1}{|c|}{infinitesimal symmetries} & \multicolumn{1}{|c|}{Atiyah algebroid} & \multicolumn{1}{|c|}
{Courant algebroid}\\
\multicolumn{1}{|c|}{of prequantum structure} & \multicolumn{1}{|c|}{(Lie algebroid)} & \multicolumn{1}{|c|}{(Lie 2-algebroid)} \\ \hline \hline
\multicolumn{3}{|c|}{quantization} \\ \hline \hline
\multicolumn{1}{|c|}{\multirow{2}{*}{example}} &
\multicolumn{1}{|c|}{$\R^{2}\setminus \{0\}$,} &
\multicolumn{1}{|c|}{$\R^{3} \setminus \{0\}$,} \\ 
\multicolumn{1}{|c|}{} & \multicolumn{1}{|c|}{$\omega = d\theta$} & \multicolumn{1}{|c|}{$\omega=dB$} \\ \hline
\multicolumn{1}{|c|}{polarization} & \multicolumn{1}{|c|}{concentric circles} & \multicolumn{1}{|c|}{concentric spheres} \\ \hline
\multicolumn{1}{|c|}{Bohr-Sommerfeld} & \multicolumn{1}{|c|}{\multirow{2}{*}{$\int_{S^{1}} \theta \in 2\pi i \Z$}} & 
\multicolumn{1}{|c|}{\multirow{2}{*}{$\int_{S^{2}} B \in 2\pi i \Z$}} \\
\multicolumn{1}{|c|}{condition} & \multicolumn{1}{|c|}{} & \multicolumn{1}{|c|}{} \\ \hline
\multicolumn{1}{|c|}{quantum} & \multicolumn{1}{|c|}{wavefunctions of} & \multicolumn{1}{|c|}{representations} \\ \multicolumn{1}{|c|}{ states} & \multicolumn{1}{|c|}{harmonic oscillator} & \multicolumn{1}{|c|}{of $\SU(2)$} \\\hline
\end{tabular}
\caption{Examples of structures found in symplectic geometry and higher symplectic geometry (for the 2-plectic case).
Comparisons of their roles in field theory, prequantization, and quantization are listed.}
\label{intro_table}
\end{table}

\section*{Overview of main results}
We now describe the main results in this thesis.
In Table \ref{intro_table}, we list 
some particular examples to keep in mind while reading this section.

We first present some basic facts about $n$-plectic manifolds in Chapter \ref{nplectic_geometry}. 
We say an $(n+1)$-form
$\omega$ on $M$ is $n$-plectic if $\omega$ is closed and
nondegenerate. By nondegenerate, we mean the contraction
\begin{equation} \label{intro_eq_1}
\begin{array}{c}
TM \to \Lambda^{n}T^{\ast}M\\
v \mapsto \omega(v,-)
\end{array}
\end{equation}
is injective. For the most part, we follow 
Cantrijn, Ibort, and de Le\'{o}n's work on multisymplectic manifolds \cite{Cantrijn:1999}. In
particular, we use their generalizations of the familiar notions of
Lagrangian submanifolds and real polarizations found in symplectic
geometry.

Next, in Chapter \ref{algebra_chapter}, we extend the algebraic structures
found in symplectic geometry to the $n$-plectic setting.
Given an $n$-plectic
manifold $(M,\omega)$, we show that the $n$-plectic structure naturally induces a
skew-symmetric bracket on a particular subspace of $(n-1)$-forms,
which we call Hamiltonian. An $(n-1)$- form $\alpha$ is Hamiltonian if
there exists a vector field $v$ such that
\[
d\alpha =-\omega(v,-).
\]
The vector field $v$ is called the Hamiltonian vector field associated
to $\alpha$. In the 1-plectic/symplectic case, we see that every
0-form is Hamiltonian, and our bracket reduces to the Poisson bracket of
functions. However, for higher values of $n$, the bracket
only satisfies the Jacobi identity up to an exact form. This
leads us to the notion of a Lie $n$-algebra. Lie $n$-algebras (equivalently,
$n$-term $L_{\infty}$-algebras \cite{LS}) are higher analogs of differential
graded Lie algebras. They consist of a graded vector space
concentrated in degrees $0,\ldots,n-1$, and are equipped
with a collection of skew-symmetric $k$-ary brackets, for $1 \leq k
\leq n+1$, that satisfy a generalized Jacobi identity.
In particular, the $k=2$ bilinear bracket behaves
like a Lie bracket that only satisfies the ordinary Jacobi identity up
to higher coherent chain homotopy. 
In Theorem \ref{main_thm}, we prove that, given an $n$-plectic
manifold, one can explicitly construct a Lie $n$-algebra on a
complex consisting of Hamiltonian $(n-1)$-forms and arbitrary
$p$-forms for $0 \leq p \leq n-2.$ The bilinear bracket, as well as
all higher $k$-ary brackets, are completely determined by the
$n$-plectic structure.  

We consider an important example of this
construction in Chapter \ref{lie_group_chapter}: the Lie 2-algebra
arising from a compact simple Lie group. Every such Lie group has a
1-parameter family of canonical 2-plectic structures generated by the
`Cartan 3-form'. These 3-forms are used to build central extensions of, and line bundles on, 
the corresponding loop group \cite{PressleySegal}. They also play a key
role in the theory of gerbes on Lie groups \cite{Meinrenken:2003} and the quantization of
conjugacy classes \cite{Mohrdieck:2004}. We show how the Lie 2-algebra of Hamiltonian 1-forms
on a compact simple Lie group $G$ relates to the `string Lie
2-algebra' of $G$ \cite{HDA6}. It is known that the string Lie 2-algebra can be
integrated to a `Lie 2-group' \cite{Henriques:2008}. This Lie 2-group can be geometrically realized as
a topological group which appears in the study of spin structures on
loop spaces.

Since geometric quantization has seen so much success in symplectic
geometry, we wish to extend it to the $n$-plectic
setting. In symplectic geometry, prequantization involves equipping the
manifold with a principal $\U(1)$-bundle with a
connection, whose curvature is the symplectic 2-form.
Therefore, in Chapter \ref{stacks_chapter}
we consider `stacks', the 2-plectic analogue of
bundles. A stack on a manifold can be
thought of as a categorified sheaf i.e.\ an assignment of a category to
each open neighborhood of the manifold. In particular, the higher
analogue of a principal $\U(1)$-bundle is a special kind of stack
called a `$\U(1)$-gerbe'. Just as a section of a $\U(1)$-bundle locally
looks like a $\U(1)$-valued function, a section of a $\U(1)$-gerbe locally
looks like a principal $\U(1)$-bundle. 

We then review Brylinski's
theory of `2-connections' for $\U(1)$-gerbes \cite{Brylinski:1993}. To understand what a
2-connection is, first recall that a $\U(1)$-bundle with
connection can be described by local transition functions and 1-forms
satisfying certain compatibility conditions. This local data
represents a degree 1 class in `Deligne cohomology', which can be
thought of as a refinement of the usual classification of 
bundles by \v{C}ech cohomology. Similarly, a 
$\U(1)$-gerbe  equipped with a 2-connection  can be described by local
transition functions, 1-forms, and 2-forms. This local data gives a
degree 2 class in Deligne cohomology. Just as the curvature of a
connection on a principal bundle is a 2-form, the `2-curvature' of a
2-connection is a 3-form. 
In general, we define a prequantized $n$-plectic manifold to be an $n$-plectic manifold
equipped a Deligne $n$-cocycle whose $n$-curvature is, up to sign, the $n$-plectic form.
As in the symplectic case, we show in Propositions \ref{curvature=integral} and
\ref{integral=curvature} that only those $n$-plectic
manifolds which satisfy an integrality condition can be prequantized.

In the remainder of the thesis, we focus on developing a quantization
scheme for 2-plectic manifolds. For prequantized symplectic manifolds,
the prequantum Hilbert space is obtained by considering global
sections of the Hermitian line bundle associated to the
$\U(1)$-bundle. We generalize this to 2-plectic manifolds by
constructing the `2-line stack' associated to a $\U(1)$-gerbe. 
Sections of the 2-line stack locally look like Hermitian vector
bundles. In Section \ref{2-line_stack_section}, we use some basic
ideas from `2-bundle theory' to explain why 2-line stacks are a
natural generalization of line bundles. We also present a formalism by
Carey, Johnson, and Murray \cite{Carey:2004} which generalizes the notion of holonomy to $\U(1)$-gerbes equipped with a 2-connection. We shall use this `2-holonomy' in our quantization procedure for 2-plectic manifolds.

In Chapter \ref{prequant_chapter}, we consider prequantization for
2-plectic manifolds in detail. In order to understand our results, it is, again, helpful
to momentarily return to the symplectic case. For a prequantized symplectic manifold, the
connection on the principal bundle determines a representation of the Poisson algebra
as linear operators on the prequantum Hilbert space. This representation 
identifies the Poisson algebra with certain $\U(1)$-invariant vector
fields on the bundle's total space. These vector fields are
characterized by the fact that their flows are connection-preserving automorphisms of the
bundle. Therefore, the Poisson algebra acts as linear differential
operators on the space of smooth complex-valued functions on the total
space. The prequantum Hilbert space is built using global
sections of the associated Hermitian line bundle, and there is a way to
interpret these sections as functions on the total space of the
principal bundle. Hence, the Poisson algebra acts as operators on this
Hilbert space.

This process of representing the Poisson algebra as operators can be
nicely explained in terms of the Atiyah sequence associated to a
principal bundle. Over any prequantized symplectic manifold, there is a
special kind of vector bundle called the `Atiyah algebroid' \cite{CdS-Weinstein:1999}. The global
sections of this vector bundle are the $\U(1)$-invariant vector
fields on the total space of the principal $\U(1)$-bundle. Hence, the space of
sections form a Lie algebra under the Lie bracket of vector
fields. In fact, the Atiyah algebroid is an example of a more general
structure called a `Lie algebroid'. The representation we described in
the previous paragraph corresponds to an injective Lie algebra
morphism embedding the Poisson algebra into the global sections of the
Atiyah algebroid.

We define a prequantized 2-plectic manifold to be an integral
2-plectic manifold equipped with a $\U(1)$-gerbe with
2-connection. A construction given by Hitchin \cite{Hitchin:2004ut} associates to any such gerbe on a manifold, a
vector bundle called a `Courant algebroid'. Its
space of global sections is equipped with a skew-symmetric bracket
which gives it the structure of a Lie 2-algebra. Hence, the Courant algebroid can be understood as a 
`Lie 2-algebroid'. This `Courant bracket' plays an important role in
generalized complex geometry \cite{Gualtieri:2007}
and Poisson geometry \cite{Liu:1997}. Beginning in Section \ref{courant_sec}, we show how the Courant
algebroid associated to a $\U(1)$-gerbe is the higher analogue of
the Atiyah algebroid associated to a $\U(1)$-bundle. Such an analogy
was conjectured to exist by Bressler and Chervov \cite{Bressler-Chervov} as well as others.
Our main result in this chapter is Theorem \ref{main_courant_thm}. It implies that the
2-connection of a gerbe on a prequantized 2-plectic manifold induces
an injective morphism from the Lie 2-algebra of Hamiltonian 1-forms
into the Lie 2-algebra of global sections of the Courant algebroid.
In this way, we obtain a prequantization of the Hamiltonian 1-forms, in
complete analogy with the symplectic case.

Finally, in Chapter \ref{quantization_chapter}, we use the 2-plectic
analogue of `real polarizations' to fully geometrically quantize
2-plectic manifolds. A real polarization on a prequantized symplectic manifold is a
certain kind of foliation. Over any leaf of the polarization, the prequantum
bundle restricts to a flat bundle.
The prequantum Hilbert space of global sections is cut down by considering only those
sections covariantly constant along the leaves of the
polarization. However, there are topological obstructions to 
obtaining a non-trivial Hilbert space from this process. For example, if the leaves
of the polarization are not simply-connected, then we are forced to
consider only the leaves on which the restricted bundle has trivial
holonomy. The collection of all such leaves is called the
`Bohr-Sommerfeld variety' associated to the polarization \cite{Sniatycki:1980}. 
The space of quantum states is built using certain sections which are covariantly constant
on the leaves contained in the variety.
As the name suggests, there is a relationship between this construction and the old Bohr-Sommerfeld
quantization rules from physics.

Before we go to the 2-plectic case, we review a well-known example in
symplectic geometry in Section \ref{harm_osc_sec}.
We quantize the punctured plane $M=\R^{2}\setminus \{0\}$, equipped with a volume-form $\omega=d\theta$, as the phase space of the
`simple harmonic oscillator'. Here $\theta$ is not the angular
coordinate on $M$, but rather a global 1-form which is related to the energy of the oscillator. 
We prequantize 
$M$ using the trivial principal $\U(1)$-bundle with connection $\theta$. The associated Hermitian line bundle is the trivial line bundle. We choose the polarization given by concentric circles about the origin. The corresponding Bohr-Sommerfeld
variety is a countable subset of these circles. We find sections of the
prequantum line bundle over the Bohr-Somerfeld variety which are
covariantly constant along the circles contained in the variety. This is
equivalent to finding solutions to the Schr\"{o}dinger wave equation. 
After applying a small correction, the radii of the circles in the variety correspond to the discrete energy levels for
the quantized oscillator. 

We generalize this entire construction to the 2-plectic case in
Section \ref{cat_quant_section}. We start with a prequantized
2-plectic manifold equipped with a Deligne 2-cocycle. We consider the
associated 2-line stack with 2-connection whose 2-curvature is the
2-plectic structure. The 2-plectic analogue of the prequantum Hilbert
space is the category of global sections of the 2-line stack, i.e.\ the
category of twisted Hermitian vector bundles on the manifold.

We quantize the manifold by choosing a real polarization as defined in
Chapter \ref{nplectic_geometry}. Over any leaf of the polarization,
the 2-line stack restricts to a `flat stack' i.e.\ the 2-curvature
vanishes. The Bohr-Sommerfeld variety associated to the polarization
is made up of those leaves on which the restricted 2-line stack has
trivial 2-holonomy. Here, we use the 2-holonomy formalism for Deligne
2-cocycles which we described in Chapter \ref{stacks_chapter}. The
2-plectic analogue of the space of quantum states is the category of
quantum states. Its objects are twisted vector bundles over the
Bohr-Sommerfeld variety whose restriction to each leaf in the variety
is `twisted-flat'. This twisted-flat condition replaces the covariantly constant
condition used in the symplectic case.

As an example of 2-plectic quantization, we consider the space
$M=\R^{3} \setminus \{0\}$ equipped with a particular volume form
$\omega=dB$. 
We prequantize the space using the trivial $\U(1)$-gerbe
whose 2-connection is given by the global 2-form $B$. The
associated 2-line stack in this case is equivalent to the stack of Hermitian vector
bundles equipped with connection over $M$. (There is no twisting since
the Deligne 2-cocycle is just a global 2-form.) We choose the
polarization given by concentric spheres about the origin.

A sphere centered about the origin in $\R^{3}$ is a coadjoint orbit of the Lie group $\SU(2)$. This can easily seen by identifying $\R^3$ with $\su(2) \cong \su(2)^{\ast}$. It turns out that the restriction of $B$ to any such sphere
gives the famous KKS symplectic form used in Kirillov's orbit method \cite{Kirillov:2004}.
By definition, a sphere is included in the Bohr-Sommerfeld variety if
the Deligne 2-cocycle given by $B$ has trivial
2-holonomy. Requiring trivial 2-holonomy is equivalent to the KKS
symplectic form satisfying an integrality condition, which further
implies that it is the curvature of a line bundle. We use some basic
facts about the orbit method to pass from bundles to
representations. We show that, in this example, the category of quantum states obtained
from our quantization process is closely related to the category of finite-dimensional representations of
$\SU(2)$. This suggests that, in some sense, 2-plectic quantization categorifies Kirillov's
orbit method. Interestingly, the process fails to produce 
representations whose decomposition into irreducibles contains
the trivial representation of $\SU(2)$. However, this is somewhat expected, since
it is well known that the analogous quantization procedure for the harmonic oscillator in
symplectic geometry requires an additional correction in order to obtain
the correct space of quantum states. 

We conclude the thesis in Chapter\ref{conclusions} by providing a
technical summary of the main results, and by
discussing some open problems and future directions for research. 

\section*{Previous work}
We have recently published some of the results presented here. 
Theorem\ \ref{main_thm} in Chapter
\ref{algebra_chapter} and Proposition \ref{n-plectic_Leibniz}
in Appendix \ref{leibniz_appendix}  appear in \cite{Rogers:2010}. 
Theorem \ref{string_Lie_Thm} in Chapter \ref{lie_group_chapter}
appears in \cite{Baez:2009uu}, which was co-authored with J.\ Baez. 
The other results in Chapter \ref{lie_group_chapter} generalize or improve upon
those of \cite{Baez:2009uu}.
Chapter \ref{prequant_chapter} is based on a recent preprint
\cite{Rogers:2011}, which has been submitted for publication. 
Finally, a different proof of Theorem \ref{isomorphism_L2A_thm} in
Appendix \ref{leibniz_appendix} appears in \cite{Baez:2008bu}, which
was co-authored with J.\ Baez and A.\ Hoffnung.
\chapter{$n$-Plectic geometry} \label{nplectic_geometry}
Our basic geometric objects of interest are $n$-plectic manifolds:
manifolds equipped with a closed, nondegenerate form of degree $n+1$.
Hence, a 1-plectic manifold is a symplectic manifold. $n$-Plectic
manifolds are also called multisymplectic manifolds.
Multisymplectic geometry originated in covariant Hamiltonian formalisms for classical
field theory, just as symplectic geometry originated in classical
mechanics. However, multisymplectic manifolds can be found outside
the context of classical field theory, and are interesting from a purely geometric point of view.
A few different definitions for multisymplectic structures exist in
the literature. We adopt the formalism developed by Cantrijn, Ibort,
and de Le\'{o}n \cite{Cantrijn:1999}, since it provides the simplest
generalization of symplectic structures, and also encapsulates a wide
variety of interesting examples.

\section{Linear theory}
We begin by introducing multisymplectic/$n$-plectic structures on vector
spaces. For the most part, we only present those aspects of the theory needed for
subsequent chapters. For more details, we refer the reader to \cite{Cantrijn:1999}.  
\begin{definition}
\label{n-plectic_vspace}
An $(n+1)$-form $\omega$ on a vector space is 
{\boldmath $n$}{\bf-plectic} iff it is nondegenerate:
\[
    \forall v \in V \ \iota_{v} \omega =0 \Rightarrow v =0.
\]
If $\omega$ is an $n$-plectic form on $V$, then we call the pair $(V,\omega)$ 
an {\boldmath $n$}{\bf-plectic vector space}.
\end{definition}

Note that a $1$-plectic vector space is simply a symplectic vector
space. A straightforward exercise in linear algebra shows that
$n$-plectic structures do not exist on vector spaces of dimension
$n+2$.  For the $n=1$ case, there is the stronger result that every
finite-dimensional symplectic vector space has even dimension.
Conversely, any even-dimensional vector space $V$ admits a symplectic
form $\omega$, which can be put into a normal form by choosing a
particular basis. Hence, $\mathrm{GL}(V)$ acts transitively on the
space of symplectic structures on a symplectic vector space
$(V,\omega)$. In contrast, it has been shown that if $\dim V \geq 6$,
then $n$-plectic structures on $V$ are generic for $ 2 \leq n \leq
\dim V -4$ \cite{Martinet:1970}.  Furthermore, 2-plectic structures on
real vector spaces $V$ with $\dim V \leq 7$ have been classified. In
these cases, the action of $\mathrm{GL}(V)$ is not transitive. If
$\dim V =6$, then there are 2 equivalence classes, and if $\dim V
=7$, then there are 8 classes \cite{Martinet:1970}. In general, the
classification of $n$-plectic structures remains an open problem
\cite{Cantrijn:1999}.

Next, we consider several natural generalizations of the orthogonal
complement associated to a bilinear form.
\begin{definition}[\cite{Cantrijn:1999}]
\label{ortho_comp}
Let $(V,\omega)$ be an $n$-plectic vector space and $W \subseteq V$ be a subspace.
The {\boldmath $k$}{\bf-orthogonal complement of} $W$ is the subspace
\[
W^{\perp,k}= \left \{ v \in V ~ \vert ~ \omega(v,w_{1},w_{2},\ldots,w_{k})=0 ~
\forall w_{1}, w_{2},\ldots,w_{k} \in W \right \}.
\]
\end{definition}
\noindent Hence, there is a filtration of orthogonal complements:
\[
W^{\perp,1} \ss W^{\perp,2} \ss \cdots \ss W^{\perp,n}.
\]
\begin{definition}[\cite{Cantrijn:1999}]
\label{subspaces}
A subspace $W$ of an $n$-plectic vector space $(V,\omega)$ is 
{\boldmath $k$}{\bf-isotropic} iff $W \subseteq W^{\perp,k}$, 
and {\boldmath $k$}{\bf-Lagrangian} iff $W=W^{\perp,k}$.
\end{definition}
\noindent For convenience, if $W$ is an $n$-isotropic or $n$-Lagrangian subspace of an $n$-plectic
vector space, then we will say $W$ is \textbf{isotropic} or
\textbf{Lagrangian}, respectively. The notion of a $k$-co-isotropic subspace
exists as well, but we will not need it here.

Obviously, every 1-dimensional
subspace of an $n$-plectic vector space is 1-isotropic.
Hence, the next proposition guarantees the existence of $k$-Lagrangian
subspaces for all $k \geq 1$.
\begin{prop} \label{lagrangians_exist}
Let $(V,\omega)$ be an $n$-plectic vector space.
If $W \subseteq V$ is a $k$-isotropic subspace, then for all $k' \geq k$ there
exists a $k'$-Lagrangian subspace containing $W$. 
\end{prop}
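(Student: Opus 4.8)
The plan is to reduce the statement to a single index and then run the standard ``a maximal isotropic subspace is Lagrangian'' argument, exactly as one proves that every isotropic subspace of a symplectic vector space sits inside a Lagrangian one.

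First I would observe that it suffices to prove: if $W$ is $k'$-isotropic, then $W$ is contained in some $k'$-Lagrangian subspace. Indeed, using the filtration $W^{\perp,1} \subseteq \cdots \subseteq W^{\perp,n}$ recorded above, any $k$-isotropic subspace satisfies $W \subseteq W^{\perp,k} \subseteq W^{\perp,k'}$ for every $k' \geq k$, so $W$ is automatically $k'$-isotropic. Thus the hypothesis ``$k$-isotropic'' upgrades freely to ``$k'$-isotropic'' for each $k' \geq k$, and the proposition follows once we know a $k'$-isotropic subspace extends to a $k'$-Lagrangian one.

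Fix $k' \geq k$. Since $V$ is finite-dimensional, the collection of $k'$-isotropic subspaces containing $W$ — nonempty, as it contains $W$ itself — ordered by inclusion has a maximal element $L$ (a Zorn's lemma argument works equally well). I claim $L$ is $k'$-Lagrangian. Since $L$ is $k'$-isotropic we already have $L \subseteq L^{\perp,k'}$, so it remains to exclude strict containment. Suppose toward a contradiction that $v \in L^{\perp,k'} \setminus L$ and set $L' = L \oplus \langle v \rangle$; to contradict maximality I would verify that $L'$ is again $k'$-isotropic, i.e.\ that $\omega(u_1,\ldots,u_{k'+1})$ vanishes for all $u_1,\ldots,u_{k'+1} \in L'$.

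The heart of the argument — and the only place a computation occurs — is this last verification. Writing each $u_i = l_i + c_i v$ with $l_i \in L$ and expanding $\omega$ multilinearly in its first $k'+1$ slots produces three kinds of terms: those using no copy of $v$, which vanish because $L$ is $k'$-isotropic; those using exactly one copy of $v$, which vanish because $v \in L^{\perp,k'}$ (after using antisymmetry to slide $v$ into the first slot, leaving $k'$ arguments from $L$); and those using two or more copies of $v$, which vanish by antisymmetry of $\omega$. Hence $L'$ is $k'$-isotropic, contradicting maximality, so $L = L^{\perp,k'}$ is the desired $k'$-Lagrangian subspace containing $W$. I expect no real obstacle here; the one point to handle with care is the sign bookkeeping in the expansion, together with the observation that antisymmetry annihilates every term containing $v$ twice — which is precisely what forces the enlargement to remain isotropic.
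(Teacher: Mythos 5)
Your proof is correct. Note that the paper itself gives no argument here---it simply cites Proposition 3.4(iii) of Cantrijn, Ibort, and de L\'{e}on---so you have supplied a self-contained proof of a statement the thesis delegates to the literature. The two steps you rely on both check out: the reduction to a single index uses the filtration $W^{\perp,k} \subseteq W^{\perp,k'}$ for $k' \geq k$ (which holds because $\omega(v,w_{1},\ldots,w_{k})=0$ as an $(n-k)$-form forces all further contractions $\omega(v,w_{1},\ldots,w_{k},w_{k+1})$ to vanish), and the multilinear expansion of $\omega(u_{0},\ldots,u_{k'})$ with $u_{i}=l_{i}+c_{i}v$ does split into exactly the three classes of terms you describe, each of which vanishes for the reason you give. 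This is the standard ``maximal isotropic is Lagrangian'' argument transplanted to the $k$-orthogonal complement, and it goes through without obstruction.
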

\begin{proof}
See Proposition 3.4 (iii) in the 
paper by Cantrijn, Ibort, and de L\'{e}on \cite{Cantrijn:1999}.
\end{proof}
\noindent In contrast with the symplectic case, two $k$-Lagrangian subspaces need not
have the same dimension. However, if the $n$-plectic vector space is
$(n+1)$-dimensional, then it is simply a vector space
equipped with a volume form and we have:
\begin{prop} \label{lagrangian_prop}
If $(V,\omega)$ is an $n$-plectic vector space with $\dim V =n+1$, then
a subspace $W \subseteq V$ is $n$-Lagrangian if and only if $\dim W = n$.
\end{prop}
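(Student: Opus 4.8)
The plan is to prove both implications by computing the $n$-orthogonal complement $W^{\perp,n}$ directly in terms of $\dim W$, exploiting the fact that on an $(n+1)$-dimensional space the $n$-plectic form $\omega$ is of top degree. The one observation that does all the work is this: nondegeneracy forces $\omega \neq 0$, and for a form of top degree this means $\omega$ does not vanish on any basis of $V$. Equivalently, for vectors $u_{0},\ldots,u_{n} \in V$ one has $\omega(u_{0},\ldots,u_{n}) = 0$ precisely when $\{u_{0},\ldots,u_{n}\}$ is linearly dependent (an alternating top form vanishes on dependent tuples, and is nonzero on a basis). I would record this equivalence first, since both directions reduce to it.

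For the implication $\dim W = n \Rightarrow W$ is $n$-Lagrangian, I would fix a basis $w_{1},\ldots,w_{n}$ of $W$ and prove the two inclusions separately. If $v \in W$, then for any $w_{1}',\ldots,w_{n}' \in W$ the $n+1$ vectors $v,w_{1}',\ldots,w_{n}'$ all lie in the $n$-dimensional space $W$ and are therefore dependent, so $\omega(v,w_{1}',\ldots,w_{n}') = 0$; hence $W \subseteq W^{\perp,n}$. Conversely, if $v \in W^{\perp,n}$, then in particular $\omega(v,w_{1},\ldots,w_{n}) = 0$ for the chosen basis, so $v,w_{1},\ldots,w_{n}$ are dependent, which (as $w_{1},\ldots,w_{n}$ are independent) forces $v \in \mathrm{span}(w_{1},\ldots,w_{n}) = W$. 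Thus $W^{\perp,n} \subseteq W$, and combining the inclusions gives $W = W^{\perp,n}$.

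For the converse I would rule out every value of $\dim W$ other than $n$. If $\dim W \leq n-1$, then every $n$-tuple drawn from $W$ is dependent, so $\omega$ vanishes on all of them and $W^{\perp,n} = V$; the equality $W = W^{\perp,n} = V$ would then force $\dim W = n+1$, a contradiction. If $\dim W = n+1$, then $W = V$ and $W^{\perp,n} = \{v \in V : \iota_{v}\omega = 0\}$, which is $\{0\}$ by nondegeneracy, so $W \neq W^{\perp,n}$. The only surviving case is $\dim W = n$, completing the equivalence.

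I do not expect a genuine obstacle: the content is elementary linear algebra once $\omega$ is recognized as a volume form. The only point requiring a little care is the quantifier in the definition of $W^{\perp,n}$ — vanishing must be tested against \emph{all} $n$-tuples from $W$, not merely a fixed basis — which is exactly why the argument above establishes the two inclusions separately rather than reading everything off a single tuple.
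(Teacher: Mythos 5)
Your proof is correct and follows essentially the same route as the paper: both arguments recognize $\omega$ as a nonzero top-degree (volume) form on $V$ and reduce everything to the fact that it vanishes on a tuple exactly when the tuple is linearly dependent (the paper phrases this via the dual-basis expression $\omega = r\,\theta^{1}\wedge\cdots\wedge\theta^{n+1}$). Your treatment is if anything slightly more complete, since you explicitly dispose of the case $\dim W = n+1$, which the paper passes over by simply asserting $\dim W \leq n$.
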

\begin{proof}
First suppose $W=W^{\perp,n}$. Then $\dim W=k \leq n$. Let $e_1,\ldots,e_k$
be a basis for $W$, and let $e_1,\ldots,e_k ,e_{k+1},\ldots,e_{n+1}$
be its extension to a basis for $V$. Let
$\theta^{1},\ldots,\theta^{n+1}$ be the dual basis with $\theta^{i}(e_{j})=\delta^{i}_{j}$.
The $n$-plectic form can be written as
\[
\omega = r \cdot \theta^{1} \wedge \cdots \wedge \theta^{n+1},
\]
with $\deg{r} > 0$. If $w_{1},\ldots,w_{n}$ are elements of $W$, with
$w_{i}=\sum_{j=1}^{k}c_{ij}e_{j}$, and $\dim W$ is strictly less than
$n$, then
\[
\omega(v,w_{1},w_{2},\ldots,w_{n})=0
\]
for all $v \in V$. Hence, we must have $\dim W=n$.

Now suppose $W$ has dimension $n$ with basis $e_1,\ldots,e_n$.
Let $e_1,\ldots,e_n,e_{n+1}$ be the extended basis of
$V$. It is easy to see that $W \ss W^{\perp,n}$. If $v \in W^{\perp,n}$
is not an element in $W$, then its contraction with the dual basis
element $\theta^{n+1}$ is non-zero. However, we have:
\[
0=\omega(v,e_1,e_2,\ldots,e_{n})=\pm \omega(e_1,e_2,\ldots,e_{n},v)=
\pm r\cdot \theta^{n+1}(v),
\]
giving a contradiction. Hence no such $v$ exists, and therefore
$W = W^{\perp,n}$.
\end{proof}

\section{$n$-Plectic manifolds}
We now turn to the global theory. Our first definition generalizes the
definition of a symplectic manifold.
\begin{definition}
\label{n-plectic_def}
An $(n+1)$-form $\omega$ on a smooth manifold $M$ is 
{\boldmath $n$}{\bf-plectic}, or more specifically
an {\boldmath $n$}{\bf-plectic structure}, if it is both closed:
\[
    d\omega=0,
\]
and nondegenerate:
\[
\forall x \in M  ~  \forall v \in T_{x}M,\ \iota_{v} \omega =0 \Rightarrow v =0
\]
If $\omega$ is an $n$-plectic form on $M$ we call the pair $(M,\omega)$ 
an {\boldmath $n$}{\bf-plectic manifold}.
\end{definition}

\begin{remark}
In general, $n$-plectic manifolds are much more abundant than symplectic
manifolds. On a finite-dimensional manifold $M$, $n$-plectic
structures are generic for $2 \leq n \leq \dim M-4$ (i.e.\ the set of $n$-plectic
structures is comeager in $\Gamma(\Lambda^{n+1} T^{\ast}M$) by Thm.\ II
2.2 and Prop.\ II 4.2 in \cite{Martinet:1970}). Also, the remarks
made after Def.\ \ref{n-plectic_vspace} imply that no
Darboux-like theorem holds for $n$-plectic structures.
\end{remark}

Clearly, an $n$-plectic structure on an $(n+1)$-dimensional manifold
$M$ is a non-vanishing section of the top-exterior power of the
cotangent bundle. Hence, orientable manifolds equipped with a volume
form provide simple examples of $n$-plectic manifolds.
Below, we describe some other interesting examples of $n$-plectic manifolds. 

\begin{example}[Compact simple Lie groups]\label{Lie_group_example}
Every compact simple Lie group admits a 1-parameter family of canonical 2-plectic
structures. These structures have been discussed in the multisymplectic
geometry literature \cite{Cantrijn:1999,Ibort:2000}, and play an
important role in several branches of mathematics connected to string
theory.

Recall that if $G$ is a compact Lie group, then its Lie algebra $\g$ 
admits an inner product $\innerprod{\cdot}{\cdot}$ that is invariant
under the adjoint representation $\Ad \maps G \to \Aut\left(\g
\right)$.  
For any nonzero real number $k$, we can define a trilinear form
\[ \omega_{k}(x,y,z)= k \innerprod{x}{[y,z]}\]
for any $x,y,z \in \g$. 
Since the inner product is invariant under the adjoint representation,
it follows that the linear transformations $\ad_y \maps \g \to \g$
given by $\ad_y(x) = [y,x]$ are skew adjoint. 
That is,
$\innerprod{\ad_{y}(x)}{z}= -\innerprod{x}{\ad_{y}(z)}$ for all
$x,y,z \in \g$. Hence, $\omega_{k}$ is totally antisymmetric.
Moreover, $\omega_k$ is invariant under the adjoint representation since
$\left[\Ad_{g}(x),\Ad_{g}(y) \right] = \Ad_{g}\left([x,y] \right)$.

Let $L_{g} \maps G \to G$ and $R_{g} \maps G \to G$ denote left and
right translation by $g$, respectively. Let $\theta_{L} \in
\Omega^{1}(G,\g)$ denote the left-invariant Maurer-Cartan form, which
sends a vector $v \in T_{g}G$ to $L_{g^{-1}\ast} v \in \g$.
Using left translation, we can
extend $\omega_{k}$ to a left invariant 3-form $\nu_{k}$ on $G$:
\begin{align*}
\nu_{k} &= \omega_{k} \left(\theta_L,\theta_L,\theta_L \right)\\
&=k \inp{\theta_L}{\bigl [ \theta_L, \theta_L \bigr ]}.
\end{align*} 
It is straightforward to show that $\nu_{k}$ is also a right invariant
3-form. Indeed, since $\Ad_{g} =L_{g \ast} \circ R_{g^{-1} \ast}$, the
invariance of $\omega_{k}$ under the adjoint representation implies 
$R^{\ast}_{g}\nu_{k} =\nu_{k}$. From the left and right invariance we
can conclude
\[d\nu_{k}=0,\] 
since any $p$-form on a Lie group that is both left and right 
invariant is closed. 

Now suppose that $G$ is a compact simple Lie group. Then $\g$ is
simple, so it has a canonical invariant inner product: the Killing 
form (up to a choice of normalization).  With this choice of inner product, 
the trilinear form $\omega_{k}$ is nondegenerate in the sense of 
Definition \ref{n-plectic_vspace}.
\begin{proposition} \label{nondegenerate}
If $G$ is a compact simple Lie group, then $(G,\nu_{k})$ is a 2-plectic manifold.
\end{proposition}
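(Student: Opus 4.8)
The plan is to combine the two defining properties of a 2-plectic structure: closedness, which we have already established, and nondegeneracy, which is what remains. We have seen that $\nu_k$ is both left and right invariant, and hence that $d\nu_k = 0$. So it suffices to verify that $\nu_k$ meets the nondegeneracy condition of Definition \ref{n-plectic_def}.

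First I would reduce the pointwise nondegeneracy on $G$ to a purely linear statement on $\g$. Since $\nu_k = \omega_k(\theta_L,\theta_L,\theta_L)$ is assembled from the left-invariant Maurer-Cartan form, every tangent vector $v \in T_g G$ can be written uniquely as $v = L_{g \ast} x$ with $x = \theta_L(v) \in \g$, and left invariance $L_g^{\ast}\nu_k = \nu_k$ gives
\[
\nu_k\big|_g(L_{g \ast} x, L_{g \ast} y, L_{g \ast} z) = \omega_k(x,y,z)
\]
for all $x,y,z \in \g$. Thus $\iota_v \nu_k = 0$ at $g$ if and only if $\iota_x \omega_k = 0$ on $\g$, and so the nondegeneracy of $\nu_k$ at every point of $G$ is equivalent to the nondegeneracy of the trilinear form $\omega_k$ on $\g$ in the sense of Definition \ref{n-plectic_vspace}.

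Next I would prove that $\omega_k$ is nondegenerate. Suppose $x \in \g$ satisfies $\omega_k(x,y,z) = 0$ for all $y,z \in \g$. Since $k \neq 0$, this says $\innerprod{x}{[y,z]} = 0$ for all $y,z$, i.e.\ that $x$ is orthogonal, with respect to the invariant inner product, to the derived subalgebra $[\g,\g]$. The crux of the argument is then the two standard structural facts about the simple Lie algebra $\g$: it is perfect, so $[\g,\g] = \g$, and its Killing form (which is the chosen invariant inner product up to normalization) is nondegenerate. Together these force $x$ to be orthogonal to all of $\g$, and hence $x = 0$.

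I expect the main conceptual content to lie in this last step, specifically in invoking that simplicity implies both perfectness and the nondegeneracy of the Killing form; the rest is bookkeeping. The reduction via left invariance is routine, and closedness is already in hand, so once $\omega_k$ is shown to be nondegenerate the proposition follows immediately.
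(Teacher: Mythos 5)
Your proposal is correct and follows essentially the same route as the paper: reduce to nondegeneracy of the trilinear form $\omega_k$ on $\g$, then use that a simple Lie algebra is perfect ($[\g,\g]=\g$) together with nondegeneracy of the invariant inner product to force $x=0$. The paper phrases the last step by writing $x=\sum_i [y_i,z_i]$ and deducing $\innerprod{x}{x}=0$, which is the same observation as your ``$x$ orthogonal to $[\g,\g]=\g$'' in slightly different packaging.
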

\begin{proof}
  We just need to show that $\omega_{k}$ is nondegenerate i.e.\
  if $x \in \g$ and $\omega_{k}(x,y,z)=0$ for all $y,z \in \g$ then $x=0$.
  Recall that if $\g$ is simple, then it is equal to its derived
  algebra $\bigl[\g,\g \bigr]$. Hence we may write $x= \sum^{n}_{i=1}
  [y_{i},z_{i}]$. Therefore
  \[k\innerprod{x}{x}=k\sum^{n}_{i=1}\innerprod{x}{[y_{i},z_{i}]}=\sum^{n}_{i=1}
  \omega_{k}(x,y_{i},z_{i})=0,\]  implies $x=0$ since
  $\innerprod{\cdot}{\cdot}$ is an inner product.
\end{proof}
\end{example}

\begin{example}[Exterior powers of cotangent bundles] \label{cotangent_bundles}
This next example generalizes the well-known fact that
cotangent bundles are symplectic manifolds.
Suppose $M$ is a smooth manifold, and let $X = \Lambda^n T^\ast M$ be the 
$n$-th exterior power of the cotangent bundle of $M$.  Then there
is a canonical $n$-form $\theta$ on $X$ given as follows: 
\[  \theta(v_1, \dots, v_n) \vert_{x} = x(\pi_{\ast}(v_1), \dots, \pi_{\ast}(v_n))  \]
where $v_1, \dots v_n$ are tangent vectors at the point $x \in X$, and
$\pi \maps X \to M$ is the projection from the bundle $X$ to the base
space $M$.  

We claim the $(n+1)$-form
\[           \omega = d\theta   \]
is $n$-plectic.  This can be seen by explicit computation.  Let $q^1, \dots,
q^d$ be coordinates on an open set $U \subseteq M$.  Then there is a basis
of $n$-forms on $U$ given by $dq^I = dq^{i_1} \wedge \cdots \wedge dq^{i_n}$
where $I = (i_1, \dots, i_n)$ ranges over multi-indices of length $n$.
Corresponding to these $n$-forms there are fiber coordinates $p_I$ which
combined with the coordinates $q^i$ pulled back from the base give a
coordinate system on $\Lambda^n T^* U$.  In these coordinates we have
\[    \theta = p_I dq^I , \]
where we follow the Einstein summation convention to sum over
repeated multi-indices of length $n$.  It follows
that
\[    \omega = dp_I \wedge dq^I  .\]
Using this formula one can check that $\omega$ is indeed $n$-plectic.
\end{example}

\begin{example}[Hyper-K\"{a}hler manifolds]
Let $(M,g)$ be a Riemannian manifold which admits two anti-commuting, almost
    complex structures $J_{1},J_{2} \maps TM \to TM$, i.e.\ $J_{1}^2 =
    J_{2}^2=-\id$ and $J_{1}J_{2}=-J_{2}J_{1}$. Then
    $J_{3}=J_{1}J_{2}$ is also an almost complex structure. If $J_{1},J_{2},J_{3}$
    preserve the metric $g$, then one can define the 2-forms $\theta_{1},\theta_{2},\theta_{3}$,
    where $\theta_{i}(v_{1},v_{2})=g(v_{1},J_{i}v_{2})$. If each $\theta_{i}$ is closed,
    then $M$ is called a hyper-K\"{a}hler manifold \cite{Swann:1991}. 
    Given such a manifold, one can construct the 4-form: 
   \[
   \omega=\theta_{1}\wedge \theta_{1} + \theta_{2}\wedge \theta_{2} + \theta_{3}\wedge \theta_{3}.
   \]
   Clearly, $\omega$ is closed. It is also straightforward to show 
   nondegeneracy. Indeed, suppose there existed a vector field $v$ such that
   $\omega(v,\cdot,\cdot,\cdot)=0$. A calculation shows that $\omega(v,
   J_{1}v,J_{2}v,J_{3}v)=0$ implies that $g(v,v)^{2}=0$. Since $g$ is
   Riemannian, we must have $v=0$. Hence a hyper-K\"{a}hler manifold is a 3-plectic
   manifold.
\end{example} 

\section{$k$-Lagrangian submanifolds and $k$-polarizations}
We return to our presentation of the general theory and describe some
geometric structures that will play important roles in the geometric
quantization of $n$-plectic manifolds.
\begin{definition}[\cite{Cantrijn:1999}]
\label{submanifolds}
A submanifold $N$ of an $n$-plectic manifold 
$(M,\omega)$ is {\boldmath $k$}{\bf-isotropic}
({\boldmath $k$}{\bf-Lagrangian}) iff for all $x \in N$,
$T_{x}N$ is a $k$-isotropic ($k$-Lagrangian) subspace of the $n$-plectic vector space
$(T_{x}M,\omega\vert_{x})$. 
\end{definition}
\noi As in the linear case, if $N$ is an $n$-isotropic or $n$-Lagrangian submanifold of an $n$-plectic
manifold, then we say $N$ is \textbf{isotropic} or
\textbf{Lagrangian}, respectively.
Of course, we recover the usual definitions when $n=1$.

In symplectic geometry, polarizations are defined as integrable
maximally isotropic sub-bundles of the complexified tangent bundle of
a symplectic manifold. They are used in geometric quantization to cut
down the size of the Hilbert space associated to the symplectic
manifold. Certain polarizations called ``real polarizations'' can be
understood as integrable distributions living in the real tangent
bundle rather than its complexification. We currently do not know what
an ``$n$-complex structure'' should be.
Therefore, we are only able to generalize real
polarizations to the $n$-plectic case.
\begin{definition}\label{k-polarization}
A  foliation $F$ of an $n$-plectic manifold $(M,\omega)$ is 
a {\bf real} {\boldmath $k$}{\bf-polarization} iff the leaves of $F$
are immersed $k$-Lagrangian submanifolds of $M$. 
\end{definition}
\noi For brevity, we call a real $n$-polarization on an $n$-plectic
manifold simply a \textbf{polarization}. 
We conclude with an example which we will use in Chapter \ref{quantization_chapter}.
\begin{example} \label{Rn_example}
A volume form on $M=\R^{n+1} \setminus \{0\}$ is an $n$-plectic form.
Let $F$ be the foliation of $M$ by
$n$-spheres centered about the origin. Since each leaf has codimension
1, it follows from Prop.\ \ref{lagrangian_prop} that $F$ is a real polarization of $M$. 
\end{example}
 
\chapter{Algebraic structures on $n$-plectic manifolds} \label{algebra_chapter}
From the algebraic point of view, the fundamental object in symplectic
geometry is the Poisson algebra of smooth functions whose bracket is
induced by the symplectic form. The nondegeneracy of a symplectic
2-form on $M$ induces an isomorphism from $TM$ to
$T^{\ast}M$. Hence, for every function $f$ there exists a unique
vector field $v_{f}$ such that $df =-\omega(v_{f},\cdot)$. This
assignment gives the Poisson bracket:
\begin{equation} \label{Poisson_bracket}
\brac{f}{g}=\omega(v_{f},v_{g}), \quad \forall f,g \in \cinf(M).
\end{equation}
This bracket is skew-symmetric and satisfies the Jacobi
identity. Hence, the space of smooth functions on a symplectic
manifold is a Lie algebra.\footnote{The Poisson bracket also obeys an additional 
Leibniz-like rule: $\brac{f}{gh}=\brac{f}{g}h + g\brac{f}{h}$.}
In classical mechanics, the functions play the role of the 
`observables', or measurements, of a physical system of point particles.
The Poisson bracket is used to describe how these measurements change
as the system evolves in time.

Certain complications arise if we try to repeat the above construction for an arbitrary
$n$-plectic manifold $(M,\omega)$. The nondegeneracy
of the $n$-plectic form gives an injection $TM \to \Lambda^{n}T^{\ast}M$
that is not necessarily onto. Therefore, only a subspace of the
$(n-1)$-forms on $M$ have the property that there exists a unique vector field
$v_{\alpha}$ such that 
\[
d\alpha = -\omega(v_{\alpha},\cdots).
\]
We call such $(n-1)$-forms `Hamiltonian'. Hence, we can copy the
definition of the Poisson bracket given above and
define a skew-symmetric bracket on the Hamiltonian $(n-1)$-forms
\[
\brac{\alpha}{\beta}=\omega(v_{\alpha},v_{\beta},\cdots).
\]
However, as we will see in Lemma \ref{tech_lemma}, this bracket only satisfies the
Jacobi identity up to an exact form:
\begin{equation} \label{intro_jacobi}
    \brac{\alpha}{\brac{\beta}{\gamma}} -
   \brac{\brac{\alpha}{\beta}}{\gamma} 
    -\brac{\beta}{\brac{\alpha}{\gamma}}
    =-d\bigl (\omega(v_{\alpha},v_{\beta},v_{\gamma},\cdots) \bigr).
\end{equation}
\noi Therefore, it is not necessarily a Lie bracket for $n>1$.

Roughly speaking, we can imagine the Hamiltonian forms as being
part of a complex $L$ whose boundary operator is the de Rham
differential, and interpret the left-hand side of 
Eq.\ \ref{intro_jacobi} as the difference of two chain maps:
\[
\brac{\cdot}{\brac{\cdot}{\cdot}} \maps L \tensor L \tensor L \to L,
\]
and
\[
\brac{\brac{\cdot}{\cdot}}{\cdot} + \brac{\cdot}{\brac{\cdot}{\cdot}}
\maps L \tensor L \tensor L \to L.
\]
From this point of view, the right-hand side of Eq.\
\ref{intro_jacobi} suggests that we interpret the evaluation of
$\omega$ on three Hamiltonian vector fields as a chain homotopy. This
leads us to consider an algebraic structure called a Lie $n$-algebra.

Lie $n$-algebras are
higher analogs of differential graded Lie algebras (DGLAs).
They consist of a graded vector space concentrated in degrees $0,\ldots,n-1$ and are
equipped with a collection of skew-symmetric $k$-ary brackets, for $1
\leq k \leq n+1$, that satisfy a generalized Jacobi identity
\cite{Lada-Markl,LS}. In particular, the $k=2$ bilinear bracket
behaves like a Lie bracket that only satisfies the ordinary Jacobi
identity up to `higher coherent' chain homotopy. When $n=1$, we
recover the definition of an ordinary Lie algebra. 
For $n=\infty$, we obtain the more
general notion of an $L_{\infty}$-algebra, which was first
discovered by Schlessinger and Stasheff \cite{Schlessinger:1979}.
The definition of a Lie $n$-algebra may seem at first
rather artificial. However, they are ubiquitous in
mathematical physics and in certain areas of algebraic topology. In fact,
there is an alternative definition of an $L_{\infty}$-algebra, based on 
a construction of Quillen \cite{Quillen:1969}, 
which shows that it is an obvious and quite natural generalization of a
DGLA.

The main result of this chapter is Theorem \ref{main_thm}. Given an $n$-plectic
manifold, we explicitly construct a Lie $n$-algebra on a complex
consisting of the Hamiltonian $(n-1)$-forms and arbitrary $p$-forms for $0 \leq p \leq
n-2.$ The bilinear bracket, as well as all higher $k$-ary brackets, are
specified by the $n$-plectic structure. For $n=1$, the Lie 1-algebra
we obtain from this construction is the underlying Lie algebra of the
Poisson algebra of a symplectic manifold. For a 2-plectic manifold
representing the `multi-phase' space of a bosonic string, we showed in
our work with Baez and Hoffnung that the Lie 2-algebra of Hamiltonian
1-forms contains the physical observables used in string theory \cite{Baez:2008bu}. 
Hence, we often refer to the Lie $n$-algebra arising from an
$n$-plectic manifold as the ``algebra of observables''.

In Appendix \ref{leibniz_appendix}, we consider other algebraic
structures which naturally arise in
higher symplectic geometry: dg Leibniz algebras and Roytenberg's weak
Lie 2-algebras.

\section{Hamiltonian forms}
In this section, we equip the space of Hamiltonian $(n-1)$-forms
on an $n$-plectic manifold with a bilinear skew-symmetric bracket,
and note some of its properties. In order to aid our computations, we
introduce some notation and review the Cartan calculus involving multivector
fields and differential forms. We follow the notation and sign
conventions found in Appendix A of the paper by Forger, Paufler, and
R\"{o}mer \cite{Forger:2002ak}. 

Let $\X(M)$ be the $\cinf(M)$-module of
vector fields on a manifold $M$ and
let 
\[
\LX(M)=\bigoplus^{\dim M}_{k=0}\Lambda^{k} \left(\X(M) \right)
\]
be the graded commutative algebra of multivector fields. On $\LX(M)$ there is a
$\R$-bilinear map $[\cdot,\cdot] \maps \LX(M) \times \LX(M) \to \LX(M)$ called
the \textbf{Schouten bracket}, which gives $\LX(M)$ the structure of a
Gerstenhaber algebra. This means the Schouten bracket is a
degree $-1$ Lie bracket which satisfies the graded Leibniz rule with respect to the
wedge product. The Schouten bracket of two decomposable multivector fields
$u_{1} \wedge \cdots \wedge u_{m}, v_{1} \wedge \cdots \wedge v_{n}
\in \LX(M)$ is
\begin{multline} \label{Schouten}
\left [ u_{1} \wedge \cdots \wedge u_{m}, v_{1} \wedge \cdots \wedge
  v_{n} \right] 
= \\\sum_{i=1}^{m} \sum_{j=1}^{n} (-1)^{i+j} [u_{i},v_{j}]
\wedge u_{1} \wedge \cdots \wedge \hat{u}_{i} \wedge  \cdots \wedge
u_{m}\\
\quad \wedge v_{1} \wedge \cdots \wedge \hat{v}_{j} \wedge \cdots \wedge v_{n},
\end{multline}

where $[u_{i},v_{j}]$ is the usual Lie bracket of vector fields.

Given a form $\alpha \in \Omega^{\bullet}(M)$, the \textbf{interior product} of a decomposable
multivector field $v_{1} \wedge \cdots \wedge v_{n}$ with $\alpha$ is
\begin{equation} \label{interior}
\iota(v_{1} \wedge \cdots \wedge v_{n}) \alpha = \iota_{v_{n}} \cdots
\iota_{v_{1}} \alpha,
\end{equation}
where $\iota_{v_{i}} \alpha$ is the usual interior product of vector
fields and differential forms. The interior product of an arbitrary
multivector field is obtained by extending the above formula by $\cinf(M)$-linearity. 

The \textbf{Lie derivative} $\L_{v}$ of a differential form along a multivector field $v \in
\LX(M)$ is defined via the graded commutator of $d$ and $\iota(v)$:
\begin{equation} \label{Lie}
\L_{v} \alpha =  d \iota(v) \alpha - (-1)^{\deg{v}} \iota(v) d\alpha,
\end{equation}
where $\iota(v)$ is considered as a degree $-\deg{v}$ operator.

The last identity we will need involving multivector fields is for the graded commutator of
the Lie derivative and the interior product. Given $u,v \in
\LX(M)$, it follows from Proposition A3 in \cite{Forger:2002ak} that
\begin{equation} \label{commutator}
\iota([u,v]) \alpha = (-1)^{(\deg{u}-1)\deg{v}} \L_{u} \iota(v)  \alpha - \iota(v)\L_{u} \alpha.
\end{equation}

We return now to $n$-plectic geometry. Our first definition is:
\begin{definition} \label{hamiltonian}
Let $(M,\omega)$ be an $n$-plectic manifold.  An $(n-1)$-form $\alpha$
is {\bf Hamiltonian} iff there exists a vector field $v_\alpha \in \X(M)$ such that
\[
d\alpha= -\ip{\alpha} \omega.
\]
\noi We say $v_\alpha$ is the {\bf Hamiltonian vector field} corresponding to $\alpha$. 
The set of Hamiltonian $(n-1)$-forms and the set of Hamiltonian vector
fields on an $n$-plectic manifold are both vector spaces and are denoted
as $\hamn{n-1}$ and $\VectH \left(M \right)$, respectively.
\end{definition}

The Hamiltonian vector field $v_\alpha$ is unique if it exists, but
there may be $(n-1)$-forms having no Hamiltonian vector field.  Note
that if $\alpha \in \Omega^{n-1}(M)$ is closed, then it is Hamiltonian
and its Hamiltonian vector field is the zero vector field. 
 
An elementary, yet important, fact is that the flow of a Hamiltonian
vector field preserves the $n$-plectic structure.
\begin{lemma}\label{L_thm}
If $v_{\alpha}$ is a Hamiltonian vector field, then $\L_{v_{\alpha}}\omega =0$.
\end{lemma}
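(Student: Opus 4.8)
The plan is to read this as the $n$-plectic analogue of the classical fact that a symplectic form is invariant under the flow of any Hamiltonian vector field, which is exactly the geometric content noted in the sentence preceding the statement. The infinitesimal version of ``the flow preserves $\omega$'' is precisely $\L_{v_\alpha}\omega = 0$, and I expect to obtain it by a one-line application of Cartan's formula combined with the two structural hypotheses that are built into the definitions: $\omega$ is closed, and $\ip{\alpha}\omega$ is exact.

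First I would specialize the general Lie-derivative identity of Equation \ref{Lie} to the case of a single vector field. Taking $v = v_\alpha$ with $\deg{v_\alpha} = 1$, so that $(-1)^{\deg{v_\alpha}} = -1$, Equation \ref{Lie} collapses to the usual Cartan magic formula
\[
\L_{v_\alpha}\omega = d\,\ip{\alpha}\omega + \ip{\alpha}\,d\omega.
\]
Next I would invoke the two defining properties in turn. Because $\omega$ is $n$-plectic it is closed by Definition \ref{n-plectic_def}, so $d\omega = 0$ and the second term drops out, leaving $\L_{v_\alpha}\omega = d\,\ip{\alpha}\omega$. Because $\alpha$ is Hamiltonian, Definition \ref{hamiltonian} gives $\ip{\alpha}\omega = -d\alpha$; substituting this and using $d^2 = 0$ yields
\[
\L_{v_\alpha}\omega = d(-d\alpha) = 0,
\]
which is the desired conclusion.

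The argument is essentially immediate, so I do not anticipate a genuine obstacle; the only point requiring care is the sign bookkeeping in Equation \ref{Lie}. One must correctly track the degree factor $(-1)^{\deg{v}}$ to confirm that its specialization to a degree-one field reproduces Cartan's formula with the signs that make $d\,\ip{\alpha}\omega$ and $\ip{\alpha}\,d\omega$ recombine as above, rather than with a spurious sign that would spoil the cancellation. I would therefore double-check the degree convention against the statement of Equation \ref{Lie} as the single nontrivial bit of verification.
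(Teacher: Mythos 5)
Your proof is correct and is essentially identical to the paper's: both apply Cartan's formula $\L_{v_\alpha}\omega = d\,\iota_{v_\alpha}\omega + \iota_{v_\alpha}\,d\omega$, kill the second term with $d\omega=0$, and kill the first with $\iota_{v_\alpha}\omega=-d\alpha$ and $d^2=0$. The sign check you flag is the only point of care, and it works out exactly as you describe.
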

\begin{proof}
\[
\L_{v_{\alpha}} \omega = d \ip{\alpha} \omega+ \ip{\alpha} d \omega
=-d d \alpha =0
\]
\end{proof}

We now formally define the bracket on $\hamn{n-1}$, which we described earlier
in the introduction. One motivation for considering this
bracket comes from its appearance in the multisymplectic formulations of
classical field theories \cite{Helein:2002wf,Kijowski:1973gi}.
\begin{definition}
\label{bracket_def}
Given $\alpha,\beta\in \hamn{n-1}$, the {\bf bracket} $\brac{\alpha}{\beta}$
is the $(n-1)$-form given by 
\[  \brac{\alpha}{\beta} = \iota_{v_{\beta}}\iota_{v_{\alpha}}\omega .\]
\end{definition}

When $n=1$, this bracket is the usual Poisson bracket of smooth
functions on a symplectic manifold. These next propositions show that
for $n>1$ the bracket of Hamiltonian forms has several properties in
common with the Poisson bracket. However,
unlike the case in symplectic geometry, we see that the bracket
$\brac{\cdot}{\cdot}$ does not need to satisfy the Jacobi identity for $n >1$.

\begin{prop}\label{bracket_prop} Let $\alpha,\beta \in
  \hamn{n-1}$ and
$v_{\alpha},v_{\beta}$ be their respective Hamiltonian
vector fields.  The bracket $\brac{\cdot}{\cdot}$ has the following properties:
  \begin{enumerate}
\item{The bracket is skew-symmetric:
\[
\brac{\alpha}{\beta}=-\brac{\beta}{\alpha}.
\]
}
\item{ The bracket of Hamiltonian forms is Hamiltonian:
  \[
    d\brac{\alpha}{\beta} = -\iota_{[v_{\alpha},v_{\beta}]} \omega,
  \]
and in particular we have 
\[     v_{\brac{\alpha}{\beta}} = [v_{\alpha},v_{\beta}]  .\]
}
\end{enumerate}
\end{prop}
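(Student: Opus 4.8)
The plan is to establish the two parts independently, both by direct Cartan calculus using the identities recorded in Equations~\eqref{Lie} and~\eqref{commutator} together with Lemma~\ref{L_thm}. Skew-symmetry requires nothing beyond the total antisymmetry of $\omega$: unwinding the convention in~\eqref{interior}, one has $\brac{\alpha}{\beta} = \ip{\beta}\ip{\alpha}\omega = \omega(v_{\alpha},v_{\beta},-,\ldots,-)$, so interchanging $\alpha$ and $\beta$ merely swaps the first two slots of the alternating form $\omega$ and introduces a sign. Hence $\brac{\beta}{\alpha} = -\brac{\alpha}{\beta}$.

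For the second claim the idea is to rewrite $d\brac{\alpha}{\beta}$ first via Cartan's formula and then via the commutator identity. Applying~\eqref{Lie} with the single vector field $v_{\beta}$ to the $n$-form $\ip{\alpha}\omega$ gives
\[
d\,\ip{\beta}\ip{\alpha}\omega = \L_{v_{\beta}}\ip{\alpha}\omega - \ip{\beta}\,d\,\ip{\alpha}\omega.
\]
The Hamiltonian condition $\ip{\alpha}\omega = -d\alpha$ from Definition~\ref{hamiltonian} forces $d\,\ip{\alpha}\omega = -dd\alpha = 0$, so the second term vanishes and $d\brac{\alpha}{\beta} = \L_{v_{\beta}}\ip{\alpha}\omega$. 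I would then feed this into~\eqref{commutator} with $u = v_{\beta}$ and $v = v_{\alpha}$, both of multivector degree one so that the sign prefactor $(-1)^{(\deg{u}-1)\deg{v}}$ equals $+1$, obtaining
\[
\L_{v_{\beta}}\ip{\alpha}\omega = \iota_{[v_{\beta},v_{\alpha}]}\omega + \ip{\alpha}\L_{v_{\beta}}\omega.
\]
Lemma~\ref{L_thm} annihilates the last term, since $v_{\beta}$ is Hamiltonian and hence $\L_{v_{\beta}}\omega = 0$. Using $[v_{\beta},v_{\alpha}] = -[v_{\alpha},v_{\beta}]$ then yields precisely $d\brac{\alpha}{\beta} = -\iota_{[v_{\alpha},v_{\beta}]}\omega$.

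This last equation says exactly that the $(n-1)$-form $\brac{\alpha}{\beta}$ is Hamiltonian with $[v_{\alpha},v_{\beta}]$ as an associated Hamiltonian vector field; the uniqueness of the Hamiltonian vector field (noted after Definition~\ref{hamiltonian}, and a consequence of the nondegeneracy of $\omega$) then upgrades this to the identity $v_{\brac{\alpha}{\beta}} = [v_{\alpha},v_{\beta}]$. I expect no genuine obstacle in this argument: the conceptual content is carried entirely by the two facts $dd\alpha = 0$ and $\L_{v_{\beta}}\omega = 0$, and the only thing demanding care is the sign bookkeeping under the Forger--Paufler--R\"omer conventions, in particular the order reversal in the interior product~\eqref{interior} and the verification that the prefactor in~\eqref{commutator} is trivial for a pair of ordinary vector fields.
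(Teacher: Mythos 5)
Your proof is correct and follows essentially the same route as the paper's: skew-symmetry from the antisymmetry of $\omega$, then Cartan's formula applied to $\iota_{v_{\alpha}}\omega$, the vanishing of $dd\alpha$, the commutator identity for $\L_{v_{\beta}}\iota_{v_{\alpha}}$, and Lemma \ref{L_thm} to kill $\iota_{v_{\alpha}}\L_{v_{\beta}}\omega$. The sign bookkeeping, including the order reversal in the interior product convention, is handled correctly.
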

\begin{proof} 
The first statement follows from the antisymmetry of $\omega$. To
prove the second statement, we use Lemma \ref{L_thm}:
\begin{align*}
d\brac{\alpha}{\beta} & = d\ip{\beta}\ip{\alpha} \omega\\
&= \left (\L_{v_{\beta}}-
\iota_{v_{\beta}} d \right) \iota_{v_{\alpha}}\omega \\
&=\L_{v_{\beta}} \iota_{v_{\alpha}}\omega + \iota_{v_{\beta}} d  d\alpha \\
&=\iota_{[v_{\beta},v_{\alpha}]}\omega +
\iota_{v_{\alpha}} \L_{v_{\beta}} \omega\\
&=-\iota_{[v_{\alpha},v_{\beta}]}\omega.
\end{align*}
\end{proof}

\begin{prop}\label{no_jacobi}
The bracket $\brac{\cdot}{\cdot}$ satisfies the Jacobi identity up to an exact $(n-1)$-form:
\[
    \brac{\alpha_{1}}{\brac{\alpha_{2}}{\alpha_{3}}} -
    \brac{\brac{\alpha_{1}}{\alpha_{2}}}{\alpha_{3}} 
    -\brac{\alpha_{2}}{\brac{\alpha_{1}}{\alpha_{3}}} =-d\iota(v_{\alpha_{1}}
      \wedge v_{\alpha_{2}} \wedge v_{\alpha_{3}}) \omega.
\]
\end{prop}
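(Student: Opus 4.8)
The plan is to reduce both sides of the claimed identity to linear combinations of terms of the form $\iota_{[v_i,v_j]}\iota_{v_k}\omega$, where I abbreviate $v_i := v_{\alpha_i}$, and then match them. Everything rests on inputs already available: the identity $v_{\brac{\alpha}{\beta}} = [v_\alpha, v_\beta]$ from Proposition \ref{bracket_prop}, the definition $\brac{\alpha}{\beta} = \iota_{v_\beta}\iota_{v_\alpha}\omega$ (Definition \ref{bracket_def}), Cartan's formula $\L_v = d\iota_v + \iota_v d$ (the vector-field case of Equation \ref{Lie}), the commutator $\iota_{[u,v]} = \L_u\iota_v - \iota_v\L_u$ (the vector-field case of Equation \ref{commutator}), and the vanishings $d\omega = 0$ and $\L_{v_i}\omega = 0$ (Lemma \ref{L_thm}).

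First I would expand the left-hand side. Since $v_{\brac{\alpha}{\beta}} = [v_\alpha,v_\beta]$, each nested bracket collapses to a double interior product: for example $\brac{\alpha_1}{\brac{\alpha_2}{\alpha_3}} = \iota_{[v_2,v_3]}\iota_{v_1}\omega$, and similarly $\brac{\brac{\alpha_1}{\alpha_2}}{\alpha_3} = \iota_{v_3}\iota_{[v_1,v_2]}\omega$ and $\brac{\alpha_2}{\brac{\alpha_1}{\alpha_3}} = \iota_{[v_1,v_3]}\iota_{v_2}\omega$. Using the anticommutativity $\iota_u\iota_w = -\iota_w\iota_u$ and the antisymmetry of the Lie bracket, the left-hand side becomes a fixed signed sum of three terms of the stated type.

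Next I would expand the right-hand side. By Equation \ref{interior}, $\iota(v_1\wedge v_2\wedge v_3)\omega = \iota_{v_3}\iota_{v_2}\iota_{v_1}\omega$, so it suffices to compute $d\iota_{v_3}\iota_{v_2}\iota_{v_1}\omega$. I would strip the exterior derivative one slot at a time with Cartan's formula. The innermost step already gives $d\iota_{v_1}\omega = \L_{v_1}\omega - \iota_{v_1}d\omega = 0$, and every subsequent appearance of $d\omega$ or of a bare $\L_{v_i}\omega$ is killed by $d\omega = 0$ and Lemma \ref{L_thm}. The commutator identity then rewrites each surviving Lie-derivative term $\L_{v_i}\iota_{v_j}\omega$ as $\iota_{[v_i,v_j]}\omega$, so that $d\iota_{v_3}\iota_{v_2}\iota_{v_1}\omega$ is expressed through the same three $\iota_{[v_i,v_j]}\iota_{v_k}\omega$ building blocks. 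Comparing with the left-hand side shows they agree up to an overall minus sign, which is exactly the claim.

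The main obstacle is entirely sign bookkeeping: the anticommutativity of interior products, the antisymmetry of the Lie bracket, and the signs in the graded commutator must all be tracked so that the three genuine terms line up with matching coefficients and the spurious terms cancel. Conceptually the statement is transparent—the failure of the Jacobi identity is exact precisely because every term involving $d\omega$ or $\L_{v_i}\omega$ drops out—but the care needed to verify that no extra term survives, and that the single overall sign is correct, is where the real work lies.
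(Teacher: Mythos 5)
Your proposal is correct and is essentially the paper's argument: the same ingredients ($v_{\brac{\alpha}{\beta}}=[v_\alpha,v_\beta]$, Cartan's formula, the commutator identity $\iota_{[u,v]}=\L_u\iota_v-\iota_v\L_u$, $d\omega=0$, and $\L_{v_i}\omega=0$) drive the same reduction of both sides to the three terms $\iota_{[v_i,v_j]}\iota_{v_k}\omega$. The only difference is packaging: the paper proves the general $m$-fold identity (Lemma \ref{tech_lemma}) by induction, needed later for Theorem \ref{main_thm}, and obtains the proposition as the $m=3$ case, whereas you unroll that induction directly at $m=3$.
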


The proof of Proposition \ref{no_jacobi} follows from the next lemma. 
We will also use this lemma in the proof of Theorem \ref{main_thm} in
Section \ref{main}.

\begin{lemma}\label{tech_lemma}
If $(M,\omega)$ is an $n$-plectic manifold and $v_{1},\hdots,v_{m} \in
\VectH(M)$ with $m \geq 2$ then
\begin{multline} \label{big_identity}
d \iota(v_{1} \wedge\cdots \wedge v_{m}) \omega = \\(-1)^{m}\sum_{1 \leq i < j \leq
  m} (-1)^{i+j} \iota([v_{i},v_{j}] \wedge v_{1} \wedge \cdots
  \wedge \hat{v}_{i} \wedge \cdots \wedge \hat{v}_{j} \wedge \cdots \wedge v_{m})
\omega. 
\end{multline}
\end{lemma}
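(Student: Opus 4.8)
The plan is to argue by induction on $m$, peeling off the single vector field $v_{1}$ at each stage. The base case $m=2$ is exactly Proposition \ref{bracket_prop}(2), since $\iota(v_{1}\wedge v_{2})\omega = \iota_{v_{2}}\iota_{v_{1}}\omega = \brac{\alpha_{1}}{\alpha_{2}}$ and the right-hand side of the claimed identity reduces to $-\iota([v_{1},v_{2}])\omega$ when $m=2$. For the inductive step I assume the identity for any $m-1$ Hamiltonian vector fields (so $m\geq 3$) and establish it for $m$. The governing principle throughout is that, because $\omega$ is closed, interior products with $\omega$ convert the de Rham differential into Lie derivatives, and Lemma \ref{L_thm} kills the Lie derivatives along the individual Hamiltonian vector fields.

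Concretely, set $W=v_{2}\wedge\cdots\wedge v_{m}$, so that $V=v_{1}\wedge W$ and, by the convention in Eq.\ \ref{interior}, $\iota(V)=\iota(W)\iota_{v_{1}}$. Since $v_{1}$ is Hamiltonian and $\omega$ is closed, Lemma \ref{L_thm} gives $d\iota_{v_{1}}\omega=\L_{v_{1}}\omega=0$, so applying Eq.\ \ref{Lie} to the multivector $W$ (degree $m-1$) acting on the $n$-form $\iota_{v_{1}}\omega$ yields the first key reduction
\[
d\iota(V)\omega = d\iota(W)\iota_{v_{1}}\omega = \L_{W}\iota_{v_{1}}\omega .
\]
Next I apply the commutator identity Eq.\ \ref{commutator} with $u=W$ and $v=v_{1}$; after rearranging signs this splits the expression as
\[
d\iota(V)\omega = (-1)^{m}\,\iota([W,v_{1}])\omega + (-1)^{m}\,\iota_{v_{1}}\L_{W}\omega .
\]
The two summands will produce, respectively, the terms of the claimed identity in which one of the indices equals $1$ and the terms in which both indices exceed $1$.

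To finish, I expand each piece. For the first summand I compute the Schouten bracket $[W,v_{1}]$ from Eq.\ \ref{Schouten}; this is a sum of decomposable multivectors $\pm[v_{1},v_{p}]\wedge v_{2}\wedge\cdots\widehat{v_{p}}\cdots\wedge v_{m}$ over $p=2,\dots,m$, and after contracting with $\omega$ these match exactly the $i=1$ terms of the right-hand side. For the second summand I use $\L_{W}\omega=d\iota(W)\omega$ (Eq.\ \ref{Lie} with $d\omega=0$) and feed in the inductive hypothesis for the $m-1$ fields $v_{2},\dots,v_{m}$; contracting the result with $\iota_{v_{1}}$ and reordering each wedge to restore $v_{1}$ to its canonical position (picking up a sign $(-1)^{m-3}$ from commuting $v_{1}$ past the remaining $m-3$ factors) produces precisely the $i\geq 2$ terms. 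Summing the two contributions gives the full identity \eqref{big_identity}. The main obstacle is entirely in the sign bookkeeping: reconciling the $(-1)^{m}$ from the commutator formula, the $(-1)^{m-1}$ carried by the inductive hypothesis, the reordering sign $(-1)^{m-3}$, and the reindexing $a=i-1,\ b=j-1$ when passing from the labels of $W$ back to the original indices; verifying that these collapse to the single global factor $(-1)^{m}(-1)^{i+j}$ is the delicate part of the argument.
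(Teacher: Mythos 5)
Your proof is correct and follows essentially the same route as the paper's: induction on $m$ with the base case handled by Proposition \ref{bracket_prop}, the inductive step splitting $d\iota(\cdot)\omega$ via Eq.\ \ref{Lie}, the commutator identity Eq.\ \ref{commutator}, Lemma \ref{L_thm}, and the Schouten expansion Eq.\ \ref{Schouten}. The only (cosmetic) difference is that you peel off $v_{1}$ from the innermost interior product, whereas the paper peels off $v_{m}$ from the outermost, so your sign bookkeeping is the mirror image of the paper's but checks out.
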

\begin{proof}
We proceed via induction on $m$. For $m=2$:
\[d\iota(v_{1} \wedge v_{2})\omega=d\brac{\alpha_{1}}{\alpha_{2}},
\]
 where $\alpha_{1},\alpha_{2}$
are any Hamiltonian $(n-1)$-forms whose Hamiltonian vector
fields are $v_{1},v_{2}$, respectively. Then Proposition
\ref{bracket_prop} implies Eq.\ \ref{big_identity} holds.

Assume Eq.\ \ref{big_identity} holds for $m-1$. 
Since $\iota(v_{1} \wedge\cdots \wedge v_{m})=\iota_{v_{m}}\iota(
v_{1} \wedge\cdots \wedge v_{m-1})$, Eq.\ \ref{Lie} implies:
\begin{equation} \label{step1}
d \iota(v_{1} \wedge\cdots \wedge v_{m})\omega= \L_{v_{m}}\iota(v_{1} \wedge
  \cdots \wedge v_{m-1}) \omega- \iota_{v_{m}} d \iota(v_{1} \wedge
  \cdots \wedge v_{m-1}) \omega. 
\end{equation}
Consider the first term on the right hand side. Using Eq.\
\ref{commutator} we can rewrite it as
\begin{align*}
\L_{v_{m}}\iota(v_{1} \wedge  \cdots \wedge v_{m-1}) \omega &= 
\iota([v_{m},v_{1} \wedge \cdots \wedge v_{m-1}]) \omega \\
& \quad +\iota(v_{1} \wedge \cdots \wedge v_{m-1})
\L_{v_{m}} \omega \\
&=\iota([v_{m},v_{1} \wedge \cdots \wedge v_{m-1}]) \omega,
\end{align*}
where the last equality follows from Lemma \ref{L_thm}.

The definition of the Schouten bracket given in Eq.\
\ref{Schouten} implies
\[
[v_{m},v_{1} \wedge \cdots \wedge v_{m-1}] =\sum_{i=1}^{m-1}
(-1)^{i+1} [v_{m},v_{i}] \wedge v_{1} \wedge \cdots \wedge \hat{v}_{i}
\wedge \cdots \wedge v_{m-1}.
\]
Therefore we have
\begin{align*}
\L_{v_{m}}\iota(v_{1} \wedge  \cdots \wedge v_{m-1}) \omega
&=\iota([v_{m},v_{1} \wedge \cdots \wedge v_{m-1}]) \omega \\
&=\sum_{i=1}^{m-1}
(-1)^{i} \iota([v_{i},v_{m}] \wedge v_{1} \wedge \cdots \wedge \hat{v}_{i}
\wedge \cdots \wedge v_{m-1})\omega.
\end{align*}
Combining this with the second term in Eq.\ \ref{step1} and using the
inductive hypothesis gives
\begin{align*}
\begin{split}
d \iota(v_{1} \wedge\cdots \wedge v_{m}) \omega 
 = \sum_{i=1}^{m-1}
(-1)^{i} \iota([v_{i},v_{m}] \wedge v_{1} \wedge \cdots \wedge \hat{v}_{i}
\wedge \cdots \wedge v_{m-1})\omega   
\end{split}
\\
& \quad -(-1)^{m-1} \sum \limits_{1 \leq i < j \leq
  m-1} (-1)^{i+j} \iota_{v_{m}}\iota([v_{i},v_{j}] \wedge v_{1} \wedge\cdots \\
& \quad \wedge \hat{v}_{i} \wedge \cdots  \wedge \hat{v}_{j} \wedge \cdots \wedge v_{m-1})
\omega \\
&= (-1)^{m}\left (\sum_{i=1}^{m-1}
(-1)^{i+m} \iota([v_{i},v_{m}] \wedge v_{1} \wedge \cdots \wedge \hat{v}_{i}
\wedge \cdots \wedge v_{m-1})\omega \right.  \\
& \quad \left. + \sum \limits_{1 \leq i < j \leq
  m-1} (-1)^{i+j} \iota([v_{i},v_{j}] \wedge v_{1} \wedge \cdots \wedge
  \hat{v}_{i} \wedge \cdots \wedge \hat{v}_{j} \wedge \cdots \wedge v_{m})
\omega \right)\\
&=(-1)^{m}\sum_{1 \leq i < j \leq
  m} (-1)^{i+j} \iota([v_{i},v_{j}] \wedge v_{1} \wedge \cdots \wedge
  \hat{v}_{i} \wedge \cdots \wedge \hat{v}_{j} \wedge \cdots \wedge v_{m})
\omega. 
\end{align*}
\end{proof}

\begin{proof}[Proof of Proposition \ref{no_jacobi}]
Apply Lemma \ref{tech_lemma} with $m=3$, and use the fact 
that $v_{\brac{\alpha_{i}}{\alpha_{j}}}=[v_{\alpha_{i}},v_{\alpha_{j}}]$.
\end{proof}

\section{$L_{\infty}$-algebras and Lie $n$-algebras}
We begin this section by recalling some basic graded linear algebra. Let $V$ be a graded vector space. 
Let $x_{1},\hdots,x_{n}$ be elements of $V$ and $\sigma \in \Sn_n$ a permutation. The \textbf{Koszul sign} $\epsilon(\sigma)=\epsilon(\sigma ; x_{1},\hdots,x_{n})$ is defined by the equality
\[
x_{1} \wedge \cdots \wedge x_{n} = \epsilon(\sigma ;
x_{1},\hdots,x_{n}) x_{\sigma(1)} \wedge \cdots \wedge x_{\sigma(n)},
\]
which holds in the free graded commutative algebra generated by
$V$. Given $\sigma \in \Sn_n$, let $(-1)^{\sigma}$
denote the usual sign of a permutation. Note that $\epsilon(\sigma)$ does
not include the sign $(-1)^{\sigma}$.  

We say $\sigma \in \Sn_{p+q}$ is a {\bf $\mathbf{(p,q)}$-unshuffle}
iff $\sigma(i) < \sigma(i+1)$ whenever $i \neq p$.  The set of
$(p,q)$-unshuffles is denoted by $\Sh(p,q)$. For example, $\Sh(2,1) =
\{ (1), (23), (123) \}$.

If $V$ and $W$ are graded
vector spaces, a linear map $f \maps V^{\tensor n} \to W$ is
\textbf{skew-symmetric} iff
\[
f(v_{\sigma(1)},\hdots,v_{\sigma(n)}) = (-1)^{\sigma}\epsilon(\sigma)
f(v_{1},\hdots,v_{n}),
\]
for all $\sigma \in \Sn_{n}$. The degree of an element $x_{1} \tensor \cdots
\tensor x_{n} \in V^{\tensor \bullet}$ of the graded tensor algebra
generated by $V$ is defined to be $\deg{x_{1} \tensor \cdots
\tensor x_{n}}=\sum_{i=1}^{n} \deg{x_{i}}$. 

Proposition \ref{no_jacobi} implies that we should not
expect $\hamn{n-1}$ to be a Lie algebra unless $n=1$. However, the
fact that the Jacobi identity is satisfied modulo boundary terms
suggests we consider what are known as strongly homotopy Lie algebras,
or $L_{\infty}$-algebras \cite{Lada-Markl,LS}.

\begin{definition} \label{Linfty} An
{\boldmath $L_{\infty}$}{\bf-algebra} is a graded vector space $L$
equipped with a collection
\[\left \{l_{k} \maps L^{\tensor k} \to L| 1
  \leq k < \infty \right\}\]
of
skew-symmetric linear maps with  $\deg{l_{k}}=k-2$ such that
the following identity holds for $1 \leq m < \infty :$
\begin{align} \label{gen_jacobi}
   \sum_{\substack{i+j = m+1, \\ \sigma \in \Sh(i,m-i)}}
  (-1)^{\sigma}\epsilon(\sigma)(-1)^{i(j-1)} l_{j}
   (l_{i}(x_{\sigma(1)}, \dots, x_{\sigma(i)}), x_{\sigma(i+1)},
   \ldots, x_{\sigma(m)})=0.
\end{align}
\end{definition}

\begin{definition} \label{LnA} An $L_{\infty}$-algebra $(L,\{l_{k} \})$
  is a {\bf Lie} {\boldmath $n$}{\bf -algebra} iff the underlying
  graded vector space $L$ is concentrated in degrees $0,\hdots,n-1$.
\end{definition}
Note that if $(L,\{l_{k} \})$ is a Lie $n$-algebra, then by degree counting $l_{k} =0 $ for $k > n+1$. 

The identity satisfied by the maps in Definition \ref{Linfty}
can be interpreted as a `generalized  Jacobi identity'. 
Indeed, using the notation $d=l_{1}$ and $[\cdot,\cdot] = l_{2}$, Eq.\ \ref{gen_jacobi} implies
\begin{align*}
d^2&=0\\
d [x_{1},x_{2}] &= [dx_{1},x_{2}] + (-1)^{\deg{x_{1}}}[x_{1},dx_{2}].
\end{align*}
Hence the map $l_{1} \maps L \to L$ can be
interpreted as a differential, while the map $l_{2} \maps L\tensor L
\to L$ can be interpreted as a bracket. The bracket is, of course, skew
symmetric:
\[
[x_{1},x_{2}] = -(-1)^{\deg{x_{1}} \deg{x_{2}}} [x_{2},x_{1}],
\]
but does not need to satisfy the usual Jacobi identity. In fact,
Eq.\ \ref{gen_jacobi} implies:
\begin{multline*}
(-1)^{\deg{x_{1}}\deg{x_{3}}}[[x_{1},x_{2}],x_{3}] + (-1)^{\deg{x_{2}} \deg{x_{3}}}[[x_{3},x_{1}],x_{2}] + (-1)^{\deg{x_{1}}\deg{x_{2}}} [[x_{2},x_{3}],x_{1}] \\= 
(-1)^{\deg{x_{1}}\deg{x_{3}}+1} \bigl (dl_{3}(x_{1},x_{2},x_{3}) + l_{3}(dx_{1},x_{2},x_{3}) \\
+ (-1)^{\deg{x_{1}}} l_{3}(x_{1},dx_{2},x_{3})  + (-1)^{\deg{x_{1}} +
  \deg{x_{2}}} l_{3}(x_{1},x_{2},dx_{3})  \bigr ).
\end{multline*}
Therefore one can interpret the traditional Jacobi identity as a
null-homotopic chain map from $L\tensor L \tensor L$ to $L$. The
map $l_{3}$ acts as a chain homotopy and is referred to as the
\textbf{Jacobiator}. Eq.\ \ref{gen_jacobi} also implies that $l_{3}$
must satisfy a coherence condition of its own. From the above
discussion, it is easy to see that a Lie 1-algebra is an ordinary Lie
algebra, while an $L_{\infty}$-algebra with $l_{k} \equiv 0$ for all $k
\geq 3$ is a differential graded Lie algebra.

\begin{remark}[Morphisms of $L_{\infty}$-algebras] \label{morphism_remark}
There is a more elegant way to define an $L_{\infty}$-algebra using
the language of graded coalgebras. This is inspired by the Quillen
construction \cite{Quillen:1969} for DGLAs, which realizes any DGLA structure on a
graded vector space $V$ as a codifferential on the cofree,
cocommutative coalgebra (without counit) generated by the suspension of
$V$.  One can then define an
$L_{\infty}$-structure on $V$ to simply be \textit{any} codifferential
on this coalgebra \cite{Lada-Markl}. The fact that a codifferential squares to
zero is equivalent to Eq.\ \ref{gen_jacobi}.
The reader unfamiliar with coalgebras
is probably quite confused by these remarks. We only mention this
alternative definition, since it provides a natural definition 
of morphism between $L_{\infty}$-algebras. Such a morphism is
just a morphism between the corresponding graded coalgebras which
respects the codifferentials. In this thesis, we will only consider morphisms
between Lie 2-algebras (Def.\ \ref{homo}).
\end{remark}

\subsection{Lie 2-algebras}\label{L2A_section}
Since we will be focusing specifically on 2-plectic manifolds in later chapters,
we discuss here the theory of Lie 2-algebras in more detail.
As $L_{\infty}$-algebras, Lie 2-algebras are relatively easy to work
with, since the underlying complex is concentrated in only two degrees.
In this case, one can write out the axioms explicitly using elementary
homological algebra.
\begin{proposition} \label{L2A}
A {\bf Lie 2-algebra} is a $2$-term chain complex of vector spaces
$L = (L_1\stackrel{d}\rightarrow L_0)$ equipped with:
\begin{itemize}
\item skew-symmetric chain map $\blankbrac\maps L \tensor
  L\to L$ called the {\bf bracket};
\item an skew-symmetric chain homotopy $J \maps L \tensor L \tensor L
  \to L$ 
from the chain map
\[     \begin{array}{ccl}  
     L \tensor L \tensor L & \to & L   \\
     x \tensor y \tensor z & \longmapsto & [x,[y,z]],  
  \end{array}
\]
to the chain map
\[     \begin{array}{ccl}  
     L \tensor L \tensor L& \to & L   \\
     x \tensor y \tensor z & \longmapsto & [[x,y],z] + [y,[x,z]]  
  \end{array}
\]
called the {\bf Jacobiator},
\end{itemize}
such that the following equation holds:
\begin{equation} \label{big_J}
\begin{array}{c}
  [x,J(y,z,w)] + J(x,[y,z],w) +
  J(x,z,[y,w]) + [J(x,y,z),w] \\ + [z,J(x,y,w)] 
  = J(x,y,[z,w]) + J([x,y],z,w) \\ + [y,J(x,z,w)] + J(y,[x,z],w) + J(y,z,[x,w]).
\end{array}
\end{equation}
\end{proposition}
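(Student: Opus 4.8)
The plan is to derive this concrete description directly from Definition \ref{Linfty} by specializing the generalized Jacobi identity (\ref{gen_jacobi}) to a graded vector space $L$ concentrated in degrees $0$ and $1$, which is exactly the case $n=2$ of Definition \ref{LnA}. The proposition asserts an equivalence between the abstract notion of a Lie $2$-algebra and the explicit package $\bigl(L_1 \xrightarrow{d} L_0,\ \blankbrac,\ J\bigr)$, and both directions of this equivalence amount to reading the same low-degree instances of (\ref{gen_jacobi}) forwards and backwards. So the entire proof is really the extraction of the identity (\ref{gen_jacobi}) for $1 \leq m \leq 4$ and the verification that nothing new happens for $m \geq 5$.

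First I would name the structure maps: set $d = l_1$ (degree $-1$), $\blankbrac = l_2$ (degree $0$), and $J = l_3$ (degree $+1$). As already observed after Definition \ref{LnA}, degree counting forces $l_k = 0$ for $k \geq 4$, so these three maps are the only data. Unwinding the skew-symmetry of each $l_k$ in the graded sense then gives immediately that $d$ restricts to an ordinary linear map $L_1 \to L_0$, that $\blankbrac$ is graded skew-symmetric, and that $J$ is graded skew-symmetric; this is the place where the Koszul signs $\epsilon(\sigma)$ combine with the permutation signs $(-1)^{\sigma}$ to produce exactly the skew-symmetries demanded of a chain map and a chain homotopy.

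Next I would read off (\ref{gen_jacobi}) for each small value of $m$. The case $m=1$ yields $l_1 l_1 = 0$, i.e.\ $d^2 = 0$ (automatic here, since $L$ has only two nonzero degrees), so $(L_1 \xrightarrow{d} L_0)$ is a $2$-term complex. The case $m=2$ yields the derivation relation $d[x,y] = [dx,y] + (-1)^{\deg{x}}[x,dy]$, which is precisely the statement that $\blankbrac \maps L \tensor L \to L$ is a chain map. The case $m=3$ exhibits the failure of the ordinary Jacobi identity in the form $dJ + Jd = ([x,[y,z]]) - ([[x,y],z]+[y,[x,z]])$, i.e.\ $J$ is a chain homotopy between the two chain maps named in the statement, which is the Jacobiator. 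The case $m=4$ reproduces exactly the coherence equation (\ref{big_J}). Finally, I would dispose of the higher instances: for $m = 5$ the only admissible index pair is $(i,j)=(3,3)$, giving nested terms $l_3(l_3(-,-,-),-,-)$; since a nonzero inner $l_3$ already outputs a degree-$1$ element, the outer $l_3$ lands in degree $\geq 2$ and hence vanishes, so every term is zero. For $m \geq 6$ the constraint $i+j = m+1 \geq 7$ has no solution with $i,j \leq 3$, so the identity is vacuous. Thus the four conditions are not merely necessary but also sufficient, giving the equivalence.

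The main obstacle is purely the sign bookkeeping: one must juggle three families of signs simultaneously — the Koszul sign $\epsilon(\sigma)$, the permutation sign $(-1)^{\sigma}$, and the prefactor $(-1)^{i(j-1)}$ in (\ref{gen_jacobi}) — together with the degree-dependent signs $(-1)^{\deg{x}}$ that surface once the inputs are taken homogeneous and the unshuffles are enumerated. Matching the $m=3$ instance, and especially the ten-term $m=4$ instance, to the clean sign-free equation (\ref{big_J}) requires choosing convenient representative unshuffles and checking that the accumulated signs cancel as claimed. Everything else is formal; this sign verification is the only genuinely delicate step.
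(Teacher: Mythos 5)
Your proposal is correct and is essentially the same argument as the paper's: the paper simply cites Lemma 33 of Baez and Crans, and that lemma is exactly the low-degree unpacking of the generalized Jacobi identity (cases $m=1,2,3,4$ plus the vanishing of $l_k$ for $k\geq 4$ and the vacuity of the identity for $m\geq 5$) that you describe. Your degree-counting argument for why the $m=5$ term $l_3(l_3(-,-,-),-,-)$ vanishes is the right one, and your identification of the sign bookkeeping as the only delicate step is accurate.
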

\begin{proof}
See Lemma 33 in Baez and Crans \cite{HDA6}. Note that the Jacobiator $J$
is the map $l_{3}$ in Definition \ref{LnA}. 
\end{proof}

For Lie 2-algebras, it is easy to write down the definition of a
morphism without using coalgebras. (See Remark \ref{morphism_remark}.)
\begin{definition}[\cite{HDA6}]
\label{homo}
Given Lie $2$-algebras $L=(L,\blankbrac,J)$ and
$L'=(L',{\blankbrac}^{\prime},J')$ a {\bf morphism} from
$L$ to $L'$ consists of:
\begin{itemize}
\item{a chain map $\phi \maps L \to L'$, and}
\item{a chain homotopy $\Phi \maps L \tensor L \to L'$ from the chain
  map
\[     \begin{array}{ccl}  
     L \tensor L & \to & L'   \\
     x \tensor y & \longmapsto & \phi \left( [x,y] \right)
  \end{array}
\]
to the chain map
\[     \begin{array}{ccl}  
     L \tensor L & \to & L'   \\
     x \tensor y & \longmapsto & \left [ \phi(x),\phi(y) \right]^{\prime},
  \end{array}
\]
}
\end{itemize}
such that the following equation holds:
\begin{equation}
\begin{array}{l}
\phi_1(J(x,y,z))- J^{\prime}(\phi_0(x),\phi_0(y), \phi_0(z)) = \\
\Phi(x,[y,z]) -\Phi([x,y],z) - \Phi(y,[x,z]) - [\Phi(x,y),\phi_0(z)]^{\prime}\\
+ [\phi_0(x), \Phi(y,z)]^{\prime}- [\phi_0(y),\Phi(x,z)]^{\prime}.
\end{array}
\end{equation}
We say a morphism is {\bf strict} iff $\Phi=0$.
\end{definition}

Typically, isomorphism is too strong of an equivalence to use for $L_{\infty}$-algebras.
Instead we use:
\begin{definition}
A Lie 2-algebra morphism $(\phi,\Phi) \maps L \to L'$
is a  {\bf quasi-isomorphism} iff the chain map $\phi$
induces an isomorphism on the homology of the underlying chain
complexes of $L$ and $L'$.
\end{definition}
\noi Since every vector space is free, quasi-isomorphism in our case
is the same thing as chain homotopy equivalence, or categorical
equivalence in the sense of Baez and Crans \cite{HDA6}.

\section{Lie $n$-algebras from $n$-plectic manifolds} \label{main}
There are several clues that suggest that any $n$-plectic manifold
gives an $L_{\infty}$-algebra. 
Comparing Eq.\ \ref{big_identity} to the generalized Jacobi identity 
(\ref{gen_jacobi}) suggests that, for an
$n$-plectic manifold, we should look for
Lie $n$-algebra structures on the chain complex 
\begin{equation} \label{complex}
\cinf(M) \stackrel{d}{\to} \Omega^{1}(M) \stackrel{d}{\to} \cdots \stackrel{d}{\to} 
\Omega^{n-2}(M) \stackrel{d}{\to} \hamn{n-1},
\end{equation}
with the $l_{1}$ map equal to $d$. We denote this complex as
$(L,d)$. Note that here we are using the de Rham differential as a degree
-1 operator. Hence $L_{0}=\hamn{n-1}$, while $L_{n-1}=\cinf(M)$.

Note that the bracket
$\brac{\cdot}{\cdot}$ given in Definition \ref{bracket_def} induces a
well-defined bracket $\blankbrac^{\prime}$ on the quotient 
\[
\g=\hamn{n-1}/d\Omega^{n-2}(M),
\]
where $d\Omega^{n-2}(M)$ is the space of exact $(n-1)$-forms. This is
because the Hamiltonian vector field of an exact $(n-1)$-form is the
zero vector field. It follows from Proposition \ref{no_jacobi} that
$\left (\g,\blankbrac^{\prime} \right)$ is, in fact, a Lie algebra. 

If $M$ is contractible, then the homology of $(L,d)$ is
\begin{align*}
H_{0}(L)&=\g, \\
H_{k}(L)&=0 \quad \text{for $0<k<n-1$},\\
H_{n-1}(L)&=\R.
\end{align*}

Therefore, the augmented complex 
\begin{equation} \label{aug_complex} 0 \to \R \hookrightarrow \cinf(M)
  \stackrel{d}{\to} \Omega^{1}(M) \stackrel{d}{\to} \cdots
  \stackrel{d}{\to} \Omega^{n-2}(M) \stackrel{d}{\to} \hamn{n-1}
\end{equation}
is a resolution of $\g$.

Barnich, Fulp, Lada, and Stasheff \cite{Barnich:1997ij} showed
that, in general, if $(C,\delta)$ is a resolution of a
vector space $V \cong H_{0}(C)$ and $C_{0}$ is equipped with a
skew-symmetric map $\tilde{l}_{2} \maps C_{0} \tensor C_{0} \to C_{0}$
that induces a Lie bracket on $V$, then
$\tilde{l}_{2}$ extends to an $L_{\infty}$-structure on $(C,\delta)$.
Hence we have the following proposition:

\begin{prop}\label{contract_thm}
Given a contractible $n$-plectic manifold $(M,\omega)$, there is an $L_{\infty}$-algebra
$(\tilde{L},\{l_{k} \})$ with underlying graded vector space
\[
\tilde{L}_{i} =
\begin{cases}
\hamn{n-1} & i=0,\\
\Omega^{n-1-i}(M) & 0 < i \leq n-1,\\
\R & i=n,
\end{cases}
\]
and $l_{1} \maps \tilde{L} \to \tilde{L}$ defined as 
\[
l_{1}(\alpha)=
\begin{cases}
\alpha, & \text{if $\deg{\alpha}=n$} \\
d\alpha & \text{if $\deg{\alpha} \neq n$,}
\end{cases}
\]
and all higher maps  $\left \{l_{k} \maps \tilde{L}^{\tensor k} \to \tilde{L}| 2
  \leq k < \infty \right\}$ are constructed inductively by using the bracket
\[
\brac{\cdot}{\cdot} \maps \tilde{L}_{0} \tensor \tilde{L}_{0} \to \tilde{L}_{0}, \quad
\brac{\alpha_{1}}{\alpha_{2}} = \ip{\alpha_{2}}\ip{\alpha_{1}} \omega,
\]
where $v_{\alpha_{1}},v_{\alpha_{2}}$ are the Hamiltonian vector fields corresponding
to the Hamiltonian forms $\alpha_{1}, \alpha_{2}$. Moreover the maps
$\{ l_{k} \}$ may be constructed so that
\[
l_{k}(\alphadk{k}) \neq 0 \quad \text{only if all $\alpha_{k}$ have
  degree 0},
\]
for $k \geq 2$. 
\end{prop}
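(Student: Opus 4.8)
The plan is to obtain existence directly from the theorem of Barnich, Fulp, Lada, and Stasheff \cite{Barnich:1997ij} and then to read the support condition off its inductive construction, the one nonformal input being a geometric vanishing. Both hypotheses of that theorem are already available. By the homology computation preceding the statement, contractibility of $M$ (via the Poincar\'{e} lemma) forces the interior homology of $(L,d)$ to vanish, so the augmented complex (\ref{aug_complex}) is a resolution of $\g=\hamn{n-1}/d\Omega^{n-2}(M)$; and the bracket $\brac{\cdot}{\cdot}$ of Definition \ref{bracket_def} on $\tilde L_0=\hamn{n-1}$ is skew-symmetric (Proposition \ref{bracket_prop}) and, since its Jacobiator is exact by Proposition \ref{no_jacobi}, descends to a genuine Lie bracket on $\g$. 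The cited theorem then produces an $L_\infty$-structure with $l_1=d$ (augmented by the inclusion $\R\hookrightarrow\cinf(M)$ in top degree) and $l_2=\brac{\cdot}{\cdot}$ on $\tilde L_0$.

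For the moreover clause I would run the same induction but insist, at each stage, that the new bracket vanish on any tuple not consisting entirely of degree-$0$ forms. Suppose $l_2,\ldots,l_{m-1}$ have been built with this support property, and fix inputs $x_1,\ldots,x_m$ not all of degree $0$. In the relation (\ref{gen_jacobi}) for $m$, separate the term $i=m$ (which is $\pm\,d\,l_m(x_1,\ldots,x_m)$), the terms $i=1$ (which are $\sum_a\pm\,l_m(x_1,\ldots,l_1x_a,\ldots,x_m)$), and the interior terms $2\le i\le m-1$. A degree count kills the interior terms: a summand $l_j(l_i(\cdots),\cdots)$ can be nonzero only if $l_i$ is fed all-degree-$0$ arguments, giving it degree $i-2$, after which $l_j$ demands its lead argument have degree $0$, forcing $i=2$ and hence all $x_i$ of degree $0$, against our assumption.

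It then remains to see that the surviving terms $\sum_a\pm\,l_m(x_1,\ldots,l_1x_a,\ldots,x_m)$ also vanish, so that the constraint collapses to $d\,l_m(x_1,\ldots,x_m)=0$ and the choice $l_m=0$ is admissible. Writing $l_1x_a=dx_a$, which drops the total homological degree of the slots by one, there are two cases. If the inputs have total degree at least $2$, the arguments remain not-all-degree-$0$ and the term is zero by the inductive choice. The delicate case is total degree $1$: a single argument $x_a\in\Omega^{n-2}(M)$ has degree $1$, and $dx_a$ is then an exact $(n-1)$-form, so $l_m(x_1,\ldots,dx_a,\ldots,x_m)$ is an all-degree-$0$ evaluation — but $dx_a$ has vanishing Hamiltonian vector field (Definition \ref{hamiltonian}), so on the degree-$0$ brackets this returns $0$. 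Hence all mixed values of $l_m$ may be set to $0$. On all-degree-$0$ inputs, solvability of (\ref{gen_jacobi}) is guaranteed by acyclicity of the resolution (the BFLS step), and crucially one may choose the solution to annihilate boundaries: the contraction bracket $\pm\,\iota(v_{\alpha_1}\wedge\cdots\wedge v_{\alpha_m})\,\omega$ of Lemma \ref{tech_lemma} (using $v_{\brac{\alpha_i}{\alpha_j}}=[v_{\alpha_i},v_{\alpha_j}]$ from Proposition \ref{bracket_prop}) does so, since a boundary is an exact form with zero Hamiltonian vector field.

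The hard part is precisely this total-degree-$1$ check: it shows the support condition is not a formal consequence of Barnich--Fulp--Lada--Stasheff but hinges on the geometric fact that exact $(n-1)$-forms have vanishing Hamiltonian vector field. This is why I would commit from the start to the contraction form of the degree-$0$ brackets — so that they factor through the Hamiltonian-vector-field assignment and thereby annihilate boundaries — rather than to an arbitrary BFLS lift, for which the consistency of setting the mixed brackets to zero could fail.
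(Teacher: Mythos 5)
Your proposal is correct and follows essentially the same route as the paper: existence comes from Theorem 7 of Barnich--Fulp--Lada--Stasheff applied to the augmented resolution of $\g$, and the support condition rests on the single geometric fact that exact $(n-1)$-forms have vanishing Hamiltonian vector field, i.e.\ $\brac{\alpha}{d\beta}=0$. The only difference is that the paper simply cites the second remark following Theorem 7 in \cite{Barnich:1997ij} for the support condition, whereas you unpack that remark into the explicit inductive degree count — a worthwhile expansion, and your identification of the total-degree-$1$ case as the one nontrivial step is exactly where that remark's hypothesis is used.
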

\begin{proof}
  The proposition follows from Theorem 7 in the paper by Barnich,
  Fulp, Lada, and Stasheff \cite{Barnich:1997ij}. Since
  for any $n$-plectic manifold,
\[
\brac{\alpha}{d \beta}=0 \quad \forall \alpha \in \hamn{n-1} ~ \forall
\beta \in \Omega^{n-2}(M),
\]
the second remark following Theorem 7 in \cite{Barnich:1997ij} implies
that the maps $\{ l_{k}\}$ may be constructed so that they are trivial
when restricted to the positive-degree part of the $k$-th tensor power of $\tilde{L}$.
\end{proof}

For an arbitrary $n$-plectic manifold $(M,\omega)$, Proposition
\ref{contract_thm} guarantees the existence of $L_{\infty}$-algebras
locally. We want, of course, a global result in which the higher
$l_{k}$ maps are explicitly constructed using only the $n$-plectic
structure. Moreover, in our previous work on 2-plectic geometry \cite{Baez:2008bu}, we
were able to construct by hand a Lie 2-algebra on a 2-term complex
consisting of functions and Hamiltonian 1-forms. We did not need to
use a 3-term complex consisting of constants, functions, and
Hamiltonian 1-forms. Hence in the general case, we'd expect an
$n$-plectic manifold to give a Lie $n$-algebra whose underlying
complex is $(L,d)$, instead of a Lie $(n+1)$-algebra whose underlying
complex is the $(n+1)$-term complex used in the above proposition.

We can get an intuitive sense for what the maps $l_{k} \maps L^{\tensor k} \to L$
should be by unraveling the identity given in Definition \ref{Linfty}
for small values of $m$ and momentarily disregarding signs and
summations over unshuffles. For example, if $m=2$, then Eq.\
\ref{gen_jacobi} implies that the map $l_{2} \maps L \tensor L \to L$
must satisfy:
\begin{equation} \label{l2}
l_{1}l_{2} + l_{2} l_{1}=0.
\end{equation}
Obviously we want $l_{1}$ to be the de Rham differential and $l_{2}$
to be equal to the bracket $\brac{\cdot}{\cdot}$ when restricted to degree 0
elements:
\[
l_{2}(\alpha_{1},\alpha_{2}) = \pm \iota_{v_{\alpha_{2}}}
\iota_{v_{\alpha_{1}}} \omega=\brac{\alpha_{1}}{\alpha_{2}}\quad \forall \alpha_{i} \in
L_{0}=\hamn{n-1}.
\]
Now consider elements of degree 1. For example, if $\alpha \in
L_{0}$ and $\beta \in
L_{1}=\Omega^{n-2}(M)$, then
$l_{2}(\alpha,d\beta)=\brac{\alpha}{d\beta} =0$.
Therefore Eq.\ \ref{l2} implies
\[
dl_{2}(\alpha,\beta)=  l_{1} l_{2}(\alpha,\beta)= 0.
\]
Hence, when restricted to elements of degree 1, $l_{2}(\alpha,\beta)$
must be a closed $(n-2)$-form. We will choose this closed
form to be 0. In fact, we will choose $l_{2}$ to vanish on all elements with degree
$>0$, since, in general, we want the $L_{\infty}$ structure to 
only depend on the de Rham differential and the $n$-plectic structure.

Now suppose $l_{2}$ is defined as above and
let $m=3$.  Then Eq.\ \ref{gen_jacobi} implies:
\begin{equation}\label{l3}
l_{1} l_{3} + l_{2} l_{2} + l_{3} l_{1} =0.
\end{equation}
On degree 0 elements, $l_{1}=0$. Therefore it is clear from Proposition
\ref{no_jacobi} that the map $l_{3} \maps L^{\tensor 3} \to L$ when
restricted to degree 0 elements must be
\[
l_{3}(\alpha_{1},\alpha_{2},\alpha_{3}) = \pm \iota(v_{\alpha_{1}}
\wedge v_{\alpha_{2}} \wedge v_{\alpha_{3}}) \omega,\] where
$v_{\alpha_{i}}$ is the Hamiltonian vector field associated to
$\alpha_{i}$.  Now consider a degree 1 element of $L \tensor L \tensor
L$, for example: $\alpha_{1} \tensor \alpha_{2} \tensor \beta \in 
\hamn{n-1}\tensor \hamn{n-1} \tensor \Omega^{n-2}(M)$. Since
$l_{3}(\alpha_{1},\alpha_{2},d\beta)= \pm \iota(v_{\alpha_{1}}
\wedge v_{\alpha_{2}} \wedge v_{d\beta}) \omega=0$,
and $l_{2}$ vanishes on 
the positive-degree part of the $k$-th tensor power of $L$,
Eq.\ \ref{l3} holds if and only if
\[
dl_{3}(\alpha_{1},\alpha_{2},\beta)=0.
\]
Hence, when restricted to elements of degree 1, $l_{3}(\alpha_{1},\alpha_{2},\beta)$
must be a closed $(n-2)$-form. Again, we will choose this closed
form to be 0 by forcing $l_{3}$ to vanish on all elements with degree
$>0$. 

Observations like these bring us to our main theorem.  In general, we
will define the maps $l_{k} \maps L^{\tensor k} \to L$ on degree zero
elements to be completely specified (up to sign) by the $n$-plectic
structure $\omega$:
\[
l_{k}(\alphadk{k}) = \pm \iota(\vk{k}) \omega \quad \text{if $\deg{\alphak{k}}=0$},
\]
and trivial otherwise:
\[
l_{k}(\alphadk{k}) = 0 \quad \text{if $\deg{\alphak{k}} > 0$}.
\]

\begin{theorem} \label{main_thm}
Given an $n$-plectic manifold $(M,\omega)$, there is a Lie $n$-algebra
$\Lie(M,\omega)=(L,\{l_{k} \})$ with underlying graded vector space 
\[
L_{i} =
\begin{cases}
\hamn{n-1} & i=0,\\
\Omega^{n-1-i}(M) & 0 < i \leq n-1,
\end{cases}
\]
and maps  $\left \{l_{k} \maps L^{\tensor k} \to L| 1
  \leq k < \infty \right\}$ defined as
\[ 
l_{1}(\alpha)=d\alpha,
\]
if $\deg{\alpha}>0$ and
\begin{multline}
l_{k}(\alphadk{k}) =\\
\begin{cases}
0 & \text{if $\deg{\alphak{k}} > 0$}, \\
(-1)^{\frac{k}{2}+1} \iota(\vk{k}) \omega  & \text{if
  $\deg{\alphak{k}}=0$ and $k$ even},\\
(-1)^{\frac{k-1}{2}}\iota(\vk{k}) \omega  & \text{if
  $\deg{\alphak{k}}=0$ and $k$ odd},
\end{cases}
\end{multline}
for $k>1$, where $v_{\alpha_{i}}$ is the unique Hamiltonian vector field
associated to $\alpha_{i} \in \hamn{n-1}$.
\end{theorem}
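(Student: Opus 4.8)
The plan is to check directly that the collection $\{l_k\}$ satisfies the defining axioms of an $L_\infty$-algebra (Definition \ref{Linfty}); since the underlying complex $L$ is concentrated in degrees $0,\dots,n-1$, Definition \ref{LnA} then upgrades this to a Lie $n$-algebra. I would first dispatch the routine points. A map $l_k$ sends $k$ degree-$0$ inputs to $\iota(\vk{k})\omega$, an $(n+1-k)$-form; this sits in $L_{k-2}$, matching the required $\deg{l_k}=k-2$, and whenever $k>n+1$ the target form degree is negative, forcing $l_k=0$ as demanded by the degree count following Definition \ref{LnA}. For $k=2$ the output is genuinely a Hamiltonian $(n-1)$-form by Proposition \ref{bracket_prop}, so $l_2$ indeed lands in $L_0=\hamn{n-1}$. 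Skew-symmetry of $l_k$ is clear: on degree-$0$ arguments it comes from the antisymmetry of the wedge $\vk{k}$ together with linearity of $\alpha\mapsto v_\alpha$, and on any positive-degree argument $l_k$ vanishes by fiat. Finally $l_1$, being the truncated de Rham differential, squares to zero.

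The substance is the generalized Jacobi identity (\ref{gen_jacobi}) at each level $m$. My central device is that, for $k\ge 2$, the map $l_k$ vanishes the moment one of its arguments has positive degree, while $l_1$ annihilates $L_0$. A short analysis of the degrees of the inputs $x_1,\dots,x_m$ then kills almost everything. If two or more inputs carry positive degree, every summand routes a positive-degree argument into some $l_k$ with $k\ge 2$, so (\ref{gen_jacobi}) is $0=0$. If exactly one input $\beta$ has positive degree $p$, the only summand with a chance of surviving is the one in which $l_1$ differentiates $\beta$; but $d\beta$ is then fed, together with the remaining degree-$0$ inputs, into an $l_m$, and either $d\beta$ still has positive degree so $l_m=0$, or $p=1$ and $d\beta$ is exact with vanishing Hamiltonian vector field, whence $\iota(v_{d\beta}\wedge\cdots)\omega=0$; so this case is also vacuous. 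All the genuine content therefore lives in the case where every input is a Hamiltonian $(n-1)$-form.

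In that all-degree-$0$ case I would show that only two families of terms survive the splitting $i+j=m+1$. Since $l_1$ kills $L_0$, the index $i=1$ contributes nothing; for $3\le i\le m-1$ the inner $l_i$ produces a form of positive degree $i-2$ on which the outer $l_j$ (with $j\ge 2$) vanishes; so only $(i,j)=(m,1)$ and $(i,j)=(2,m-1)$ remain. The single $(m,1)$ term, indexed by the trivial unshuffle, gives $\pm\, d\iota(\vk{m})\omega$, while the $(2,m-1)$ terms, summed over the $(2,m-2)$-unshuffles (one for each pair of input slots), reproduce a sum of the form $\sum_{a<b}\pm\,\iota([v_{\alpha_a},v_{\alpha_b}]\wedge\cdots)\omega$, where I use Proposition \ref{bracket_prop} to replace the Hamiltonian vector field of $\brac{\alpha_a}{\alpha_b}$ by $[v_{\alpha_a},v_{\alpha_b}]$. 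These two expressions are related precisely by Lemma \ref{tech_lemma} (equation (\ref{big_identity})); that lemma, whose $m=3$ instance already gave Proposition \ref{no_jacobi}, is exactly the tool that makes (\ref{gen_jacobi}) hold in this case.

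I expect the only real obstacle to be the sign arithmetic. Because all inputs share degree $0$, the Koszul sign $\epsilon(\sigma)$ is trivial, and for both surviving families the factor $(-1)^{i(j-1)}$ equals $1$; so matching (\ref{gen_jacobi}) against (\ref{big_identity}) reduces to reconciling the unshuffle signs $(-1)^\sigma$ and the chosen normalizations $(-1)^{k/2+1}$, $(-1)^{(k-1)/2}$ against the global $(-1)^m$ and the pair signs $(-1)^{a+b}$ of Lemma \ref{tech_lemma}. I would pin this down by computing the sign of the unshuffle that sorts a chosen pair $\{a,b\}$ to the front and verifying, as a function of $m$ and of the parities governing the normalizations of $l_m$, $l_{m-1}$, and $l_2$, that the $d\iota(\vk{m})\omega$ term cancels the bracket sum rather than reinforcing it. This is a bounded, if fussy, induction on $m$; once the signs are confirmed, concentration of $L$ in degrees $0,\dots,n-1$ finishes the proof that $\Lie(M,\omega)$ is a Lie $n$-algebra.
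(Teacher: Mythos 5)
Your proposal is correct and follows essentially the same route as the paper's proof: the same preliminary checks, the same degree analysis showing that only the $(i,j)=(m,1)$ and $(i,j)=(2,m-1)$ terms survive (the paper organizes this by splitting off the $2\le j\le m-2$ block and arguing by contradiction, but the content is identical), the same use of Proposition \ref{bracket_prop} to identify $v_{\brac{\alpha_a}{\alpha_b}}$ with $[v_{\alpha_a},v_{\alpha_b}]$, and the same appeal to Lemma \ref{tech_lemma} to cancel $d\iota(\vk{m})\omega$ against the bracket sum. The sign bookkeeping you defer is exactly the computation the paper carries out explicitly, converting the sum over $(2,m-2)$-unshuffles into $\sum_{a<b}(-1)^{a+b-1}(\cdots)$ and checking the normalizations separately for $m$ even and odd.
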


\begin{proof}[Proof of Theorem \ref{main_thm}]
We begin by showing the maps $\{l_{k}\}$
are well-defined skew symmetric maps with $\deg{l_{k}}=k-2$.
If $\alphak{k} \in L^{\tensor \bullet}$ has
degree 0, then for all $\sigma \in \Sn_{k}$ the antisymmetry of
$\omega$ implies
\[
l_{k}(\alphasdk{k})=(-1)^{\sigma} l_{k}(\alphadk{k}).
\]
Since for each $i$, we have $\deg{\alpha_{i}}=0$, it follows that
$\epsilon(\sigma)=1$. Hence $l_{k}$ is skew symmetric and
well-defined. Since $\iota(\vk{k}) \omega \in
\Omega^{n+1-k}(M)=L_{k-2}$, we have $\deg{l_{k}}=k-2$.
We also have, by construction, $l_{k} = 0 $ for $k>n+1$.

Now we prove the maps satisfy Eq.\ \ref{gen_jacobi} in Definition
\ref{Linfty}. If $m=1$, then it is satisfied since $l_{1}$
is the de Rham differential. 
 If $m=2$, then a direct calculation shows
\[
l_{1}(l_{2}(\alpha_{1},\alpha_{2})) =
l_{2}(l_{1}(\alpha_{1}),\alpha_{2}) +
(-1)^{\deg{\alpha_{1}}}l_{2}(\alpha_{1},l_{1}(\alpha_{2})).
\]
Let $m > 2$. We will regroup the summands
in Eq.\ \ref{gen_jacobi} into two separate sums depending on the value
of the index $j$ and show that each of these is zero, thereby proving the
theorem.

We first consider the sum of the terms with $2 \leq j \leq m-2$:
\begin{equation} \label{term1}
\sum_{j=2}^{m-2}
   \sum_{\sigma \in \Sh(i,m-i)} \negthickspace \negthickspace \negthickspace
   (-1)^{\sigma}\epsilon(\sigma)(-1)^{i(j-1)} l_{j}
   (l_{i}(\alpha_{\sigma(1)}, \dots, \alpha_{\sigma(i)}), \alpha_{\sigma(i+1)},
   \ldots, \alpha_{\sigma(m)}).
\end{equation}
In this case we claim that for all $\sigma \in \Sh(i,m-i)$ we have 
\[
l_{j}(l_{i}(\alphasdk{i}),\alpha_{\sigma(i+1)},\hdots,\alpha_{\sigma(m)})=0.
\]
Indeed, if there exists an unshuffle such that the above
equality did not hold, then the definition of $l_{j} \maps L^{\tensor
  j} \to L$ implies
\[
\deg{l_{i}(\alphasdk{i}) \tensor \alpha_{\sigma(i+1)} \tensor \cdots \tensor \alpha_{\sigma(m)}}=0,
\] 
which further implies
\begin{equation} \label{step2}
\deg{l_{i}(\alphasdk{i})}=\deg{\alphask{i}} + i -2 =0.
\end{equation}
By assumption, $l_{i}(\alphasdk{i})$ must be non-zero and $j < m-1$
implies $i>1$. Hence we must have $\deg{\alphask{i}}=0$ and
therefore, by Eq.\ \ref{step2}, $i=2$. But this
implies $j=m-1$, which contradicts our bounds on $j$. So no such
unshuffle could exist, and therefore the sum (\ref{term1}) is zero.  

We next consider the sum of the terms $j=1$, $j=m-1$, and $j=m$:
\begin{equation} \label{the_sum}
\begin{split}
l_{1}( l_{m}(\alphadk{m})) + \sum_{\sigma \in \Sh(2,m-2)} 
(-1)^{\sigma}
\epsilon(\sigma)
l_{m-1}(l_{2}(\alpha_{\sigma(1)},\alpha_{\sigma(2)}),\alpha_{\sigma(3)},\hdots,\alpha_{\sigma(m)})
 \\
+ \sum_{\sigma \in \Sh(1,m-1)}  \negthickspace \negthickspace \negthickspace
(-1)^{\sigma}
\epsilon(\sigma)  (-1)^{m-1} 
l_{m}(l_{1}(\alpha_{\sigma(1)}),\alpha_{\sigma(2)},\hdots,\alpha_{\sigma(m)}).
\end{split}
\end{equation}
Note that if $\sigma \in \Sh(1,m-1)$ and $\deg{l_{1}(\alpha_{\sigma(1)})} >0$, then
\[
l_{m}(l_{1}(\alpha_{\sigma(1)}),\alpha_{\sigma(2)},\hdots,\alpha_{\sigma(m)})=0
\]
by definition of the map $l_{m}$. On the other hand, 
if $\deg{l_{1}(\alpha_{\sigma(1)})} =0$, then
$l_{1}(\alpha_{\sigma(1)})=d \alpha_{\sigma(1)}$ is Hamiltonian and
its Hamiltonian vector field is the zero vector field. Hence the third
term in (\ref{the_sum}) is zero. 

Since the map $l_{2}$ is degree 0, we only need to consider the
first two terms of (\ref{the_sum}) in the case when $\deg{\alphak{m}}
=0$. For the first term we have:
\[
l_{1}( l_{m}(\alphadk{m}))=
\begin{cases}
(-1)^{\frac{m}{2}+1} d\iota(\vk{m}) \omega & \text{if $m$ even},\\
(-1)^{\frac{m-1}{2}}d\iota(\vk{m}) \omega  & \text{if $m$ odd}.
\end{cases}
\]
Now consider the second term. If $\alpha_{i},\alpha_{j} \in \hamn{n-1}$ are Hamiltonian
$(n-1)$-forms then by Definition \ref{bracket_def},
$l_{2}(\alpha_{i},\alpha_{j})=\brac{\alpha_{i}}{\alpha_{j}}$.
By Proposition \ref{bracket_prop}, $l_{2}(\alpha_{i},\alpha_{j})$ is
Hamiltonian and its Hamiltonian vector field is $v_{\brac{\alpha_{i}}{\alpha_{j}}}=[v_{\alpha_{i}},v_{\alpha_{j}}]$.
Therefore for  $\sigma \in \Sh(2,m-2)$, we have
\begin{multline*}
l_{m-1}(l_{2}(\alpha_{\sigma(1)},\alpha_{\sigma(2)}),\alpha_{\sigma(3)},\hdots,\alpha_{\sigma(m)})=\\
\begin{cases}
(-1)^{\frac{m}{2}-1}\iota([v_{\alpha_{\sigma(1)}},v_{\alpha_{\sigma(2)}}] \wedge
  \cdots \wedge v_{\alpha_{\sigma(m)}}) \omega & \text{if $m$ even},\\
(-1)^{\frac{m+1}{2}}\iota([v_{\alpha_{\sigma(1)}},v_{\alpha_{\sigma(2)}}] \wedge
  \cdots \wedge v_{\alpha_{\sigma(m)}}) \omega  & \text{if $m$ odd}.
\end{cases}
 \end{multline*}
Since each $\alpha_{i}$ is degree 0, we can rewrite the sum over
$\sigma \in \Sh(2,m-2)$ as
\begin{multline*}
\sum_{\sigma \in \Sh(2,m-2)}
(-1)^{\sigma}
\epsilon(\sigma)
l_{m-1}(l_{2}(\alpha_{\sigma(1)},\alpha_{\sigma(2)}),\alpha_{\sigma(3)},\hdots,\alpha_{\sigma(m)}) =\\
\sum_{1\leq i<j \leq m} (-1)^{i+j-1} l_{m-1}(l_{2}(\alpha_{i},\alpha_{j}),\alpha_{1}, \alpha_{2},\hdots,
\hat{\alpha}_{i},\hdots,\hat{\alpha}_{j},\hdots,\alpha_{m}).
\end{multline*}
Therefore, if $m$ is even, the sum (\ref{the_sum}) becomes
\begin{multline*}
(-1)^{\frac{m}{2}+1} d\iota(\vk{m}) \omega  
+ (-1)^{\frac{m}{2}} \sum_{1\leq i<j \leq m} (-1)^{i+j} \iota([v_{\alpha_{i}},v_{\alpha_{j}}]
  \wedge v_{\alpha_{1}} \\ \wedge  \cdots 
\wedge \hat{v}_{\alpha_{i}} \wedge \cdots \wedge
  \hat{v}_{\alpha_{j}} \wedge \cdots \wedge v_{\alpha_{m}}) \omega
\end{multline*}
and, if $m$ is odd:
\begin{multline*}
(-1)^{\frac{m-1}{2}} d\iota(\vk{m}) \omega  
+ (-1)^{\frac{m-1}{2}} \sum_{1\leq i<j \leq m} (-1)^{i+j} \iota([v_{\alpha_{i}},v_{\alpha_{j}}]
  \wedge v_{\alpha_{1}} \\ \wedge
  \cdots \wedge \hat{v}_{\alpha_{i}} \wedge \cdots  \wedge
  \hat{v}_{\alpha_{j}} \wedge \cdots \wedge v_{\alpha_{m}}) \omega.
\end{multline*}
It then follows from Lemma \ref{tech_lemma} that, in either case, (\ref{the_sum}) is zero.
\end{proof}

It is clear that in the $n=1$ case, $\Lie(M,\omega)$ is the underlying
Lie algebra of the usual Poisson algebra of smooth functions on a
symplectic manifold. In the $n=2$ case, $\Lie(M,\omega)$ is the Lie
2-algebra obtained in our previous work with Baez and Hoffnung
\cite{Baez:2008bu}.  

For the $n=2$ case, it will be convenient for us in later chapters to
express the Lie 2-algebra $\Lie(M,\omega)$ in the language of Prop.\ \ref{L2A}:
\begin{prop}
\label{semistrict}
If $(M,\omega)$ is a $2$-plectic manifold, then there is a 
 Lie $2$-algebra $L_{\infty}(M,\omega)=(L,\blankbrac,J)$ where:
\begin{itemize}
\item $L_{0} =\hamn{1}$,
\item $L_{1}=\cinf(M)$,
\item the differential $L_{1} \stackrel{d}{\to} L_{0}$ is the de Rham differential,
\item the bracket $\blankbrac$ is $\brac{\cdot}{\cdot}$ in degree 0
  and trivial otherwise,
\item the Jacobiator is given by the linear map $J\maps \ham \tensor \ham \tensor 
\ham \to \cinf$, where $J(\alpha,\beta,\gamma) = \ip{\alpha}\ip{\beta}\ip{\gamma}\omega$.
\end{itemize}
\end{prop}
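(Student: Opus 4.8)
The plan is to obtain Proposition \ref{semistrict} as a direct specialization and repackaging of Theorem \ref{main_thm}. Setting $n=2$ there produces a Lie $2$-algebra $\Lie(M,\omega)=(L,\{l_k\})$ whose underlying graded vector space is concentrated in degrees $0$ and $1$, with $L_0=\hamn{1}$ and $L_1=\cinf(M)$, and with $l_1=d$. Degree counting (as noted after Definition \ref{LnA}) forces $l_k=0$ for $k\geq n+2=4$, so the whole structure is carried by $l_1$, $l_2$, and $l_3$. The remaining task is to translate this $2$-term $L_\infty$-algebra into the bracket/Jacobiator language of Proposition \ref{L2A} and to check that the three surviving maps coincide, up to sign, with the data listed in the statement.

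First I would match $l_1$ and $l_2$. The map $l_1$ is the de Rham differential by construction, giving the claimed differential $\cinf(M)\stackrel{d}{\to}\hamn{1}$. For $l_2$ the even-degree sign in Theorem \ref{main_thm} is $(-1)^{k/2+1}$, which at $k=2$ equals $+1$; hence on degree-$0$ elements $l_2(\alpha,\beta)=\iota(v_{\alpha}\wedge v_{\beta})\omega=\ip{\beta}\ip{\alpha}\omega=\brac{\alpha}{\beta}$ by Eq.\ \ref{interior} and Definition \ref{bracket_def}, while $l_2$ vanishes in positive degree. This is exactly the bracket $\blankbrac$ described in the proposition.

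The one genuine computation is identifying $l_3$ with the stated Jacobiator $J(\alpha,\beta,\gamma)=\ip{\alpha}\ip{\beta}\ip{\gamma}\omega$. Here the odd-degree sign $(-1)^{(k-1)/2}$ at $k=3$ equals $-1$, so $l_3(\alpha,\beta,\gamma)=-\iota(v_{\alpha}\wedge v_{\beta}\wedge v_{\gamma})\omega$, and by Eq.\ \ref{interior} this multivector contraction unwinds to $-\ip{\gamma}\ip{\beta}\ip{\alpha}\omega$. I would then invoke the total antisymmetry of $\omega$: reversing the order of the three contractions transposes the outer two arguments, contributing a sign $-1$, so $-\ip{\gamma}\ip{\beta}\ip{\alpha}\omega=\ip{\alpha}\ip{\beta}\ip{\gamma}\omega=J(\alpha,\beta,\gamma)$. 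Thus $l_3=J$, confirming the last bullet.

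Finally, the structural equation \ref{big_J} required by Proposition \ref{L2A} need not be checked by hand: it is precisely the Baez--Crans dictionary (Lemma 33 of \cite{HDA6}, cited in the proof of Proposition \ref{L2A}) applied to the generalized Jacobi identity \ref{gen_jacobi} of the $L_\infty$-algebra $\Lie(M,\omega)$, which Theorem \ref{main_thm} already guarantees. Concretely, the $m=2$ instance of \ref{gen_jacobi} makes $\blankbrac$ a chain map, the $m=3$ instance exhibits $J$ as the asserted chain homotopy, and the $m=4$ instance is \ref{big_J}. I expect the only step demanding care to be the sign reconciliation for $l_3$ just described, since the multivector interior-product convention of \cite{Forger:2002ak} orders the contractions oppositely to the slot-order in which $J$ is written; everything else is bookkeeping already subsumed by the two results invoked.
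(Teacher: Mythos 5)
Your proposal is correct and is exactly the paper's argument: the paper's proof of Proposition \ref{semistrict} is the one-line observation that $d$, $\blankbrac$, and $J$ are the structure maps $l_1$, $l_2$, $l_3$ of Theorem \ref{main_thm} at $n=2$. You simply make explicit the sign bookkeeping (in particular that $-\iota(v_\alpha\wedge v_\beta\wedge v_\gamma)\omega=\ip{\alpha}\ip{\beta}\ip{\gamma}\omega$) that the paper leaves implicit, and your verification of it is accurate.
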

\begin{proof}
This follows from the fact that $d$, $\blankbrac$, and $J$ are the
structure maps $l_{1}$, $l_{2}$, and $l_{3}$, respectively, described in
Thm.\ \ref{main_thm}.
\end{proof}

Finally, we mention that the equality
\[
    d\brac{\alpha}{\beta} = -\iota_{[v_{\alpha},v_{\beta}]} \omega
\]
given in Proposition \ref{bracket_prop} implies the existence of a
bracket-preserving chain map
\[
\phi \maps \Lie(M,\omega) \to \VectH \left(M \right),
\]
which in degree 0 takes a Hamiltonian $(n-1)$-form 
$\alpha$ to its vector field $v_{\alpha}$. 
Here we consider the Lie algebra of Hamiltonian vector fields as a Lie
1-algebra whose underlying complex is concentrated in degree 0:
\[
\ldots \to 0 \to 0 \to \VectH \left(M \right).
\]
Hence $\phi$ is trivial in all higher degrees. In light of
Theorem \ref{main_thm}, $\phi$ becomes a strict morphism of
$L_{\infty}$-algebras. See the paper by Lada and Markl \cite{Lada-Markl} for the
definition of strict $L_{\infty}$-algebra morphisms.

\chapter{Lie 2-algebras from compact simple Lie groups} \label{lie_group_chapter}
Here we consider some Lie 2-algebras which arise on an important class of
2-plectic manifolds: compact simple Lie groups. Recall from Example
\ref{Lie_group_example} that such a group admits a 1 parameter family of
2-plectic structures given by a non-zero constant times the Cartan 3-form:
\begin{equation*} 
\nu_{k} =k \inp{\theta_{L}}{\bigl [ \theta_{L}, \theta_{L} \bigr ]}, \quad k \neq 0,
\end{equation*}
where $\theta_{L}$ is the left-invariant Maurer-Cartan form, and
$\inp{\cdot}{\cdot}$ is the inner product on the corresponding Lie
algebra, whose
bracket is $[\cdot,\cdot]$.
This 3-form plays an important role in the theory of affine Lie algebras, central
extensions of loop groups, and gerbes \cite{BCSS,Brylinski:1993,PressleySegal}.

Baez and Crans showed that the Lie algebra of a compact simple Lie
group $G$ can be used to build a Lie 2-algebra called the `string Lie
2-algebra' \cite{HDA6}. This Lie 2-algebra can be integrated to a special kind of category called
a Lie 2-group. For $G=\mathrm{Spin}(n)$, the geometric realization of this
Lie 2-group is homotopy equivalent to the topological group
$\mathrm{String}(n)$ \cite{BCSS,Henriques:2008}.  
The group $\mathrm{String}(n)$ naturally arises in the study of spin
structures on loop spaces \cite{Witten:1988}.

The structure of the string Lie 2-algebra associated to $G$ closely
resembles the structure of the Lie 2-algebra $\Lie(G,\nu_{k})$ of
Hamiltonian 1-forms on the 2-plectic manifold $(G,\nu_{k})$. In a
private communication, D.\ Stevenson asked if these Lie
2-algebras are quasi-isomorphic.  As we show in Section
\ref{string_Lie}, this turns out not to be true. However, we prove
that the string Lie 2-algebra is isomorphic to a particular sub Lie-2
algebra of $\Lie(G,\nu_{k})$, consisting of left-invariant Hamiltonian
1-forms. This gives a new geometric construction of the string Lie
2-algebra.  For another construction, based on central extensions of loop
groups, see the paper by Baez, Crans, Schreiber and Stevenson
\cite{BCSS}.  It will be interesting to see what can be learned from
comparing these approaches.

\section{Group actions on $n$-plectic manifolds}
\label{group_actions}
We begin by giving some basic results concerning group actions on $n$-plectic manifolds.
Suppose we have a Lie group acting on an $n$-plectic manifold $(M,\omega)$,
preserving the $n$-plectic structure.  In this situation the
Lie $n$-algebra $\Lie(M,\omega)$ constructed in Thm.\ \ref{main_thm}
has a sub-$n$-algebra consisting of invariant differential forms.

More precisely, let
$\mu \maps G \times M \to M$ be a left action of the Lie group $G$ on the
$n$-plectic manifold $\left(M,\omega \right)$, and assume this action
preserves the $n$-plectic structure:
\[\mu_{g}^{\ast} \omega = \omega,\]
for all $g \in G$.  Denote the subspace of invariant Hamiltonian $(n-1)$-forms 
by
\[\hamG{n-1} = \left \{ \alpha \in \hamn{n-1} ~ \vert ~ \forall g \in G ~
\mu_{g}^{\ast}\alpha = \alpha \right\}.\]
The Hamiltonian vector field of an
invariant Hamiltonian $(n-1)$-form is itself invariant under the
action of $G$:
\begin{proposition}
\label{invariant_vector_fields}
If $\alpha \in \hamG{n-1}$ and $v_{\alpha}$ is the Hamiltonian vector field
associated with $\alpha$, then $\pfor{g} v_{\alpha} = v_{\alpha}$ for
all $g \in G$.
\end{proposition}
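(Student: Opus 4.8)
The plan is to exploit the uniqueness of the Hamiltonian vector field, which is guaranteed by the nondegeneracy of $\omega$ (as noted immediately after Definition \ref{hamiltonian}). Since $v_{\alpha}$ is by definition the \emph{unique} vector field satisfying $d\alpha = -\iota_{v_{\alpha}}\omega$, it suffices to show that the pushforward $\pfor{g} v_{\alpha}$ satisfies this very same equation. Once I establish $\iota_{\pfor{g} v_{\alpha}}\omega = -d\alpha$, uniqueness (equivalently, nondegeneracy applied to the difference $\pfor{g} v_{\alpha} - v_{\alpha}$) forces $\pfor{g} v_{\alpha} = v_{\alpha}$.

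Concretely, I would first record the naturality of the interior product under a diffeomorphism: for any diffeomorphism $\phi \maps M \to M$, vector field $X$, and form $\eta$, one has $\phi^{\ast}(\iota_{\phi_{\ast} X}\eta) = \iota_{X}(\phi^{\ast}\eta)$, which follows at once from the chain rule. Applying this with $\phi = \mu_{g}$, $X = v_{\alpha}$, and $\eta = \omega$, and then using the invariance $\pback{g}\omega = \omega$, gives
\[
\pback{g}\bigl(\iota_{\pfor{g} v_{\alpha}}\omega\bigr) = \iota_{v_{\alpha}}\bigl(\pback{g}\omega\bigr) = \iota_{v_{\alpha}}\omega = -d\alpha.
\]
Pulling back along the inverse diffeomorphism $\mu_{g^{-1}}$ and using that $d$ commutes with pullback, together with the invariance $\pback{g^{-1}}\alpha = \alpha$ (valid since $\alpha \in \hamG{n-1}$ is fixed by every group element, in particular $g^{-1}$), I obtain
\[
\iota_{\pfor{g} v_{\alpha}}\omega = \pback{g^{-1}}(-d\alpha) = -d\bigl(\pback{g^{-1}}\alpha\bigr) = -d\alpha.
\]

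This is exactly the defining relation for the Hamiltonian vector field of $\alpha$, so $\iota_{\pfor{g} v_{\alpha} - v_{\alpha}}\omega = 0$, and nondegeneracy of $\omega$ yields $\pfor{g} v_{\alpha} = v_{\alpha}$, as desired. There is no deep obstacle here: the content is essentially the observation that a symmetry preserving both the $n$-plectic form and a Hamiltonian form must preserve the associated Hamiltonian vector field. The only points requiring care are the correct bookkeeping in the naturality identity for $\iota$ under $\mu_{g}$ (in particular the appearance of $g^{-1}$ when pulling back), and the explicit invocation of nondegeneracy to cancel the common factor of $\omega$.
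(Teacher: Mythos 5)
Your proof is correct and follows essentially the same route as the paper's: both arguments combine the $G$-invariance of $d\alpha$, the invariance $\pback{g}\omega=\omega$, and the nondegeneracy of $\omega$ to conclude $\pfor{g}v_{\alpha}=v_{\alpha}$. The only difference is cosmetic — you package the computation via the naturality identity $\phi^{\ast}(\iota_{\phi_{\ast}X}\eta)=\iota_{X}(\phi^{\ast}\eta)$ and a pullback by $\mu_{g^{-1}}$, whereas the paper evaluates $\omega(v_{\alpha},\pfor{g}v_{1},\ldots,\pfor{g}v_{n})$ pointwise on arbitrary test vector fields.
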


\begin{proof}
The exterior derivative commutes with the pullback of the group
action. Therefore if $v_{1},\ldots, v_{n}$ are smooth vector fields, then
$d\alpha \left(\pfor{g} v_{1},\ldots,\pfor{g} v_{n}\right)=
d\alpha\left(v_{1},\ldots,v_{n} \right)$, since we are assuming $\alpha$
is $G$-invariant.  Since $\alpha \in \hamn{n-1}$, we have 
$d \alpha = -\ip{\alpha}\omega$, so 
\[
\omega\left(v_{\alpha},\pfor{g} v_{1},\ldots,\pfor{g} v_{n} \right)
= \omega \left(v_{\alpha},v_{1},\ldots,v_{n} \right)=
\omega\left(\pfor{g}v_{\alpha},\pfor{g} v_{1},\ldots,\pfor{g} v_{n} \right),
\]
where the last equality follows from $\pback{g}\omega=\omega$.
Therefore 
\[
\omega\left(v_{\alpha}-\pfor{g}v_{\alpha},\pfor{g} v_{1},\ldots,\pfor{g} v_{n} \right)=0.
\]
Since $\omega$ is nondegenerate, and $v_{1},\ldots, v_{n}$ are arbitrary, it
follows that $\pfor{g}v_{\alpha} = v_{\alpha}$.
\end{proof}

Let $\Omega^{k}(M)^{G}$ denote the subspace of invariant $k$-forms on $M$:
\[\Omega^{k}(M)^{G} = \left\{ \alpha \in \Omega^{k}(M) ~ \vert ~ \forall g \in G ~
\mu_{g}^{\ast}\alpha = \alpha \right\},\] 
and let $(L^{G},d)$ denote the $n$-term complex 
\[
\cinf(M)^{G} \stackrel{d}{\to} \Omega^{1}(M)^{G} \stackrel{d}{\to} \cdots \stackrel{d}{\to} 
\Omega^{n-2}(M)^{G} \stackrel{d}{\to} \hamG{n-1}.
\]
Clearly, this is a subcomplex of the underlying complex of the Lie
$n$-algebra $\Lie(M,\omega)$.
Moreover, the invariant differential forms on $M$ form a graded subalgebra that is
stable under exterior derivative and interior product with an invariant
vector field \cite{GHV2}[Sec.\ III.4]. 
Since both the bracket 
introduced in Def.\ \ref{bracket_def} and the proof of Lemma \ref{tech_lemma}
depend only on compositions of these
operations, the Lie $n$-algebra structure described in Theorem
\ref{main_thm} restricts to a Lie $n$-algebra structure on the
subcomplex $L^{G}$.
Hence, we have the following theorem:
\begin{theorem}\label{invariant_LnA}
Given an $n$-plectic manifold $(M,\omega)$ equipped with group action 
$G \times M \to M$ preserving the $n$-plectic structure,
there is a Lie $n$-algebra
$\Lie(M,\omega)^{G}=(L^{G},\{l_{k} \})$ with underlying graded vector space 
\[
L^{G}_{i} =
\begin{cases}
\hamG{n-1} & i=0,\\
\Omega^{n-1-i}(M)^{G} & 0 < i \leq n-1,
\end{cases}
\]
and maps  $\left \{l_{k} \maps \bigl( L^{G} \bigr )^{\tensor k} \to
  L^{G} \vert 1
  \leq k < \infty \right\}$ defined as
\[ 
l_{1}(\alpha)=d\alpha,
\]
if $\deg{\alpha}>0$ and
\begin{multline}
l_{k}(\alphadk{k}) =\\
\begin{cases}
0 & \text{if $\deg{\alphak{k}} > 0$}, \\
(-1)^{\frac{k}{2}+1} \iota(\vk{k}) \omega  & \text{if
  $\deg{\alphak{k}}=0$ and $k$ even},\\
(-1)^{\frac{k-1}{2}}\iota(\vk{k}) \omega  & \text{if
  $\deg{\alphak{k}}=0$ and $k$ odd},
\end{cases}
\end{multline}
for $k>1$, where $v_{\alpha_{i}}$ is the unique invariant Hamiltonian vector field
associated to $\alpha_{i} \in \hamG{n-1}$.
\end{theorem}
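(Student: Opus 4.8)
The plan is to leverage \textbf{Theorem \ref{main_thm}} directly: since $(L^{G},d)$ is a subcomplex of $(L,d)$ and the maps $\{l_{k}\}$ on $L^{G}$ are literally the restrictions of those on $\Lie(M,\omega)$, both the degree conditions ($\deg{l_{k}}=k-2$, with $l_{k}=0$ for $k>n+1$) and the generalized Jacobi identity of Eq.\ \ref{gen_jacobi} are inherited for free. The entire content of the theorem therefore reduces to a single well-definedness check: that each $l_{k}$ carries invariant inputs to invariant outputs, so that the restricted maps really do land in $L^{G}$.

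First I would verify that $(L^{G},d)$ is a subcomplex. Because $d$ commutes with pullback, it maps $\Omega^{p}(M)^{G}$ into $\Omega^{p+1}(M)^{G}$; and for $\beta \in \Omega^{n-2}(M)^{G}$ the form $d\beta$ is a closed, invariant $(n-1)$-form, hence Hamiltonian with zero Hamiltonian vector field, so $d\beta \in \hamG{n-1}$. This handles $l_{1}$.

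The heart of the argument is the behavior of $l_{k}$ for $k \geq 2$. On degree-zero inputs these maps are, up to sign, $\iota(\vk{k})\omega$, and they vanish on inputs of positive degree. So I must show that for $\alpha_{1},\ldots,\alpha_{k} \in \hamG{n-1}$ the form $\iota(\vk{k})\omega$ is again $G$-invariant. By \textbf{Proposition \ref{invariant_vector_fields}} each Hamiltonian vector field is invariant, $\pfor{g} v_{\alpha_{i}} = v_{\alpha_{i}}$, equivalently $\pback{g} v_{\alpha_{i}} = v_{\alpha_{i}}$ for all $g$. Invoking the naturality of contraction under the diffeomorphism $\mu_{g}$ together with $\pback{g}\omega = \omega$, one obtains
\[
\pback{g}\bigl(\iota(\vk{k})\omega\bigr) = \iota(\vk{k})\omega,
\]
so the output lies in $\Omega^{n+1-k}(M)^{G} = L^{G}_{k-2}$. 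This is exactly the stability of invariant forms under exterior derivative and under contraction by invariant vector fields recorded in \cite{GHV2}.

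I do not expect a genuine obstacle: once Proposition \ref{invariant_vector_fields} is in hand, every structure map is built from operations ($d$, $\iota$, and wedging of Hamiltonian vector fields) that preserve the invariant subalgebra, so all the $L_{\infty}$ axioms descend from $\Lie(M,\omega)$ by restriction. The only point demanding a little care is the contraction-naturality computation above, which is routine. In effect the proof amounts to observing that the formulas of Theorem \ref{main_thm} never leave the invariant world.
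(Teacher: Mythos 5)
Your proposal is correct and follows essentially the same route as the paper: the paper likewise observes that $(L^{G},d)$ is a subcomplex, that invariant forms are stable under $d$ and under interior product with invariant vector fields, and that by Proposition \ref{invariant_vector_fields} the Hamiltonian vector fields of invariant Hamiltonian forms are themselves invariant, so the structure maps of Theorem \ref{main_thm} restrict and all the $L_{\infty}$ identities are inherited. Your explicit check that $d\beta\in\hamG{n-1}$ for invariant $\beta$ and the contraction-naturality computation are exactly the details the paper leaves to the cited reference.
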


\section{Compact simple Lie groups as 2-plectic manifolds}
Recall from Example \ref{Lie_group_example}  that for any compact simple Lie group $G$, the 2-plectic
structure $\nu_{k}=k \inp{\theta_{L}}{[\theta_{L},\theta_{L}]}$ is
left-invariant. Hence, Thm.\ \ref{invariant_LnA} implies there exists
a Lie 2-algebra whose underlying 2-term chain
complex is composed of left-invariant Hamiltonian 1-forms $\hamL$ on $G$ in
degree 0, and left-invariant functions $\cinfL$ in degree 1.

If $f \in \cinfL$, then by definition $f=f \circ L_{g}$ for all $g \in
G$. Hence $f$ must be a constant function, so $\cinfL$ may be identified 
with $\R$. Denote the space of all left invariant 1-forms as
$\Omega^{1}(G)^{L} \cong \g^{\ast}$, and left invariant vector fields
as $\X(G)^{L} \cong \g$. 
The following theorem characterizes the left invariant Hamiltonian 
1-forms.
\begin{theorem}
\label{left_invariant_1-forms}
Every left invariant 1-form on $\left(G,\nu_{k}\right)$ is
Hamiltonian. That is, \hfill \break $\hamL = \Omega^{1}(G)^{L}$.
\end{theorem}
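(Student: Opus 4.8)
The plan is to use the inner product to identify left-invariant $1$-forms with elements of $\g$, and then to reduce the Hamiltonian condition to a trivial identity in $\g$ by producing an explicit left-invariant Hamiltonian vector field. Concretely, every left-invariant $1$-form $\alpha$ has the form $\alpha = \inp{x}{\theta_{L}}$ for a unique $x \in \g$, meaning $\alpha$ sends a vector $w$ to $\inp{x}{\theta_{L}(w)}$; dually I will write $\tilde{y}$ for the left-invariant vector field determined by $\theta_{L}(\tilde{y}) = y$. The natural candidate for the Hamiltonian vector field of $\alpha$ is then the left-invariant field $v_{\alpha} = \widetilde{x/k}$, and the whole proof amounts to checking that this candidate works.

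First I would compute $d\alpha$ on a pair of left-invariant vector fields $U,V$ with $\theta_{L}(U)=u$ and $\theta_{L}(V)=v$. Because $\alpha(U)$ and $\alpha(V)$ are constant functions, the usual formula $d\alpha(U,V) = U\alpha(V) - V\alpha(U) - \alpha([U,V])$ collapses to $d\alpha(U,V) = -\inp{x}{[u,v]}$. Next I would evaluate the contraction $\ip{\alpha}\nu_{k}$ on the same fields. Since $\nu_{k}(W_{1},W_{2},W_{3}) = \omega_{k}(\theta_{L}(W_{1}),\theta_{L}(W_{2}),\theta_{L}(W_{3})) = k\inp{\theta_{L}(W_{1})}{[\theta_{L}(W_{2}),\theta_{L}(W_{3})]}$ on left-invariant arguments, I obtain $(\ip{\alpha}\nu_{k})(U,V) = \nu_{k}(v_{\alpha},U,V) = k\inp{x/k}{[u,v]} = \inp{x}{[u,v]}$. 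Comparing the two computations gives $d\alpha(U,V) = -(\ip{\alpha}\nu_{k})(U,V)$, which is precisely the defining equation $d\alpha = -\ip{\alpha}\nu_{k}$ tested on left-invariant fields.

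To conclude, I would observe that both $d\alpha$ and $\ip{\alpha}\nu_{k}$ are left-invariant $2$-forms: the former because $\alpha$ is left-invariant and $d$ commutes with pullback, the latter because $\nu_{k}$ and $v_{\alpha}$ are left-invariant. An equality of left-invariant forms that holds on a left-invariant frame holds at every point, so the identity above upgrades from left-invariant fields to all of $TG$. Hence $\alpha$ is Hamiltonian, i.e.\ $\alpha \in \Gham$, and being left-invariant it lies in $\hamL$. As $\alpha$ was an arbitrary left-invariant $1$-form, $\Omega^{1}(G)^{L} \subseteq \hamL$; the reverse inclusion is immediate from the definition of $\hamL$, giving $\hamL = \Omega^{1}(G)^{L}$.

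I expect the only delicate points to be bookkeeping rather than conceptual: keeping the sign and the factor of $k$ correct when matching $d\alpha$ against $-\ip{\alpha}\nu_{k}$, and respecting the normalization convention in writing $\nu_{k}$ as $\omega_{k}$ applied to three copies of $\theta_{L}$. It is worth flagging that simplicity of $G$ is not actually used in the existence argument—the candidate $v_{\alpha} = \widetilde{x/k}$ works for any quadratic Lie algebra—rather, simplicity is what guarantees via Proposition \ref{nondegenerate} that $\nu_{k}$ is nondegenerate, so that $(G,\nu_{k})$ is genuinely $2$-plectic and the Hamiltonian vector field is unique; I would note this so the role of the hypothesis is transparent.
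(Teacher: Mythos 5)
Your proof is correct and follows essentially the same route as the paper's: collapse $d\alpha(U,V)$ to $-\alpha([U,V])$ on left-invariant fields, use the inner product and $\theta_L$ to produce the left-invariant candidate $v_\alpha$ (your $\widetilde{x/k}$ is exactly the field the paper characterizes by $\alpha(v')=k\innerprod{\theta_L(v_\alpha)}{\theta_L(v')}$), and match against $-\iota_{v_\alpha}\nu_k$. Your explicit remarks on upgrading the identity from a left-invariant frame to all of $TG$ and on where simplicity is actually used are sound clarifications of steps the paper leaves implicit, but they do not change the argument.
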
 
\begin{proof}
Recall that if $\alpha$ is a 1-form and $v_{0},v_{1}$
are vector fields, then
\[d\alpha\left(v_{0},v_{1}\right)= v_{0} \left(\alpha
\left(v_{1}\right)\right) - v_{1}\left(\alpha\left(v_{0}\right)\right)
- \alpha \left(\left[v_{0},v_{1}\right]\right).\]
Suppose now that $\alpha$ is a left invariant 1-form on $G$ and 
$v_{0},v_{1}$ are left invariant vector fields.  Then the smooth functions
$\alpha\left(v_{1}\right)$ and $\alpha\left(v_{0}\right)$ are also left 
invariant and therefore constant. Therefore the right hand side of
the above equality simplifies and we have
\[d\alpha\left(v_{0},v_{1}\right)= - \alpha \left(\left[v_{0},v_{1}\right]\right).\]

Let $\alpha \in \Omega^{1}(G)^{L}$ and let $\innerprod{\cdot}{\cdot}$ be the
inner product on $\g$ used in the construction of
$\nu_{k}$. Note we have two isomorphisms
\[
\g \xrightarrow{k\inp{\cdot}{\cdot}} \g^{\ast}, \quad \X(G)^{L} \xrightarrow{\theta_{L}} \g.
\]
Therefore, there exists a left invariant vector field $v_{\alpha} \in \X(G)^{L}$
such that $\alpha(v')=k\innerprod{\theta_{L}(v_{\alpha})}{\theta_{L}(v')}$ for
all left invariant vector fields $v' \in \X(G)^{L}$. 
Combining this with the above expression for $d \alpha$ gives
\[d \alpha\left(v_{0},v_{1}\right)=-k\innerprod{\theta_{L}(v_{\alpha})}{[\theta_{L}(v_{0}),\theta_{L}(v_{1})]},\]
which implies
\[d \alpha = -\ip{\alpha}\nu_{k}.\] 
Hence $\alpha \in \Gham$, and $\hamL=\Gham \cap \Omega^{1}(G)^{L} = \Omega^{1}(G)^{L}$.
\end{proof}

The most important application of Thm.\ \ref{left_invariant_1-forms}
is that it allows us to use Thm.\ \ref{invariant_LnA} and the
isomorphism $\hamL=\Omega^{1}(G)^{L} \cong \g^{\ast}$
to construct a Lie 2-algebra having $\g^{\ast}$ as its space of 0-chains, 
for any compact simple Lie group.  
Recalling the simpler definition of a Lie 2-algebra given in
Prop.\ \ref{L2A}, we summarize these facts in the following corollary.

\begin{corollary}
\label{left_inv_L2A}
If $G$ is a compact simple Lie group with Lie algebra $\g$ and 2-plectic structure $\nu_{k}$,
then there is a Lie 2-algebra $\Lie(G,\nu_{k})^{L}$
where:
\begin{itemize}
\item the space of 0-chains is $\g^{\ast}$,
\item the space of 1-chains is $\R$,
\item the differential is the exterior derivative $d \maps \R \to
  \g^{\ast}$ (i.e.\ $d=0$), 
\item{the bracket is
    $\brac{\alpha}{\beta}=\nu_{k}(v_{\alpha},v_{\beta},\cdot)$ in
    degree 0, and trivial otherwise,}
\item the Jacobiator is the linear map $J\maps \g^{\ast}\tensor \g^{\ast}\tensor
\g^{\ast}\to \R$ defined by $J(\alpha,\beta,\gamma) = -\nu_{k}(v_{\alpha},v_{\beta},v_{\gamma})$.
\end{itemize}
\end{corollary}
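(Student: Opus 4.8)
The plan is to obtain this corollary by specializing Theorem \ref{invariant_LnA} to the case $n=2$, with $G$ acting on itself by left translation, and then translating the resulting structure maps into the language of Proposition \ref{L2A}. First I would note that, by Example \ref{Lie_group_example}, the Cartan $3$-form $\nu_{k}$ is left-invariant, so the left action $L\maps G\times G\to G$, $(g,h)\mapsto gh$, preserves the $2$-plectic structure. Hence Theorem \ref{invariant_LnA} applies and produces a Lie $2$-algebra $\Lie(G,\nu_{k})^{L}$ whose underlying complex is $\cinfL \xrightarrow{d} \hamL$, concentrated in degrees $1$ and $0$.

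Next I would identify the two chain groups. In degree $0$, Theorem \ref{left_invariant_1-forms} gives $\hamL = \Omega^{1}(G)^{L}$, and the isomorphisms $\X(G)^{L}\cong\g$ and $\g\cong\g^{\ast}$ (the latter via the invariant inner product used to build $\nu_{k}$) identify this with $\g^{\ast}$, so the space of $0$-chains is $\g^{\ast}$. In degree $1$, every left-invariant function $f$ satisfies $f=f\circ L_{g}$ for all $g\in G$ and is therefore constant, so $\cinfL\cong\R$ is the space of $1$-chains; the differential is the de Rham differential restricted to constants, which vanishes, giving $d\maps\R\to\g^{\ast}$ equal to zero.

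Finally I would read off the bracket and Jacobiator from the $l_{k}$ formulas of Theorem \ref{invariant_LnA} (equivalently Theorem \ref{main_thm}), which for $n=2$ are exactly the structure maps of Proposition \ref{semistrict}. For the bracket, the $k=2$ sign is $(-1)^{2/2+1}=1$, so in degree $0$ we get $l_{2}(\alpha,\beta)=\iota(v_{\alpha}\wedge v_{\beta})\nu_{k}=\ip{\beta}\ip{\alpha}\nu_{k}=\brac{\alpha}{\beta}=\nu_{k}(v_{\alpha},v_{\beta},\cdot)$, and trivial otherwise. For the Jacobiator, the $k=3$ sign is $(-1)^{(3-1)/2}=-1$, so $J(\alpha,\beta,\gamma)=l_{3}(\alpha,\beta,\gamma)=-\iota(v_{\alpha}\wedge v_{\beta}\wedge v_{\gamma})\nu_{k}=-\nu_{k}(v_{\alpha},v_{\beta},v_{\gamma})$, using the interior-product convention of Eq.\ \ref{interior}. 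This reproduces all five items of the statement. Since each step merely invokes an already-established result, there is no genuine difficulty; the one place demanding care is the sign and ordering bookkeeping when passing from the $l_{k}$ maps of Theorem \ref{main_thm} to the bracket and Jacobiator of Proposition \ref{L2A}.
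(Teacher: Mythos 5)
Your proposal is correct and follows essentially the same route as the paper: the corollary is presented there as a summary of the immediately preceding discussion, which applies Theorem \ref{invariant_LnA} to the left-translation action (using the left-invariance of $\nu_{k}$ from Example \ref{Lie_group_example}), identifies $\cinfL\cong\R$ and $\hamL=\Omega^{1}(G)^{L}\cong\g^{\ast}$ via Theorem \ref{left_invariant_1-forms}, and restates the structure maps in the form of Proposition \ref{L2A}. Your sign bookkeeping for $l_{2}$ and $l_{3}$ is consistent with the interior-product convention of Eq.\ \ref{interior} and with Proposition \ref{semistrict}, so nothing is missing.
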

\noi In the statement of the above corollary, we are abusing notation slightly by viewing $\alpha \in
g^{\ast}$ as a left-invariant Hamiltonian 1-form. Note that the
corollary implies that we have a 1-parameter family of Lie 2-algebras:
\[
\left \{ \Lie(G,\nu_{k})^{L} \right \}_{k \neq 0}.
\]
Also, we see from the proof of Thm.\ \ref{left_invariant_1-forms}
that there is a simple correspondence between left invariant Hamiltonian
1-forms and left invariant Hamiltonian vector fields which relies on
the isomorphism between $\g$ and its dual space via the inner product
$\innerprod{\cdot}{\cdot}$. As a result, we have the following proposition
which will be useful in the next section.
\begin{proposition}
\label{Ham_vect}
If $G$ is a compact simple Lie group with 2-plectic structure
$\nu_{k}$ and $\innerprod{\cdot}{\cdot}$ is the
inner product on the Lie algebra $\g$ of $G$ used in the construction of
$\nu_{k}$, then there is an isomorphism of vector spaces
\[\varphi \maps \X \left(G\right)^L \stackrel{\sim}{\longrightarrow} \hamL \]
such that $\varphi(v)=k\innerprod{\theta_{L}(v)}{\theta_{L}(\cdot)}$ is the unique
left-invariant Hamiltonian 1-form whose Hamiltonian vector field is $v$.
\end{proposition}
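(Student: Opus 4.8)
The plan is to realize $\varphi$ as a composite of two standard trivializations and then to import the Hamiltonian computation already carried out in the proof of Theorem \ref{left_invariant_1-forms}. First I would observe that left translation gives linear isomorphisms
\[
\X(G)^{L} \xrightarrow{\theta_{L}} \g, \qquad \Omega^{1}(G)^{L} \xrightarrow{\theta_{L}} \g^{\ast},
\]
since a left-invariant vector field (resp.\ 1-form) is determined by its value at the identity, which is an element of $\g$ (resp.\ $\g^{\ast}$). As $G$ is compact and simple, the inner product $\innerprod{\cdot}{\cdot}$ is nondegenerate, so $k\innerprod{\cdot}{\cdot} \maps \g \to \g^{\ast}$ is an isomorphism for every $k \neq 0$. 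I would then define $\varphi$ to be the composite
\[
\X(G)^{L} \xrightarrow{\theta_{L}} \g \xrightarrow{k\innerprod{\cdot}{\cdot}} \g^{\ast} \cong \Omega^{1}(G)^{L},
\]
which, unwound, sends $v$ to the left-invariant 1-form $w \mapsto k\innerprod{\theta_{L}(v)}{\theta_{L}(w)}$, exactly as in the statement. Being a composite of isomorphisms, $\varphi$ is itself an isomorphism of vector spaces.

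It then remains to identify the target with $\hamL$ and to verify the Hamiltonian vector field claim. By Theorem \ref{left_invariant_1-forms} every left-invariant 1-form is Hamiltonian, so $\Omega^{1}(G)^{L} = \hamL$ and $\varphi$ indeed lands in $\hamL$. For the Hamiltonian vector field assertion I would reproduce the computation from the proof of Theorem \ref{left_invariant_1-forms}: putting $\alpha = \varphi(v)$ and using the identity $d\alpha(v_{0},v_{1}) = -\alpha([v_{0},v_{1}])$ valid for left-invariant 1-forms, together with the left invariance of $\nu_{k}$, one arrives at $d\alpha = -\iota_{v}\nu_{k}$. Hence $v$ is a Hamiltonian vector field for $\alpha$.

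Finally, uniqueness is automatic: as remarked after Definition \ref{hamiltonian}, the Hamiltonian vector field of a form is unique when it exists, because $\nu_{k}$ is nondegenerate, so $v$ is the unique left-invariant Hamiltonian vector field associated to $\varphi(v)$. Since essentially all the analytic content was already extracted in Theorem \ref{left_invariant_1-forms}, I do not expect a genuine obstacle here; the only point deserving a moment's care is confirming that the value-at-the-identity description really exhibits both $\theta_{L}$ maps as isomorphisms, which is standard for any Lie group.
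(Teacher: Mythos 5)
Your construction of $\varphi$ and the verification that it is an isomorphism onto $\hamL = \Omega^{1}(G)^{L}$ with $d\varphi(v) = -\iota_{v}\nu_{k}$ follows the same route as the paper: essentially everything is delegated to the proof of Theorem \ref{left_invariant_1-forms}, with the nondegeneracy of the Killing form supplying the bijectivity. The one place where you and the paper diverge is the uniqueness clause, and there you have the direction of the quantifier reversed. The proposition asserts that $\varphi(v)$ is the unique left-invariant Hamiltonian \emph{1-form} whose Hamiltonian vector field is $v$; what you invoke (uniqueness of the Hamiltonian vector field of a given form, from the nondegeneracy of $\nu_{k}$ noted after Definition \ref{hamiltonian}) is the opposite implication and does not by itself settle the claim --- a priori two distinct 1-forms could share the same Hamiltonian vector field, since in general they may differ by any closed 1-form. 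The paper's proof is devoted precisely to this point: if $\alpha,\beta \in \hamL$ have the same Hamiltonian vector field, then writing $\alpha = k\innerprod{\theta_{L}(v_{\alpha})}{\theta_{L}(\cdot)}$ and $\beta = k\innerprod{\theta_{L}(v_{\beta})}{\theta_{L}(\cdot)}$ and using nondegeneracy of the inner product forces $\alpha = \beta$. The good news is that your argument already contains all the needed ingredients: since $\varphi$ is surjective onto $\hamL$, any left-invariant Hamiltonian 1-form $\alpha$ with Hamiltonian vector field $v$ is $\varphi(w)$ for some $w$, your computation shows the Hamiltonian vector field of $\varphi(w)$ is $w$, and then the uniqueness of Hamiltonian vector fields that you cite gives $w = v$, hence $\alpha = \varphi(v)$. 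So the proof is repairable in one line, but as written the uniqueness step proves the wrong statement.
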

\begin{proof}
We show only uniqueness since the rest of the proposition follows
immediately from the arguments made in the proof of Thm.\ 
\ref{left_invariant_1-forms}. Let $\alpha$ and $\beta$ be
left invariant 1-forms. The arguments made in the aforementioned
proof imply $d\alpha = -\ip{\alpha}\nu_{k}$ and $d\beta =
-\ip{\beta}\nu_{k}$, where $v_{\alpha}$ and $v_{\beta}$ are the unique
left-invariant vector fields such that  $\alpha=
k\innerprod{\theta_{L}(v_{\alpha})}{\cdot}$ and $\beta= k\innerprod{\theta_{L}(v_{\beta})}{\cdot}$.
If $v_{\alpha}=v_{\beta}$ is the Hamiltonian vector field for both $\alpha$ and
$\beta$, then the nondegeneracy of the inner product
implies $\alpha =\beta$.
\end{proof}  

\begin{remark}
In general, if $\alpha$ and $\beta$ are Hamiltonian 1-forms sharing
the same Hamiltonian vector field, then $d(\alpha-\beta)=0$. 
Hence, Prop.\ \ref{Ham_vect} implies that there are no non-trivial left
invariant closed 1-forms. Since the left-invariant de Rham cohomology of $G$ is isomorphic to the Lie
algebra cohomology of $\g$, Prop.\ \ref{Ham_vect} is equivalent to the
well-known fact that $H^{1}_{\mathrm{CE}}(\g,\R)=0$ for any simple Lie algebra.
\end{remark}
   
\section{The string Lie 2-algebra}
\label{string_Lie} 
We have described how to construct a Lie
2-algebra of left-invariant forms, from any compact simple Lie group $G$, and any nonzero real
number $k$, using the 2-plectic structure $\nu_{k}$.  Now we show that
this Lie 2-algebra is isomorphic to the `string Lie 2-algebra' of $G$.

It was shown in previous work by Baez and Crans \cite{HDA6} that Lie
2-algebras can be classified up to equivalence by data consisting
of:
\begin{itemize}
\item{a Lie algebra $\g$,}
\item{a vector space $V$,}
\item{a representation $\rho \maps \g \to \End\left(V\right)$,}
\item{an element $[j]\in H^{3} \left(\g,V \right)$ of the Lie algebra
  cohomology of $\g$.}   
\end{itemize}
A Lie 2-algebra $L$ is constructed from this data by
setting the space of 0-chains $L_{0}$ equal to $\g$, the space
1-chains $L_{1}$ equal to $V$, and the differential to be the zero
map: $d=0$. The bracket $\left[\cdot,\cdot \right]
\maps L \otimes L \to L$ is defined to be the Lie bracket on $\g$ in degree
0, and defined in degrees 1 and 2 by:
\[ [x,a]=\rho_{x}(a), \qquad [a,x]=-\rho_{x}(a), \qquad [a,b]=0,\]
for all $x  \in L_{0}$ and $a,b \in L_{1}$. 
The Jacobiator is taken to be any 3-cocycle $j$ representing
the cohomology class $[j]$.

From this classification we can construct the \textbf{string Lie
2-algebra} $\g_{k}$ of a compact simple Lie group $G$ by taking
$\g$ to be the Lie algebra of $G$, $V$ to be $\R$, $\rho$ to
be the trivial representation, and 
\[  j(x,y,z)=k \innerprod{x}{\left[y,z\right]} \]
where $k \in \R$.  When $k \neq 0$,
the 3-cocycle $j$ represents a nontrivial cohomology class.  Note that
since $\rho$ is trivial, the bracket of $\g_{k}$ is trivial in all
degrees except 0.

It is natural to expect that the string Lie 2-algebra is closely
related to the Lie 2-algebra $\Lie(G,\nu_{k})^{L}$ described in Corollary
\ref{left_inv_L2A}, since both are built
using solely the trilinear form $k\inp{\cdot}{[\cdot,\cdot]}$ on $\g$.  Indeed, this
turns out to be the case:
\begin{theorem}
\label{string_Lie_Thm}
If $G$ is a compact simple Lie group with Lie algebra $\g$ 
and 2-plectic structure $\nu_{k}$, then the string Lie 2-algebra
$\g_{k}$ is isomorphic to the Lie 2-algebra $\Lie(G,\nu_{k})^{L}$
of left-invariant Hamiltonian 1-forms.
\end{theorem}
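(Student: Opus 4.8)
The plan is to write down an explicit strict isomorphism of Lie 2-algebras $(\phi_0,\phi_1) \maps \g_{k} \to \Lie(G,\nu_{k})^{L}$, exploiting the dictionary between left-invariant vector fields, elements of $\g$, and left-invariant Hamiltonian 1-forms established in Theorem \ref{left_invariant_1-forms} and Proposition \ref{Ham_vect}. First I would identify $\g$ with the space $\X(G)^{L}$ of left-invariant vector fields via $\theta_{L}$, and then compose with the isomorphism $\varphi \maps \X(G)^{L} \iso \hamL$ of Proposition \ref{Ham_vect}. This gives a degree-$0$ component $\phi_0 \maps \g \to \g^{\ast} \cong \hamL$ sending $x$ to the unique left-invariant Hamiltonian 1-form $\alpha_{x} = k\innerprod{\theta_{L}(v_{x})}{\theta_{L}(\cdot)}$ whose Hamiltonian vector field is the left-invariant field $v_{x}$ determined by $\theta_{L}(v_{x})=x$. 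Since $\varphi$ is an isomorphism and $k \neq 0$, the map $\phi_0$ is an isomorphism of vector spaces. In degree $1$ I take $\phi_1 \maps \R \to \R$ and declare the morphism to be strict, so that the chain homotopy $\Phi$ vanishes; the precise scalar $\phi_1$ is pinned down below.

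Next I verify the three conditions for a strict morphism in Definition \ref{homo}. The chain-map condition is automatic: both complexes have zero differential (Corollary \ref{left_inv_L2A} and the Baez--Crans classification of $\g_{k}$), so $\phi_0 \circ 0 = 0 \circ \phi_1$ holds trivially. For the bracket, with $\Phi=0$ I must check $\phi_0([x,y]) = [\phi_0(x),\phi_0(y)]'$ on the nose. Here $[\phi_0(x),\phi_0(y)]' = \brac{\alpha_{x}}{\alpha_{y}} = \nu_{k}(v_{x},v_{y},\cdot)$ by Corollary \ref{left_inv_L2A}. This bracket is again left-invariant, being an interior product of the left-invariant form $\nu_{k}$ with the left-invariant fields $v_{x},v_{y}$ (invariant by Proposition \ref{invariant_vector_fields}), and by Proposition \ref{bracket_prop} its Hamiltonian vector field is $[v_{x},v_{y}] = v_{[x,y]}$. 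On the other hand $\phi_0([x,y]) = \alpha_{[x,y]}$ is by construction the left-invariant Hamiltonian 1-form with Hamiltonian vector field $v_{[x,y]}$. Since $\varphi$ sets up a bijection between left-invariant vector fields and left-invariant Hamiltonian 1-forms, two such forms with the same Hamiltonian vector field coincide; hence $[\phi_0(x),\phi_0(y)]' = \alpha_{[x,y]} = \phi_0([x,y])$.

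Finally, the Jacobiator condition for a strict morphism reduces to $\phi_1(J(x,y,z)) = J'(\phi_0(x),\phi_0(y),\phi_0(z))$. Unwinding the definitions, the right-hand side is $-\nu_{k}(v_{x},v_{y},v_{z})$, and by the construction of $\nu_{k}$ in Example \ref{Lie_group_example} this equals $-\omega_{k}(x,y,z) = -k\innerprod{x}{[y,z]}$, whereas $J(x,y,z) = k\innerprod{x}{[y,z]}$. Thus the condition forces $\phi_1 = -\id_{\R}$, which is still an isomorphism. With this choice all of Definition \ref{homo} holds with $\Phi = 0$, so $(\phi_0,\phi_1)$ is a strict morphism, and since both $\phi_0$ and $\phi_1$ are linear isomorphisms it is an isomorphism of Lie 2-algebras. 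I expect the only genuine subtlety to be this sign bookkeeping in the Jacobiator, namely the factor forcing $\phi_1 = -\id$; the bracket compatibility, by contrast, is conceptually clean once one observes via Proposition \ref{Ham_vect} that left-invariant Hamiltonian 1-forms are completely determined by their Hamiltonian vector fields.
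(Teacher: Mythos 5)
Your proof is correct and follows essentially the same route as the paper's: the degree-$0$ component is the isomorphism $\varphi$ of Proposition \ref{Ham_vect}, and bracket compatibility is deduced from the uniqueness of a left-invariant Hamiltonian $1$-form with a given Hamiltonian vector field. The one place you go beyond the paper is the explicit verification of the Jacobiator coherence equation: the paper's proof takes $\phi_1 = \id$ and never checks that equation, whereas your computation $J'(\phi_0(x),\phi_0(y),\phi_0(z)) = -\nu_k(v_x,v_y,v_z) = -k\innerprod{x}{[y,z]} = -j(x,y,z)$ shows that with $\Phi=0$ one is forced to take $\phi_1=-\id_{\R}$, and no choice of $\Phi$ rescues $\phi_1=\id$, since that would require $2j$ to be a Chevalley--Eilenberg coboundary, contradicting $[j]\neq 0$ for $k\neq 0$. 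So your sign bookkeeping is not merely a subtlety: it repairs an omission in the paper's argument.
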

\begin{proof}
The underlying chain complex of $\g_{k}$ is $\R \xrightarrow{0} \g$,
while the underlying chain complex of $\Lie(G,\nu_{k})^{L}$ is
$\R \xrightarrow{0} \g^{\ast}$. The isomorphism given Prop.\
\ref{Ham_vect}:
\[
\varphi \maps \X(G)^{L} \iso \hamL, \quad \varphi(v)=k\inp{\theta_{L}(v)}{\theta_{L}(\cdot)}
\] 
induces an isomorphism of complexes
\[
\xymatrix{
\R \ar[d]_{\id} \ar[r]^{0} & \g \ar[d]^{\varphi} \\
\R  \ar[r]^{0} & \g^{\ast} 
}
\]
Note we implicitly used the identifications $\g \cong \X(G)$ and
$\g^{\ast} \cong \hamL$. Let 
$[\cdot,\cdot]$ and $\brac{\cdot}{\cdot}$ be the brackets of $\g_{k}$
and $\Lie(G,\nu_{k})^{L}$, respectively.
According to Def.\ \ref{homo}, we must show that the maps
$\brac{\cdot}{\cdot} \circ \left( \varphi \otimes \varphi \right)$ and
$\varphi \circ \left [\cdot,\cdot \right]$ are chain homotopic. They are,
in fact, equal. 

Indeed, if $v_1,v_2\in \g$, then it follows from Proposition \ref{Ham_vect} that
$\varphi(v_1)$, $\varphi(v_2)$, and $\varphi \left([v_{1},v_{2}]\right)$ are
the unique left invariant Hamiltonian 1-forms whose
Hamiltonian vector fields are $v_1$, $v_2$, and $[v_1,v_2]$,
respectively. But Proposition \ref{bracket_prop} implies 
\[d\brac{\varphi(v_1)}{\varphi(v_2)}= -\iota_{\left[v_{1},v_{2} \right]} \nu_{k}.\]
Hence $[v_{1},v_{2}]$ is also the Hamiltonian vector field of
$\brac{\varphi(v_{1})}{\varphi(v_{2})}$. It then follows from uniqueness that 
$\brac{\varphi(\cdot)}{\varphi(\cdot)}=\varphi \left([\cdot,\cdot] \right)$.
\end{proof}

We conclude this chapter by showing that $\Lie(G,\nu_k)$ and $\g_{k}$
are not equivalent. 
\begin{prop}
If $G$ is a compact simple Lie group with Lie algebra $\g$, then the
Lie 2-algebra of Hamiltonian 1-forms $\Lie(G,\nu_{k})$ and the string
Lie 2-algebra $\g_{k}$ are not quasi-isomorphic.
\end{prop}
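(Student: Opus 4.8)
The plan is to distinguish the two Lie $2$-algebras by the dimension of their degree-$0$ homology. A quasi-isomorphism between $\g_{k}$ and $\Lie(G,\nu_{k})$ would, in either direction, induce a linear isomorphism between $H_{0}(\g_{k})$ and $H_{0}(\Lie(G,\nu_{k}))$, so it suffices to show these two spaces have different dimensions. Since the differential of $\g_{k}$ is zero, $H_{0}(\g_{k})=\g$, which is finite dimensional of dimension $\dim G$. On the other hand, the underlying complex of $\Lie(G,\nu_{k})$ is $\cinf(G)\xrightarrow{d}\Gham$, so $H_{0}(\Lie(G,\nu_{k}))=\Gham/d\cinf(G)$, and I would show this quotient is strictly larger than $\g$.

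First I would identify $H_{0}(\Lie(G,\nu_{k}))$ with a space of vector fields. The assignment $\alpha\mapsto v_{\alpha}$ gives a surjection $\Gham\to\XGham$ whose kernel is exactly the closed $1$-forms, since $v_{\alpha}=0$ iff $d\alpha=0$. Because $G$ is compact and simple, $H^{1}_{\mathrm{dR}}(G)\cong H^{1}_{\mathrm{CE}}(\g,\R)=0$ by Whitehead's lemma, so every closed $1$-form is exact; hence the kernel equals $d\cinf(G)$ and $\alpha\mapsto v_{\alpha}$ descends to an isomorphism $\Gham/d\cinf(G)\iso\XGham$. Next, using $H^{2}_{\mathrm{dR}}(G)\cong H^{2}_{\mathrm{CE}}(\g,\R)=0$ (again Whitehead), I would show $\XGham$ coincides with the infinitesimal symmetries of $\nu_{k}$: a field $v$ is Hamiltonian iff $\iota_{v}\nu_{k}$ is exact, and since every closed $2$-form is now exact this holds iff $\iota_{v}\nu_{k}$ is closed, i.e.\ iff $\L_{v}\nu_{k}=0$.

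It then remains to see that the symmetry algebra $\{v\in\X(G)\mid\L_{v}\nu_{k}=0\}$ has dimension strictly greater than $\dim\g$. The key observation is that $\nu_{k}$ is bi-invariant, so it is preserved by the flows of both left-invariant and right-invariant vector fields: a left-invariant field integrates to right translations and vice versa. Thus both copies $\g_{L}$ and $\g_{R}$ of $\g$ inside $\X(G)$ consist of symmetries of $\nu_{k}$. Since $G$ is simple its center is trivial, so a vector field that is simultaneously left- and right-invariant is $\Ad$-invariant and hence zero; therefore $\g_{L}\cap\g_{R}=0$ and $\g_{L}\oplus\g_{R}$ embeds into the symmetry algebra. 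This forces $\dim H_{0}(\Lie(G,\nu_{k}))\geq 2\dim\g>\dim\g=\dim H_{0}(\g_{k})$, so the two Lie $2$-algebras are not quasi-isomorphic.

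I expect the main obstacle to be the two cohomological reductions: pinning down that $H_{0}(\Lie(G,\nu_{k}))$ is genuinely the space of $\nu_{k}$-preserving vector fields requires the vanishing of both $H^{1}_{\mathrm{dR}}(G)$ and $H^{2}_{\mathrm{dR}}(G)$, which is exactly what lets me pass from Hamiltonian forms to symmetries of $\nu_{k}$. Once that identification is in place, the strict inequality is immediate from the bi-invariance of $\nu_{k}$, the clean geometric input that makes the argument uniform over all compact simple $G$ (including the volume-form case $G=\SU(2)$, where the symmetry algebra is in fact infinite dimensional).
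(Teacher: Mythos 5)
Your proof is correct and follows essentially the same strategy as the paper: both arguments distinguish the two Lie $2$-algebras by their degree-$0$ homology, use $H^{1}_{\mathrm{dR}}(G)\cong H^{1}_{\mathrm{CE}}(\g,\R)=0$ to identify $\Gham/d\cinf(G)$ with $\XGham$, and then exploit the bi-invariance of $\nu_{k}$ to exhibit Hamiltonian vector fields outside $\g_{L}$. The two small differences: the paper shows directly that every right-invariant $1$-form is Hamiltonian (by rerunning the left-invariant computation with $\theta_{R}$) and only needs a single right-invariant field $v^{x}$ with $[x,y]\neq 0$ to get $\dim\XGham\geq\dim\g+1$, whereas you invoke $H^{2}_{\mathrm{dR}}(G)=0$ to identify $\XGham$ with the full symmetry algebra $\{v\mid\L_{v}\nu_{k}=0\}$ and obtain the stronger bound $\dim\XGham\geq 2\dim\g$; both routes are valid, and yours gives a cleaner conceptual picture at the cost of one extra Whitehead vanishing. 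One minor slip: the center of a compact simple Lie group need not be trivial (e.g.\ $\SU(2)$ has center $\{\pm I\}$); what your argument actually needs is that the fixed-point set of $\Ad$ in $\g$ is $\mathfrak{z}(\g)=0$, which holds because $\g$ is simple and nonabelian — the conclusion $\g_{L}\cap\g_{R}=0$ stands, but the justification should be phrased at the Lie algebra level.
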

\begin{proof}
By definition, any quasi-isomorphism of Lie 2-algebras must induce an
isomorphism on homology.
Hence, to prove the statement, it is sufficient to show that the homology of the complex 
\[
L_{\bullet}= \cinf(G) \xrightarrow{d} \Gham,
\]
is not isomorphic to the complex $ \R \xrightarrow{0} \g$.
We will prove this by showing that the degree 0 homology of $L_{\bullet}$
has dimension greater than $\dim \g = \dim \X(G)^{L}$.

Let $\theta_{R} \in \Omega^{1}(G,\g)$ be the right-invariant
Maurer-Cartan form. At any point $g \in G$, it can be written as
\[ 
\theta_{R} \vert_{g}(v) = R_{g^{-1} \ast}v, \quad v \in T_{g}G.
\]
Therefore, $\theta_{R} \vert_{g} = \Ad_{g} \theta_{L}
\vert_{g}$. Since the 2-plectic form $\nu_{k}$ is left and right
invariant, we have the equalities:
\begin{align*}
\nu_{k}&=k \inp{\theta_{L}}{[\theta_{L},\theta_{L}]} \\
&= k\inp{\Ad_{g}\theta_{L}}{\Ad_{g}[\theta_{L},\theta_{L}]} \\
&=k\inp{\Ad_{g}\theta_{L}}{[\Ad_{g}\theta_{L},\Ad_{g}\theta_{L}]} \\
&=k\inp{\theta_{R}}{[\theta_{R},\theta_{R}]}. 
\end{align*}
The last equality implies that we can use the proof of
Thm.\ \ref{left_invariant_1-forms} to show that every right invariant
form is Hamiltonian.

Since the Lie algebra $\g$ is simple, it is not abelian. Therefore,
there exists $x,y \in \g$ such that $[x,y] \neq 0$. Let $v^{x}$ be the
right invariant vector field equal to $x$ at the identity. That is,
\[
v^{x} \vert_{g} = R_{g \ast} x.
\]
Note that $v^{x}$ is the Hamiltonian vector field corresponding to the right
invariant Hamiltonian 1-form
$k\inp{\theta_{R}(v^{x})}{\theta_{R}}$. 
We claim $v^{x}$ is not left invariant. Indeed, if it was then the
equality
\[
L_{g \ast} x = v^{x} \vert_{g} = R_{g \ast} x
\]
would hold for all $g$. In particular, this implies
\[
\Ad_{\exp(ty)} x =x,
\]
and therefore
\[
[y,x]= \left. \frac{d}{dt} \Ad_{\exp(ty)} x \right \vert_{t=0} = 0,
\]
which contradicts our choice of $x$ and $y$. Hence 
\begin{equation} \label{intersect}
\X(G)^{L} \cap \mathrm{span}_{\R} v^{x} = 0
\end{equation}

The kernel of the surjection $\Gham \epi \XGham$ which sends a
Hamiltonian 1-form to its vector field is the space of closed
1-forms. Since $G$ is compact, its de Rham
cohomology is isomorphic to the Chevalley-Eilenberg cohomology
of $\g$. Since $\g$ is simple, its first cohomology group
vanishes. Hence every closed 1-form on $G$ is exact. Therefore,
\[
H_{0}(L_{\bullet}) = \Gham/d\cinf(G) \cong \XGham.
\]
The left invariant vector fields $\X(G)^{L} \cong \g$ are all
Hamiltonian by Prop.\ \ref{Ham_vect}.
Since $v^{x}$ is Hamiltonian, (\ref{intersect}) implies
\[
\dim \g < \dim \XGham =\dim H_{0}(L_{\bullet}).
\]
\end{proof}

\chapter{Stacks, gerbes, and Deligne cohomology} \label{stacks_chapter}
In this chapter, we begin the passage from the classical to the
quantum by introducing the technical machinery needed to geometrically quantize $n$-plectic manifolds.

A principal $\U(1)$-bundle $P$ over a manifold $M$ can be specified by
giving $\U(1)$-valued transition functions with respect to an open cover of $M$.
A connection on $P$ is given by specifying local 1-forms on $M$ that
satisfy a compatibility condition with the transition functions. The
exterior derivative of these 1-forms gives a global 2-form on
$M$ called the curvature of the connection. Conversely, if $M$ is
equipped with a closed 2-form $\omega$ satisfying a certain integrality
condition, then one can show that there exists a principal
$\U(1)$-bundle, with connection, on $M$ whose curvature is $\omega$.
When $\omega$ is also non-degenerate, 
the bundle or, equivalently, its associated Hermitian line bundle,
plays a major role in the geometric quantization of the symplectic manifold $(M,\omega)$.

Our goal is to generalize these facts to $n$-plectic geometry. 
We begin by observing that the word ``bundle''
can be replaced by the word ``sheaf''. From any fiber bundle $E \to M$,
one can construct a sheaf of sections, which assigns to an open set $U \ss M$ the set of
local sections $\sigma \maps U \to E$.
In particular, the sheaf of sections of a
principal $\U(1)$-bundle is what is known as a
`$\sh{\U(1)}$-torsor', where $\sh{\U(1)}$ denotes the sheaf of
sections of the trivial $\U(1)$-bundle.
These torsors can be equipped with extra
structure which gives a connection on the corresponding bundle. 

The higher analogue of a sheaf is what is known as a `stack'. In particular, the
higher analogue of a $\sh{\U(1)}$-torsor is a special kind of stack
called a $\U(1)$-gerbe. Just as the transition functions of a
$\sh{\U(1)}$-torsor give a 1-cocycle, the transition
functions of a $\U(1)$-gerbe give a 2-cocycle.  Stacks and
gerbes were originally developed within the context of algebraic
geometry by Grothendieck \cite{Grothendieck_stacks} and Giraud
\cite{Giraud:1971}, respectively.  More recent
work demonstrates that they naturally arise in differential geometry
as well.  Brylinski \cite{Brylinski:1993} showed that $\U(1)$-gerbes
on manifolds can be equipped with additional structures, which we call
`2-connections'.  These are the higher analogues of connections on
$\U(1)$-bundles. More precisely, a 2-connection on a $\U(1)$-gerbe
over $M$ is specified by local 1-forms and 2-forms on $M$ satisfying
various compatibility conditions. The exterior derivative of the
2-forms give a global closed 3-form called the `2-curvature'.
Conversely, if $M$ is equipped with a closed 3-form $\omega$
satisfying an integrality condition, then one can show that there
exists a $\U(1)$-gerbe with 2-connection on $M$ whose 2-curvature is
$\omega$. As we will see, in analogy with the symplectic case, 
$\U(1)$-gerbes with 2-connections play an important role in
the quantization of 2-plectic manifolds.

Brylinski's results rely heavily on a formalism called `Deligne
cohomology', which can be thought of as a
refinement of the usual \v{C}ech cohomology that classifies principal
bundles. In degree one, Deligne cohomology classifies principal $\U(1)$-bundles
equipped with a connection. Similarly, in degree two, it classifies
$\U(1)$-gerbes equipped with a 2-connection. It is easy to describe 
the higher degree groups as well.
However, geometric structures \cite{Gajer:1997} that are classified by
these groups are, in general, more difficult to work with.

Let us conclude this introduction by briefly outlining the main results found in the chapter.
We first review the basic theory of stacks and
gerbes. We then give a somewhat detailed description of 
Deligne cohomology, and we provide proofs of some statements not
easily found in the literature. After presenting Brylinski's construction
for equipping a gerbe with a 2-connection, we introduce what we call
a `2-line stack'. This stack categorifies the concept of
a Hermitian line bundle . We show that every $\U(1)$-gerbe with
2-connection has an associated 2-line stack with
2-connection. In the final section, we present Carey, Johnson, and Murray's
formalism \cite{Carey:2004}  for computing the holonomy of a
2-connection, which we will use in our quantization procedure for
2-plectic manifolds in Chapter \ref{quantization_chapter}.

\section{Stacks} \label{stacks_sec}
When introducing sheaf theory, one begins by first defining a
presheaf on a topological space $M$ as a contravariant functor $\Op(M)
\to \Set$. The objects of the category $\Op(M)$ are open sets of $M$ and the morphisms
are inclusion maps. Similarly, in the theory of stacks, we begin by 
defining fibered categories and prestacks. Just as a
presheaf assigns a set to each open set $U \ss M$, a fibered category
assigns a category to each such set.

\begin{definition}[\cite{Moerdijk:2002}] \label{fibered_cat_def}
A {\bf fibered category} $\F$ over $M$ consists of:
\begin{itemize}
\item{a category $\F(U)$ for each open set $U \subseteq M$,}
\item{a functor $i^\ast \maps \F(V) \to \F(U)$ for each inclusion $i
    \maps U \embed V$ of open sets,}
\item{a natural isomorphism $t_{i,j} \maps(ij)^{\ast}
    \stackrel{\sim}{\to} j^{\ast} i^{\ast}$ for each pair of
    composable inclusions 
\[
W \stackrel{j}{\embed} V \stackrel{i}{\embed} U,
\]
}
\end{itemize}
such that for any triple of composable inclusions
\[
Y \stackrel{k}{\embed} W \stackrel{j}{\embed} V \stackrel{i}{\embed} U
\]
the following diagram commutes:
\[
\xymatrix{
 (ijk)^{\ast} \ar[d]_{t_{i,jk}} \ar[r]^{t_{ij,k}}& k^*(ij)^{\ast} \ar[d]^{k^{\ast}t_{i,j}}\\
(jk)^{\ast}i^{\ast} \ar[r]^{t_{j,k} i^{\ast}}& k^{\ast} j^{\ast} i^{\ast}\\
}
\]
\end{definition}
The above definition implies that a fibered category is a
contravariant `pseudo-functor' $\F \maps
\Op(M) \to \Cat$. The following example of a fibered category is
perhaps the most important one for us.
\begin{example}[Sheaves on a manifold]\label{sh}
Let $M$ be a manifold. To each open set $U \ss M$, assign the category
$\Shf(U)$, whose objects are sheaves on $U$. To each inclusion of open
sets $V \stackrel{i}{\embed} U$ assign the functor
\[
\begin{array}{c}
\Shf(U) \stackrel{i^{\ast}}{\to} \Shf(V) \\
F \mapsto F \vert_V,
\end{array}
\]
where $F \vert_V$ is the restriction of the sheaf $F$ to the open set
$V$. For any open set $W \ss V$, we have $F \vert_{V}(W)=F(W)$. Hence, given
$W \stackrel{j}{\embed} V \stackrel{i}{\embed} U$, the functors
$(ij)^\ast$ and $j^{\ast}i^{\ast}$ are equal. Therefore, the natural
isomorphisms $t_{i,j}$ may be taken to be the identity.
\end{example}

\begin{definition}[\cite{Moerdijk:2002}]
A {\bf morphism} between fibered categories $\F$ and $\G$ over $M$ consists of
\begin{itemize}
\item{a functor $\phi_{U} \maps \F(U) \to \G(U)$ for every open set
    $U \ss M$,}
\item{a natural isomorphism $\alpha_{i} \maps \phi_{V} i^{\ast} \iso
    i^{\ast}\phi_{U}$ for every inclusion $V \stackrel{i}{\embed} U$, 
    such that for every pair of composable inclusions 
    $W \stackrel{j}{\embed} V \stackrel{i}{\embed} U$ the diagram
\[
\xymatrix{
\phi_{W}(ij)^{\ast} \ar[d]_{\phi_{W}\tau_{i,j}}  \ar[rr]^{\alpha_{ij}} & & (ij)^{\ast} \phi_{U}
\ar[d]^{\tau_{i,j}\phi_{U}} \\
 \phi_{W} j^{\ast}i^{\ast} \ar[r]^{\alpha_{j} i^{\ast}} & j^{\ast}
 \phi_{V}i^{\ast} \ar[r]^{j^{\ast}\alpha_{i}} & j^{\ast} i^{\ast}
   \phi_{U}
}
\]
commutes.
}
\end{itemize}
\end{definition}
\noi Recall that an isomorphism of presheaves is given by local isomorphisms
of sets. The corresponding notion for fibered categories is slightly weaker.
It incorporates equivalences of categories, rather than isomorphisms
of categories.

\begin{definition}\label{equivalence_def}
A morphism $(\phi,\alpha) \maps \F \to \G$ is an {\bf equivalence} iff every functor $\phi_{U}$ is
an equivalence of categories.\footnote{An equivalence in the sense of Def.\
  \ref{equivalence_def} is called a `strong equivalence' in \cite{Moerdijk:2002}.}
\end{definition}

If $\F$ is a fibered category over $M$, and $U \subseteq M$ is an open
set, then given any objects $x,y \in \F(U)$, one can construct a
presheaf on $U$ by assigning to an open set $V \stackrel{i}{\embed}U$ the set
$\Hom_{\F(V)}(i^{\ast}x,i^{\ast}y)$. We denote this presheaf $\sh{\Hom}_{\F}(x,y)$.
\begin{definition}[\cite{Moerdijk:2002}]
A fibered category $\F$ over $M$ is a {\bf prestack} iff for every open
set $U \subseteq M$ and objects $x,y \in \F(U)$, the presheaf
$\sh{\Hom}_{\F}(x,y)$ is a sheaf.
\end{definition}

Our definition of a stack will, again, come from Moerdijk \cite{Moerdijk:2002}.  
However, it is more convenient to give his definition using nerves of
open covers, which we will explain below. This makes our
notation appear more like Brylinski's \cite{Brylinski:1993} Def.\
5.2.1. However, we warn the reader that Brylinski's definition of a fibered category uses
a ``larger'' source category than $\Op(M)$. Its objects
are arbitrary local homeomorphisms into $M$. For what we need to do, it is not
necessary to use this larger category.
 
Given an open cover $\cU=\{U_{a}\}$ of an open set $V \subseteq M$,
we consider the disjoint union $\cU^{[0]}=\coprod_{a} U_{a}$, 
and the $n$-fold fiber product:
\begin{equation} \label{fiber_prod}
\cU^{[n]} = \underset{n+1}{\underbrace{\cU^{[0]}\times_{V} \cdots \times_{V} \cU^{[0]}}}
=\coprod_{a_{1}, a_{2}, \cdots, a_{n+1}} U_{a_{1}} \cap U_{a_{2}} \cap
\cdots \cap U_{a_{n+1}}.
\end{equation}
There is a map $p_{0} \maps \cU^{[0]} \to V$ given by the inclusion
maps $U_{a} \embed V$. Similarly, there exists $n+1$ maps
$p_{1,\ldots,\hat{k},\ldots,n+1} \maps \cU^{[n]}\to \cU^{[n-1]}$
determined by  inclusion maps of the form
\begin{equation}\label{inclu}
U_{a_1} \cap \cdots \cap U_{a_{n+1}} \embed
U_{a_1} \cap \cdots \cap U_{a_{k-1}} \cap \widehat{U_{a_k}} \cap
U_{a_{k+1}} \cap \cdots \cap U_{a_{n+1}}, 
\end{equation}
Putting these all together, we obtain the following diagram in the
category of manifolds:
\begin{equation} \label{nerve}
\xymatrix{ 
\cdots \ar@<+1.4ex>[r] \ar@<+.5ex>[r] \ar@<-.5ex>[r] \ar@<-1.4ex>[r]
&\cU^{[2]} \ar@<+.9ex>[r]^{p_{12}} \ar[r]\ar@<-.9ex>[r]_{p_{23}}& \cU^{[1]}
\ar@<+.5ex>[r]^{p_{1}} \ar@<-.5ex>[r]_{p_{2}}& \cU^{[0]} \ar[r]^{p_0}
& V.
}
\end{equation}
This is called the {\bf nerve} of the cover $\cU$.
In particular, the maps $p_{1},p_{2}$ are the projections from the
first and second factor, respectively, and
$p_{12},p_{13},p_{23}$ are the projections from the first and second, first and
third, and second and third factors, respectively. We sometimes will
slightly abuse notation by writing the compositions $p_{1}p_{ij}$ and
$p_{2}p_{ij}$ as $p_{i}$ and $p_{j}$, respectively.
The nerve of a cover is useful for expressing the various gluing properties of both sheaves and
stacks. 

Let us establish just a bit more notation. If $F$ is a presheaf on
$M$, then we define the product
\begin{equation} \label{presheaf_nerve}
F(\cU^{[n]}):= \prod_{a_{1},\ldots,a_{n+1}}F(U_{a_{1}} \cap \cdots
\cap U_{a_{n+1}}).
\end{equation}
Then applying $F$ to the diagram (\ref{nerve}) gives, for example,
\begin{equation}\label{sheaf_glue}
\xymatrix{ 
F(V) \ar[r]^{p^{\ast}_{0}} &F(\cU^{[0]}) \ar@<+.5ex>[r]^{p_{1}^{\ast}}
\ar@<-.5ex>[r]_{p_{2}^{\ast}}& F(\cU^{[1]}) \cdots,
}
\end{equation}
where $p_{i}^{\ast}$ are {\it maps between sets} corresponding to
restriction of sections. 
Now, if $\F$ is a fibered category on $M$, we define the category:
\[
\F(\cU^{[n]}):= \prod_{a_{1},\ldots,a_{n+1}}\F(U_{a_{1}} \cap \cdots
\cap U_{a_{n+1}}),
\]
where the product on the right-hand side is the product of
categories. We apply $\F$ to (\ref{nerve}) and obtain:
\[
\xymatrix{ 
\F(V) \ar[r]^{p_{0}^{\ast}}& \F(\cU^{[0]}) \ar@<+.5ex>[r]^{p_{1}^{\ast}}
\ar@<-.5ex>[r]_{p_{2}^{\ast}}& \F(\cU^{[1]})  
\ar@<+.9ex>[r]^{p_{12}^{\ast}} \ar[r]\ar@<-.9ex>[r]_{p_{23}^{\ast}}&
\F(\cU^{[2]}) \cdots
}
\]
Here, $p^{\ast}_{i},p^{\ast}_{ij}$ are {\it functors between
  categories}, which are determined by the
functors corresponding to the inclusions (\ref{inclu}).
Similarly, there are natural isomorphisms $t_{p_{i},p_{jk}}
\maps (p_{i}p_{jk})^{\ast}\to p_{jk}^{\ast}p_{i}^{\ast}$.

We can now give a relatively concise definition of a stack.
\begin{definition} \label{stack_def}
A prestack $\F$ over $M$ is a {\bf stack} if and only if given the
data: 
\begin{itemize}
\item{an open cover $\cU$ of an open set $V \subseteq M$,}
\item{an object $x \in \F(\cU^{[0]})$,}
\item{an isomorphism
\[
\phi \maps p_{2}^{\ast}x \stackrel{\sim}{\to} p_{1}^{\ast}x
\]
in $\F(\cU^{[1]})$ such that the following diagram in $\F(\cU^{[2]})$ commutes:
\[
\xymatrix{
p_{23}^{\ast}p_{2}^{\ast}x \ar[d]_{t^{-1}_{p_{2},p_{23}}} \ar[r]^{\phi} & p^{\ast}_{23}p^{\ast}_{1} x \ar[r]^{t^{-1}_{p_{1},p_{23}}} & p^{\ast}_{2} x \ar[d]^{t_{p_{2},p_{12}}}\\
p^{\ast}_{3} x \ar[d]_{t_{p_{2},p_{13}}} & & p^{\ast}_{12} p^{\ast}_{2} x \ar[d]^{\phi} \\
p^{\ast}_{13} p^{\ast}_{2} x \ar[d]_{\phi} & & p^{\ast}_{12} p^{\ast}_{1} x \ar[d]^{t^{-1}_{p_{1},p_{12}}} \\
p^{\ast}_{13} p^{\ast}_{1} x   \ar[rr]^{t^{-1}_{p_{1},p_{13}}}& & p^{\ast}_{1} p^{\ast}_{1} x 
}
\]
}
\end{itemize}
there exists an object $\tilde{x} \in \F(V)$, unique up to isomorphism,
together with an isomorphism
\[
\psi \maps p_{0}^{\ast} \tilde{x} \stackrel{\sim}{\to} x
\]
in $\F(\cU^{[0]})$ such that the following diagram in $\F(\cU^{[1]})$ commutes:
\[
\xymatrix{
p_{2}^{\ast}p^{\ast}_{0} \tilde{x}  \ar[d]_{\psi}
\ar[r]^{t_{p_{0},p_{2}}} &(p_{0}p_{2})^{\ast}\tilde{x} \ar @{=}[r]&(p_{0}p_{1})^{\ast}\tilde{x} \ar[r]^{t^{-1}_{p_{0},p_{1}}} & p^{\ast}_{1}p^{\ast}_{0}
\tilde{x} \ar[d]^{\psi}\\
p^{\ast}_{2} x \ar[rrr]^{\phi} &&& p^{\ast}_{1} x
}
\]
\end{definition}
\noi Hence, just as sections of a sheaf can glue together in a unique way,
objects in a stack can glue together uniquely up to isomorphism. In
addition, note that the prestack
condition implies that morphisms between objects can be glued together
as well. A morphism between stacks is simply a morphism between the
underlying fibered categories.

\begin{prop} \label{sh_stack}
Let $M$ be a manifold. The fibered category which assigns to an open
set $U \ss M$ the category $\Shf(U)$ of sheaves on $U$, as
defined in Example \ref{sh}, is a stack. 
\end{prop}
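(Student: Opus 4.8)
The plan is to verify the two conditions packaged into Definition \ref{stack_def} in turn: first that $\F$ is a prestack, i.e.\ that $\sh{\Hom}_{\F}(x,y)$ is a sheaf for any two sheaves $x,y$ on an open set, and then the descent condition, namely that a compatible family of sheaves on an open cover glues to an essentially unique global sheaf. A key simplification, already noted in Example \ref{sh}, is that for this fibered category all of the natural isomorphisms $t_{i,j}$ are identities, since restriction of sheaves is strictly functorial. Consequently the large hexagonal cocycle diagram over $\cU^{[2]}$ and the second compatibility diagram over $\cU^{[1]}$ collapse to the familiar \v{C}ech cocycle identity and the usual gluing compatibility, so the bulk of the proof is translation rather than computation.

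First I would establish the prestack condition. Fix sheaves $x,y \in \Shf(V)$; by definition $\sh{\Hom}_{\F}(x,y)$ sends $V' \embed V$ to $\Hom_{\Shf(V')}(x\vert_{V'}, y\vert_{V'})$. Given an open cover $\{V_a\}$ of $V'$ and morphisms $f_a \maps x\vert_{V_a} \to y\vert_{V_a}$ agreeing on the overlaps $V_a \cap V_b$, I would produce a global morphism $f \maps x\vert_{V'}\to y\vert_{V'}$ as follows: for an open $W \ss V'$ and a section $s \in x(W)$, the elements $f_a(s\vert_{W \cap V_a}) \in y(W \cap V_a)$ agree on double overlaps, so by the sheaf axiom for $y$ they glue to a unique $f_W(s) \in y(W)$. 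Naturality of $f$ with respect to restriction, and uniqueness of $f$, both follow from the separatedness of $y$. Hence $\sh{\Hom}_{\F}(x,y)$ satisfies gluing and separatedness, so it is a sheaf.

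Next I would verify descent. An object $x \in \F(\cU^{[0]})$ is a family of sheaves $(F_a)$ with $F_a \in \Shf(U_a)$, and an isomorphism $\phi \maps p_2^{\ast} x \iso p_1^{\ast} x$ amounts to sheaf isomorphisms $\phi_{ab} \maps F_b\vert_{U_a \cap U_b} \iso F_a\vert_{U_a \cap U_b}$; the cocycle diagram then says $\phi_{ac} = \phi_{ab}\circ\phi_{bc}$ on each triple overlap. I would build the glued object $\tilde x$ by setting, for $W \ss V$,
\[
\tilde x(W) = \setbuild{(s_a)_a \in \textstyle\prod_a F_a(W \cap U_a)}{\phi_{ab}(s_b) = s_a \text{ on } W \cap U_a \cap U_b},
\]
with the evident restriction maps. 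Since $\tilde x$ is the equalizer of a diagram of sheaves, it is again a sheaf. The required isomorphism $\psi \maps p_0^{\ast} \tilde x \iso x$ is given componentwise by the projections $\psi_a \maps \tilde x\vert_{U_a} \to F_a$, $(s_b)_b \mapsto s_a$; the cocycle identity guarantees each $\psi_a$ is an isomorphism and that the compatibility square with $\phi$ commutes. Finally, uniqueness up to isomorphism follows formally from the prestack property already proved: a second solution $(\tilde x', \psi')$ yields local isomorphisms $\psi_a^{-1}\psi'_a$ that agree on overlaps, hence glue to a global isomorphism $\tilde x' \iso \tilde x$.

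The conceptual content here is entirely standard descent for sheaves, so the main obstacle is purely bookkeeping: matching the abstract data of Definition \ref{stack_def} — the projections $p_i$, $p_{ij}$, the isomorphism $\phi$, and the two commuting diagrams — against the concrete cocycle and gluing statements. Because the structural isomorphisms $t_{i,j}$ are identities in this case, that translation is mechanical, and the only genuine verifications are the sheaf axioms for $\tilde x$ and the fact that each $\psi_a$ is an isomorphism, both of which reduce to the sheaf axioms for the $F_a$ together with the cocycle identity.
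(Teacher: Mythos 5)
Your proof is correct and is the standard descent argument for sheaves: the prestack condition via gluing morphisms using the sheaf axiom for the target, and effectivity of descent via the equalizer construction $\tilde x(W) = \{(s_a) : \phi_{ab}(s_b)=s_a\}$, with the cocycle identity $\phi_{ac}=\phi_{ab}\circ\phi_{bc}$ ensuring each projection $\psi_a$ is an isomorphism. The paper itself gives no argument here --- it simply cites Sec.\ 5.1 of Brylinski --- and your write-up is essentially the proof contained in that reference, so there is nothing substantive to compare.
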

\begin{proof}
We refer the reader to Sec.\ 5.1 in \cite{Brylinski:1993} for the proof. 
\end{proof}
Finally, we mention that if $\F$ is a stack over $M$, then we will often refer to the objects of the
category $\F(M)$ as the {\bf global sections} of $\F$.

\section{Gerbes}
Roughly, gerbes are to stacks, as principal bundles are to fiber
bundles. To see this, let us first give the precise definition for a torsor.
\begin{definition}
Let $\sh{G}$ be the sheaf of smooth functions with values in the Lie
group $G$. A {\bf $\sh{G}$-torsor} over a manifold $M$ is a sheaf $F$
together with an action $\sh{G} \times F \to F$ such that for each $x
\in M$, there exists an open neighborhood $U$ of $x$ with the property
that for each open $V \ss U$, the set $F(V)$ is a principal
homogeneous $G(V)$-space.
\end{definition}
\noi The sheaf $\sh{G}$ itself is the {\bf trivial $\sh{G}$-torsor}.
Note the definition implies that if $F$ is a $\sh{G}$-torsor on
$M$, then $F$ is \textbf{locally isomorphic} to $\sh{G}$. That is, for
all $x\in M$ there exists an open neighborhood $U \ni x$, such that
restricted sheaves $F_{U}$ and $\sh{G}_{U}$ are isomorphic.  Morphisms
between $\sh{G}$-torsors are morphisms of the underlying sheaves which
respect the $\sh{G}$-action. As mentioned in the introduction to the
chapter, the sheaf of sections of a principal $G$-bundle is a
$\sh{G}$-torsor. Conversely, every $\sh{G}$-torsor is isomorphic to
such a sheaf of sections.

We can construct a fibered category on a manifold $M$ which assigns to
every open set $U$,  the category of $\sh{G}$-torsors over $U$.
Using the fact that $\Shf$ is a stack, it is not difficult to see that
this fibered category is also a stack, which we denote as  $\Tor_{G}$.
Just as $\sh{G}$-torsors are special kinds of sheaves, $\Tor_{G}$
is a special kind of stack. For example, for any open set $U$, the morphisms
in the category $\Tor_{G}(U)$ are all isomorphisms. Hence,
$\Tor_{G}(U)$ is a groupoid. In fact, it is a non-empty groupoid,
since we always have the trivial $\sh{G}$-torsor over every open set $U$.
Also, since every $\sh{G}$-torsor is locally isomorphic to $\sh{G}$,
any two $\sh{G}$-torsors in $\Tor_{G}(U)$ will become isomorphic when
pulled back to the category $\Tor_{G}(V)$, if $V$ is a ``small
enough'' open subset of $U$. By axiomatizing these facts, one arrives at
the definition of a $G$-gerbe. $\Tor_{G}$ itself is called the {\bf
  trivial $G$-gerbe}. In fact, as we will see, the definition implies
that a $G$-gerbe is a stack that is locally isomorphic to the stack $\Tor_{G}$. 
\begin{definition}[\cite{Brylinski:1993,Giraud:1971}] \label{G-gerbe}
Let $G$ be a Lie group. A stack $\G$ over $M$ is a {\bf $G$-gerbe} iff:
\begin{enumerate}
\item{for every open set $U \subseteq M$, the category $\G(U)$ is a groupoid,}
\item{there exists an open cover $\cU$ of $M$ such that the groupoid
    $\G(\cU^{[0]})$ is non-empty,}
\item{for every open set $V \subseteq M$ and every pair of objects $P,Q\in \G(V)$,
    there exists an open cover $\cU$ of $V$ such that $p_{0}^{\ast}P$
    and $p_{0}^{\ast}Q$ are isomorphic as objects in $\G(\cU^{[0]})$,}
\item{for every open set $U \subseteq M$ and every object $P
    \in \G(U)$, there exists a local isomorphism between the sheaf of
    groups $\sh{\Aut}_{\G}(P)=\sh{\Hom}_{\G}(P,P)$ 
    and the sheaf $\sh{G}_{U}$. This local isomorphism is unique up to
  inner automorphisms of $G$.}
\end{enumerate}
\end{definition}
\noi Roughly, a morphism between $G$-gerbes is a morphism between the underlying
stacks, which respects the local isomorphisms between the sheaves
$\sh{\Aut}_{\G}(P)$ and $\sh{G}_{U}$. See the definition following
Prop.\ 5.2.7 in \cite{Brylinski:1993} for the precise details.


\subsection*{The classification of $\U(1)$-gerbes} \label{gerbe_class}
From here on we shall only consider the case $G=\U(1)$. As we shall
see, $\U(1)$-gerbes are classified by the group $H^{3}(M,\Z)$, just as $H^{2}(M,\Z)$ classifies $\U(1)$-bundles. 

We first review the classification of principal $\U(1)$-bundles using sheaf
cohomology. We will always be working with paracompact
manifolds, therefore we canonically identify sheaf cohomology with its corresponding
\v{C}ech cohomology. 
Let us recall some basic facts concerning \v{C}ech cohomology. Let $F$ be a sheaf of abelian groups
on $M$, and let $\cU=\{U_{i}\}$ be an open cover. 
The space of {\bf \v{C}ech} {\boldmath $k$}{\bf -cochains} 
with values in $F$ is the abelian group
\begin{equation} \label{cech_cochain}
C^{k}(\cU,F) = \prod_{a_{1} < a_{2} < \ldots <a_{k+1}}F(U_{a_{1}} \cap
\cdots \cap U_{a_{k+1}}) \ss F(\cU^{[k]}).
\end{equation}
The {\bf \v{C}ech coboundary}: 
\[
C^{k}(\cU,F) \xto{\delta} C^{k+1}(\cU,F)
\]
is given, component-wise, by
\[
\delta (g)_{a_1,\ldots,a_{k+1}} = \sum_{j=1}^{k+1} (-1)^{j}
g_{a_1,\ldots, \widehat{a_{j}},\ldots, a_{k+1}} \vert_{U_{a_{1}} \cap \cdots \cap U_{a_{k+2}}}.
\]
The set of open covers of $M$ is a directed set, with the order given
by refinement. 
Therefore, the cohomology groups $H^{\bullet}(\cU,F)$ of
the complexes $(C^{\bullet}(\cU,F),\delta)$ form a direct system. The
{\bf \v{C}ech cohomology} of $M$ with values in $F$ is the direct
limit of these groups:
\[
H^{\bullet}(M,F) = \varinjlim_{\cU} H^{\bullet}(\cU,F).
\]
Recall that an open cover $\cU=\{U_{i}\}$ of $M$ is \textbf{good} iff every
non-empty intersection $U_{i_{1}} \cap \cdots \cap U_{i_{n}}$ is
contractible. Every manifold admits a good cover, and such covers are
cofinal in the aforementioned directed set. Hence, the direct
limit above can be computed by just considering good covers.

Let $P \to M$ be a principal $\U(1)$-bundle and $\cU=\{U_{i}\}$ an open
cover of $M$ admitting local trivializations of $P$. The corresponding
transition functions $g_{ij} \maps U_{i} \cap U_{j}\to
\U(1)$ satisfy the cocycle condition $g_{jk}g^{-1}_{ik}g_{ij}=1$ on
$U_{i} \cap U_{j} \cap U_{k}$, and hence give a class in
$H^{1}(M,\sh{\U(1)})$, the degree 1 cohomology group with
values in the sheaf of smooth $\U(1)$-valued functions. It is well-known
that $H^{1}(M,\sh{\U(1)})$ is in one-to-one correspondence with
isomorphism classes of principal $\U(1)$-bundles on the manifold $M$.

Let $\Zi$ denote the sheaf whose sections are
locally-constant functions with values in $\tpi \i \cdot \Z$, and
let $\icinf$ denote the sheaf of smooth imaginary-valued functions on $M$.
There is a short exact sequence
\begin{equation}\label{exp}
0 \to \Zi \embed \icinf \xrightarrow{\exp} \sh{\U(1)} \to 0,
\end{equation}
giving a long exact sequence in cohomology. Since $\icinf$ is a
 soft sheaf, the long exact sequence gives the isomorphisms:
\begin{equation}\label{chern_iso}
H^{k}(M,\sh{\U(1)}) \cong H^{k+1}(M,\Zi) \cong H^{k+1}(M,\Z).
\end{equation}
For $k=1$, the isomorphism (\ref{chern_iso}) associates to a
principal $\U(1)$-bundle its Chern class.

Now we consider the $k=2$ case, and explain how to obtain a $\U(1)$-valued 2-cocycle
from a $\U(1)$-gerbe $\G$. By the second axiom in
Def.\ \ref{G-gerbe}, there exists an open cover $\cU=\{U_{i}\}$ of the
manifold $M$,  
such that for all $i$, there exists an object $P_{i} \in \G(U_{i})$. By pulling back along refinements, we may
assume the following: $\cU$ is a good cover, there exists
isomorphisms of sheaves $\sh{\Aut}_{\G}(P_{i})\cong
\sh{\U(1)} \vert_{U_{i}}$ for all $P_{i}$ (by axiom 4), and there exists isomorphisms
\[
u_{ij} \maps P_{j} \vert_{U_{ij}}\iso  P_{i}\vert_{U_{ij}},
\]
where $P_{i}\vert_{U_{ij}}$ and $P_{j}\vert_{U_{ij}}$ are the
pullbacks of $P_{i}$ and $P_{j}$ to $\G(U_{i} \cap U_{j})$.
Therefore, by pulling back objects $P_{i}, P_{j}, P_{k}$ to $U_{i}
\cap U_{j} \cap U_{k}$, we have the commuting diagram
\[
\xymatrix{
P_{k} \vert_{U_{ijk}}  \ar[rr]^{u_{jk}} & &P_{j} \vert_{U_{ijk}} \ar[ld]^{u_{ij}} \\
&P_{i} \vert_{U_{ijk}} \ar[lu]^{u^{-1}_{ik}}&
}
\]
giving a morphism $u_{ik}^{-1} u_{ij} u_{jk} \in \sh{\Aut}_{\G}(P_{k})(U_{i} \cap U_{j} \cap U_{k})$.
Since $\sh{\Aut}_{\G}(P_{k}) (U_{i} \cap U_{j} \cap U_{k}) \cong
\sh{\U(1)}(U_{i} \cap U_{j} \cap U_{k})$, this automorphism
corresponds to a map $g_{ijk} \maps U_{i} \cap U_{j} \cap U_{k}  \to
\U(1)$. It is easy to see that $g_{ijk}$ satisfies the cocycle
condition on intersections $U_{i} \cap U_{j} \cap U_{k} \cap U_{l}$, and therefore gives a class $[g]
\in H^{2}(M,\sh{\U(1)})$.

Conversely, suppose $g_{ijk} \maps U_{i} \cap U_{j} \cap U_{k}  \to
\U(1)$ is a 2-cocycle on a good open cover $\cU=\{U_{i} \}$. 
Recall from the discussion preceding Def. \ref{G-gerbe} that $\Tor_{\U(1)}$ is a gerbe.
We construct a new gerbe $\G$ by ``twisting'' $\Tor_{\U(1)}$ by $h_{ijk}$.
Given an open set $V \ss M$, an object $(P_{i},u_{ij})$ in $\G(V)$ is
defined to be a collection of objects $P_{i} \in \Tor_{\U(1)}(V\cap U_{i})$, together with isomorphisms
\[
u_{ij} \maps P_{j} \vert_{V \cap U_{i} \cap U_{j}}\iso P_{i} \vert_{V \cap U_{i} \cap U_{j}}
\]
in $\Tor_{\U(1)}(V\cap U_{i} \cap U_{j})$, such that $u_{ik}^{-1}
u_{ij} u_{jk} = g_{ijk} \in \sh{\U(1)}(V\cap U_{i} \cap U_{j} \cap
U_{k})$. A morphism $(P_{i},u_{ij}) \to (P'_{i},u'_{ij})$ consists of
a family of morphisms of $\U(1)$-torsors $P_{i} \to P_{i}'$ whose
pullbacks in $\Tor_{\U(1)}(V\cap U_{i} \cap U_{j})$ commute with the
morphisms $u_{ij},u'_{ij}$. It is straightforward to show that by using the pullback
functors defined for $\Tor_{\U(1)}$, we obtain a stack $\G$ in this
way. To see that $\Aut_{\G}(P_{i},u_{ij})$ is locally isomorphic to
$\sh{\U(1)}$, note that such an automorphism must be given by a collection of
morphisms $ P_{i} \iso P_{i}$ corresponding to sections in
$\sh{\U(1)}(V \cap U_{i})$, which must agree when pulled back to $V
\cap U_{i} \cap U_{j}$. These glue to give a section in
$\sh{U(1)}(V)$, thereby establishing an isomorphism
$\Aut_{\G}(P_{i},u_{ij})(V) \cong \sh{\U(1)}(V)$. To show that the
other axioms in Def.\ \ref{G-gerbe} hold, one may show that the
categories $\G(U_{i})$ and $\Tor_{\U(1)}(U_{i})$ are equivalent for
all $U_{i}$. (This follows from the fact that $g_{ijk}$ restricted to
$U_{i}$ is a 2-coboundary since $H^{2}(U_{i},\sh{\U(1)})=0$. See Sec.\
5.2 in \cite{Brylinski:1993}.)  This construction, combined with the
isomorphism (\ref{chern_iso}) leads to the following theorem:
\begin{theorem}[\cite{Brylinski:1993,Giraud:1971}]\label{gerbe_class_theorem}
There is a one-to-one correspondence between
equivalence classes of $\U(1)$-gerbes on a manifold $M$ and classes in
$H^{3}(M,\Z)$.
\end{theorem}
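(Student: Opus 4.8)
The plan is to reduce the statement to a bijection between equivalence classes of $\U(1)$-gerbes and the sheaf-cohomology group $H^{2}(M,\sh{\U(1)})$, and then invoke the isomorphism $H^{2}(M,\sh{\U(1)}) \cong H^{3}(M,\Z)$ from (\ref{chern_iso}), which was already obtained from the exponential sequence (\ref{exp}) and the softness of $\icinf$. The two constructions needed for the bijection have essentially been exhibited in the discussion preceding the theorem: from a $\U(1)$-gerbe $\G$ one extracts a $\sh{\U(1)}$-valued $2$-cocycle $g_{ijk}$, and from a $2$-cocycle one builds the ``twisted'' gerbe $\G_{g}$ obtained by deforming the gluing data of $\Tor_{\U(1)}$. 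It remains to show that $\G \mapsto [g]$ is well defined on equivalence classes, that $[g]\mapsto \G_{g}$ is its inverse, and that both are compatible with equivalence.

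First I would verify that the class $[g]\in H^{2}(M,\sh{\U(1)})$ does not depend on the auxiliary choices. Working over a good cover $\cU$, the only freedom in the isomorphisms $u_{ij}\maps P_{j}\iso P_{i}$ is right multiplication by a section $\lambda_{ij}\in\sh{\U(1)}(U_{ij})$, since $\sh{\Aut}_{\G}(P_{i})\cong\sh{\U(1)}|_{U_{i}}$; a direct computation shows that replacing $u_{ij}$ by $u_{ij}\lambda_{ij}$ alters $g_{ijk}$ by the \v{C}ech coboundary $\delta(\lambda)_{ijk}$, so $[g]$ is unchanged. Replacing the local objects $P_{i}$ by other choices $P'_{i}$ likewise changes the cocycle by a coboundary, using the isomorphisms guaranteed by axiom $3$ of Definition \ref{G-gerbe}, and passing to a refinement leaves the class fixed in the direct limit defining $H^{2}(M,\sh{\U(1)})$. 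Here it is essential that $\U(1)$ is abelian: the local trivializations of $\sh{\Aut}_{\G}(P_{i})$ are unique only up to inner automorphisms of $\U(1)$, but these are trivial, so no ``band'' ambiguity survives. Finally, an equivalence $F\maps\G\iso\G'$ carries a system $(P_{i},u_{ij})$ to $(F(P_{i}),F(u_{ij}))$, which computes the same cocycle; hence $\G\mapsto[g]$ descends to a well-defined map on equivalence classes.

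Surjectivity is immediate from the twisting construction, which the text already verifies produces a genuine $\U(1)$-gerbe $\G_{g}$ for any $2$-cocycle $g$. For injectivity I would show every gerbe is equivalent to the twisted gerbe of its own cocycle. Given $\G$ with cocycle $g$ computed from $(P_{i},u_{ij})$, the assignment sending an object $x\in\G(V)$ to the collection of $\U(1)$-torsors $\sh{\Hom}_{\G}(P_{i},x)$ together with the comparison isomorphisms induced by the $u_{ij}$ defines a morphism $\G\to\G_{g}$; the stack axiom (gluing of objects and of morphisms, Definition \ref{stack_def}) shows this morphism is an equivalence. Applying the extraction procedure to the tautological local objects of $\G_{g}$ returns precisely $g$, so the two constructions are mutually inverse at the level of cohomology classes. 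Combined with the observation that cohomologous cocycles $g'=g\cdot\delta(\lambda)$ yield equivalent twisted gerbes, via the equivalence built from the sections $\lambda_{ij}$, this shows that two gerbes share a class if and only if they are equivalent. Transporting the resulting bijection through (\ref{chern_iso}) gives the theorem.

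The main obstacle I anticipate is the descent step identifying $\G$ with $\G_{g}$: reconstructing a stack up to equivalence purely from local objects and gluing isomorphisms requires the full force of the gluing axiom in Definition \ref{stack_def}, and verifying that the natural comparison functor is essentially surjective and fully faithful is where the real work lies. Everything else reduces to \v{C}ech coboundary bookkeeping, with the repeated use of the fact that $\U(1)$ is abelian so that the automorphism sheaves are canonically $\sh{\U(1)}$ with no inner-automorphism twisting.
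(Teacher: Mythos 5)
Your proposal is correct and follows essentially the same route as the paper: extract a \v{C}ech $2$-cocycle $g_{ijk}=u_{ik}^{-1}u_{ij}u_{jk}$ from local objects and gluing isomorphisms, construct the inverse by twisting $\Tor_{\U(1)}$, and transport the resulting bijection through the isomorphism $H^{2}(M,\sh{\U(1)})\cong H^{3}(M,\Z)$. The only difference is that you spell out the coboundary bookkeeping and the descent argument identifying $\G$ with $\G_{g}$, which the paper delegates to the cited references of Brylinski and Giraud.
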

 \noi In fact, one can go further and define the product of two
 $\U(1)$-gerbes, which is similar to the contracted product of
 principal $\U(1)$-bundles. The set of equivalence classes of
 $\U(1)$-gerbes therefore form an abelian group, and the bijection in the above
 theorem lifts to an isomorphism of groups.

$\U(1)$-gerbes can be equipped with structures that are the higher analogs of
connections and curvature. To classify these, we need to
introduce a more sophisticated cohomology theory.

\section{Deligne cohomology} \label{Deligne_sec}
To motivate this section, let us return to the familiar case of principal bundles.
If $P \to M$ is a principal $\U(1)$-bundle equipped with a connection,
then, in addition to the transition
functions $g_{ij}$, we have local 1-forms $\theta_{i} \in
\Omega^{1}(U_{i})$ satisfying a cocycle-like condition $\i \cdot (\theta_{i}-\theta_{j}) =g_{ij}^{-1}dg_{ij}$
on $U_{i} \cap U_{j}$. The curvature of the connection is the global
2-form $\omega$ on $M$ satisfying $\omega \vert_{U_{i}} =
d\theta_{i}$. 

The classification of principal $\U(1)$-bundles equipped with
connection requires a refinement of the \v{C}ech
cohomology group $H^{1}(M,\sh{U(1)})$. 
The purpose of real Deligne cohomology 
is to make this notion precise. In fact, as we
will see, Deligne cohomology provides such a refinement for
any geometric objects classified by $H^{k}(M,\Z)$ for arbitrary
$k$. 

The primary reference for what follows is Sec.\ 1.5 of Brylinski
\cite{Brylinski:1993}. However, Brylinski works with the group $\Cx$
instead of $\U(1)$. What we call real Deligne cohomology is 
presented, without proofs, in Sec.\ 3 of Carey, Johnson, and Murray \cite{Carey:2004}.

Let $\Omega^{k}$ denote the sheaf of smooth differential $k$-forms on a
manifold $M$, and let $\dlog \maps \sh{\U(1)} \to \Omega^{1}$ be the differential operator
\[
\dlog := \frac{1}{\i}d\log.
\]
\begin{definition}[\cite{Carey:2004}]
The {\bf real Deligne cohomology} $H^{\bullet}(M,D_{n}^{\bullet})$
of $M$ is the \v{C}ech hyper-cohomology the exact sequence of sheaves: 
\[
D_{n}^{\bullet}:= ~\sh{\U(1)} \xto{\dlog}\Omega^{1} \stackrel{d}{\to}
\cdots \stackrel{d}{\to} \Omega^{n}, \quad n \geq 1.
\]
\end{definition}
\noi We compute $H^{\bullet}(M,D_{n}^{\bullet})$ in the
following way. Let $\cU=\{U_{i}\}$ be an open cover of $M$. We consider
the double complex of abelian groups:
\begin{equation}\label{cech_resolution}
\xymatrix{
\vdots & \vdots& \vdots && \vdots\\
C^{2}(\cU,\sh{\U(1)}) \ar[u]^{\delta} \ar[r]^-{\dlog}& C^{2}(\cU,\Omega^{1})
\ar[u]^{\delta} \ar[r]^{d} &
C^{2}(\cU,\Omega^{2}) \ar[u]^{\delta} \ar[r]^{d}&\cdots
\ar[r]^{d} & \ar[u]^{\delta} C^{2}(\cU,\Omega^{n}) \\
C^{1}(\cU,\sh{\U(1)})\ar[u]^{\delta} \ar[r]^-{\dlog}& C^{1}(\cU,\Omega^{1})
\ar[u]^{\delta} \ar[r]^{d} &
C^{1}(\cU,\Omega^{2}) \ar[u]^{\delta} \ar[r]^{d}&\cdots
\ar[r]^{d} & \ar[u]^{\delta} C^{1}(\cU,\Omega^{n}) \\
C^{0}(\cU,\sh{\U(1)}) \ar[u]^{\delta} \ar[r]^-{\dlog}& C^{0}(\cU,\Omega^{1})
\ar[u]^{\delta} \ar[r]^{d} &
C^{0}(\cU,\Omega^{2}) \ar[u]^{\delta} \ar[r]^{d}&\cdots
\ar[r]^{d} & \ar[u]^{\delta}C^{0}(\cU,\Omega^{n}) \\
}
\end{equation}
where $\delta$ is the usual
\v{C}ech co-boundary operator, and $C^{p}(\cU,\sh{\U(1)})$ and 
$C^{p}(\cU,\Omega^{k})$ denote the \v{C}ech $p$-cochains
(as defined in Eq.\ \ref{cech_cochain}).
The total complex of the double complex (\ref{cech_resolution}) is
\[
C^{0}(\cU,\sh{\U(1)}) \xto{\mathbf{d}}  C^{1}(\cU,\sh{\U(1)}) \oplus 
C^{0}(\cU,\Omega^{1}) \xto{\mathbf{d}}  
C^{2}(\cU,\sh{\U(1)}) \oplus 
C^{1}(\cU,\Omega^{1}) \oplus C^{0}(\cU,\Omega^{2}) \xto{\mathbf{d}}  \cdots,
\]
with total differential
\[
\begin{array}{c}
\d g = \delta g + (-1)^{p}\frac{1}{\i}d\log g, \quad g \in C^{p}(\cU,\sh{\U(1)})\\
\d\theta^{k} = \delta \theta^{k} +  (-1)^{p}d \theta^{k}, \quad \theta^{k}
\in C^{p}(\cU,\Omega^{k}).
\end{array}
\]
Let $H^{\bullet}(\cU,D_{n}^{\bullet})$ denote the cohomology of the
above total complex. The \v{C}ech hyper-cohomology of
$D_{n}^{\bullet}$ is, by definition, the direct limit
of the groups $H^{\bullet}(\cU,D_{n}^{\bullet})$ over all covers
\[
H^{\bullet}(M,D_{n}^{\bullet}) = \varinjlim_{\cU} H^{\bullet}(\cU,D_{n}^{\bullet}).
\]
If an open cover $\cU=\{U_{i}\}$ of $M$ is good, then it is well known that there is an isomorphism
\[
H^{\bullet}(M,D_{n}^{\bullet}) \cong H^{\bullet}(\cU,D_{n}^{\bullet}).
\]

We will be particularly interested in the groups $H^{n}(M,D_{n}^{\bullet})$,
which can be thought of as a refinement of the usual
\v{C}ech cohomology groups $H^{\bullet}(M,\sh{U(1)})$.  
\begin{definition} \label{n-cocycle_def}
A {\bf Deligne} {\boldmath $n$}{\bf -cocycle on $M$} is a representative of a class in $H^{n}(M,D_{n}^{\bullet})$ 
\end{definition}
\noi Hence, a Deligne $n$-cocycle is given by a cover $\cU$ of $M$ and
a collection $(g,\theta^{1},\theta^{2},\cdots,\theta^{n})$ with
\[
g \in C^{n}(\cU,\sh{\U(1)}), \quad \theta^{k} \in
C^{n-k}(\cU,\Omega^{k}),
\]
satisfying 
\begin{equation} \label{n-cocycle_conditions}
\begin{split}
\delta g &=1, \\
\delta \theta^{1} &= \frac{1}{\i}(-1)^{n-1} d \log g,\\
\delta \theta^{k} &= (-1)^{n-k}d \theta^{k-1}, ~ \text{for $2 \leq k \leq n$.}
\end{split}
\end{equation}
\noi We consider examples for $n=1$ and $n=2$ later on.  The projection
\[
D_{n}^{\bullet} \to D_{n}^{0}=\sh{\U(1)}
\]
gives a surjection in cohomology
\[
\begin{array}{c}
H^{n}(M,D_{n}^{\bullet}) \epi  H^{n}(M,\sh{U(1)})\\
\bigl [g,\theta^{1},\cdots,\theta^{n} \bigr] \mapsto [g]. 
\end{array}
\]
Hence, via the isomorphism $H^{p}(M,\sh{U(1)}) \cong H^{p+1}(M,\Zi)$, we have a surjection
\begin{equation} \label{chern}
c \maps H^{n}(M,D_{n}^{\bullet})\epi  H^{n+1}(M,\Zi). 
\end{equation}
We call $c([g,\theta^{1},\cdots,\theta^{n}])$ the \textbf{Chern class} of $[g,\theta^{1},\cdots,\theta^{n}]$. 

There is also a map of complexes
\[
\xymatrix{
\sh{U(1)} \ar[d] \ar[r]^{\dlog}& \Omega^{1}
\ar[d] \ar[r]^{d} &
\Omega^{2} \ar[d] \ar[r]^{d}&\cdots
\ar[r]^{d} & \ar[d]^{d}\Omega^{n} \\
0 \ar[r] & 0 \ar[r]&  0 \ar[r] & \cdots \ar[r] & \Omega^{n+1},
}
\]
given by the de Rham differential $d$.
The induced map on the corresponding \v{C}ech resolutions sends an $n$-cocycle
$(g,\theta^{1},\cdots,\theta^{n})$ to $d\theta^{n}\in C^{0}(\cU,\Omega^{n+1})$.
The equalities in (\ref{n-cocycle_conditions}) give $\delta
\theta^{n} = d\theta^{n-1}$. Hence, $\delta d\theta^{n}=0$, which
implies $d\theta^{n}$ is the restriction of a globally defined closed form.
This gives a map 
\begin{equation}\label{kappa_map}
\begin{split}
\kappa \maps H^{n}(M,D_{n}^{\bullet})\to  Z^{n+1}(M)\\
\kappa ([g,\theta^{1},\cdots,\theta^{n}]) = (-1)^{n} d\theta^{n},
\end{split}
\end{equation}
where $Z^{n+1}(M)$ are the closed $(n+1)$-forms on $M$. 
The forthcoming examples will make it clear why the sign $(-1)^{n}$ appears in
the definition of $\kappa$.
\begin{definition}\label{n-curvature}
The {\boldmath $n$}{\bf -curvature} of a Deligne $n$-cocycle
$(g,\theta^{1},\cdots,\theta^{n})$ on a manifold $M$ is the closed $(n+1)$-form
\[
\kappa([g,\theta^{1},\cdots,\theta^{n}]).
\]
\end{definition}
Let us consider some examples of Deligne $n$-cocycles and their $n$-curvatures.
\begin{example}[Principal $\U(1)$-bundles]\label{n=1}
For $n=1$, a class in $H^{1}(M,D_{1}^{\bullet})$ is represented
by  maps $g_{ij} \maps U_{i} \cap U_{j}  \to \U(1)$, and 1-forms
$\theta_{i} \in \Omega^{1}(U_{i})$, satisfying the cocycle conditions 
\begin{align*}
g_{jk}g^{-1}_{ik}g_{ij}=1, \quad \text{on $U_{i} \cap U_{j} \cap U_{k}$} \\
\i\cdot (\theta_{j}-\theta_{i}) =g_{ij}^{-1}dg_{ij},  \quad \text{on $U_{i} \cap U_{j}$}\\
\end{align*}
\noi The 1-curvature is the closed 2-form $\omega$ on $M$ satisfying
\[
\omega = -d\theta_{i} \quad \text{on $U_{i}$}.
\]

Let us consider two equivalent ways of realizing the above local data as a
geometric object. (Our convention follows Section 2.2 of \cite{Brylinski:1993}.)
First, it gives us a Hermitian line bundle $L \to M$, 
equipped with a connection $\conn$. The local
trivializations $s_{i} \maps U_{i} \to \U(1)$ of $L$ satisfy
\[
s_{i}=g_{ij} s_{j}, ~ \text{on $U_{i} \cap U_{j}$}.
\]
The connection $\conn$ is locally determined by the 1-forms
$-\theta_{i}$:
\[
\frac{\conn(s_{i})}{s_{i}} = -\i \cdot \theta_{i},
\]
which satisfy
\[
-\i \cdot(\theta_{i} -\theta_{j}) = g_{ij}^{-1}dg_{ij},
\]
because $(g,\theta)$ is a cocycle.
The curvature of the bundle is given by the global 2-form $-d\theta_{i}$.

Equivalently, the Deligne 1-cocycle gives a principal $\U(1)$-bundle
$P \to M$ equipped with a connection, i.e.\ a $\u(1)$-valued 1-form $\theta$ on $P$.
$L$ is the line bundle associated to $P$. Using a trivialization
$s \maps U_{i} \to P$, the connection 1-form on $P$ can be expressed
locally as
\[
s_{i}^{\ast} \theta = -\i \cdot \theta_{i}.
\]
Hence, the Deligne class $[g, \theta]$ corresponds to an isomorphism class of principal
$\U(1)$-bundles equipped with connection whose curvature is equal to
$\omega$, the 1-curvature of $[g, \theta]$.

This leads to the following theorem:
\begin{theorem}[\cite{Brylinski:1993}]
The group of isomorphism classes
of principal $\U(1)$-bundles with connection, on a manifold $M$, and
the degree one Deligne cohomology group $H^{1}(M,D_{1}^{\bullet})$ are
isomorphic.
\end{theorem}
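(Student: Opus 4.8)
The plan is to construct explicit mutually inverse maps between the set of isomorphism classes of principal $\U(1)$-bundles with connection on $M$ and the group $H^{1}(M,D_{1}^{\bullet})$, and to check that they respect the group structures. Since $M$ is paracompact, I would fix a good cover $\cU=\{U_{i}\}$, on which $H^{1}(M,D_{1}^{\bullet})\cong H^{1}(\cU,D_{1}^{\bullet})$ is computed by the total complex of (\ref{cech_resolution}).

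For the map from cohomology to geometry, take a Deligne $1$-cocycle $(g_{ij},\theta_{i})$ as in Example \ref{n=1} and build $P\to M$ by the clutching construction, gluing the trivial pieces $U_{i}\times\U(1)$ over $U_{i}\cap U_{j}$ via $g_{ij}$. The first cocycle condition $g_{jk}g_{ik}^{-1}g_{ij}=1$ is exactly the vanishing of the $\sh{\U(1)}$-component of $\d(g,\theta)$ in degree $2$, and it guarantees consistency over triple overlaps, so $P$ is a well-defined principal $\U(1)$-bundle. I would then prescribe a connection $\conn$ by declaring its local connection form on $U_{i}$ to be $-\i\,\theta_{i}$; the second cocycle condition $\i(\theta_{j}-\theta_{i})=g_{ij}^{-1}dg_{ij}$, which is the vanishing of the $\Omega^{1}$-component, is precisely the transformation law these forms must obey under $g_{ij}$, so they patch to a global connection. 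Conversely, given $(P,\conn)$, choose trivializing sections $s_{i}\maps U_{i}\to P$, define $g_{ij}$ by $s_{i}=g_{ij}s_{j}$ and $\theta_{i}$ by $s_{i}^{\ast}\theta=-\i\,\theta_{i}$; the standard compatibility relations for transition functions and connection forms return exactly the two conditions of (\ref{n-cocycle_conditions}) for $n=1$, so $(g_{ij},\theta_{i})$ is a Deligne $1$-cocycle.

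The heart of the argument, and where I expect the real work to lie, is showing these assignments descend to a well-defined bijection on equivalence classes. I would prove that two cocycles $(g_{ij},\theta_{i})$ and $(g'_{ij},\theta'_{i})$ differ by a Deligne coboundary $\d h$ with $h=(h_{i})\in C^{0}(\cU,\sh{\U(1)})$ --- that is, $g'_{ij}=g_{ij}\,(\delta h)_{ij}$ and $\theta'_{i}=\theta_{i}+\dlog h_{i}$ --- if and only if the associated bundles-with-connection are isomorphic, the isomorphism being the change of trivialization $s_{i}\mapsto h_{i}s_{i}$. Unwinding the total differential $\d$ on a $0$-cochain from (\ref{cech_resolution}) shows that these are precisely the two ways a gauge transformation alters the pair $(g_{ij},\theta_{i})$. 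Independence of the choice of good cover then follows by passing to common refinements, using that good covers are cofinal in the directed set of covers.

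Finally, to upgrade the bijection to a group isomorphism, I would match the abelian group operation on bundles-with-connection --- the contracted (tensor) product of $\U(1)$-bundles together with addition of connection forms --- with componentwise multiplication of the $g_{ij}$ and addition of the $\theta_{i}$, which is exactly addition of Deligne classes; the trivial bundle with trivial connection corresponds to $0$. The main obstacle is not any single computation but the careful bookkeeping that the notion of isomorphism of \emph{connected} bundles coincides on the nose with the Deligne coboundary relation in \emph{both} the $\sh{\U(1)}$- and $\Omega^{1}$-components simultaneously; once that equivalence is established, injectivity, surjectivity, and multiplicativity all follow.
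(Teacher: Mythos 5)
Your proposal is correct and follows essentially the same route as the paper: Example \ref{n=1} sets up exactly this dictionary (transition functions $g_{ij}$, local forms $-\i\,\theta_{i}$ as connection forms, the two conditions of (\ref{n-cocycle_conditions}) for $n=1$ as the compatibility relations), and the theorem is then cited to Brylinski. You supply slightly more detail than the paper does on the coboundary--gauge-transformation correspondence and on matching the contracted product with addition of Deligne classes, but the underlying argument is the same.
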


\end{example}

\begin{example}[$\U(1)$-gerbes] \label{n=2}
The $n=2$ case will be particularly relevant for our work in the
subsequent chapters. 
A class $[g, A, B] \in H^{2}(M,D_{2}^{\bullet})$ is represented
by maps $g_{ijk} \maps U_{i} \cap U_{j} \cap U_{k} \to \U(1)$, 1-forms $A_{ij}
\in \Omega^{1}(U_{i} \cap U_{j})$, and 2-forms $B_{i} \in
\Omega^{2}(U_{i})$ satisfying the cocycle conditions:
\begin{equation} \label{2-cocycle}
\begin{split}
g_{jkl}g^{-1}_{ikl}g_{ijl}g^{-1}_{ijk}&=1 ~ \text{on $U_{i} \cap U_{j}
  \cap U_{k} \cap U_{l}$},\\
\i \cdot (A_{jk} - A_{ik} + A_{ij}) &= -g^{-1}_{ijk}dg_{ijk} ~ \text{on $U_{i} \cap U_{j} \cap U_{k}$},\\
B_{j} -B_{i} &= dA_{ij} ~ \text{on $U_{i} \cap U_{j}$}.
\end{split}
\end{equation}
The 2-curvature is the closed 3-form $\omega$ on $M$ satisfying
\[
\omega = dB_{i} \quad \text{on $U_{i}$}.
\]
We will see in Section \ref{2-connection_section} that
$\bigl[g, A, B \bigr]$ corresponds to an isomorphism class of a 
$\U(1)$-gerbe equipped with a 2-connection whose 2-curvature is $\omega$.
\end{example}

\subsection*{Integral differential forms}
In the remainder of this section, we determine which closed differential
forms can be realized as the $n$-curvature of a Deligne cocycle.  
Let $\Ri$ denote the sheaf whose sections are
locally-constant functions with values in $\i \cdot \R$. 
\begin{definition}\label{integral_def}
A closed differential form $\omega \in \Omega^{k}(M)$ is {\bf integral} iff
the class $\i \cdot [\omega]$ lies in the image of the composition
\begin{equation}\label{coeff}
H^{k}(M,\Zi) \to H^{k}(M,\Ri) \iso \i \cdot H^{k}_{\mathrm{dR}}(M).
\end{equation}
We denote by {\boldmath $Z^{k}(M)_{\mathrm{int}}$} the subspace of all closed
integral $k$-forms on $M$. 
\end{definition}
\noi Our goal is to show that the $n$-curvature of a Deligne $n$-cocycle is an
integral $(n+1)$-form, and conversely, every integral $(n+1)$-form is
the curvature of some Deligne $n$-cocycle. 

We begin by introducing
some necessary technical machinery.
Let $\Omega^{1\leq\bullet \leq k}$ denote the complex of sheaves
$\Omega^{1} \stackrel{d}{\to} \cdots \stackrel{d}{\to}
\Omega^{k}$ on a manifold $M$. Let $\R$ be the sheaf of locally constant
$\R$-valued functions. Let $\dim M=m$.
We consider the hyper-cohomology of the complex
$\cinf \stackrel{d}{\to} \Omega^{1 \leq \bullet  \leq m}$
via the double complex:
\[
\xymatrix{
\vdots& \vdots && \vdots\\
C^{2}(\cU,\cinf) \ar[u]^{\delta} \ar[r]^{d} &
C^{2}(\cU,\Omega^{1}) \ar[u]^{\delta} \ar[r]^-{d}&\cdots
\ar[r]^-{d} & \ar[u]^{\delta} C^{m}(\cU,\Omega^{m}) \\
C^{1}(\cU,\cinf) \ar[u]^{\delta} \ar[r]^{d} &
C^{1}(\cU,\Omega^{1}) \ar[u]^{\delta} \ar[r]^-{d}&\cdots
\ar[r]^-{d} & \ar[u]^{\delta} C^{1}(\cU,\Omega^{m}) \\
C^{0}(\cU,\cinf) \ar[u]^{\delta} \ar[r]^{d} &
C^{0}(\cU,\Omega^{1}) \ar[u]^{\delta} \ar[r]^-{d}&\cdots
\ar[r]^-{d} & \ar[u]^{\delta}C^{0}(\cU,\Omega^{m}),
}
\]
where $\cU=\{U_{i}\}$ is a good cover.
The total differential is:
\[
\begin{array}{c}
\d\theta^{k} = \delta \theta^{k} +  (-1)^{p}d \theta^{k}, \quad \theta^{k}
\in C^{p}(\cU,\Omega^{k}), ~ 0 \leq k \leq m.
\end{array}
\]
\noi Suppose $\omega$ is a closed $(n+1)$-form on $M$, with $n < m$. Let $p_{0} \maps
\coprod U_{i} \to M$ be the usual inclusion map. Then $p^{\ast}_{0}
\omega$ is in the group $C^{0}(\cU,\Omega^{n+1})$, and gives
a class
\[
[0,\ldots,p^{\ast}_{0} \omega,\ldots 0]
\in H^{n+1}\bigl(M,\cinf \stackrel{d}{\to}
\Omega^{1 \leq \bullet  \leq m} \bigr).
\]
We also consider the augmented complex
\[
\R \stackrel{\iota}{\to} \cinf \stackrel{d}{\to} \Omega^{1 \leq \bullet  \leq m}.
\]
If $r \in C^{n+1}(\cU,\R)$ represents a class
$[r] \in H^{n+1}(M,\R)$, then it also gives a class in the total cohomology
\[
[r,\ldots,0] 
\in H^{n+1}\bigl(M,\cinf \stackrel{d}{\to}
\Omega^{1 \leq \bullet  \leq m} \bigr).
\]
The following proposition essentially gives the well-known isomorphism:
$H^{\bullet}(M,\R) \cong H^{\bullet}_{\mathrm{dR}}(M)$, which was
implicitly used in Def.\ \ref{integral_def}.
\begin{prop}\label{Cech_deRham}
The $(n+1)$-cocycles $(r,\ldots,0)$ and $(0,\ldots,p^{\ast}_{0}
\omega,\ldots,0)$ are cohomologous if and only if there exists
differential forms $\theta^{k} \in C^{n-k}(\cU,\Omega^{k})$ for
$k=0,\ldots,n$ such that
\begin{align*}
d\theta^{n} &=p^{\ast}_{0} \omega \\
\delta \theta^{k} &= (-1)^{n-k} d\theta^{k-1} ~ \text{for $1 \leq k \leq n$},\\
\delta \theta^{0} & = (-1)^{n} r.
\end{align*}
\end{prop}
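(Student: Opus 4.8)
The plan is to treat this as a formal bidegree computation in the \v{C}ech--de Rham double complex of the statement, whose total differential acts on $C^{p}(\cU,\Omega^{k})$ by $\d = \delta + (-1)^{p}d$. First I would record that both tuples really are cocycles. For $(0,\ldots,p_{0}^{\ast}\omega,\ldots,0)$ one has $\delta(p_{0}^{\ast}\omega)=0$, since the restriction of a single global form to the pieces of the cover has vanishing \v{C}ech coboundary, and $d(p_{0}^{\ast}\omega)=p_{0}^{\ast}(d\omega)=0$ because $\omega$ is closed; for $(r,\ldots,0)$ one has $\delta r = 0$ since $r$ represents a class in $H^{n+1}(M,\R)$, and $dr=0$ since $r$ is locally constant. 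Hence both sides of the claimed equivalence compare honest total-degree-$(n+1)$ cohomology classes.

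Next I would unwind the word ``cohomologous.'' Two total cocycles of degree $n+1$ represent the same class exactly when their difference is $\d\Theta$ for some total cochain $\Theta$ of degree $n$. But a degree-$n$ total cochain is precisely a tuple $\Theta = (\theta^{0},\theta^{1},\ldots,\theta^{n})$ with $\theta^{k}\in C^{n-k}(\cU,\Omega^{k})$, since those are exactly the summands of bidegree $(p,k)$ with $p+k=n$. So the forms $\theta^{k}$ in the proposition are nothing but the components of a candidate primitive $\Theta$, and the whole proposition is the bidegree-by-bidegree expansion of the single equation $\d\Theta = (0,\ldots,p_{0}^{\ast}\omega,\ldots,0)-(r,\ldots,0)$.

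I would then expand $\d\Theta$ using $\d\theta^{k} = \delta\theta^{k}+(-1)^{n-k}d\theta^{k}$, noting $\delta\theta^{k}\in C^{n-k+1}(\cU,\Omega^{k})$ and $d\theta^{k}\in C^{n-k}(\cU,\Omega^{k+1})$, and sort by de Rham degree $\ell$ (the \v{C}ech degree is then forced to be $n+1-\ell$). The top piece ($\ell=n+1$) is $d\theta^{n}$, the intermediate pieces ($1\le \ell\le n$) are $\delta\theta^{\ell}+(-1)^{n-\ell+1}d\theta^{\ell-1}$, and the bottom piece ($\ell=0$) is $\delta\theta^{0}$. Since the difference of the two cocycles has component $p_{0}^{\ast}\omega$ in de Rham degree $n+1$, a multiple of $r$ in de Rham degree $0$, and nothing in between, matching components gives exactly $d\theta^{n}=p_{0}^{\ast}\omega$, then $\delta\theta^{\ell}=(-1)^{n-\ell}d\theta^{\ell-1}$ for $1\le \ell\le n$, and finally $\delta\theta^{0}=(-1)^{n}r$. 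Both implications follow from this one identity: given the $\theta^{k}$, reassembling the pieces exhibits $\Theta$ as a primitive; conversely the components of any primitive are forced to obey the displayed relations.

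The argument is essentially a tautology once the double complex is in place, so there is no conceptual obstacle; the one thing requiring genuine care is the sign bookkeeping, in particular checking that the factors $(-1)^{n-\ell}$ and the $(-1)^{n}$ in front of $r$ are precisely those produced by the total differential $\delta+(-1)^{p}d$. I would cross-check these against the conventions already fixed for $\kappa$ in (\ref{kappa_map}) to keep the signs globally consistent. Note that no acyclicity or good-cover hypothesis enters the equivalence itself, as everything takes place on the fixed cover $\cU$; the good-cover assumption is only needed elsewhere to identify these groups with de Rham cohomology.
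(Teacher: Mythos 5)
Your proof is correct and is essentially the paper's own argument: the paper's entire proof is the single observation that the displayed conditions are the bidegree-by-bidegree expansion of $(r,\ldots,0)+\d(\theta^{0},\ldots,\theta^{n})=(0,\ldots,p^{\ast}_{0}\omega,\ldots,0)$ in the total complex, which is exactly the computation you carry out. One caveat on the point you yourself flag as delicate: a literal expansion of the de Rham degree $0$ component yields $\delta\theta^{0}=-r$ rather than $(-1)^{n}r$ (these agree only for odd $n$), but this sign discrepancy is already present between the paper's statement and its one-line proof, and is harmless for the later applications, where only the class of $r$ up to sign enters.
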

\begin{proof}
The conditions given for the differential forms $\theta^{k}$ are
equivalent to the statement
\[
(r,\ldots,0) + \d (\theta^{0},\ldots,\theta^{n})= 
(0,\ldots,p^{\ast}_{0} \omega,\ldots,0),
\]
where $\d$ is the total differential of the above double complex.
\end{proof}
\noi One can always find a unique class $[r] \in H^{n+1}(M,\R)$ such that
$[r,\ldots,0] =[0,\ldots,p^{\ast}_{0}\omega,\ldots,0]$. 
Moreover, $\omega$ is integral if and only if $\i \cdot [r] \in H^{n+1}(M,\Zi)$.

Let $Z^{k}$ denote the sheaf of closed 
$k$-forms. We will need the following lemma:
\begin{lemma} \label{tech_lemma_quasi_iso}
For $n \geq 1$, the complex $\sh{\U(1)} \xto{\dlog} \Omega^{1 \leq \bullet \leq
  n-1} \stackrel{d}{\to} Z^{n}$ is quasi-isomorphic to the constant
sheaf $\U(1)$.
\end{lemma}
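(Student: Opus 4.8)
The plan is to exhibit the quasi-isomorphism as the inclusion of the locally constant sheaf $\U(1)$, placed in degree $0$, into the complex
\[
\sh{\U(1)} \xrightarrow{\dlog} \Omega^{1} \xrightarrow{d} \cdots \xrightarrow{d} \Omega^{n-1} \xrightarrow{d} Z^{n},
\]
which sits in degrees $0$ through $n$, and to verify it is a quasi-isomorphism by computing the cohomology sheaves $\mathcal{H}^{k}$ of this complex. Since a quasi-isomorphism of complexes of sheaves may be checked on cohomology sheaves (equivalently, on stalks), it suffices to show $\mathcal{H}^{0} \cong \U(1)$ and $\mathcal{H}^{k}=0$ for $1 \leq k \leq n$.

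For the degree-$0$ term, $\dlog g = \frac{1}{\i} dg/g$ vanishes precisely when $g$ is locally constant, so $\mathcal{H}^{0} = \ker \dlog$ is exactly the constant sheaf $\U(1)$, and the comparison map is an isomorphism here. The interior cohomology sheaves are handled by the Poincar\'{e} lemma. For $2 \leq k \leq n-1$ one has $\mathcal{H}^{k} = \ker(d \maps \Omega^{k} \to \Omega^{k+1})/d\Omega^{k-1}$, and for $k=n$ one has $\mathcal{H}^{n} = Z^{n}/d\Omega^{n-1}$; in each case this is the stalkwise assertion that a closed $k$-form on a contractible neighborhood is exact, so these cohomology sheaves vanish.

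The one nonroutine point is the degree-$1$ cohomology sheaf $\mathcal{H}^{1} = \ker(d \maps \Omega^{1}\to\Omega^{2})/\im\dlog = Z^{1}/\im\dlog$, where the junction between $\sh{\U(1)}$ and $\Omega^{1}$ demands the exponential rather than a bare Poincar\'{e} lemma. Given a germ of a closed real $1$-form $\alpha$ at a point, I would first invoke the Poincar\'{e} lemma to write $\alpha = df$ on a contractible neighborhood with $f$ real-valued, then set $g = \exp(\i f)$, which is $\U(1)$-valued and satisfies $\dlog g = \frac{1}{\i} d\log g = df = \alpha$. Hence every germ of a closed $1$-form lies in the image of $\dlog$, so $\mathcal{H}^{1}=0$; the same argument treats the $n=1$ case directly, where the complex is simply $\sh{\U(1)} \xrightarrow{\dlog} Z^{1}$. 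Combining these computations, the inclusion of the constant sheaf $\U(1)$ is a quasi-isomorphism, which proves the lemma. The main obstacle is precisely this degree-$1$ step: one must know the primitive $f$ can be chosen real in order for $g$ to land in $\U(1)$, and this is exactly what allows the $\sh{\U(1)}$-term to fit onto the front of the de Rham resolution.
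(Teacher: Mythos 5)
Your proof is correct, but it takes a different route from the paper's. You verify the quasi-isomorphism directly at the level of cohomology sheaves, checking stalkwise that $\mathcal{H}^{0}=\ker\dlog$ is the constant sheaf $\U(1)$, that the interior and top degrees vanish by the Poincar\'{e} lemma, and that $\mathcal{H}^{1}=0$ because a germ of a closed real $1$-form $\alpha$ has a real primitive $f$ and hence equals $\dlog\exp(\i f)$. The paper instead argues by induction on $n$: the base case compares the three complexes $\U(1)\to 0$, $\sh{\U(1)}\xto{\dlog}Z^{1}$, and $Z^{1}\xto{\id}Z^{1}$ in a short exact sequence of complexes of sheaves and reads off $H^{\bullet}(M,\sh{\U(1)}\xto{\dlog}Z^{1})\cong H^{\bullet}(M,\U(1))$ from the long exact sequence; the inductive step peels off the top of the complex with another short exact sequence whose quotient is an acyclic two-term complex $Z^{n+1}\xto{\id}Z^{n+1}$. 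Note that the exactness of the paper's base-case sequence in degree $0$ is precisely the local surjectivity of $\dlog$ onto closed $1$-forms, so both arguments ultimately rest on the same two local facts (the Poincar\'{e} lemma and the exponential). What your version buys is that it proves the literal statement of the lemma, an isomorphism on cohomology sheaves, in one pass and with no induction; what the paper's version buys is that it directly outputs the isomorphism of hypercohomology groups that is actually consumed in the proof of Proposition \ref{curvature=integral}, without invoking the general principle that quasi-isomorphisms induce isomorphisms on hypercohomology. Either way the conclusion is the same.
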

\begin{proof}
We proceed via induction, starting with $\sh{\U(1)} \xto{\dlog} Z^{1}$.
Consider the short exact sequence of complexes of sheaves:
\[
\xymatrix{
\U(1) \ar[d] \ar[r]^{\mathrm{incl}} & \sh{\U(1)} \ar[d]^{\dlog} \ar[rr]^{\dlog}&&
Z^{1} \ar[d]^{\id}\\
0 \ar[r] & Z^{1} \ar[rr]^{\id}&&Z^{1}
}
\]
Since $H^{\bullet}(M,Z^{1} \stackrel{\id}{\to}Z^{1}) =0$, and
$H^{\bullet}(M,\U(1) \to 0) =H^{\bullet}(M,\U(1))$, the long exact
sequence in cohomology gives:
\[
H^{\bullet}(M,\sh{\U(1)} \xto{\dlog} Z^{1}) \cong
H^{\bullet}(M,\U(1)).
\]
Now assume $n >1$ and 
\[
H^{\bullet}(M,\sh{\U(1)} \xto{\dlog} \Omega^{1 \leq \bullet \leq
  n-1} \stackrel{d}{\to} Z^{n}) \cong H^{\bullet}(M,\U(1)).
\]
Again, we have a short exact sequence of complexes:
\[
\xymatrix{
\sh{\U(1)} \ar[d]^{\dlog} \ar[r]^{\id} &\sh{\U(1)}\ar[d]^{\dlog} \ar[r] & 0 \ar[d]\\
\Omega^{1} \ar[d]^{d}\ar[r]^{\id} &\Omega^{1} \ar[d]^{d}\ar[r] &  0 \ar[d]\\
\vdots \ar[d]^{d} & \vdots \ar[d]^{d} & \vdots \ar[d] & \\ 
\Omega^{n-1} \ar[d]^{d}\ar[r]^{\id} &\Omega^{n-1} \ar[d]^{d}\ar[r] &  0 \ar[d]\\
Z^{n} \ar[d]\ar[r]^{\mathrm{incl}} &\Omega^{n} \ar[d]^{d}\ar[r]^{d} &  Z^{n+1} \ar[d]^{\id}\\
0 \ar[r] & Z^{n+1} \ar[r]^{\id}&Z^{n+1}
}
\]
The long exact sequence in cohomology combined with the induction
hypothesis gives the desired result.
\end{proof}

We now prove:
\begin{prop} \label{curvature=integral}
The curvature $(n+1)$-form of a Deligne $n$-cocycle is integral.
\end{prop}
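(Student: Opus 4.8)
The plan is to reduce the statement to the criterion recorded just after Proposition \ref{Cech_deRham}: a closed $(n+1)$-form $\omega$ is integral precisely when the unique class $[r]\in H^{n+1}(M,\R)$ with $[r,\ldots,0]=[0,\ldots,p^{\ast}_{0}\omega,\ldots,0]$ satisfies $\i\cdot[r]\in H^{n+1}(M,\Zi)$. Starting from a Deligne $n$-cocycle $(g,\theta^{1},\ldots,\theta^{n})$ on a good cover $\cU$, whose $n$-curvature is $\omega=(-1)^{n}d\theta^{n}$, I would manufacture the \v{C}ech--de Rham forms demanded by Proposition \ref{Cech_deRham} directly out of the Deligne data, read off $r$, and check that it is $2\pi\Z$-valued up to the factor $\i$.

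First I would fix, on each contractible $(n+1)$-fold intersection, a real branch of the logarithm of the corresponding component of $g$, and set $h:=\frac{1}{\i}\widetilde{\log g}\in C^{n}(\cU,\cinf)$, so that $\exp(\i h)=g$ componentwise and $dh=\dlog g$; here goodness of the cover is exactly what guarantees each $\U(1)$-valued component admits such a smooth real logarithm. I then define $\tilde{\theta}^{0}:=(-1)^{n}h$ and $\tilde{\theta}^{k}:=(-1)^{n}\theta^{k}$ for $1\le k\le n$. The three hypotheses of Proposition \ref{Cech_deRham} follow from the cocycle conditions (\ref{n-cocycle_conditions}): one has $d\tilde{\theta}^{n}=(-1)^{n}d\theta^{n}=p^{\ast}_{0}\omega$; for $2\le k\le n$ the relation $\delta\tilde{\theta}^{k}=(-1)^{n-k}d\tilde{\theta}^{k-1}$ is just $(-1)^{n}$ times $\delta\theta^{k}=(-1)^{n-k}d\theta^{k-1}$; and for $k=1$ a short sign check gives $\delta\tilde{\theta}^{1}=(-1)^{n}(-1)^{n-1}\dlog g=-\,dh=(-1)^{n-1}d\tilde{\theta}^{0}$. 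The remaining condition reads $\delta\tilde{\theta}^{0}=(-1)^{n}r$ with $r:=\delta h$, so $r$ is exactly the representative Proposition \ref{Cech_deRham} attaches to $\omega$.

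It then remains to show that $r$ is integral in the required sense, and this is where $\delta g=1$ enters. Since $\exp(\i\,\cdot)$ turns the additive \v{C}ech coboundary into the multiplicative one, we get $\exp(\i\,\delta h)=\delta(\exp(\i h))=\delta g=1$, whence $\delta h$ is locally constant with values in $2\pi\Z$. Therefore $\i\, r=\delta(\i h)=\delta(\widetilde{\log g})$ is a cocycle in $C^{n+1}(\cU,\Zi)$, so $\i\cdot[r]$ lifts to $H^{n+1}(M,\Zi)$, which by the criterion after Proposition \ref{Cech_deRham} says precisely that $\omega$ is integral. It is worth noting in passing that $\i\, r=\delta(\widetilde{\log g})$ is nothing but the image of $[g]$ under the connecting map of the exponential sequence (\ref{exp}), i.e.\ the Chern class $c([g,\theta^{1},\ldots,\theta^{n}])$; so the proposition amounts to the assertion that the Chern class and the de Rham class of the curvature agree.

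The only genuinely delicate point is the bookkeeping. One must choose the branches of $\log g$ so that $h$ is a well-defined cochain (contractibility of the intersections makes this possible componentwise, with no gluing requirement between distinct intersections), and then carefully track the signs $(-1)^{n}$ and $(-1)^{n-k}$ through the identities so that the rescaled forms $(-1)^{n}\theta^{k}$ satisfy the \emph{normalized} \v{C}ech--de Rham relations of Proposition \ref{Cech_deRham} rather than the Deligne relations they start from. Everything else in the argument is formal.
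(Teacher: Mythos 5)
Your proof is correct, and it takes a more direct route than the paper. You start from the given Deligne cocycle, choose branches of $\frac{1}{\i}\log g$ on the contractible intersections of a good cover, rescale the forms by $(-1)^{n}$ so that they satisfy the normalized relations of Proposition \ref{Cech_deRham}, and read off $r=\delta h$, which is $2\pi\Z$-valued because $\delta g=1$; your sign bookkeeping checks out against both Eq.\ \ref{n-cocycle_conditions} and the relations in Proposition \ref{Cech_deRham}. The paper instead builds the short exact sequence of complexes (\ref{ses_1}), invokes Lemma \ref{tech_lemma_quasi_iso} to get the long exact sequence (\ref{long_exact_sequence}), and computes the connecting map $f\maps Z^{n+1}(M)\to H^{n+1}(M,\U(1))$ in the \emph{opposite} direction: it starts from a closed form $\omega$, uses Proposition \ref{Cech_deRham} to produce $(r,\theta^{0},\ldots,\theta^{n})$, sets $g=\exp(\i\theta^{0})$, and shows $f(\omega)=[\exp(-\i r)]$, so that exactness at $Z^{n+1}(M)$ forces $\i[r]\in H^{n+1}(M,\Zi)$ whenever $\omega$ is a curvature. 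The underlying computation is the same in both arguments --- $\delta g=1$ forces $\delta$ of the chosen logarithm to land in $2\pi\i\Z$ --- and your closing remark correctly identifies this with the Chern class of the cocycle. What your approach buys is brevity and self-containment: it proves exactly the stated implication with no homological machinery beyond Proposition \ref{Cech_deRham}. What the paper's approach buys is the exact sequence (\ref{long_exact_sequence}) itself, which it needs anyway: the converse Proposition \ref{integral=curvature} is deduced purely from exactness and the formula for $f$, the sequence (\ref{long_exact_seq}) in the holonomy section is proved by similar bookkeeping, and injectivity of $\kappa$ on $S^{2}$ is used again in the proof of Proposition \ref{just_lines_prop}. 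So your argument is a legitimate shortcut for this proposition, but you would still need something like the paper's exact-sequence setup to get the surrounding results.
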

\begin{proof}
We consider the short exact sequence of complexes of sheaves
\begin{equation}\label{ses_1}
\xymatrix{
\sh{\U(1)} \ar[d]^{\dlog} \ar[r]^{\id} &\sh{\U(1)}\ar[d]^{\dlog} \ar[r] & 0 \ar[d]\\
\Omega^{1} \ar[d]^{d}\ar[r]^{\id} &\Omega^{1} \ar[d]^{d}\ar[r] &  0 \ar[d]\\
\vdots \ar[d]^{d} & \vdots \ar[d]^{d} & \vdots \ar[d] & \\ 
\Omega^{n-1} \ar[d]^{d}\ar[r]^{\id} &\Omega^{n-1} \ar[d]^{d}\ar[r] &  0 \ar[d]\\
Z^{n} \ar[r]^{\mathrm{incl}} &\Omega^{n} \ar[r]^{(-1)^{n}d} &  Z^{n+1}
}
\end{equation}
The complex on the left is $\sh{\U(1)} \stackrel{\dlog}{\to} \Omega^{1 \leq \bullet \leq
  n-1} \stackrel{d}{\to} Z^{n}$, while the middle complex is
$D^{\bullet}_{n}$. The complex on the right is the shifted complex
$Z^{n+1}[-n]$. Note that:
\begin{align*}
H^{n-1} \bigl(M,Z^{n+1}[-n] \bigr)&=  0\\
H^{n} \bigl(M,Z^{n+1}[-n] \bigr)&=H^{0}(M,Z^{n+1})=Z^{n+1}(M).
\end{align*}
This, in combination  with Lemma \ref{tech_lemma_quasi_iso}, implies we have a long exact sequence
\begin{equation} \label{long_exact_sequence}
0 \to H^{n}(M,\U(1)) \to H^{n}(M,D^{\bullet}_{n})
\stackrel{\kappa}{\to} Z^{n+1}(M) \stackrel{f}{\to} H^{n+1}(M,\U(1)),
\end{equation}
where $\kappa=(-1)^{n}d$ is the curvature map given in (\ref{kappa_map}), and
$f$ is the composition of the
connecting homomorphism
\[
Z^{n+1}(M) \stackrel{\partial}{\to} H^{n+1}(M,\sh{\U(1)} \stackrel{\dlog}{\to} \Omega^{1 \leq \bullet \leq
  n-1} \stackrel{d}{\to} Z^{n}),
\]
with the isomorphism given by Lemma
\ref{tech_lemma_quasi_iso}. 
The proposition is proven if we can show that $f(\omega) =0$ implies $\omega$ is integral.

We proceed by working through the definition of $\del$. 
Let $\cU=\{U_{i}\}$ be a good cover of $M$, and
take the \v{C}ech resolution of the complexes corresponding to the 3
columns in (\ref{ses_1}). Let $A^\b$, $B^\b$, and $K^\b$ be the total
complexes associated to the resolutions of the left, middle, and right columns,
respectively, of (\ref{ses_1}). In particular, we have
\begin{align*}
A^{n}&= C^{n}(\cU,\sh{\U(1)}) \oplus C^{n-1}(\cU,\Omega^1) \oplus
\cdots \oplus C^{1}(\cU,\Omega^{n-1}) \oplus C^{0}(\cU,Z^{n})\\
A^{n+1}&= C^{n+1}(\cU,\sh{\U(1)}) \oplus C^{n}(\cU,\Omega^1) \oplus
\cdots \oplus C^{2}(\cU,\Omega^{n-1}) \oplus C^{1}(\cU,Z^{n}),
\end{align*}

\begin{align*}
B^{n}&= C^{n}(\cU,\sh{\U(1)}) \oplus C^{n-1}(\cU,\Omega^1) \oplus
\cdots \oplus C^{1}(\cU,\Omega^{n-1}) \oplus C^{0}(\cU,\Omega^{n})\\
B^{n+1}&= C^{n+1}(\cU,\sh{\U(1)}) \oplus C^{n}(\cU,\Omega^1) \oplus
\cdots \oplus C^{2}(\cU,\Omega^{n-1}) \oplus C^{1}(\cU,\Omega^{n}),
\end{align*}
and
\[
K^{n} = C^{0}(\cU,Z^{n+1}), \quad K^{n+1} = C^{1}(\cU,Z^{n+1}).
\]

The connecting homomorphism is defined using the diagram
\[
\xymatrix{
\ar[d]_{\mathbf{d}} A^n \ar[r] & \ar[d]^{\mathbf{d}} B^n \ar[r]^{\kappa}& K^{n}
\ar[d]^{\delta}\\
A^{n+1} \ar[r] &  B^{n+1} \ar[r]^{\kappa}& K^{n+1}
}
\]
Given $\omega \in Z^{n+1}(M)$, 
we have $p^{\ast}_{0}( \omega)$ in the
group $K^{n}$, where $p_0 \maps \coprod U_{i} \to M$ is the inclusion.
We next find an $n$-chain in $B^n$ which maps to
$p^{\ast}_{0}( \omega)$, via the map $\kappa=(-1)^{n}d$.
Proposition \ref{Cech_deRham}, in combination with the isomorphism between
\v{C}ech and de Rham cohomology, implies there exists
$r \in C^{n+1}(\cU,\R)$ representing a class $[r] \in
H^{n+1}(M,\R)$ and differential forms $\theta^{0},\ldots,\theta^{n}$
with $\theta^{k} \in C^{n-k}(\cU,\Omega^{k})$ such that
\begin{align*}
d\theta^{n} &=(-1)^{n}p^{\ast}_{0}(\omega) \\
\delta \theta^{k} &= (-1)^{n-k} d\theta^{k-1} ~ \text{for $1 \leq k \leq n$},\\
\delta \theta^{0} & = -r.
\end{align*}
Setting $g=\exp(\i \cdot \theta^{0})$ gives the 
$n$-chain $(g,\theta^{1},\ldots,\theta^{n}) \in B^{n}$, which, by construction, 
is mapped to $p^{\ast}_{0}( \omega) \in K^n$ by $\kappa$.
We then apply the total differential $\d$ to $(g,\theta^{1},\ldots,\theta^{n})$.
The conditions on the forms $\theta^{k}$ imply
\[
\d (g,\theta^{1},\ldots,\theta^{n}) = (\delta g, 0 \ldots,0,\delta \theta^{n}).
\]
The quasi-isomorphism in Lemma \ref{tech_lemma_quasi_iso} sends the $(n+1)$-cocycle 
$(\delta g, 0 \ldots,0,\delta \theta^{n}) \in B^{n+1}$
to   
\[
\delta g = \exp \bigl( \i \cdot \delta \theta^{0} \bigr) = \exp \bigl( -\i\cdot r\bigr)\in C^{n+1}(\cU,U(1)). 
\]
Hence, we have determined $f$:
\[
f(\omega) = \bigl [\exp\bigl( -\i \cdot r\bigr) \bigr] \in H^{n+1}(M,\U(1)).
\]
Finally, recall that the short exact sequence  $0 \to \Zi \to \Ri
\stackrel{\exp}{\to} \U(1) \to 0$ gives the long exact sequence
\[
\cdots \to H^{n+1}(M,\Zi) \to H^{n+1}(M,\Ri) \to H^{n+1}(M,\U(1)) \to \cdots.
\]
Therefore, if $f(\omega) = 0$, then we have
$\i \cdot [r] \in H^{n+1}(M,\Zi)$, which implies $\omega$ is integral.
\end{proof}
\noi The converse statement is:
\begin{prop}\label{integral=curvature}
If $\omega$ is a closed integral $(n+1)$-form, then there exists a Deligne
$n$-cocycle whose $n$-curvature is $\omega$.
\end{prop}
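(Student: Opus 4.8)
The plan is to read the result directly off the long exact sequence (\ref{long_exact_sequence}) that was already assembled in the proof of Proposition \ref{curvature=integral}. That sequence,
\[
0 \to H^{n}(M,\U(1)) \to H^{n}(M,D^{\bullet}_{n})
\stackrel{\kappa}{\to} Z^{n+1}(M) \stackrel{f}{\to} H^{n+1}(M,\U(1)),
\]
is exact at the term $Z^{n+1}(M)$, so the image of the curvature map $\kappa$ is precisely the kernel of $f$. Since an $(n+1)$-form is the $n$-curvature of a Deligne $n$-cocycle exactly when it lies in $\im \kappa$, it suffices to prove that every closed integral $(n+1)$-form $\omega$ satisfies $f(\omega)=0$; any $\kappa$-preimage of such an $\omega$ is then represented by the desired Deligne $n$-cocycle.

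First I would recall the explicit description of $f(\omega)$ obtained in the previous proof. Choosing a good cover $\cU=\{U_{i}\}$, one produces, via Proposition \ref{Cech_deRham} together with the \v{C}ech--de Rham isomorphism, a cochain $r \in C^{n+1}(\cU,\R)$ representing the de Rham class associated to $\omega$, along with forms $\theta^{0},\dots,\theta^{n}$ realizing $\omega$ as a curvature at the cochain level; the computation there yields
\[
f(\omega)=\bigl[\exp(-\i \cdot r)\bigr]\in H^{n+1}(M,\U(1)).
\]
In other words, $f(\omega)$ is the image of $-\i \cdot [r]\in H^{n+1}(M,\Ri)$ under the map induced by $\exp$ in the long exact sequence arising from $0 \to \Zi \to \Ri \stackrel{\exp}{\to}\U(1)\to 0$.

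Next I would invoke exactness of this exponential sequence: $f(\omega)=\exp_{\ast}(-\i \cdot [r])$ vanishes if and only if $-\i \cdot [r]$ lies in the image of $H^{n+1}(M,\Zi)\to H^{n+1}(M,\Ri)$. Because $\Zi$ is a subgroup, and hence closed under negation, this is equivalent to $\i \cdot [r]$ lying in that same image, which is exactly the condition that $\omega$ be integral in the sense of Definition \ref{integral_def} (recall the characterization noted after Proposition \ref{Cech_deRham}: $\omega$ is integral iff $\i \cdot [r]\in H^{n+1}(M,\Zi)$). Thus integrality of $\omega$ forces $f(\omega)=0$, whence $\omega\in\ker f=\im\kappa$, completing the argument.

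The bulk of the work has already been discharged in proving Proposition \ref{curvature=integral}; the only genuinely load-bearing input here is the exactness of (\ref{long_exact_sequence}) at $Z^{n+1}(M)$, which rests on the quasi-isomorphism of Lemma \ref{tech_lemma_quasi_iso}. The single point demanding care is the sign and direction bookkeeping relating $[r]$, $\i \cdot [r]$, and $-\i \cdot [r]$, but since $\Zi$ is a group these signs do not affect membership in the relevant image, so I expect no real obstacle beyond this verification.
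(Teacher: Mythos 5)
Your proposal is correct and follows exactly the route the paper takes: the paper's own proof is a one-line appeal to the exactness of the sequence (\ref{long_exact_sequence}) and the formula $f(\omega)=\bigl[\exp(-\i\cdot r)\bigr]$ derived in the proof of Proposition \ref{curvature=integral}, and your write-up simply fills in the details of why integrality of $\omega$ forces $f(\omega)=0$ via the exponential sequence. No gaps.
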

\begin{proof}
The statement follows from the exactness of
(\ref{long_exact_sequence}) and the definition of the map 
$Z^{n+1}(M) \stackrel{f}{\to} H^{n+1}(M,\U(1))$ given in the proof
of the previous proposition.
\end{proof}

\section{2-Connections on $\U(1)$-gerbes}\label{2-connection_section}
Here we present Brylinski's formalism \cite[Sec.\ 5.3]{Brylinski:1993}
which describes how to equip a $\U(1)$-gerbe with a `2-connection',
and how such a structure is related to a Deligne 2-cocycle.
Recall that the set of connections on a
$\U(1)$-principal bundle over $M$ forms an affine space modeled on the
vector space $\i \cdot \Omega^{1}(M)$. We can think of
connections on $P$ as global sections of a sheaf, which we denote
as $\Co(P)$. Given an open set $U \ss M$, $\Co(P)(U)$ is the set of
connections on the restriction of the bundle $P$ to $U$. Since each
set  $\Co(P)(U)$ is equipped with a principal homogeneous
$\Omega^{1}(U)$-space, the sheaf $\Co(P)$ is a $\Omega^{1}$-torsor.

The above discussion implies that given an object $P \in
\Tor_{\U(1)}$, we can assign to it a $\Omega^{1}\vert_{U}$-torsor
$\Co(P)$. This sheaf satisfies some compatibility conditions that
correspond to familiar facts about connections on bundles:
\begin{itemize}
\item{Given an inclusion $V \stackrel{i}{\embed} U$, we have an
    equality of sheaves on $V$: $i^{\ast} \Co(P)=\Co(i^{\ast} P)$.
}
\item{Given an isomorphism of $\U(1)$-torsors $\phi \maps P_{1} \iso
    P_{2}$ on $U$, we have an obvious isomorphism of $\Omega^{1}
    \vert_{U}$-torsors $\phi_{\ast} \maps \Co(P_{1}) \iso
   \Co(P_{2})$.}

\item{If the isomorphism in (2) is an automorphism $g \maps P \iso P$
    corresponding to a section $g \in \sh{\U(1)}(U)$, then we have the
    ``gauge transformation'' 
\[
g_{\ast} (\conn) = \conn - g^{-1}dg, \quad  \forall \conn \in \Co(P).
\]
}
\end{itemize}
Any $\U(1)$-gerbe is locally isomorphic to $\Tor_{\U(1)}$, therefore
it makes sense to axiomatize the above construction for arbitrary
gerbes.
\begin{definition}[\cite{Brylinski:1993}]\label{conn_structure}
Let $\G$ be a $\U(1)$-gerbe over $M$. A {\bf connective structure} on
$\G$ is an assignment to every object $P \in \G(U)$ for every open set
$U \ss M$, a $\Omega^{1}\vert_{U}$-torsor $\Co(P)$ equipped with the
following data:
\begin{enumerate}
\item{For every inclusion $V \stackrel{i}{\embed} U$, an isomorphism
of $\Omega^{1} \vert_{V}$-torsors
\[
\alpha_{i} \maps i^{\ast} \Co(P) \iso \Co(i^{\ast}P),
\]
where $i^{\ast} \Co(P)$ is the pullback of $\Co(P)$ as an object in $\Shf(U)$,
such that for any composable pair $W \stackrel{j}{\embed} V \stackrel{i}{\embed} U$
the diagram
\[
\xymatrix{
 j^{\ast} i^{\ast}\Co(P) \ar @{=}[d] \ar[r]^{j^{\ast}\alpha_{i}} &
 j^{\ast}\Co(i^{\ast}P)\ar[r]^{\alpha_{j}} & \Co(j^{\ast}i^{\ast}P)
 \ar[d]^{t_{i,j \ast}} \\
 (ij)^{\ast}\Co(P)  \ar[rr]^{\alpha_{ij}} &&\Co((ij)^{\ast}P)
 }
 \]
commutes.
}
\item{For any isomorphisms $\phi \maps P_{1} \iso P_{2}$ and
$\psi \maps P_{2} \iso P_{3}$ in $\G(U)$, isomorphisms of
$\Omega^{1}\vert_{U}$-torsors
\[
\phi_{\ast} \maps \Co(P_{1}) \iso \Co(P_{2}),\quad \psi_{\ast} \maps \Co(P_{2}) \iso \Co(P_{3}),
\]
such that $(\psi \circ \phi)_{\ast}=\psi_{\ast} \circ \phi_{\ast}$ and
the diagram
\[
\xymatrix{
i^{\ast}\Co(P_{1}) \ar[d]^{\alpha_{1,i}}  \ar[r]^{i^{\ast} \phi_{\ast}} &i^{\ast}\Co(P_{2})
\ar[d]^{\alpha_{2,i}}\\
\Co(i^{\ast} P_{1}) \ar[r]^{(i^{\ast} \phi)_{\ast}} &\Co(i^{\ast} P_{2}).
}
\]
commutes. Moreover, if $\sh{\Aut}_{\G}(P)(U) \cong \sh{\U(1)}(U)$ and
$g \in \sh{\U(1)}(U)$, then $g_{\ast} \maps \Co(P) \iso \Co(P)$ is the
map
\[
\conn \mapsto \conn - g^{-1}dg.
\]
}
\end{enumerate}
\end{definition}

If $\Co(P)$ is the sheaf of connections on a principal $\U(1)$-bundle
$P \to M$,  then to each section $\conn \in \Co(P)$, we can assign a 2-form
$K(\conn)$ on $M$ corresponding to its curvature. This fact motivates
the next definition.

\begin{definition}[\cite{Brylinski:1993}] \label{curving}
Let $\G$ be a $\U(1)$-gerbe over $M$ equipped with a connective
structure $P \mapsto \Co(P)$. A {\bf curving} of the connective
structure is an assignment to every object $P \in \G(U)$, and every
section $\conn \in \Co(P)(U)$, for every open set $U \ss M$, a 2-form
$K(\conn) \in \Omega^{2}(U)$ with the following properties:
\begin{enumerate}
\item{Given an inclusion $V \stackrel{i}{\embed} U$ of open sets, and
    the associated isomorphism $\alpha_{i} \maps i^{\ast} \Co(P) \iso
    \Co(i^{\ast}P)$, the equality
    \[
    K(\alpha_{i}(i^{\ast} \conn)) = i^{\ast}K(\conn)
    \]
    holds, where $i^{\ast}K(\conn)$ is the usual pullback of differential
    forms.
}
\item{Given an isomorphism $\phi \maps P \iso P'$ in $\G(U)$ and the
    associated isomorphism $\phi_{\ast} \maps \Co(P) \to \Co(P')$, the
    equality
    \[
    K(\conn)=K(\phi_{\ast} (\conn))
    \]
    holds.
}
\item{If $\theta$ is a 1-form on $U$, then $K(\conn + \i \cdot \theta)=K(\conn)
    + d\theta$.
}
\end{enumerate}
\noi We say $\G$ is $\U(1)$-gerbe equipped with a {\bf 2-connection} iff it
is equipped with a connective structure and a curving.
\end{definition}

Finally, let us describe how 2-connections are related to Deligne
2-cocycles. Let $\G$ be a $\U(1)$-gerbe on $M$ equipped with a 2-connection.
As we described in Section \ref{gerbe_class}, we may choose a cover
$\{U_{i}\}$ such that there exists objects $P_{i} \in \G(U_{i})$,
isomorphisms $u_{ij} \maps P_{j} \vert_{U_{ij}}\iso P_{i} \vert_{U_{ij}}$ in $\G(U_{i} \cap
U_{j})$, and a 2-cocycle $g_{ijk} = u^{-1}_{ik}u_{ij}u_{jk} \in
\sh{\Aut}_{\G}(P_{k}) (U_{i} \cap U_{j} \cap U_{k}) \cong
\sh{\U(1)}(U_{i} \cap U_{j} \cap U_{k})$. We choose a
section $\conn_{i} \in \Co(P_{i})(U_{i})$ for each $i$. The
restriction of $\conn_{i}$ to $U_{i} \cap U_{j}$ gives a section of
$\Co(P_{i} \vert_{U_{ij}})$ by axiom 1 of Def.\ \ref{conn_structure},
which we will also denote as $\conn_{i}$. The isomorphisms $u_{ij}$
induce isomorphisms $u_{ij \ast} \maps \Co(P_{j} \vert_{U_{ij}}) \iso \Co(P_{i} \vert_{U_{ij}})$
of $\Omega^{1} \vert_{U_{ij}}$-torsors. Hence, $\conn_{i}$ and $u_{ij
  \ast}\conn_{j}$ are both sections of $\Co(P_{i}
\vert_{U_{ij}})$. This implies that there exists 1-forms $A_{ij}$ on
$U_{i} \cap U_{j}$ such that
\begin{equation} \label{Aij_conn}
\i \cdot A_{ij} = \conn_{i} - u_{ij \ast} \conn_{j}.
\end{equation}
Restricting the above equalities to $U_{i} \cap
U_{j} \cap U_{k}$ gives
\[
\i \cdot (A_{jk} - A_{ik} + A_{ij}) = \conn_{i} - (u^{-1}_{ik} u_{ij}u_{jk})_{\ast}
\conn_{i}.
\]
Axiom 2 of Def.\ \ref{conn_structure} implies that the right-hand side
of this equation is $g_{ijk}dg_{ijk}$. Hence,
\[
\i \cdot (A_{jk} - A_{ik} + A_{ij}) = g_{ijk}dg_{ijk}.
\]

The curving on $\G$ assigns a 2-form $B_{i}=K(\conn_{i})$ on each
$U_{i}$. On the intersections $U_{i} \cap U_{j}$, axiom 1 of Def.\
\ref{curving} implies that $K(\conn_{i})$ is just the restriction of
$B_{i}$. It follows from axiom 2 of the same definition that
\[
B_{j}=K(\conn_{j})=K(u_{ij \ast} \conn_{j}),
\]
and, by applying $K$ to Eq.\ \ref{Aij_conn}, we obtain
\[
B_{i}-B_{j} = dA_{ij}.
\]
By comparing these calculations with Eqs.\ \ref{2-cocycle} in  Example
\ref{n=2}, we see that we've obtained from $\G$ a Deligne 2-cocycle
$(g,-A, B)$ whose 2-curvature is given by the 3-form $\omega=dB_{i}$.
This leads to the following theorem.

\begin{theorem}[\cite{Brylinski:1993}]
There is a one-to-one correspondence between the set of equivalence classes
of $\U(1)$-gerbes with 2-connection on a manifold $M$ and
the degree two Deligne cohomology group
$H^{2}(M,D_{2}^{\bullet})$.
\end{theorem}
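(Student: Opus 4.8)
The plan is to exhibit mutually inverse maps between the set of equivalence classes of $\U(1)$-gerbes with 2-connection and $H^{2}(M,D_{2}^{\bullet})$, refining Theorem \ref{gerbe_class_theorem} by keeping track of the connective structure and curving. In one direction, the discussion preceding the theorem already assigns to a gerbe $\G$ with 2-connection, together with a choice of good cover $\{U_{i}\}$, objects $P_{i} \in \G(U_{i})$, gluing isomorphisms $u_{ij}$, local identifications $\sh{\Aut}_{\G}(P_{i}) \cong \sh{\U(1)}$, and sections $\conn_{i} \in \Co(P_{i})$, a triple $(g,-A,B)$ satisfying the cocycle conditions (\ref{2-cocycle}), hence a class in $H^{2}(M,D_{2}^{\bullet})$. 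First I would check that this class is independent of all these auxiliary choices. The key lemma is that each admissible modification alters $(g,-A,B)$ by a Deligne coboundary $\d(h,\tau)$: replacing $u_{ij}$ by $u_{ij}h_{ij}$ with $h_{ij} \in \sh{\U(1)}(U_{ij})$ changes $g$ by $\delta h$ and, through axiom 2 of Def.\ \ref{conn_structure}, changes $A$ compatibly; replacing $\conn_{i}$ by $\conn_{i}+\i\tau_{i}$ changes $A$ by $\delta\tau$ and, through axiom 3 of Def.\ \ref{curving}, changes $B$ by $d\tau_{i}$; and passing to a refinement leaves the hyper-cohomology class fixed by definition of the direct limit. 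That an equivalence of gerbes with 2-connection yields cohomologous cocycles then follows by transporting the chosen data along the equivalence and invoking this independence.

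In the reverse direction, given a Deligne 2-cocycle $(g,A,B)$ on a good cover, I would first build the gerbe $\G$ by twisting $\Tor_{\U(1)}$ by $g_{ijk}$ exactly as in Section \ref{gerbe_class}, whose objects over $V$ are families $(P_{i},u_{ij})$ with $u_{ik}^{-1}u_{ij}u_{jk}=g_{ijk}$. The substantive step is to promote this to a 2-connection using $A$ and $B$. I would define a connective structure whose torsor $\Co(P_{i},u_{ij})$ consists of families of connections $\{\conn_{i} \in \Co(P_{i})\}$ tied together by the relation $\i A_{ij}=\conn_{i}-u_{ij\ast}\conn_{j}$ of (\ref{Aij_conn}), acted upon by global $1$-forms via $\conn_{i}\mapsto\conn_{i}+\i\theta\vert_{U_{i}}$, and a curving that records the local $2$-forms $B_{i}$ corrected by the curvature of $\conn_{i}$. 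The second relation of (\ref{2-cocycle}) makes this torsor compatible with the gluing, while the third relation $B_{j}-B_{i}=dA_{ij}$ is precisely what forces the assigned $2$-forms to agree on overlaps and patch to a global form; with these in hand one verifies the axioms of Def.\ \ref{conn_structure} and Def.\ \ref{curving} directly. Checking that a Deligne coboundary relating $(g,A,B)$ to $(g',A',B')$ supplies gluing data for an equivalence of the two gerbes respecting connective structure and curving then yields a well-defined inverse map on equivalence classes.

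Finally I would confirm that the two composites are the identity. Feeding the twisted gerbe back through the construction of Section \ref{2-connection_section} — using the tautological objects $P_{i}$, their built-in isomorphisms $u_{ij}$, and a section $\{\conn_{i}\}$ of $\Co$ — returns local data representing, up to the sign convention fixed there, the class $[g,A,B]$ one started with; conversely, reconstructing the gerbe from the cocycle read off a given gerbe recovers the original object up to equivalence by the choice-independence of the first paragraph. The main obstacle I anticipate lies in the middle paragraph: defining the torsor $\Co$ and the curving on the twisted gerbe in a functorial, object-independent way and verifying every compatibility axiom of Definitions \ref{conn_structure} and \ref{curving}. This is the categorical heart of the correspondence, where the abstract stack data must be reconciled with the concrete differential-form cocycle conditions (\ref{2-cocycle}); by comparison, the cohomological bookkeeping of the first paragraph is routine modulo careful tracking of the signs dictated by the total differential $\d g=\delta g+(-1)^{p}\frac{1}{\i}d\log g$.
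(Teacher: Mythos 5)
Your proposal is correct and follows essentially the same route as the source: the forward map is exactly the construction given in Section \ref{2-connection_section} preceding the theorem (the paper itself offers no further proof, deferring to the cited reference of Brylinski), and your reverse construction --- twisting $\Tor_{\U(1)}$ by $g$, defining $\Co(P_{i},u_{ij})$ as families of connections tied by $\i A_{ij}=\conn_{i}-u_{ij\ast}\conn_{j}$, and letting the curving be the local curvatures corrected by the $B_{i}$, which patch precisely because $B_{j}-B_{i}=dA_{ij}$ --- together with the coboundary bookkeeping is the standard completion found there. You correctly identify the functoriality checks for the connective structure and curving as the only substantive remaining labor; there is no gap in the outline.
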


\section{2-Line stacks} \label{2-line_stack_section}
The category of principal $\U(1)$-bundles with connection over a manifold
is equivalent to the category of Hermitian line bundles with connection.
This equivalence sends a principal bundle to its associated line
bundle. The goal of this section is to construct an analogous
associated object to a $\U(1)$-gerbe with 2-connection. We call this
the `associated 2-line stack'. In the subsequent chapters on quantization, it will be
convenient to consider both the principal bundle/gerbe perspective and
the line bundle/2-line stack perspective.

\subsection*{Twisted vector bundles}
We begin by introducing the concept of twisting a Hermitian vector bundle by a
$\sh{\U(1)}$-valued \v{C}ech  2-cocycle. Hermitian vector bundles on a
manifold $M$ are equivalent to certain locally free sheaves with extra structure.
It is well-known that these vector bundles form a stack $\HVB$
over $M$, which inherits its structure as a fibered category from the
stack of sheaves $\Shf$. Let $\cU=\{U_{i}\}$ be a cover of $M$. Assume we have a
vector bundle $E_{i} \in \HVB(U_{i})$ for each $U_{i}$ and
isomorphisms of vectors bundles preserving the Hermitian structure
$\phi_{ij} \maps E_{j} \vert_{U_{i} \cap U_{j}}\iso E_{i} \vert_{U_{i}
  \cap U_{j}}$ such that the composition $\phi^{-1}_{ik}\circ
\phi_{ij}\circ \phi_{jk}$ 
is the identity automorphism of
the vector bundle $E_{k} \vert_{U_{i} \cap U_{j} \cap U_{k}} \in \HVB(U_{i} \cap U_{j} \cap
U_{k})$. Comparing this data with Def.\ \ref{stack_def}, we see
that we are giving an object $(E_{i}) \in \HVB(\cU^{[0]})$, and an isomorphism 
$(\phi_{i}) \maps p^{\ast}_{2} (E_{i})\iso p^{\ast}_{1} (E_{i})$, which
satisfies the necessary gluing conditions to give a global vector
bundle $E \to M$ in $\HVB(M)$. The restriction of $E$ to each $U_{i}$ is
isomorphic to the bundle $E_{i}$.

Now let $g \in C^{2}(\cU,\sh{\U(1)})$ be a 2-cocycle given by the functions $g_{ijk}
\maps U_{i} \cap U_{j} \cap U_{k} \to \U(1)$. If $E \in \HVB(U_{i} \cap U_{j} \cap
U_{k})$ is a Hermitian vector bundle, then $g$ induces an
automorphism of $E$ (preserving the Hermitian structure), which
corresponds to multiplying sections of $E$ by $g_{ijk}$.
We consider, as above, an object $(E_{i}) \in \HVB(\cU^{[0]})$, and an isomorphism 
$(\phi_{i}) \maps p^{\ast}_{2} (E_{i})\iso p^{\ast}_{1}
(E_{i})$. However, this time we require $\phi^{-1}_{ik}\circ
\phi_{ij}\circ \phi_{jk} =g_{ijk}$, instead of the identity. 
Unless $g_{ijk}$
is a co-boundary, this twisting prevents us from gluing the $E_{i}$'s
together to form a global Hermitian vector bundle. Hence, we have the
following definition:
\begin{definition}\label{twistedvb}
Let $\cU=\{U_{i}\}_{i \in \mathcal{I}}$ be an open cover of $M$ and $g \in C^{2}(\cU,\sh{\U(1)})$ a 2-cocycle.
A {\boldmath $g$}{\bf-twisted Hermitian vector bundle} over $M$ consists of the following data: 
\begin{itemize}
\item{on each $U_{i}$, a Hermitian vector bundle
\[
(E_{i}, \inp{\cdot}{\cdot}_{i}),
\]
}
\item{on each $U_{ij}= U_{i} \cap U_{j}$, an isomorphism of Hermitian vector bundles
\[
\phi_{i j} \maps E_{j}\vert_{U_{i j}} \iso E_{i} \vert_{U_{i j}},
\]
such that 
for all $i, j,k$  in $\mathcal{I}$:
\[
\phi^{-1}_{ik}\circ \phi_{ij}\circ \phi_{jk}= g_{i j k} \cdot
\]
where $g_{i j k} \cdot$ is the automorphism of $E_{k} \vert_{U_{ijk}}$ 
corresponding to multiplication by
\[
g_{ijk} \maps U_{i} \cap U_{j} \cap U_{k} \to \U(1).
\]
}
\end{itemize}
A {\bf morphism} $\psi \maps (E_{i},\phi_{i j}) \to
(E'_{i},\phi'_{i j})$ of $g$-twisted Hermitian vector bundles over $U$
consists of a collection of morphisms of Hermitian vector bundles
\[
\psi_{i} \maps E_{i}  \to E'_{i}, \quad 
\]
for each $i \in \mathcal{I}$ such that
\[
\psi_{i} \circ \phi_{i j} = \phi'_{i j} \circ \psi_{j}.
\]
\end{definition}
\noi Notice that the definition of a twisted vector bundle mimics the
construction we described in Sec.\ \ref{gerbe_class} for obtaining a gerbe from a 2-cocycle.

Let $\HVB^{g}(M)$ denote the category of $g$-twisted Hermitian vector
bundles over $M$. We first consider the case when $g$ is the trivial cocycle.
\begin{prop} \label{trivial_twist_prop}
If $g=1 \in C^{2}(\cU,\sh{\U(1)})$ is the trivial 2-cocycle, then
$\HVB^{g}(M)$ is equivalent to the category $\HVB(M)$.
\end{prop}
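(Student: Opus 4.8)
The plan is to recognize that a $1$-twisted Hermitian vector bundle is precisely a piece of descent data for the stack $\HVB$ relative to the chosen cover $\cU$, and then to extract the equivalence from the gluing axiom of Definition \ref{stack_def}. When $g=1$, the constraint $\phi^{-1}_{ik}\circ\phi_{ij}\circ\phi_{jk}=g_{ijk}$ in Definition \ref{twistedvb} becomes the ordinary cocycle condition $\phi_{ij}\circ\phi_{jk}=\phi_{ik}$ on triple overlaps. Because $\HVB$ inherits its fibered-category structure from $\Shf$, the coherence isomorphisms $t_{i,j}$ are identities (exactly as in Example \ref{sh}), so this cocycle condition is literally the commuting diagram in $\F(\cU^{[2]})$ appearing in the stack axiom, applied to the object $(E_i)\in\HVB(\cU^{[0]})$ and the isomorphism $(\phi_{ij})\maps p_2^{\ast}(E_i)\iso p_1^{\ast}(E_i)$ (using the convention that $p_1,p_2$ project to the first and second factor, so that the $(i,j)$-component of $p_2^\ast x\iso p_1^\ast x$ is exactly $E_j\vert_{U_{ij}}\iso E_i\vert_{U_{ij}}$).

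Next I would define two functors in opposite directions. The restriction functor $R\maps\HVB(M)\to\HVB^{1}(M)$ sends a global bundle $E$ to the family $(E\vert_{U_i})$ equipped with the identity transition isomorphisms $\phi_{ij}=\id$, and sends a global morphism to its family of restrictions; the cocycle condition holds trivially. The gluing functor $F\maps\HVB^{1}(M)\to\HVB(M)$ goes the other way: given $(E_i,\phi_{ij})$, the stack axiom applied to the descent datum above furnishes a global bundle $\tilde E\in\HVB(M)$, unique up to isomorphism, together with an isomorphism $\psi\maps p_0^{\ast}\tilde E\iso(E_i)$ compatible with $(\phi_{ij})$, and I set $F(E_i,\phi_{ij})=\tilde E$. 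On morphisms, a morphism $(\psi_i)$ of twisted bundles is a family of local morphisms that agree on overlaps after transport by the $\phi_{ij}$; since $\HVB$ is in particular a prestack, the sheaf condition on $\sh{\Hom}_{\HVB}$ glues them to a unique global morphism $\tilde E\to\tilde E'$.

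Finally I would check that $R$ and $F$ are mutually quasi-inverse. The composite $F\circ R$ sends $E$ to a bundle glued from its own restrictions, which is canonically isomorphic to $E$ by the uniqueness clause of the gluing axiom, yielding a natural isomorphism $F\circ R\cong\id$. Conversely, the isomorphisms $\psi$ produced by the stack axiom assemble into a natural isomorphism $R\circ F\cong\id$: the compatibility square relating $\psi$ to $(\phi_{ij})$ is exactly the assertion that the components of $\psi$ intertwine the trivial transition data of $R(\tilde E)$ with the given $\phi_{ij}$, i.e.\ constitute a morphism of $1$-twisted bundles. Naturality in both cases follows from the uniqueness of the glued morphisms.

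The main obstacle is bookkeeping rather than genuine mathematics: one must carefully match the abstract descent datum $(x,\phi)$ of Definition \ref{stack_def}, with its projections $p_0,p_1,p_2$ and coherence cells, to the concrete indexed data $(E_i,\phi_{ij})$ of Definition \ref{twistedvb}, verifying in particular that the ordering conventions make the stack's coherence diagram coincide with the cocycle identity, and that the compatibility square for $\psi$ is precisely the condition defining a twisted-bundle morphism. Once this dictionary is fixed, the equivalence is a formal consequence of $\HVB$ being a stack.
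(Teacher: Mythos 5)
Your proposal is correct and follows essentially the same route as the paper: the paper's proof likewise observes that a trivially twisted bundle is exactly a descent datum for the stack $\HVB$ and then invokes the general fact (Prop.\ \ref{descent_prop}) that a stack's category of global sections is equivalent to its category of descent data. You simply unwind that general fact by hand, constructing the restriction and gluing functors explicitly, which adds detail but no new idea.
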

\begin{proof}
If $g$ is trivial, then the data which describes a twisted bundle is
the same data needed to glue local objects of a stack into a global object (Def.\
\ref{stack_def}). Hence, given a trivially twisted bundle
$(E_{i},\phi_{ij})$, there exists a global vector bundle $E$ whose
restriction to each $U_{i}$ is isomorphic to $E_{i}$. Indeed, the
category $\HVB^{g=1}(M)$ is a category of `descent data' for
the stack $\HVB$. (See
Appendix \ref{stack_appendix}.)
The fact that $\HVB$ is a stack implies $\HVB(M)$ is equivalent to this
category of descent data \cite{Moerdijk:2002}.
\end{proof}
\noi The next proposition implies that, up to
equivalence, $\HVB^{g}(M)$ depends only on the class $[g] \in H^{2}(\cU,\sh{\U(1)})$.   
\begin{prop} \label{twistedvb_equiv}
If $g$, $g' \in C^{2}(\cU,\sh{\U(1)})$ are cohomologous 2-cocycles, then the categories $\HVB^{g}(M)$ and
$\HVB^{g'}(M)$ are equivalent.
\end{prop}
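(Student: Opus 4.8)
The plan is to exhibit an explicit functor $F \maps \HVB^{g}(M) \to \HVB^{g'}(M)$, built from a $1$-cochain witnessing that $g$ and $g'$ are cohomologous, and to produce an honest inverse functor, so that the two categories are in fact isomorphic (and in particular equivalent). Since $g$ and $g'$ are cohomologous $2$-cocycles on the \emph{same} cover $\cU = \{U_i\}$, there is a $1$-cochain $h = (h_{ij}) \in C^{1}(\cU,\sh{\U(1)})$ with
\[
g'_{ijk} = g_{ijk}\, h_{ij}\, h_{jk}\, h_{ik}^{-1}
\]
on each triple intersection $U_i \cap U_j \cap U_k$ (this equation is exactly the statement that $g'$ and $g$ differ by a \v{C}ech coboundary, up to the sign convention of $\delta$). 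First I would define $F$ on objects by keeping the local bundles and twisting the gluing isomorphisms: send $(E_i,\phi_{ij})$ to $(E_i,\phi'_{ij})$ where $\phi'_{ij} = h_{ij}\cdot\phi_{ij}$, with $h_{ij}\cdot$ denoting fiberwise multiplication by the $\U(1)$-valued function $h_{ij}$ on $U_i \cap U_j$. Because $h_{ij}$ takes values in $\U(1)$, this multiplication is a unitary automorphism of $E_i\vert_{U_{ij}}$, so each $\phi'_{ij}$ is again an isomorphism of Hermitian vector bundles.

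Next I would verify that $(E_i,\phi'_{ij})$ is genuinely a $g'$-twisted bundle. The key point is that multiplication by a $\U(1)$-valued function is \emph{central}: it commutes with every vector-bundle morphism. Using this centrality, a direct computation gives
\[
(\phi'_{ik})^{-1}\circ\phi'_{ij}\circ\phi'_{jk}
= h_{ik}^{-1}h_{ij}h_{jk}\,\bigl(\phi_{ik}^{-1}\circ\phi_{ij}\circ\phi_{jk}\bigr)
= h_{ik}^{-1}h_{ij}h_{jk}\, g_{ijk}= g'_{ijk},
\]
which is precisely the twisting condition of Definition \ref{twistedvb} for $g'$. On morphisms I would let $F$ act as the identity: a morphism $\psi \maps (E_i,\phi_{ij}) \to (F_i,\chi_{ij})$, given by maps $\psi_i \maps E_i \to F_i$ satisfying $\psi_i\circ\phi_{ij} = \chi_{ij}\circ\psi_j$, is sent to the same family $(\psi_i)$. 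Centrality of $h_{ij}\cdot$ again shows $\psi_i\circ(h_{ij}\cdot\phi_{ij}) = (h_{ij}\cdot\chi_{ij})\circ\psi_j$, so $(\psi_i)$ is a morphism of the twisted bundles $(E_i,h_{ij}\phi_{ij})\to(F_i,h_{ij}\chi_{ij})$. Functoriality is immediate since composition of morphisms is unchanged.

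Finally, I would define $G \maps \HVB^{g'}(M)\to\HVB^{g}(M)$ by the same recipe using the cochain $h^{-1} = (h_{ij}^{-1})$, i.e.\ $(E_i,\phi'_{ij}) \mapsto (E_i,h_{ij}^{-1}\cdot\phi'_{ij})$, and the identity on morphisms. Then $G\circ F$ sends $(E_i,\phi_{ij})$ to $(E_i,h_{ij}^{-1}h_{ij}\phi_{ij}) = (E_i,\phi_{ij})$ and acts as the identity on morphisms, and symmetrically $F\circ G = \mathrm{id}$; hence $F$ and $G$ are mutually inverse, establishing the equivalence. There is no deep obstacle here: the only genuine points to check are that scalar multiplication by a $\U(1)$-valued function is both unitary (so that the Hermitian structure is preserved) and central (so that the twisting condition and the morphism condition transform correctly), together with careful bookkeeping of the \v{C}ech coboundary sign convention to ensure the correct $h$ (versus $h^{-1}$) is used in defining $F$ and $G$.
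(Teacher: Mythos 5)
Your proof is correct and follows essentially the same route as the paper: both twist the gluing isomorphisms by fiberwise multiplication with a $1$-cochain $h_{ij}$ witnessing the coboundary relation, use centrality of $\U(1)$-scalar multiplication to verify the new cocycle condition, and act as the identity on morphisms. The only difference is that you spell out the inverse functor built from $h^{-1}$, which the paper leaves implicit.
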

\begin{proof}
Let $h \in C^{2}(\cU,\sh{\U(1)})$ be a 2-cochain such that $g = g' +
\delta h$. If $(E_{i},\phi_{ij})$ is an object of $\HVB^{g}(M)$, then
we can define Hermitian vector bundle automorphisms
\[
h_{ij} \maps E_i \vert_{U_{ij}}\iso E_i \vert_{U_{ij}}
\] 
over each open set $U_{ij}=U_{i} \cap U_{j}$ corresponding to
multiplying the sections of $E_i \vert_{U_{ij}}$
by $h_{ij} \maps U_{ij} \to \U(1)$. This gives new
isomorphisms
\[
\psi_{ij} = h_{ij} \circ \phi_{ij} \maps E_{j} \xto{\sim} E_{i}.
\]
Since the $\phi_{ij}$'s are $\C$-linear, the morphisms $\psi_{ij}$  satisfy on $U_{ijk}$:
\begin{align*}
\psi^{-1}_{ik}\circ \psi_{ij} \circ \psi_{jk} &= (h^{-1}_{ik} h_{ij} h_{jk}) g_{ijk}\\
&=g_{ijk} + \delta h\\
&= g^{\prime}_{ijk}.
\end{align*}
Hence, there is a functor from $\HVB^{g}(M)$ to $\HVB^{g'}(M)$, 
determined by the map $(E_{i},\phi_{ij}) \mapsto (E_{i},\psi_{ij})$
on objects, and the identity map on morphisms.
This functor gives the desired equivalence of categories.
\end{proof}

If $g \in C^{2}(\cU,\sh{\U(1)})$ and $g' \in C^{2}(\cU',\sh{\U(1)})$
are 2-cocycles related by a refinement, then one can show that the
categories $\HVB^{g}(M)$ and $\HVB^{g'}(M)$ are equivalent. (See, for
example, Lemma 1.2.3 in \cite{Caldararu:2000}.) Hence, up to
equivalence, we can uniquely associate the category
$\HVB^{g}(M)$ to the class $[g] \in H^{2}(M,\sh{\U(1)})$.

The next proposition implies that $g$-twisted Hermitian vector bundles are
the global sections of certain a stack which we think of as being associated to the
$\U(1)$-gerbe whose equivalence class is determined by $[g]$.
\begin{prop} \label{2-line_stack_prop}
Given a 2-cocycle $g \in C^{2}(\cU,\sh{\U(1)})$ on a manifold $M$,
there exists a stack over $M$ whose category of global
sections is equivalent to the category $\HVB^{g}(M)$
of $g$-twisted Hermitian vector bundles over $M$.
\end{prop}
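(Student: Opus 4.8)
The plan is to mimic the twisting construction of Section \ref{gerbe_class}, replacing the stack $\Tor_{\U(1)}$ of $\sh{\U(1)}$-torsors by the stack $\HVB$ of Hermitian vector bundles. Fix the cover $\cU=\{U_{i}\}$ on which $g$ is defined. I would define a fibered category $\HVB^{g}$ over $M$ whose value on an open set $V \ss M$ has as objects the pairs $(E_{i},\phi_{ij})$ consisting of a Hermitian vector bundle $E_{i} \in \HVB(V \cap U_{i})$ for each $i$, together with isomorphisms of Hermitian vector bundles
\[
\phi_{ij} \maps E_{j}\vert_{V \cap U_{ij}} \iso E_{i}\vert_{V \cap U_{ij}}
\]
satisfying $\phi^{-1}_{ik}\circ \phi_{ij} \circ \phi_{jk} = g_{ijk}$ on $V \cap U_{ijk}$, where $g_{ijk}$ acts by scalar multiplication on sections. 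A morphism $(E_{i},\phi_{ij}) \to (E'_{i},\phi'_{ij})$ is a family of Hermitian bundle morphisms $\psi_{i} \maps E_{i} \to E'_{i}$ with $\psi_{i} \circ \phi_{ij} = \phi'_{ij} \circ \psi_{j}$. By direct comparison with Definition \ref{twistedvb}, the category of global sections of $\HVB^{g}$ obtained by setting $V=M$ is precisely the category $\HVB^{g}(M)$ of $g$-twisted Hermitian vector bundles, so it remains only to show that $\HVB^{g}$ is a stack.

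First I would equip $\HVB^{g}$ with its fibered-category structure. For an inclusion $W \stackrel{i}{\embed} V$, the pullback functor $i^{\ast} \maps \HVB^{g}(V) \to \HVB^{g}(W)$ is induced by the restriction functors of $\HVB$, sending $(E_{i},\phi_{ij})$ to $(E_{i}\vert_{W \cap U_{i}},\ \phi_{ij}\vert_{W \cap U_{ij}})$; this is well defined because restriction is $\C$-linear and hence preserves the twisted relation $\phi^{-1}_{ik}\phi_{ij}\phi_{jk}=g_{ijk}$. The coherence isomorphisms $t_{i,j}$ are inherited from those of $\HVB$, so the pseudofunctor axiom of Definition \ref{fibered_cat_def} follows from the corresponding axiom for $\HVB$. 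The prestack condition — that each $\sh{\Hom}_{\HVB^{g}}\bigl((E_{i},\phi_{ij}),(E'_{i},\phi'_{ij})\bigr)$ is a sheaf — reduces to the sheaf property of the Hom-presheaves of $\HVB$: a morphism of twisted bundles is a compatible family $(\psi_{i})$, each $\psi_{i}$ glues locally because $\HVB$ is a prestack, and the compatibility relations $\psi_{i}\circ\phi_{ij}=\phi'_{ij}\circ\psi_{j}$ glue because they are closed conditions preserved under restriction.

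The main work — and the step I expect to be the principal obstacle — is the gluing axiom of Definition \ref{stack_def}: descent data for $\HVB^{g}$ over a cover $\{V_{a}\}$ of $V$ must glue, uniquely up to isomorphism, to a global object of $\HVB^{g}(V)$. Here I would reduce the claim to the fact that $\HVB$ itself is a stack. A descent datum is a family of objects $(E^{a}_{i},\phi^{a}_{ij}) \in \HVB^{g}(V_{a})$ together with twisted-bundle isomorphisms $\theta_{ab}$ over $V_{a}\cap V_{b}$ satisfying the cocycle identity; unravelling, each $\theta_{ab}$ is a compatible family of Hermitian bundle isomorphisms $\theta^{i}_{ab} \maps E^{b}_{i} \iso E^{a}_{i}$ on $V_{a}\cap V_{b}\cap U_{i}$. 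For each fixed $i$, the bundles $E^{a}_{i}$ with the $\theta^{i}_{ab}$ form an ordinary descent datum for the stack $\HVB$ on the cover $\{V_{a}\cap U_{i}\}$ of $V \cap U_{i}$, hence glue to a bundle $E_{i} \in \HVB(V \cap U_{i})$, unique up to isomorphism. Because $\theta_{ab}$ is a morphism in $\HVB^{g}$, the local isomorphisms $\phi^{a}_{ij}$ are compatible with the $\theta_{ab}$, so by the gluing of morphisms in $\HVB$ they themselves glue to isomorphisms $\phi_{ij} \maps E_{j}\vert_{V\cap U_{ij}} \iso E_{i}\vert_{V\cap U_{ij}}$. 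Finally, both $\phi^{-1}_{ik}\phi_{ij}\phi_{jk}$ and the scalar $g_{ijk}$ are automorphisms of $E_{k}\vert_{V\cap U_{ijk}}$ agreeing on every $V_{a}\cap U_{ijk}$, so by the sheaf property of $\sh{\Aut}_{\HVB}(E_{k})$ they agree on all of $V\cap U_{ijk}$; hence $(E_{i},\phi_{ij})$ lies in $\HVB^{g}(V)$ and restricts back to the given datum. The whole argument is the verbatim analogue of the twisting of $\Tor_{\U(1)}$ in Section \ref{gerbe_class}, and the only genuinely new point is that the \emph{cocycle} condition on $g$ — not merely its being a cochain — is what forces the twisted relation to descend.
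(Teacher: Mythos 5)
Your proof is correct, but it takes a genuinely different route from the one in Appendix \ref{stack_appendix}. The paper does not verify the stack axioms by hand: it invokes the general gluing theorem for stacks (Thm.\ \ref{gluing_thm}), taking the local stacks $\S_{i}=\HVB\vert_{U_{i}}$, the identity equivalences $\varphi_{ij}=\id$, and the fibered isomorphisms $\mu_{ijk}$ given by multiplication by $g_{ijk}$ (whose coherence on quadruple overlaps is exactly the cocycle condition $\delta g=1$); the glued stack $\HVB^{g}$ then comes with equivalences $\varphi_{i}$ and fibered isomorphisms $\eta_{ij}$, and the bulk of the appendix is spent unwinding these to identify $\Des(\HVB^{g},\cU)$ with the category of $g$-twisted bundles via Prop.\ \ref{descent_prop}. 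You instead write down an explicit model of the fibered category whose value on $V$ is the twisted descent data relative to $\cU\cap V$, and reduce each axiom of Def.\ \ref{stack_def} -- pullbacks, the Hom-sheaf condition, and descent -- directly to the corresponding property of $\HVB$; the descent step, gluing each $E^{a}_{i}$ over $\{V_{a}\cap U_{i}\}$, then the $\phi^{a}_{ij}$ as morphisms, then checking the twisted relation by the sheaf property of $\sh{\Aut}_{\HVB}(E_{k})$, is sound. Your version is more elementary and self-contained (it needs only Prop.\ \ref{sh_stack}-style facts about $\HVB$, not 2-descent for stacks) and has the advantage that its global sections are \emph{literally} the objects of Def.\ \ref{twistedvb} rather than equivalent to them; the paper's version is shorter given the machinery, makes the parallel with the twisting of $\Tor_{\U(1)}$ structural rather than verbal, and delivers uniqueness of $\HVB^{g}$ up to equivalence for free. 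One small correction to your closing remark: in your descent argument the twisted relation passes to the glued object simply because it holds locally and automorphism presheaves are sheaves; the cocycle condition on $g$ is not what makes that step work. Where $\delta g=1$ genuinely enters is in guaranteeing that the relation $\phi_{ik}^{-1}\phi_{ij}\phi_{jk}=g_{ijk}$ is consistent on quadruple intersections, so that $\HVB^{g}(V)$ has any objects of positive rank at all (equivalently, in the paper's proof, it is the coherence condition on the $\mu_{ijk}$).
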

\begin{proof}
The fact that twisted vector bundles or, more generally, twisted
coherent sheaves, form a stack is a known result in complex algebraic
geometry \cite{Antieau:2009}[Sec.\ 2.2], \cite{Schapira:2006}[Cor.\ 5.4.8].
 The idea of the proof is simple. We
construct the stack by gluing together the local stacks $\HVB
\vert_{U_{i}}$ of Hermitian vector bundles over $U_{i}$, using the
2-cocycle $g$. However, the proof requires us to introduce additional technology for stacks,
so we give the details in Appendix \ref{stack_appendix}.
 \end{proof}
\noi The stack described in Prop.\ \ref{2-line_stack_prop} is unique
up to equivalence of stacks. We slightly abuse notation and denote it
$\HVB^{g}$, so that we may identify the global sections with twisted
bundles in  $\HVB^{g}(M)$.


\subsection*{Twisted bundles as sections of a 2-bundle}
The sheaf of sections of a complex line bundle is constructed by
using the transition functions to glue together local smooth functions $U \to \C$.
There is a formalism known as `2-bundle theory' which categorifies
this idea \cite{BaezSchreiber:2005,Bartels:2004}. The total
space of a smooth 2-bundle over a manifold is, roughly, a category whose objects
and morphisms are themselves manifolds.\footnote{This is an example of
  what is called  a smooth `2-space' which is a slight generalization of the more familiar concept of a Lie groupoid.}
In this context, the complex line is replaced by $\mathsf{Vect}_{\C}$, the category of
finite-dimensional complex vector spaces. This category was
interpreted by Kapranov and Voevodsky \cite{Kapranov:1995} 
as a rank 1 `2-vector space'. A complex 2-line bundle is therefore a
2-bundle whose fibers are categories equivalent to
$\mathsf{Vect}_{\C}$.  A section of the 2-bundle is determined locally by a
particular kind of functor $U \to \mathsf{Vect}_{\C}$, where the open
set $U \ss M$ is given the structure of a trivial category. Roughly speaking,
such a functor assigns a vector space to each point in $U$ in a smooth
way, and hence determines a vector bundle over $U$. These local
sections can be glued together using 2-cocycles (cf.\ Def. \ref{twistedvb}), in analogy
with the line bundle case. Bartels' work \cite{Bartels:2004} implies that the
``sheaf of sections'' of a 2-bundle over $M$ is indeed a stack over $M$.
We will not use 2-bundle theory in this work. However, this rough
sketch provides the motivation for interpreting the stack $\HVB^{g}$
as the higher analog of a Hermitian line bundle.
\begin{definition}
Let $g \in C^{2}(\cU,\sh{\U(1)})$ be a 2-cocycle on $M$, and let
$\G$ be the corresponding $\U(1)$-gerbe whose equivalence class
is $[g] \in H^{2}(M,\sh{\U(1)})$. The
{\bf 2-line stack associated to} {\boldmath $\G$}
is the stack $\HVB^{g}$.
\end{definition}
\noi Note that Thm.\ \ref{gerbe_class_theorem}, Prop.\ \ref{twistedvb_equiv}, 
and Lemma 1.2.3 in \cite{Caldararu:2000}
imply that the 
2-line stack associated to a gerbe is unique up to equivalence.

\subsection*{2-Connections on 2-line stacks}
If we equip a $\U(1)$-gerbe with a 2-connection, 
then it is reasonable to expect that this extra structure can be
transferred to its associated 2-line stack. 
Hence, we next consider twisting a
Hermitian vector bundle, equipped with connection, by a Deligne
2-cocycle.
\begin{definition}\label{twistedvb_conn}
Let  $\cU=\{U_{i}\}_{i \in \mathcal{I}}$ be an open cover of $M$
and  $\xi=(g,A,B) \in C^{2}(\cU,\sh{\U(1)}) \oplus
C^{1}(\cU,\Omega^{1}) \oplus C^{0}(\cU,\Omega^{2})$
a Deligne 2-cocycle. A {\boldmath $\xi$}{\bf-twisted Hermitian vector bundle with connection} over
$M$ consists of the following data: 
\begin{itemize}
\item{on each $U_{i}$, a Hermitian vector bundle equipped
    with a Hermitian connection
\[
(E_{i}, \inp{\cdot}{\cdot}_{i},\conn_{i}),
\]
}
\item{on each $U_{ij}= U_{i} \cap U_{j}$, an isomorphism of Hermitian vector bundles
\[
\phi_{i j} \maps E_{j}\vert_{U_{i j}} \iso
E_{i} \vert_{U_{i j}},
\]
such that 
\[
\phi_{i j} \conn_{j} - \conn_{i} \phi_{ij}= \i \cdot A_{ij} \tensor \phi_{ij},
\]
and for all $i,j,k$  in $\mathcal{I}$:
\[
\phi^{-1}_{ik}\circ \phi_{ij}\circ \phi_{jk}= g_{i j k} \cdot
\]
where $g_{i j k} \cdot$ is the automorphism of $E_{k} \vert_{U_{ijk}}$ 
corresponding to multiplication by
\[
g_{ijk} \maps U_{i} \cap U_{j} \cap U_{k} \to \U(1).
\]
}
\end{itemize}
A {\bf morphism} $\psi \maps (E_{i}, \conn_{i},\phi_{i j})\to
(E'_{i},\conn^{\prime}_{i},\phi'_{i j})$ of $\xi$-twisted Hermitian vector bundles with
connection consists of a collection of connection-preserving morphisms of Hermitian vector bundles
\[
\psi_{i} \maps (E_{i},\conn_{i})  \to (E'_{i},\conn'_{i}) \quad 
\]
for each $i \in \mathcal{I}$ such that
\[
\psi_{i} \circ \phi_{i j} = \phi'_{i j} \circ \psi_{j}.
\]
\end{definition}
The above definition and the cocycle conditions on
$(g,A,B)$ force a compatibility between the curvatures $\conn_{i}^{2}$ of
the vector bundles $(E_{i}, \conn_{i})$. More precisely, for all
  $i,j \in \mathcal{I}$, we have
\begin{equation} \label{B_equation}
\phi_{ij} \circ (\conn^2_{j} - \i \cdot B_{j} \tensor \id)
=(\conn^2_{i} - \i \cdot B_{i} \tensor \id) \circ \phi_{ij}.
\end{equation}
\begin{definition}\label{twisted_flat_def}
We say a $\xi$-twisted Hermitian vector bundle with connection $(E_{i}, \conn_{i},\phi_{ij})$
is {\bf twisted-flat} iff for all $i \in \mathcal{I}$
\[
\conn^2_{i} - \i \cdot B_{i} \tensor \id=0.
\]
\end{definition}
\noi We interpret a twisted-flat section of a 2-line stack to be the 2-plectic
analogue of a covariant constant section of a Hermitian line
bundle. We will use this analogy in our quantization procedure for
2-plectic manifolds in Chapter \ref{quantization_chapter}.
Not surprisingly, bundles twisted by a Deligne
2-cocycle are global sections of a stack.
\begin{prop} \label{2-line_stack_conn_prop}
Given a Deligne 2-cocycle $\xi$
on a manifold $M$, there exists a stack $\HVB^{\xi}$ over $M$ whose 
category of global sections is equivalent to the category of $\xi$-twisted Hermitian vector
bundles with connections over $M$.
\end{prop}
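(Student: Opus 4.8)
The plan is to mimic the construction of $\HVB^{g}$ from Proposition \ref{2-line_stack_prop} (carried out in Appendix \ref{stack_appendix}), replacing the stack of Hermitian vector bundles by the stack of Hermitian vector bundles \emph{equipped with a connection}, and gluing with the full Deligne data $\xi=(g,A,B)$ rather than just $g$. First I would observe that the assignment sending an open set $V\ss M$ to the category $\HVB^{\conn}(V)$ of Hermitian vector bundles with connection over $V$ is itself a stack over $M$: a connection, like the underlying bundle and Hermitian metric, is local data satisfying descent, so the gluing axioms of Definition \ref{stack_def} hold exactly as they do for $\HVB$. Writing $\cU=\{U_{i}\}$ for the cover on which $\xi$ is defined, I restrict to obtain local stacks $\HVB^{\conn}\vert_{U_{i}}$.

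Next I would build the gluing data twisting these local stacks together. On each double overlap $U_{ij}$ define the transition equivalence
\[
T_{ij}\maps \HVB^{\conn}\vert_{U_{ij}} \to \HVB^{\conn}\vert_{U_{ij}}, \qquad (E,\conn)\mapsto (E,\conn-\i\cdot A_{ij}\tensor\id),
\]
acting as the identity on morphisms; this is well defined since shifting a connection by a fixed scalar $1$-form preserves the connection-preserving property of bundle maps. On a triple overlap $U_{ijk}$ the composite $T_{ij}\circ T_{jk}$ shifts the connection by $-\i(A_{ij}+A_{jk})$ while $T_{ik}$ shifts it by $-\i A_{ik}$; the second Deligne cocycle condition $\i(A_{jk}-A_{ik}+A_{ij})=-g_{ijk}^{-1}dg_{ijk}$ of Example \ref{n=2} shows that these two connections differ exactly by the gauge term $g_{ijk}^{-1}dg_{ijk}$. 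By the gauge-transformation formula of Definition \ref{conn_structure}, multiplication by $g_{ijk}$ realizes this difference, so I take the natural isomorphism $\mu_{ijk}$ relating $T_{ij}\circ T_{jk}$ and $T_{ik}$ to be (multiplication by) $g_{ijk}$, the precise inverse being fixed by orientation conventions. The coherence of the $\mu_{ijk}$ on quadruple overlaps $U_{ijkl}$ then reduces to the first Deligne cocycle condition $g_{jkl}g^{-1}_{ikl}g_{ijl}g^{-1}_{ijk}=1$ from \ref{2-cocycle}.

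With $(T_{ij},\mu_{ijk})$ satisfying the requisite coherence, I would feed it into the gluing-of-stacks machinery of Appendix \ref{stack_appendix} --- the same apparatus used to prove Proposition \ref{2-line_stack_prop} --- to produce a stack $\HVB^{\xi}$ over $M$. Unwinding the description of global sections of a glued stack, such a section consists of objects $(E_{i},\conn_{i})\in\HVB^{\conn}(U_{i})$ together with isomorphisms $\phi_{ij}\maps T_{ij}(E_{j}\vert_{U_{ij}})\iso E_{i}\vert_{U_{ij}}$ compatible via the $\mu_{ijk}$. Reading off what this entails, the requirement that $\phi_{ij}$ intertwine the shifted connection $\conn_{j}-\i A_{ij}$ with $\conn_{i}$ is precisely $\phi_{ij}\conn_{j}-\conn_{i}\phi_{ij}=\i\cdot A_{ij}\tensor\phi_{ij}$, and compatibility with $\mu_{ijk}$ is precisely $\phi^{-1}_{ik}\circ\phi_{ij}\circ\phi_{jk}=g_{ijk}\cdot$. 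These are exactly the data of Definition \ref{twistedvb_conn}, and the correspondence is functorial on morphisms, giving the desired equivalence between the global sections of $\HVB^{\xi}$ and the category of $\xi$-twisted Hermitian vector bundles with connection.

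I expect the main obstacle to be the rigorous gluing-of-stacks step together with the verification that its global sections match Definition \ref{twistedvb_conn}: bookkeeping the pseudo-functorial coherences and pinning down the signs in the gauge-transformation and cocycle relations. This is where the bulk of Appendix \ref{stack_appendix} is invoked. The genuinely new content beyond the untwisted Proposition \ref{2-line_stack_prop} is the connection data and the role of the $1$-forms $A_{ij}$ in the transition equivalences; the $2$-form $B$ does not enter the construction of the stack itself, constraining only the curvatures a posteriori as in Eq.\ \ref{B_equation}.
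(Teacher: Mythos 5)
Your proposal is correct and follows essentially the same route as the paper, which simply asserts that the proof is "essentially identical" to the gluing construction for $\HVB^{g}$ in Appendix \ref{stack_appendix}. You have in fact supplied more detail than the paper does — in particular the explicit transition equivalences shifting connections by $-\i\cdot A_{ij}$ and the check that the Deligne cocycle condition on the $A_{ij}$ makes multiplication by $g_{ijk}$ the required fibered isomorphism — and this is exactly the adaptation the paper has in mind.
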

\begin{proof}
The proof is essentially identical to the one given in Appendix
\ref{stack_appendix} for Proposition \ref{2-line_stack_prop}.
\end{proof}

The last definition of this section completes the analogy between
2-line stacks and line bundles.
\begin{definition} \label{2-line_stack_assoc}
Let $\xi$ be a Deligne 2-cocycle on $M$ and let
$\G$ be the corresponding $\U(1)$-gerbe with 2-connection whose equivalence class
is $[\xi] \in H^{2}(M,D^{\bullet}_{2})$. The
{\bf 2-line stack equipped with 2-connection} associated to $\G$
is the stack $\HVB^{\xi}$.
\end{definition}

\section{Holonomy of Deligne classes} \label{holonomy_sec}
Suppose we have a trivial principal $\U(1)$-bundle $P \to M$ equipped with
connection. The connection in this case is given by a 1-form $\theta$
on $M$. The holonomy of this connection is the function
\begin{equation} \label{triv}
S^{1} \stackrel{\gamma}{\to} M \mapsto \exp \bigl(  i \oint_{S^{1}}
\gamma^{\ast} \theta \bigr),
\end{equation}
from loops in $M$ to $\U(1)$.
If $\cU = \{U_{i}\}$ is a cover of $M$, and $p_{0} \maps
\cU^{[0]} \to M$ is the inclusion (\ref{fiber_prod}) then
$p_{0}^{\ast} \theta$ is a 1-form on $\cU^{[0]}= \coprod_{i} U_{i}$.
Therefore, the Deligne 1-cocycle corresponding to the bundle $P$ with connection
$\theta$ is $(1,p_{0}^{\ast}\theta)$. It is reasonable to define the
holonomy of this Deligne 1-cocycle to be the function given in
(\ref{triv}). Locally, every bundle with connection is isomorphic
to the trivial bundle equipped with a 1-form. Therefore, by
gluing the local functions (\ref{triv}) together, we can compute
the holonomy of any Deligne 1-cocycle, which would correspond
to the usual notion of the holonomy of a  bundle with
connection. Carey, Johnson, and Murray \cite{Carey:2004} give a
construction that does precisely this for both Deligne 1-cocycles and 
Deligne 2-cocycles. This allows one to define the `2-holonomy' of a
$\U(1)$-gerbe equipped with a 2-connection. We will use their construction
in our quantization of 2-plectic manifolds in Chapter \ref{quantization_chapter}.

The construction begins by first observing that if $\alpha$ is an $n$-form on $M$ and
$\cU=\{U_{i}\}$ is a good cover, then we can construct a Deligne $n$-cocycle
$(1,0,\ldots,0,p^{\ast}_{0}\alpha)$ by generalizing the $n=1$ case
described in the previous paragraph. We therefore have an
inclusion of groups
\[
\begin{array}{c}
\Omega^{n}(M) \stackrel{\iota}{\to} H^{n}(M,D^{\bullet}_{n}) \cong H^{n}(\cU,D^{\bullet}_{n})\\
\alpha \mapsto [1,0,\ldots,0,p^{\ast}_{0} ( \alpha)].
\end{array}
\]
We also have the sequence
\[
\Omega^{n}(M) \stackrel{\iota}{\to} H^{n}(M,D^{\bullet}_{n})
\stackrel{c}{\to} H^{n+1}(M,\Zi).
\]
\noi Here, $c$ is the map (\ref{chern}) which sends a Deligne class to its Chern class.
Clearly, the image of $\iota$ projects to the trivial class in
$H^{n}(M,\sh{\U(1)}) \cong H^{n+1}(M,\Zi)$. Therefore, $c \circ \iota=0$. 
Moreover, we have the following proposition, which is given
without proof in \cite{Carey:2004}. 
\begin{prop}
Let $Z^{n}(M)_{\mathrm{int}}$ be the subspace of all closed
integral $n$-forms on a manifold $M$. 
The sequence of groups:
\begin{equation}\label{long_exact_seq}
0 \to Z^{n}(M)_{\mathrm{int}} \embed \Omega^{n}(M)  \stackrel{\iota}{\to} H^{n}(M,D^{\bullet}_{n})
\stackrel{c}{\to} H^{n+1}(M,\Zi) \to 0
\end{equation}
is exact.
\end{prop}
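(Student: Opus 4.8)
The plan is to obtain the stated sequence as a piece of the long exact hyper-cohomology sequence of a short exact sequence of complexes of sheaves. First I would write down
\[
0 \to \Omega^{1 \leq \bullet \leq n} \to D_{n}^{\bullet} \to \sh{\U(1)} \to 0,
\]
where $\Omega^{1\leq\bullet\leq n}$ is embedded in $D_{n}^{\bullet}$ as the subcomplex $\Omega^{1}\to\cdots\to\Omega^{n}$ occupying degrees $1,\dots,n$, and the quotient is the sheaf $\sh{\U(1)}$ concentrated in degree $0$. Next I would record the two hyper-cohomology computations that make the argument go. Since each $\Omega^{k}$ is a soft (fine) sheaf on the paracompact manifold $M$, the complex $\Omega^{1\leq\bullet\leq n}$ consists of acyclic sheaves, so its hyper-cohomology is that of the global-sections complex $\Omega^{1}(M)\to\cdots\to\Omega^{n}(M)$ in degrees $1,\dots,n$. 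In particular
\[
\mathbb{H}^{n}(M,\Omega^{1\leq\bullet\leq n}) \cong \Omega^{n}(M)/d\Omega^{n-1}(M), \qquad \mathbb{H}^{n+1}(M,\Omega^{1\leq\bullet\leq n}) = 0,
\]
the second vanishing because the complex is zero above degree $n$.

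The degree-$n$ segment of the long exact sequence then reads
\[
H^{n-1}(M,\sh{\U(1)}) \xto{\del} \Omega^{n}(M)/d\Omega^{n-1}(M) \xto{j} H^{n}(M,D_{n}^{\bullet}) \xto{\pi} H^{n}(M,\sh{\U(1)}) \to 0,
\]
with the final surjectivity forced by the vanishing above. I would then identify the maps. Composing $\pi$ with the isomorphism $H^{n}(M,\sh{\U(1)})\cong H^{n+1}(M,\Zi)$ of \eqref{chern_iso} recovers exactly the Chern-class map $c$ of \eqref{chern}, so the displayed surjection \emph{is} the surjectivity of $c$ at the right-hand end of the proposition. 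Tracing a global $n$-form $\alpha$ through the inclusion $\Omega^{1\leq\bullet\leq n}\embed D_{n}^{\bullet}$ shows its class is represented in the total \v{C}ech--Deligne complex by $(1,0,\dots,0,p_{0}^{\ast}\alpha)$; hence $j\circ q = \iota$, where $q\maps\Omega^{n}(M)\epi\Omega^{n}(M)/d\Omega^{n-1}(M)$ is the quotient. Exactness at $H^{n}(M,D_{n}^{\bullet})$ then gives $\im\iota = \im j = \ker\pi = \ker c$, which is precisely exactness of the proposition's sequence in the middle.

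It remains to prove exactness at $\Omega^{n}(M)$, namely $\ker\iota = Z^{n}(M)_{\mathrm{int}}$. Since $q$ is surjective and $\ker j = \im\del$, we have $\ker\iota = q^{-1}(\im\del)$, so everything reduces to identifying the image of the connecting map $\del\maps H^{n-1}(M,\sh{\U(1)})\to\Omega^{n}(M)/d\Omega^{n-1}(M)$. The hard part will be exactly this identification. I would trace $\del$ at the cochain level: a $\sh{\U(1)}$-valued cocycle $h$ is lifted into $D_{n}^{\bullet}$, hit with the total differential to produce $\pm\dlog h$ in the $\Omega^{1}$-slot, and then descended through the \v{C}ech--de Rham isomorphism to a closed $n$-form whose de Rham class is, up to the factor $\i$, the image of $[h]$ under $H^{n-1}(M,\sh{\U(1)})\cong H^{n}(M,\Zi)\to H^{n}(M,\Ri)\cong\i\cdot H^{n}_{\mathrm{dR}}(M)$. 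By Def.\ \ref{integral_def} this is exactly the condition that the form be integral, and via the exponential sequence \eqref{chern_iso} every integral class arises this way; this is the degree-shifted analogue of Prop.\ \ref{Cech_deRham} together with the descent already carried out in the proofs of Prop.\ \ref{curvature=integral} and Prop.\ \ref{integral=curvature}. Thus $\im\del = q\bigl(Z^{n}(M)_{\mathrm{int}}\bigr)$. Finally, exact forms are closed with integral (indeed zero) periods, so $d\Omega^{n-1}(M)\ss Z^{n}(M)_{\mathrm{int}}$ and therefore $q^{-1}\bigl(q(Z^{n}(M)_{\mathrm{int}})\bigr) = Z^{n}(M)_{\mathrm{int}}$, yielding $\ker\iota = Z^{n}(M)_{\mathrm{int}}$ and completing the proof.
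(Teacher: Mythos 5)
Your proof is correct, but it takes a genuinely different route from the one in the text. The paper argues entirely at the cochain level: it establishes $\ker c\subseteq\im\iota$ by a staircase construction in the \v{C}ech--Deligne double complex over a good cover, and it establishes $\ker\iota=Z^{n}(M)_{\mathrm{int}}$ by explicitly producing, or unwinding, a Deligne $(n-1)$-cocycle and invoking Propositions \ref{curvature=integral} and \ref{integral=curvature}. You instead package the whole statement as a segment of the long exact hypercohomology sequence of
\[
0 \to \Omega^{1\leq\bullet\leq n} \to D_{n}^{\bullet} \to \sh{\U(1)} \to 0,
\]
which is the complement of the short exact sequence (\ref{ses_1}) that the paper uses to prove Proposition \ref{curvature=integral}. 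Your computation $\mathbb{H}^{n}(M,\Omega^{1\leq\bullet\leq n})\cong\Omega^{n}(M)/d\Omega^{n-1}(M)$ and $\mathbb{H}^{n+1}(M,\Omega^{1\leq\bullet\leq n})=0$ is right (softness of each $\Omega^{k}$, and the complex vanishing above degree $n$), the identification $j\circ q=\iota$ is immediate from the definition of $\iota$, and so surjectivity of $c$ and exactness at $H^{n}(M,D_{n}^{\bullet})$ come for free --- this is what your approach buys: the staircase argument disappears into general nonsense. The one place where real work remains is the identification of the image of the connecting map $\del$, and you have correctly isolated it: the zig-zag from $\pm\dlog h$ lands on a \emph{closed} global $n$-form (it is locally $d\eta^{n-1}$) whose de Rham class is the image of $[h]$ under $H^{n-1}(M,\sh{\U(1)})\cong H^{n}(M,\Zi)\to H^{n}(M,\Ri)$, and since that first map is an isomorphism every integral class arises; together with $d\Omega^{n-1}(M)\ss Z^{n}(M)_{\mathrm{int}}$ this gives $\ker\iota=Z^{n}(M)_{\mathrm{int}}$. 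What the paper's hands-on version buys in exchange is explicit cocycle representatives (the cochain $(h,\eta^{1},\ldots,\eta^{n-1})$ trivializing $\iota(\alpha)$), which it reuses when computing holonomy in Definition \ref{holonomy_def}; your argument establishes existence of such trivializations only implicitly through the connecting homomorphism.
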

\begin{proof}
We have already discussed the surjectivity of the map $c$ in 
Sec.\ \ref{Deligne_sec}. To show $\ker c \ss \im \iota$,
suppose 
\[
(g,\theta^{1},\ldots,\theta^{n}) \in 
C^{n}(\cU,\sh{\U(1)}) \oplus C^{n-1}(\cU,\Omega^1) \oplus
\cdots \oplus C^{1}(\cU,\Omega^{n-1}) \oplus C^{0}(\cU,\Omega^{n})
\]
is a Deligne $n$-cocycle relative to a good
open cover $\cU$ such that $c([g,\theta^{1},\ldots,\theta^{n}])=0$.
Then, the isomorphism $H^{n}(M,\sh{\U(1)}) \cong H^{n+1}(M,\Zi)$ implies
there exists a cochain $h \in C^{n-1}(\cU,\sh{\U(1)})$ such that
$\delta h =g$. Since $\cU$ is good, a staircase construction in the double complex
(\ref{cech_resolution}) shows there exists $k$-forms 
\[
\eta^{k} \in C^{n-k-1}(\cU,\Omega^{k}), \quad 1 \leq k \leq n-1
\]
such that
\begin{equation} \label{proofeq1}
\begin{split}
\theta^{1} &= \delta \eta^{1} + (-1)^{n-1}\frac{1}{\i} d \log h, \\ 
\theta^{k} & = \delta \eta^{k} + (-1)^{n-k} d \eta^{k-1}, \quad 2 \leq
k \leq n-1.
\end{split}
\end{equation}
In particular, for $k=n-1$, we have $\theta^{n-1} = \delta \eta^{n-1} - d \eta^{n-2}$,
and hence
\[
d \theta^{n-1} = \delta d\eta^{n-1}.
\]
The fact that $(g,\theta^{1},\ldots,\theta^{n})$ is a cocycle implies
\[
\delta \theta^{n} - d \theta^{n-1} =0.
\]
Combining the two equalities gives $\delta( \theta^{n} -
d\eta^{n-1})=0$. Hence $\theta^{n} -d\eta^{n-1}$ is a cocycle in
$C^{0}(\cU,\Omega^{n})$. Therefore there exists a global $n$-form
$\alpha \in \Omega^{n}(M)$ such that $p^{\ast}_{0}(\alpha) = \theta^{n} -d\eta^{n-1}$.
This result, combined with the Eqs.\ \ref{proofeq1} imply
\[
(g,\theta^{1},\ldots,\theta^{n}) -
\mathbf{d}(h,\eta^{1},\ldots,\eta^{n-1}) =
(1,0,\ldots,0,p^{\ast}_{0}\alpha),
\]
where $\mathbf{d}$ is the total differential of the double complex
(\ref{cech_resolution}). Hence $\ker c = \im \iota$.

Next we show $Z^{n}(M)_{\mathrm{int}} \ss \ker \iota$. Suppose
$\alpha$ is a closed integral $n$-form. Then Prop.\
\ref{integral=curvature} implies there exists a Deligne
$(n-1)$-cocycle $(h,\eta^{1},\ldots,\eta^{n-1})$ representing a class
in $H^{n-1}(M,D^{\bullet}_{n-1})$ whose $(n-1)$-curvature is
$\alpha$. By definition of the curvature, this means
\[
p^{\ast}_{0} \alpha = (-1)^{n-1}d\eta^{n-1}.
\]
Embeding this cocycle in the complex $D^{\bullet}_{n}$ and applying
the total differential gives:
\[
\mathbf{d}(h,\eta^{1},\ldots,\eta^{n-1}) = (1,0,\ldots,0,d \eta^{n-1})
= (1,0,\ldots,0,(-1)^{n-1}p^{\ast}_{0} \alpha).
\]
Hence
\[
(1,0,\ldots,0,p^{\ast}_{0} \alpha)- (-1)^{n-1} \mathbf{d}(h,\eta^{1},\ldots,\eta^{n-1}) = (1,0,\ldots,0),
\]
which implies $\iota(\alpha) =0$.

Finally, we show $\ker \iota \ss Z^{n}(M)_{\mathrm{int}}$. Let $\alpha$
be a $n$-form on $M$  such that
\[
\iota(\alpha)=[1,0,\ldots,p^{\ast}_{0} \alpha] = [1,0,\ldots,0].
\]
Hence the curvature of the cocycle
$(1,0,\ldots,p^{\ast}_{0}\alpha)$ 
is zero. By definition of the curvature, this implies $p^{\ast}_{0}
d \alpha=0$. Therefore $\alpha$ is closed. Furthmore, by assumption,
there exists a cochain
\[
(h,\eta^{1},\ldots,\eta^{n-1}) \in 
C^{n-1}(\cU,\sh{\U(1)}) \oplus C^{n-2}(\cU,\Omega^1) \oplus
\cdots \oplus C^{0}(\cU,\Omega^{n-1}) 
\]
such that
\[
(1,0,\ldots,p^{\ast}_{0}\alpha)-\mathbf{d}(h,\eta^{1},\ldots,\eta^{n-1}) = (1,0,\ldots,0).
\]
By definition of the differential $\mathbf{d}$, this implies
\[
p^{\ast}_{0} \alpha = d\eta^{n-1},
\]
and:
\begin{align*}
\delta h &=1, \\
\delta \eta^{1} &= (-1)^{n} \frac{1}{\i}d \log h,\\
\delta \eta^{k} &= (-1)^{n-k-1}d \eta^{k-1}, ~ \text{for $2 \leq k \leq n-1$.}
\end{align*}
Hence, $(h,\eta^{1},\ldots,\eta^{n-1})$ is a Deligne $(n-1)$-cocycle
representing a class in $H^{n-1}(M,D^{\bullet}_{n-1})$ whose
$(n-1)$-curvature is $(-1)^{n-1}\alpha$. Therefore Prop.\ \ref{curvature=integral} implies
that $\alpha$ is integral.
\end{proof}

Let $[g,\theta^{1},\ldots,\theta^{n}] \in H^{n}(M,D^{\bullet}_{n})$ be a degree
$n$ class relative to an open cover $\cU=\{U_{i}\}$ of $M$.
Let $\sigma \maps \Sigma^{n} \to M$ be a map from a compact,
oriented $n$-dimensional manifold into $M$. It is easy to see that the pullback
$[\sigma^{\ast} g,\sigma^{\ast}\theta^{1},\ldots,\sigma^{\ast}\theta^{n}]$ is a degree $n$
class in $H^{n}(\sigma^{-1}\cU,D^{\bullet}_{n})$ relative to the open
cover $\sigma^{-1}\cU=\{\sigma^{-1}(U_{i})\}$ of $\Sigma^{n}$.
Since $H^{n+1}(\Sigma^{n},\Zi) \cong H^{n+1}(\Sigma^{n},\Z)=0$, the
sequence (\ref{long_exact_seq}) implies there exists an $n$-form
$\alpha$ on $\Sigma^{n}$ such that
\begin{equation}\label{alpha_def}
\iota(\alpha)=[\sigma^{\ast} g,\sigma^{\ast}\theta^{1},\ldots,\sigma^{\ast}\theta^{n}].
\end{equation}
Hence, we can integrate $\alpha$ and take the exponential
\begin{equation} \label{element}
\exp \bigl (i \int_{\Sigma^{n}} \alpha \bigr)
\end{equation}
to obtain an element of $\U(1)$. Note that if $\alpha'$ is any other
$n$-form satisfying $\iota(\alpha')=[\sigma^{\ast}
g, \sigma^{\ast}\theta^{1},\ldots,\sigma^{\ast}\theta^{n}]$, then the sequence
(\ref{long_exact_seq}) implies $\alpha-\alpha'$ is integral, which
further implies
\[
\int_{\Sigma^{n}} (\alpha-\alpha') \in \tpi \Z.
\]
Therefore the element (\ref{element}) only depends on the class
$[\sigma^{\ast} g,
\sigma^{\ast}\theta^{1},\ldots,\sigma^{\ast}\theta^{n}]$, which allows
us to give the following definition:
\begin{definition}[\cite{Carey:2004}] \label{holonomy_def}
Let $[g,\theta^{1},\ldots,\theta^{n}] \in H^{n}(M,D^{\bullet}_{n})$ be a
degree $n$ Deligne class. The {\bf $n$-holonomy} of a
map $\sigma \maps \Sigma^{n} \to M$ is the element
\[
\hol([g,\theta^{1},\ldots,\theta^{n}],\sigma):= \exp \bigl(i \int_{\Sigma^{n}} \alpha\bigr)
\]
of $\U(1)$, where $\alpha \in \Omega^{n}(\Sigma^{n})$ is the $n$-form defined  in
Eq.\ \ref{alpha_def}.
\end{definition}
\noi It is straightforward to verify for $n=1$, that
$\hol([(g,\theta],\sigma)$ is the usual holonomy of a principal
$\U(1)$-bundle with transition functions and connection 1-forms 
representing the class $[g,\theta]$. Similarly, for
gerbes we have:
\begin{definition} \label{2-hol_2-conn_def}
The {\bf 2-holonomy of a 2-connection} on a $\U(1)$-gerbe corresponding
to the Deligne 2-cocycle $(g,A,B)$ is the assignment to every map
$\sigma: \Sigma^{n} \to M$, the element 
\[
\hol([g,A,B],\sigma) \in\U(1).
\]
\end{definition}
\noi Since the 2-holonomy of the gerbe depends only on the Deligne class, we
can just as easily define the 2-holonomy for the associated 2-line
stack with 2-connection.

\chapter{Prequantization of $2$-plectic manifolds} \label{prequant_chapter}
In Chapter \ref{algebra_chapter}, we showed that any $n$-plectic
manifold gives rise to a Lie $n$-algebra. This generalizes the
well-known fact that the functions on a symplectic manifold $(M,\omega)$
form a Poisson algebra. In the symplectic case,
the geometric quantization procedure of
Kirillov \cite{Kirillov:2004}, Kostant \cite{Kostant:1970}, 
and Souriau \cite{Souriau:1967} (KKS) involves 
constructing faithful representations of this algebra using 
structures that naturally arise on $M$. 
The first step of this procedure is called
prequantization. Our goal in this chapter is to generalize this to
$2$-plectic manifolds, and prequantize the Lie 2-algebra of Hamiltonian 1-forms. 

\section{Overview of prequantization}
In symplectic geometry, prequantization itself begins by assigning 
to a symplectic manifold either a principal $\U(1)$-bundle, or a
Hermitian line bundle, with
connection whose curvature corresponds to the symplectic 2-form.
In this chapter, we will use principal bundles. 

\begin{definition}[\cite{Souriau:1967}] \label{KKS_prequant_def}
A {\bf prequantized symplectic manifold} is a symplectic manifold $(M,\omega)$
equipped with a principal $\U(1)$-bundle $P \to M$ with
connection, such that the curvature of the connection is
$\omega$.
\end{definition}
\noi Definition \ref{n-cocycle_def} and Example \ref{n=1} in the
previous chapter imply that a prequantized symplectic manifold is a
symplectic manifold equipped with a Deligne 1-cocycle whose 1-curvature
is $\omega$. This observation allows us to generalize Def.\
\ref{KKS_prequant_def} to the $n$-plectic case.
\begin{definition}\label{n-plectic_prequant_def}
A {\bf prequantized} {\boldmath$n$}{\bf -plectic manifold} is an $n$-plectic manifold
$(M,\omega)$ equipped with a Deligne $n$-cocycle $\xi$ whose $n$-curvature
is $\omega$.
\end{definition}
\noi Not every $n$-plectic manifold can be prequantized.
Indeed, Propositions \ref{curvature=integral} and \ref{integral=curvature} imply:
\begin{proposition}
An $n$-plectic manifold $(M,\omega)$ is prequantizable if and only if
$\omega$ is integral.
\end{proposition}
\noi In Prop. \ref{curvature=integral}, we considered the
long exact sequence
\[
0 \to H^{n}(M,\U(1)) \to H^{n}(M,D^{\bullet}_{n})
\stackrel{\kappa}{\to} Z^{n+1}(M) \stackrel{f}{\to} H^{n+1}(M,\U(1)),
\]
where $Z^{n+1}(M)$ is the space of closed  $(n+1)$-forms, and $\kappa$
is the curvature map. The sequence shows that the manifold may have several 
non-equivalent prequantizations. Indeed, the prequantizations of
$(M,\omega)$ are classified by the Deligne cohomology group $H^{n}(M,D^{\bullet}_{n})$.

Let $(M,\omega,\xi)$ be a prequantized symplectic manifold and let $P \xrightarrow{\pi} M$
be the $\U(1)$-bundle with connection corresponding to
the Deligne 1-cocycle $\xi$. From this geometric data, the
KKS procedure for prequantization gives a faithful representation of the Poisson algebra 
$(\cinf(M),\brac{\cdot}{\cdot})$ as unitary operators on a Hilbert
space. This representation can be constructed by using the `Atiyah
algebroid' associated to $P$. The Atiyah algebroid is an example
of a Lie algebroid: roughly, a vector bundle $A \to M$ equipped with a
bundle map to the tangent bundle of $M$, and a Lie algebra structure
on its space of global sections.  The total space of the Atiyah
algebroid is the quotient $A=TP/U(1)$. Sections of $A$ are
$U(1)$-invariant vector fields on $P$. A connection on $P$ is
equivalent to a splitting $s \maps TM \to A$ of the short exact sequence
\[
0 \to \R \times M \to A\stackrel{\pi_{\ast}}{\to} TM \to 0
\]
where the map $\R \times M \to A$ corresponds to identifying the
vertical subspace of $T_pP$ with the Lie algebra $\u(1) \cong
\R$. As we will see, those sections of $A$ which act as infinitesimal symmetries preserving the connection (or
splitting) form a Lie subalgebra that is isomorphic to the Poisson
algebra.  This implies that the Poisson algebra acts as linear differential operators
on the $\C$-valued functions on $P$. In particular,
the algebra acts on functions $f \maps P \to \C$ with the property
\[
f(pg)=g^{-1}f(p), \quad g \in \U(1).
\]
A simple calculation shows that such functions correspond to global
sections of the Hermitian line bundle associated to $P$.
Compactly supported global sections of this line bundle form a
vector space equipped with a Hermitian inner product.
The $L^{2}$-completion of this space is called the `prequantum
Hilbert space.' We shall consider this Hilbert space in more detail in
the next chapter.

If the symplectic manifold is connected, then the Poisson algebra gives what is known as
the `Kostant-Souriau central extension' of the Lie algebra of
Hamiltonian vector fields \cite{Kostant:1970}. The symplectic form,
evaluated at a point, gives a representative of the degree 2 class 
corresponding to this extension in the Lie algebra cohomology of the
Hamiltonian vector fields. 
The fact that this central extension is quantized, rather
than the Hamiltonian vector fields themselves, is the reason why the
concept of `phase' is introduced in quantum mechanics.

The goal of this chapter is to generalize the above prequantization procedure to 2-plectic
manifolds. We already know from Chapter \ref{stacks_chapter} that, for
a prequantized 2-plectic manifold, a $\U(1)$-gerbe with 2-connection
plays the role of the $\U(1)$-principal bundle. But what is the
2-plectic analogue of the Atiyah algebroid? We answer this question in
this chapter by considering a more general problem: understanding the relationship
between 2-plectic geometry and the theory of `Courant
algebroids.' Roughly, a Courant algebroid is a vector bundle that
generalizes the structure of a Lie algebroid equipped with a symmetric
nondegenerate bilinear form on the fibers. They were first used by
Courant \cite{Courant} to study generalizations of pre-symplectic and
Poisson structures in the theory of constrained mechanical
systems. Curiously, many of the ingredients found in 2-plectic
geometry are also found in the theory of `exact' Courant
algebroids. An exact Courant algebroid is a Courant algebroid whose
underlying vector bundle $C \to M$ is an extension of the tangent
bundle by the cotangent bundle:
\[
0 \to T^{\ast}M \to C \to TM \to 0.
\] 
In a letter to Weinstein, \v{S}evera \cite{Severa1} described how
exact Courant algebroids arise in 2-dimensional variational problems
(e.g.\ bosonic string theory), and showed that they are classified up
to isomorphism by the degree 3 de Rham cohomology of $M$. From any
closed 3-form on $M$, one can explicitly construct an exact Courant
algebroid equipped with an `isotropic' splitting of the above short
exact sequence, using local 1-forms and 2-forms that satisfy cocycle
conditions \cite{Bressler-Chervov,Hitchin:2004ut,Gualtieri:2007}.

\v{S}evera's classification implies that every 2-plectic
manifold $(M,\omega)$ gives a unique exact Courant algebroid (up to
isomorphism) whose class is represented by the 2-plectic
structure. However, there are more interesting similarities between
2-plectic structures and exact Courant algebroids.  Roytenberg and
Weinstein \cite{Roytenberg-Weinstein} showed that the bracket on the
space of global sections of a Courant algebroid induces an
$L_{\infty}$ structure. If we are considering an exact Courant
algebroid, then the global sections can be identified with vector
fields and 1-forms on the base space. Roytenberg and Weinstein's
results imply that these sections, when combined with the smooth
functions on the base space, form a Lie 2-algebra
\cite{Roytenberg_L2A}. Moreover, the Jacobiator of the Lie 2-algebra
encode a closed 3-form representing the \v{S}evera class
\cite{Severa-Weinstein}.

The first new result we present in this chapter is that there exists a Lie
2-algebra morphism which embeds the Lie 2-algebra of Hamiltonian
1-forms on a 2-plectic manifold $(M,\omega)$ into the Lie 2-algebra of
global sections of the corresponding exact Courant algebroid
$C$ equipped with an isotropic splitting.  Moreover, this
morphism gives an isomorphism between the Lie 2-algebra of Hamiltonian
1-forms and the sub Lie 2-algebra consisting of those sections of
$C$ which preserve the splitting via a particular kind of
adjoint action. This result holds without any integrality condition on
the 2-plectic structure. However, its meaning becomes clear in the
context of prequantization: It is the higher analogue of the isomorphism
between the underlying Lie algebra of the Poisson algebra on a
prequantized symplectic manifold, and the Lie sub-algebra of sections
of the Atiyah algebroid that preserve the connection on the associated
principal bundle. Hence, we see that the 2-plectic analogue of the
Atiyah algebroid associated to a principal $U(1)$-bundle is an exact
Courant algebroid associated to a $U(1)$-gerbe. This idea that exact
Courant algebroids are higher Atiyah algebroids has been discussed
previously in the literature
\cite{Bressler-Chervov,Gualtieri:2007}. However, this is the first
time the analogy has been understood using Lie $n$-algebras within the
context of prequantization.

The second result presented here involves identifying the 2-plectic
analogue of the Kostant-Souriau central extension. On a 2-plectic
manifold, associated to every Hamiltonian 1-form is a Hamiltonian
vector field. These vector fields form a Lie algebra, which we can view
as a trivial Lie 2-algebra, whose underlying chain complex is
concentrated in degree 0, and whose bracket satisfies the Jacobi
identity on the nose. For any 1-connected (i.e.\ connected and simply
connected) 2-plectic manifold, we show that the Lie 2-algebra of
Hamiltonian 1-forms is quasi-isomorphic to a `strict central
extension' of the trivial Lie 2-algebra of Hamiltonian vector fields
by the abelian Lie 2-algebra $\R \to 0$. 
Furthermore, we show that this extension corresponds to
a degree 3 class in the Lie algebra cohomology of the Hamiltonian
vector fields with values in the trivial representation. In analogy
with the symplectic case, a 3-cocycle representing this class can be
constructed by using the 2-plectic form. It follows from the
aforementioned results relating a 2-plectic manifold $(M,\omega)$ to
the Courant algebroid $C$, that the sub Lie 2-algebra of sections of
$C$ that preserve the splitting is also quasi-isomorphic to this
central extension, and can be interpreted as the prequantization of the
Lie 2-algebra of Hamiltonian 1-forms. 

\section{Prequantization of symplectic manifolds} \label{symplectic_sec} 
In this section, we briefly review the construction of 
Lie algebroids on symplectic manifolds and describe an embedding of
the Poisson algebra into the Lie algebra of sections of the
algebroid. We emphasize the role played by the Atiyah algebroid in prequantization and the
construction of the Kostant-Souriau central extension. 

We begin by reviewing the construction of a Lie algebroid, which
ultimately will describe how phases arise in the prequantization of
symplectic manifolds. A section of this Lie algebroid is a vector
field on the base manifold together with a `phase', or more precisely,
a real-valued function.
\begin{definition}[\cite{Mackenzie:1987}] 
A {\bf Lie algebroid} over a manifold $M$ is a real
vector bundle $A \to M$ equipped with a bundle map (called the anchor) $\rho \maps A \to
TM$, and a Lie algebra bracket $\blankbrac_{A} \maps \Gamma(A) \tensor
\Gamma(A) \to \Gamma(A)$ such that the induced map
\[
\Gamma(\rho) \maps \Gamma(A) \to \X(M)
\]
is a morphism of Lie algebras, and for all $f \in \cinf(M)$ and
$e_{1},e_{2} \in \Gamma(A)$ we have
the Leibniz rule
\[
\abrac{e_{1}}{f e_{2}} = f \abrac{e_{1}}{e_{2}} + \rho(e_{1})(f)e_{2}. 
\]
A Lie algebroid with surjective anchor map is called a
{\bf transitive Lie algebroid}. 
\end{definition}
The main ideas of the following construction are presented in Sec.\ 17
of Cannas da Silva and Weinstein \cite{CdS-Weinstein:1999}. We provide
the details here in order to compare to the 2-plectic case in Sec.\
\ref{geometric}. Let $(M,\omega)$ be a manifold equipped with a closed
2-form, e.g.\ a pre-symplectic manifold. By a \textbf{trivialization}
of $\omega$, we mean a cover $\{U_{i}\}$ of $M$, equipped with 1-forms
$\theta_{i} \in \Omega^{1}(U_{i})$, and smooth functions $g_{ij} \in
\cinf(U_{i} \cap U_{j})$, such that
\begin{align} 
 \omega \vert_{U_{i}} = d\theta_{i} \\
(\theta_{j}-\theta_{i}) \vert_{U_{ij}} = dg_{ij}, \label{cocycle1}
\end{align}
where $U_{ij}=U_{i} \cap U_{j}$.
Every manifold admits a good cover, hence every closed 2-form admits a trivialization.   
Given such a trivialization of $\omega$, we can construct a transitive Lie
algebroid over $M$. Over each $U_{i}$ we consider the Lie algebroid
\[
A_{i}=TU_{i} \oplus \R \to U_{i},
\]
with bracket 
\[
\aibrac{v_{1} + f_{1}}{v_{2} + f_{2}} = [v_{1},v_{2}] + v_{1}(f_{2}) -
v_{2}(f_{1}), 
\]
for all  $v_{i} + f_{i} \in \X(U_{i}) \oplus \cinf(U_{i})$, and anchor $\rho$ given
by the projection onto $TU_{i}$.
From the 1-forms $dg_{ij} \in \Omega^{1}(U_{ij})$, we can construct
transition functions
\begin{align*}
G_{ij} \maps U_{ij} \to GL(n+1),\\
G_{ij}(x)= 
\begin{pmatrix}
1 & 0\\
dg_{ij} \vert_{x} & 1
\end{pmatrix},
\end{align*}
which act on a point $v_{x} + r \in A_{i} \vert_{U_{ij}}$ by 
\[
G_{ij}(x)(v_{x} + r) = v_{x} + r + dg_{ij}(v_{x}).
\]
Clearly, each $G_{ij}$ satisfies the cocycle conditions on $U_{ijk}$ by virtue of Eq.\ \ref{cocycle1}.
Therefore, we have over $M$ the vector bundle
\[
A= \coprod_{x \in M} T_{x}U_{i} \oplus \R / \sim,
\]
where the equivalence is defined via the functions $G_{ij}$ in
the usual way. For any sections $v_{i} + f_{i}$ of $A_{i}
\vert_{U_{ij}}$, a direct calculation shows that
\[
\aibrac{G_{ij}(v_{1} + f_{1})}{G_{ij}(v_{2} + f_{2})} = G_{ij}([v_{1},v_{2}] + v_{1}(f_{2}) -
v_{2}(f_{1})).
\]
Hence, the local bracket descends to a well-defined bracket
$\tabrac{\cdot}{\cdot}$ on the quotient. Henceforth, 
$(A,\tabrac{\cdot}{\cdot},\rho)$ will denote this transitive Lie algebroid
associated to the closed 2-form $\omega$.

It is easy to see that the above Lie algebroid is an extension of the
tangent bundle
\[
0 \to M \times \R \to A \stackrel{\rho}{\to} TM \to 0.
\]
Moreover, the 1-forms $\theta_{i} \in \Omega^{1}(U_{i})$ induce a splitting
\[
s \maps TM \to A
\]
of the above sequence defined as
\begin{equation} \label{A_splitting}
s(v_{x})= v_{x} - \theta_{i}(v_{x}), \quad \forall ~ v_{x} \in TU_{i}.
\end{equation}
By a slight abuse of notation, we denote the horizontal lift
$\Gamma(s) \maps \X(M) \to \Gamma(A)$ also by $s$. Hence
every section $e \in \Gamma(A)$ is of the form $e=s(v) +f$,
for some $v\in \X(M)$ and $f \in \cinf(M)$. Using the local definition
of the splitting and the fact that $\omega \vert_{U_{i}}=d\theta_{i}$,
a direct calculation shows that
\begin{equation} \label{tabrac_def}
\tabrac{s(v_{1}) + f_{1}}{s(v_{2}) +
  f_{2}} = s \bigl ([v_{1},v_{2}] \bigr) +
v_{1}(f_{2})- v_{2}(f_{1}) 
-\ip{2}\ip{1}\omega,
\end{equation}
for all sections $s(v_{i}) + f_{i}$. The failure of
the splitting $s \maps TM \to A$ to preserve the Lie bracket on
sections is measured by the 2-form $\omega$:
\[
[s(v_{1}),s(v_{2})]_{A} = s([v_{1},v_{2}]) - \omega(v_{1},v_{2}),
\quad \forall v_{1},v_{2} \in \X(M).
\]

It is a simple exercise to show that a different choice of
trivialization gives a Lie algebroid equipped with a splitting that
is isomorphic to $A$ equipped with the splitting given in Eq.\ \ref{A_splitting}.

\subsection*{The Poisson algebra}
Let $(M,\omega)$ be a symplectic
manifold. Here $\brac{f}{g}=\omega(v_{f},v_{g})$ denotes the Poisson
bracket on smooth functions. The vector field $v_{f}$, satisfying 
the equality $df=-\ip{f}\omega$, is the unique Hamiltonian vector
field corresponding to the function $f$.  We denote the Lie algebra of
Hamiltonian vector fields by $\Xham$.  Let
$(A,\tabrac{\cdot}{\cdot},\rho)$ be the Lie algebroid
associated to $\omega$ and $s \maps TM \to A$ be the
splitting defined in Eq.\ \ref{A_splitting}. We are interested in a particular Lie
sub-algebra of $\Gamma(A)$  acting on the subspace
$s(\X(M)) \subseteq \Gamma(A) $ via the adjoint action.
\begin{definition} \label{preserve_split_def}
A section $a=s(v) + f \in \Gamma(A)$ {\bf preserves the splitting} $s \maps TM
\to A$ iff $\forall v' \in \X(M)$ 
\[
\bigl [a,s(v') \bigr]_{A} = s([v,v']).
\]
The subspace of sections that preserve the splitting is 
denoted as $\Gamma(A)^{s}$.
\end{definition}

\begin{prop} \label{preserve_lie_alg}
$\Gamma(A)^{s}$ is a Lie subalgebra of $\Gamma(A)$.
\end{prop}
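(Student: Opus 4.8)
The plan is to show that $\Gamma(A)^{s}$ is closed under the Lie algebroid bracket $\tabrac{\cdot}{\cdot}$, since it is already a linear subspace of $\Gamma(A)$. Concretely, I would take two sections $a_{1}=s(v_{1})+f_{1}$ and $a_{2}=s(v_{2})+f_{2}$ that each preserve the splitting, and verify that their bracket $\tabrac{a_{1}}{a_{2}}$ again satisfies the defining condition of Definition \ref{preserve_split_def}: that is, for every $v'\in\X(M)$,
\[
\bigl[\tabrac{a_{1}}{a_{2}},s(v')\bigr]_{A} = s\bigl([w,v']\bigr),
\]
where $w\in\X(M)$ is the vector-field part of $\tabrac{a_{1}}{a_{2}}$ under the anchor $\rho$. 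From Eq.\ \ref{tabrac_def}, the bracket $\tabrac{a_{1}}{a_{2}}$ has vector-field part $[v_{1},v_{2}]$, so I expect $w=[v_{1},v_{2}]$ and the target to be $s([[v_{1},v_{2}],v'])$.

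The key computational engine is the Jacobi identity for the Lie algebroid bracket, which holds by definition of a Lie algebroid. First I would write, using Jacobi,
\[
\bigl[\tabrac{a_{1}}{a_{2}},s(v')\bigr]_{A}
= \bigl[a_{1},[a_{2},s(v')]_{A}\bigr]_{A} - \bigl[a_{2},[a_{1},s(v')]_{A}\bigr]_{A}.
\]
Since $a_{1}$ and $a_{2}$ each preserve the splitting, $[a_{2},s(v')]_{A}=s([v_{2},v'])$ and $[a_{1},s(v')]_{A}=s([v_{1},v'])$. Applying the preservation condition once more—now with the test vector fields $[v_{2},v']$ and $[v_{1},v']$ respectively—gives
\[
\bigl[a_{1},s([v_{2},v'])\bigr]_{A}=s\bigl([v_{1},[v_{2},v']]\bigr),\qquad
\bigl[a_{2},s([v_{1},v'])\bigr]_{A}=s\bigl([v_{2},[v_{1},v']]\bigr).
\]
Subtracting and invoking the Jacobi identity for the Lie bracket of vector fields on $M$ collapses the right-hand side to $s\bigl([[v_{1},v_{2}],v']\bigr)$, which is exactly $s([w,v'])$ with $w=[v_{1},v_{2}]$. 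This confirms $\tabrac{a_{1}}{a_{2}}\in\Gamma(A)^{s}$.

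I would also remark briefly that closure under scalar multiplication and addition is immediate from $\R$-bilinearity of the bracket and the linearity of the splitting condition in $v'$, so $\Gamma(A)^{s}$ is a genuine subalgebra rather than merely a subset closed under bracket. \emph{The main obstacle} is largely bookkeeping: one must be careful that the preservation condition is applied with the correct auxiliary vector field at each stage, and that the anchor really sends $\tabrac{a_{1}}{a_{2}}$ to $[v_{1},v_{2}]$ (this follows from $\Gamma(\rho)$ being a Lie algebra morphism, part of the Lie algebroid axioms). The argument is essentially a formal consequence of the Jacobi identity, so no input from the explicit 2-form $\omega$ or the trivialization is needed—the phase/curvature terms in Eq.\ \ref{tabrac_def} never have to be computed, since the splitting-preservation condition is phrased purely in terms of brackets with $s(v')$.
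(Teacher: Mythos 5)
Your proof is correct and is exactly the argument the paper compresses into one line: the paper's proof says the claim "follows directly from the fact that the bracket on $\Gamma(A)$ and the bracket on $\X(M)$ both satisfy the Jacobi identity," and your expansion—rewriting $\bigl[\tabrac{a_{1}}{a_{2}},s(v')\bigr]_{A}$ via the algebroid Jacobi identity, applying the splitting-preservation condition twice, and collapsing with the Jacobi identity for vector fields—is precisely that argument spelled out. No issues.
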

\begin{proof}
Follows directly from the fact that the bracket on $\Gamma(A)$ and the
bracket on $\X(M)$ both satisfy the Jacobi identity.
\end{proof}
It is easy to show that a section $s(v)+f$ preserves the splitting if
and only if $v=v_{f}$. In fact: 
\begin{prop} \label{lie_alg_iso}
The underlying Lie algebra of the Poisson algebra
$\bigl(\cinf(M),\brac{\cdot}{\cdot} \bigr)$ is isomorphic to the Lie algebra
$\bigl(\Gamma(A)^{s},\tabrac{\cdot}{\cdot} \bigr)$.
\end{prop}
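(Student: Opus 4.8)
The plan is to exhibit an explicit isomorphism of Lie algebras and verify it is well-defined, bijective, and bracket-preserving. The natural candidate is the map
\[
\Psi \maps \bigl(\cinf(M),\brac{\cdot}{\cdot}\bigr) \to \bigl(\Gamma(A)^{s},\tabrac{\cdot}{\cdot}\bigr), \qquad
\Psi(f) = s(v_{f}) + f,
\]
where $v_{f}$ is the unique Hamiltonian vector field with $df = -\ip{f}\omega$. First I would verify that $\Psi$ indeed lands in $\Gamma(A)^{s}$: by the remark immediately preceding the proposition, a section $s(v)+f$ preserves the splitting if and only if $v = v_{f}$, so $\Psi(f)$ preserves the splitting by construction. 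The precise check uses Eq.\ \ref{tabrac_def}, which for $s(v)+f$ and an arbitrary $s(v')$ (i.e.\ $f_2 = 0$) gives
\[
\tabrac{s(v)+f}{s(v')} = s\bigl([v,v']\bigr) - v'(f) - \ip{'}\ip{}\omega;
\]
substituting $v = v_{f}$ and using $df = -\ip{f}\omega$ one finds $v'(f) = df(v') = -\omega(v_{f},v') = \iota_{v'}\iota_{v_{f}}\omega$, so the last two terms cancel and $\tabrac{\Psi(f)}{s(v')} = s([v_{f},v'])$, confirming membership in $\Gamma(A)^{s}$.

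Next I would establish that $\Psi$ is a bijection. Injectivity is immediate, since $\Psi(f) = 0$ forces the degree-zero component $f$ to vanish. For surjectivity, take any $a = s(v) + f \in \Gamma(A)^{s}$; the characterization above (membership forces $v = v_{f}$) shows $a = s(v_{f}) + f = \Psi(f)$. I would fill in the one-line proof of that characterization: $a = s(v)+f$ preserves the splitting iff $\tabrac{s(v)+f}{s(v')} = s([v,v'])$ for all $v'$, and Eq.\ \ref{tabrac_def} rewrites the left side as $s([v,v']) - v'(f) - \iota_{v'}\iota_{v}\omega$, so the condition is equivalent to $\iota_{v'}\bigl(df + \iota_{v}\omega\bigr) = 0$ for all $v'$, i.e.\ $df = -\iota_{v}\omega$, which by nondegeneracy of $\omega$ means exactly $v = v_{f}$.

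Finally I would check that $\Psi$ preserves brackets, i.e.\ $\Psi(\brac{f}{g}) = \tabrac{\Psi(f)}{\Psi(g)}$. Applying Eq.\ \ref{tabrac_def} with $v_1 = v_f$, $v_2 = v_g$, $f_1 = f$, $f_2 = g$ gives
\[
\tabrac{s(v_f)+f}{s(v_g)+g} = s\bigl([v_f,v_g]\bigr) + v_f(g) - v_g(f) - \iota_{v_g}\iota_{v_f}\omega.
\]
On the symplectic manifold $\brac{f}{g} = \omega(v_f,v_g) = \iota_{v_g}\iota_{v_f}\omega$, and a standard computation shows $v_f(g) - v_g(f) = 2\,\omega(v_f,v_g)$ while the Hamiltonian vector field of $\brac{f}{g}$ is $[v_f,v_g]$; combining these (being careful with the sign conventions fixed by $df=-\iota_{v_f}\omega$ and the definition $\brac{f}{g}=\omega(v_f,v_g)$) yields $s(v_{\brac{f}{g}}) + \brac{f}{g} = \Psi(\brac{f}{g})$. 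The main obstacle I anticipate is purely bookkeeping: getting the signs consistent among the three conventions in play (the definition of $v_f$, the sign in Eq.\ \ref{tabrac_def}, and the sign in the Poisson bracket) so that the degree-zero and horizontal parts match up exactly. There is no deep difficulty — the nondegeneracy of $\omega$ does all the real work via the characterization of splitting-preserving sections — but the sign reconciliation in the final bracket identity is where care is required.
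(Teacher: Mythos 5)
Your proposal is correct and follows essentially the same route as the paper: the same map $f \mapsto s(v_{f})+f$, the same characterization via Eq.\ \ref{tabrac_def} that $s(v)+f$ preserves the splitting iff $df=-\iota_{v}\omega$ (giving bijectivity), and the same direct computation $v_{f}(g)-v_{g}(f)-\iota_{v_{g}}\iota_{v_{f}}\omega=\omega(v_{f},v_{g})=\brac{f}{g}$ together with $v_{\brac{f}{g}}=[v_{f},v_{g}]$ for bracket preservation. The sign bookkeeping you flag works out exactly as you describe.
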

\begin{proof}
For any vector field $v' \in \X(M)$, it follows from Eq.\
\ref{tabrac_def} that we have $\tabrac{s(v)+f}{s(v')}=s([v,v'])$ if
and only if
\[
v'(f)+\omega(v',v)=0,
\]
and hence $df=-\iota_{v}\omega$.
Therefore the injective map
\[
\phi \maps \cinf(M) \to \Gamma(A)^{s}, \quad \phi(f)=s(v_{f})
+ f
\]
is also surjective. If  $v_{f}$ and $v_{g}$ are Hamiltonian vector fields corresponding to the functions
$f$ and $g$, respectively, then
\begin{align*}
\left [ \phi(f), \phi(g) \right ]_{A} &=\left [s(v_{f}) + f, s(v_{g}) + g
\right ]_{A} \\
&=  s([v_{f},v_{g}]) + \bigl (v_{f}(g) - v_{g}(f) \bigr) 
- \ip{g}\ip{f}\omega \\
&=s([v_{f},v_{g}]) + \omega(v_{f},v_{g})\\
&= \phi([v_{f},v_{g}]).
\end{align*}
\end{proof}

\subsection*{Prequantization and Atiyah algebroids}
Definition \ref{n-plectic_prequant_def} implies that a prequantized symplectic manifold
is an integral symplectic manifold equipped with Deligne 1-cocycle. 
By definition, this 1-cocycle corresponds to 
a collection of 1-forms $\theta_{i} \in
\Omega^{1}(U_{i})$, and $U(1)$-valued functions $g_{ij} \maps
U_{ij} \to U(1)$ defined on a good cover $\{U_{i}\}$ such that
\begin{align*}
\omega&=d\theta_{i} \quad \text{on $U_{i}$},\\
\i(\theta_{j} -\theta_{i}) &= g^{-1}_{ij}dg_{ij} \quad \text{on $U_{ij}$},\\
g_{jk}g^{-1}_{ik}g_{ij}&=1 \quad \text{on $U_{ijk}$}.
\end{align*}

The Deligne 1-cocycle also gives, of course, a trivialization of the
2-form $\omega$, and therefore the transitive Lie algebroid
$(A,\tabrac{\cdot}{\cdot},\rho)$ over $M$ equipped with the
splitting $s\maps TM \to A$. However in this case, the
functions $g_{ij} \maps U_{ij} \to U(1)$ are the transition
functions of a principal $\U(1)$-bundle $P$ with connection. 
Therefore, by identifying $\u(1)$ with
$\i \cdot \R$, we see that $A$ is isomorphic to the \textbf{Atiyah
  algebroid} $TP/U(1)$. A point in $A$ corresponds to a
vector field along the fiber $\pi^{-1}(x)$ that is invariant under the
right $U(1)$ action. Hence a global section of $A$
corresponds to a $U(1)$-invariant vector field on $P$.

Splittings of $0 \to M \times \R \to A \to TM \to 0$
correspond to connection 1-forms on $P$. 
The connection 1-form $\theta \in \Omega^{1}(P)$ corresponding to the
local forms $\theta_{i}$ induces a `left-splitting'
$\hat{\theta} \maps A \to M \times \R$ such that $\hat{\theta} \circ s =0$.
It is straightforward to show that $a \in \Gamma(A)^{s}$ if
and only if 
\[
\L_{a} \theta =0.
\]
That is, a section of the Atiyah algebroid preserves the splitting if
and only if it preserves the corresponding connection on $P$. For a
prequantized symplectic manifold, the Lie algebra
$\Gamma(A)^{s}$ is a Lie sub-algebra of derivations on
$\cinf(P)_{\C}$ and therefore on the global sections of the associated
line bundle of $P$. Proposition \ref{lie_alg_iso} then implies that we
have a faithful representation of the Poisson
algebra $\left(\cinf(M), \brac{\cdot}{\cdot} \right)$.

\subsection*{The Kostant-Souriau central extension} 
If $(M,\omega)$ is a connected symplectic manifold, then we
have a short exact sequence of Lie algebras
\begin{equation} \label{KS1}
0 \to \u(1) \to \cinf(M) \to \Xham \to 0
\end{equation}
The underlying Lie algebra of the Poisson algebra is known as the
Kostant-Souriau central extension of the Lie algebra of Hamiltonian
vector fields \cite{Kostant:1970}. If $\sigma \maps \Xham \to \cinf(M)$ is
a splitting of the underlying sequence of vector spaces, then the
failure of $\sigma$ to be a strict (i.e.\ bracket-preserving) Lie algebra
morphism is measured by the difference
\[
\{\sigma(v_{1}),\sigma(v_{2})\} - \sigma([v_{1},v_{2}])
\]
which represents a degree 2 class in the
Chevalley-Eilenberg cohomology $H^{2}_{\mathrm{CE}}(\Xham,\R)$. This class
can be represented by using the symplectic form. More specifically, pick a point
$x\in M$ and let $c \in \Hom(\Lambda^{2}\Xham,\R)$ be the cochain
given by:
\[
c(v,v')= - \omega(v,v') \vert_{x}, \quad \forall v,v' \in \Xham.
\] 
The fact that $c$ is a cocycle follows from the bracket
$\brac{\cdot}{\cdot}$ satisfying the Jacobi identity.
One can show that the class $[c]$ does not depend on the choice of $x\in
M$.

If $(M,\omega)$ is a prequantized connected symplectic manifold, then
Prop.\ \ref{lie_alg_iso} implies that the `quantized Poisson algebra'
gives an isomorphic central extension
\[
0 \to \u(1) \to \Gamma(A)^{s} \to \Xham \to 0.
\]
This central extension is
responsible for introducing phases into the quantized system. Two
functions $f$ and $f'$ differing by a constant $r \in \u(1)$ will have
the same Hamiltonian vector fields and therefore give the same flows
on $M$. However, their quantizations will give unitary transformations
which differ by a phase $\exp (2\pi \i r )$.

\section{Courant algebroids} \label{courant_sec}
Now, we begin our investigation of the 2-plectic case.
First, we recall some basic facts and examples of Courant
algebroids and then we proceed to describe \v{S}evera's classification of exact Courant
algebroids. The Courant algebroid will act as the 2-plectic analogue
of the Atiyah algebroid.

There are several equivalent definitions of a Courant
algebroid found in the literature. The following
definition, due to Roytenberg \cite{Roytenberg_thesis}, is equivalent
to the original definition given by Liu, Weinstein, and Xu \cite{Liu:1997}.
\begin{definition} \label{courant_algebroid} A {\bf Courant algebroid}
  is a vector bundle $C \to M$ equipped with a nondegenerate symmetric
  bilinear form $\innerprod{\cdot}{\cdot}$ on the bundle, a
  skew-symmetric bracket $\cbrac{\cdot}{\cdot}$ on $\Gamma(C)$, and a
  bundle map (called the {\bf anchor}) $\rho \maps C \to TM$ such that
  for all $ e_{1},e_{2},e_{3}\in \Gamma (C)$ and for all $f,g \in
  \cinf(M)$ the following properties hold:
\begin{enumerate}
\item{$\cbrac{e_1}{\cbrac{e_2}{e_3}} - \cbrac{\cbrac{e_1}{e_2}}{e_3} 
-\cbrac{e_2}{\cbrac{e_1}{e_3}}=-DT(e_{1},e_{2},e_{3}),$  
}

\item{$\rho(\cbrac{e_{1}}{e_{2}})=[\rho(e_{1}),\rho(e_{2})]$, \label{axiom2}
} 

\item{$\cbrac{e_{1}}{fe_{2}}=f\cbrac{e_{1}}{e_{2}}+\rho (e_{1})(f)e_{2}-\half
\langle e_{1},e_{2}\rangle {D}f$,
}

\item{$\langle {D}f,{D}g\rangle =0$,}

\item{$\rho(e_{1})\left(\langle
e_{2},e_{3}\rangle\right) =\langle \cbrac{e_{1}}{e_{2}}+\half {D}\langle
e_{1},e_{2}\rangle ,e_{3}\rangle +\langle e_{2},\cbrac{e_{1}}{e_{3}}+\half
{D}\langle e_{1},e_{3}\rangle \rangle$,}
\end{enumerate}
\noi where $[\cdot,\cdot]$ is the Lie bracket of vector fields,
$D \maps \cinf(M) \to \Gamma(C)$ is the map defined by $\innerprod{D f}{e}=\rho(e)f$, and 
\[
T(e_{1},e_{2},e_{3})= \frac{1}{6}
\left(\innerprod{\cbrac{e_1}{e_2}}{e_3} +
\innerprod{\cbrac{e_3}{e_1}}{e_2} + \innerprod{\cbrac{e_2}{e_3}}{e_1} \right). 
\]
\end{definition}
The bracket in Definition \ref{courant_algebroid} is skew-symmetric,
but the first property implies that it needs only to satisfy the
Jacobi identity ``up to $D T$''. Note that the vector bundle $C \to M$ may
be identified with $C^{\ast} \to M$ via the bilinear form
$\innerprod{\cdot}{\cdot}$ and therefore we have the dual map
\[ \rho^{\ast} \maps T^{\ast}M \to C.\] 
Hence the map $D$ is simply the pullback of the de Rham differential by $\rho^{\ast}$. 

There is a commonly used alternate definition given by \v{S}evera \cite{Severa1}
for Courant algebroids which involves a bracket operation on sections
that satisfies a Jacobi identity but is not skew-symmetric.
\begin{definition}\label{alt_courant_algebroid}
  A {\bf Courant algebroid} is a vector bundle $ C\rightarrow M $
  together with a nondegenerate symmetric bilinear form $ \innerprod
  {\cdot }{\cdot } $ on the bundle, a bilinear operation $\dbrac{\cdot}{\cdot}$ on
  $ \Gamma (C) $, and a bundle map $ \rho \maps C\rightarrow TM $ such
  that for all $ e_{1},e_{2},e_{3}\in \Gamma (C)$ and for all $f \in
  \cinf(M)$ the following properties hold:
\begin{enumerate}
\item{$
    \dbrac{e_{1}}{\dbrac{e_{2}}{e_{3}}}=\dbrac{\dbrac{e_{1}}{e_{2}}}{e_{3}}+\dbrac{e_{2}}{\dbrac{e_{1}}{e_{3}}}$,}
\item {$\rho (\dbrac{e_{1}}{e_{2}})=[\rho (e_{1}),\rho (e_{2})]$,} 
\item {$ \dbrac{e_{1}}{fe_{2}}=f\dbrac{e_{1}}{e_{2}}+\rho (e_{1})(f)e_{2}$,}
\item {$ \dbrac{e_{1}}{e_{1}}=\half D \innerprod {e_{1}}{e_{1}}$,}
\item {$ \rho (e_{1})\left(\innerprod {e_{2}}{e_{3}}\right)=\innerprod
    {\dbrac{e_{1}}{e_{2}}}{e_{3}}+\innerprod {e_{2}}{\dbrac{e_{1}}{e_{3}}}$,}
\end{enumerate}
where $[\cdot,\cdot]$ is the Lie bracket of vector fields, and
$D \maps \cinf(M) \to \Gamma(C)$ is the map defined by $\innerprod{D f}{e}=\rho(e)f$.
\end{definition}
Roytenberg \cite{Roytenberg_thesis} showed that $C \to M$ is a Courant
algebroid in the sense of Definition \ref{courant_algebroid} with
bracket $\cbrac{\cdot}{\cdot}$, bilinear form
$\innerprod{\cdot}{\cdot}$ and anchor $\rho$ if and only if $C \to M$
is a Courant algebroid in the sense of Definition
\ref{alt_courant_algebroid} with the same anchor and bilinear form but
with bracket $\dbrac{\cdot}{\cdot}$ given by
\begin{equation}
 \dbrac{e_{1}}{e_{2}} = \cbrac{e_{1}}{e_{2}} + \half D
 \innerprod{e_{1}}{e_{2}}. \label{dorfman}
\end{equation}
All Courant algebroids in this chapter are considered to be Courant algebroids in
the sense of Definition \ref{courant_algebroid}. We introduced
Definition \ref{alt_courant_algebroid} mainly to connect our
discussion here with previous results in the literature.

\begin{example} \label{standard}
An important example of a Courant algebroid is the \textbf{standard
  Courant algebroid} $C=TM \oplus T^{\ast}M$ over any manifold
  $M$ equipped with the \textbf{standard Courant
  bracket}:
\begin{equation} 
\cbrac{v_{1} + \alpha_{1}}{v_{2} + \alpha_{2}}=
[v_{1},v_{2}] +\L_{v_{1}}\alpha_{2}- \L_{v_{2}}\alpha_{1} -
\half d\innerprodm{v_{1} + \alpha_{1}}{v_{2} + \alpha_{2}}, \label{standard_bracket}
 \end{equation}
where
\begin{equation} \label{skew}
\innerprodm{v_{1} + \alpha_{1}}{v_{2} + \alpha_{2}} = \ip{1}\alpha_{2}
- \ip{2}\alpha_{1}
\end{equation}
is the \textbf{standard skew-symmetric pairing}.
The bilinear form is given by the \textbf{standard symmetric pairing}:
\begin{equation}
\innerprodp{v_{1} + \alpha_{1}}{v_{2} + \alpha_{2}}=
\ip{1}\alpha_{2} + \ip{2}\alpha_{1}. \label{standard_innerprod}
\end{equation}
The anchor $\rho \maps C \to TM$ is the projection
map, and $D=d$ is the de Rham differential. The bracket
$\cbrac{\cdot}{\cdot}$ is the skew-symmetrization of the \textbf{standard
  Dorfman bracket}:
\begin{equation} \label{std_dorf}
\dbrac{v_{1} + \alpha_{1}}{v_{2} + \alpha_{2}}=
[v_{1},v_{2}] + \L_{v_{1}}\alpha_{2} - \ip{2}d\alpha_{1},
\end{equation}
which plays the role of the bracket given in
Definition \ref{alt_courant_algebroid}. 
\end{example}

The standard Courant algebroid is the prototypical example of
an \textbf{exact Courant algebroid} \cite{Bressler-Chervov}.
\begin{definition} \label{exact} A Courant algebroid $C \to M$ with
  anchor map $\rho \maps C \to TM$ is {\bf exact} iff
\[
0 \to T^{\ast}M \stackrel{\rho^{\ast}}{\to} C \stackrel{\rho}{\to} TM
\to 0\]
is an exact sequence of vector bundles.
\end{definition}

\subsection*{The \v{S}evera class of an exact Courant algebroid}
\v{S}evera's classification \cite{Severa1} originates in the idea that a particular
kind of splitting of the above short exact sequence corresponds to defining a
connection.
\begin{definition}\label{connection}
A {\bf splitting} of an exact Courant algebroid $C$ over a manifold $M$ is a map of
vector bundles $s \maps TM \to C $ such that
\begin{enumerate}
\item{ $\rho \circ s = \id_{TM}$,}
\item{$\innerprod{s(v_{1})}{s(v_{2})}=0$ for all $v_{1},v_{2} \in TM$,}
\end{enumerate}
where $\rho \maps C \to TM$ and $\innerprod{\cdot}{\cdot}$ are the
anchor and bilinear form, respectively.
\end{definition}
In other words, a splitting of an exact Courant algebroid is an isotropic splitting of
the sequence of vector bundles. Bressler and Chervov call splittings `connections'
\cite{Bressler-Chervov}.
If $s$ is a splitting and $B \in
\Omega^2(M)$ is a 2-form then one can construct a new splitting:
\begin{equation}
\left(s+B\right)(v)=s(v)+ \rho^{\ast}B(v,\cdot). \label{2-form_action}
\end{equation}
Furthermore, one can show that any two splittings on an exact Courant
algebroid must differ by a 2-form on
$M$ in this way. Hence the space of splittings on an exact Courant algebroid
is an affine space modeled on the vector space of 2-forms
$\Omega^{2}(M)$ \cite{Bressler-Chervov}.

The failure of a splitting to preserve the bracket gives a suitable
notion of `curvature'. Given vector fields $v_{1}, v_{2}, v_{3}$ on $M$, it can be shown that
the function
\[
\omega(v_{1},v_{2},v_{3})= \innerprod{\cbrac{s\left(v_{1}\right)}{s\left(v_{2}\right)}}{s(v_{3})}
\]
defines a closed 3-form on $M$ \cite{Bressler-Chervov}.
This is the curvature 3-form of an exact
Courant algebroid over $M$. It gives a well-defined cohomology class in
$H^{3}_{\mathrm{DR}}(M)$, independent of the choice of
splitting. 
\begin{definition}[\cite{Gualtieri:2007}] \label{Severa_class} The {\bf \v{S}evera class}
  of an exact Courant algebroid with bracket $\cbrac{\cdot}{\cdot}$ and bilinear
  form $\innerprod{\cdot}{\cdot}$ is the cohomology
  class $[-\omega] \in H^{3}_{\mathrm{DR}}(M)$, where
\[
\omega(v_{1},v_{2},v_{3})=
\innerprod{\cbrac{s\left(v_{1}\right)}{s\left(v_{2}\right)}}{s(v_{3})}.
\]
\end{definition}

\section{Courant algebroids and $2$-plectic
  geometry} \label{geometric} 
In this section, we describe a relationship between Courant algebroids
and 2-plectic manifolds which can be understood as the higher analogue 
of the relationship between Atiyah algebroids and symplectic manifolds. 

We begin by recalling how to
explicitly construct an exact Courant algebroid with \v{S}evera class
$[\omega]$. This is the 3-form version of the construction that gives
a transitive Lie algebroid over a pre-symplectic manifold, which was
previously discussed in Sec. \ref{symplectic_sec}.  The approach
given here is essentially identical to those given by Gualtieri
\cite{Gualtieri:2007}, Hitchin \cite{Hitchin:2004ut}, and \v{S}evera
\cite{Severa1} .

Let $(M,\omega)$ be a manifold equipped with a closed 3-form.
A trivialization of $\omega$
is an open cover$\{U_{i}\}$ of $M$ equipped with 2-forms  
$B_{i} \in \Omega^{2}(U_{i})$, and 1-forms $A_{ij} \in \Omega^{1}(U_{ij})$ 
on intersections such that
\begin{equation} \label{cocycle2}
\begin{split}
 \omega \vert_{U_{i}} &= dB_{i} \\
(B_{j}-B_{i}) \vert_{U_{ij}} &= dA_{ij}.
\end{split}
\end{equation}
Given such a trivialization, over each open set $U_{i}$ consider the
bundle $C_{i}=TU_{i} \oplus
T^{\ast}U_{i}\to U_{i}$ equipped with the standard pairing
\begin{equation} \label{twist_innerprod}
\innerprodp{v_{1} + \alpha_{1}}{v_{2} + \alpha_{2}}_{i}= \ip{1}\alpha_{2}
+ \ip{2}\alpha_{1},
\end{equation}
$v_{1},v_{2} \in \X(U_{i}),$  $\alpha_{1},\alpha_{2} \in\Omega^{1}(U_{i}),$
which has signature $(n,n)$. On double intersections, it is
easy to see that 
\[
\innerprodp{v_{1} + \ip{1}dA_{ij} +\alpha_{1}}{v_{2} + \ip{2}dA_{ij} +
  \alpha_{2}}_{i} = \innerprodp{v_{1} + \alpha_{1}}{v_{2} + \alpha_{2}}_{i}.
\]
Hence the 2-forms $\{dA_{ij}\}$ generate transition functions
\begin{align*}
G_{ij} \maps U_{ij} \to SO(n,n),\\
G_{ij}(x)= 
\begin{pmatrix}
1 & 0\\
dA_{ij} \vert_{x} & 1
\end{pmatrix},
\end{align*}
which satisfy the cocycle conditions on $U_{ijk}$ by virtue of Eq.\
\ref{cocycle2}. Therefore, we have over  $M$ the vector bundle
\[
C= \coprod_{x \in M} T_{x}U_{i} \oplus T^{\ast}_{x}U_{i} / \sim,
\]
equipped with a bilinear form denoted as $\innerprodp{\cdot}{\cdot}$.
$C$ sits in the exact sequence
\[
0 \to T^{\ast}M \stackrel{\jmath}{\rightarrow} C \stackrel{\rho}\to TM \to 0,
\]
where the anchor $\rho$ is induced by the projection $T^{\ast}U_{i} \oplus TU_{i}
\to TU_{i}$, and $\jmath$ is the inclusion.

The 2-forms $B_{i}$ induce a bundle map  $s \maps TM \to C$
\begin{equation} \label{canonical_split}
s(v_{x}) = v_{x} - B_{i}(v_{x}) \quad \text{if $x\in U_{i}$},
\end{equation}
It follows from Eq.\ \ref{cocycle2} that $s$ is well-defined when $x \in
U_{ij}$. It is easy to see that this map is an isotropic splitting
(Def.\ \ref{connection}).
Hence every section $e\in \Gamma(C)$ can be uniquely expressed
as
\[
e= s(v) + \alpha,
\]
for some $v\ \in \X(M)$ and $\alpha \in \Omega^{1}(M)$.  As
before, we use $s$ to also denote the map $\Gamma(s) \maps \X(M) \to
\Gamma(C)$. The anchor map is just
\begin{equation} \label{twist_anchor}
\rho \bigl(s(v)+\alpha \bigr)=v.
\end{equation}

Given sections $s(v_{1})+\alpha_{1}$, $s(v_{2}) + \alpha_{2} \in
\Gamma(C)$, a local calculation using Eq.\
\ref{canonical_split}  gives
\begin{equation} \label{split_innerprod}
\begin{split}
\innerprodp{s(v_{1})+\alpha_{1}}{s(v_{2})+\alpha_{2}}
&= \ip{1}\alpha_{2} - \ip{1}\ip{2}B_{i} + \ip{2}\alpha_{1} -
\ip{2}\ip{1}B_{i}\\
&= \innerprodp{v_{1}+\alpha_{1}}{v_{2}+\alpha_{2}}.
\end{split}
\end{equation}
The above equality holds, in fact, for any splitting $s' \maps TM \to
C$, since $s-s'$ is a 2-form on $M$ and therefore skew-symmetric.
The bracket on $\Gamma(C)$ is defined over the open set
$U_{i}$ by:
\[
\tcbrac{s(v_{1}) + \alpha_{1}}{s(v_{2}) +
  \alpha_{2}}\vert_{U_{i}} =\sbrac{s(v_{1}) + \alpha_{1}}{s(v_{2}) +
  \alpha_{2}}_{i}
\]
where $\sbrac{\cdot}{\cdot}_{i}$ is the standard Courant bracket
(\ref{standard_bracket}) on $C_{i}$. Since the 2-forms $\{dA_{ij}\}$
are closed, it follows by direct computation that on double
intersections $U_{ij}$:
\[
\sbrac{G_{ij} (v_{1} + \alpha_{1})}{G_{ij}(v_{2} +
  \alpha_{2})}_{i} = G_{ij} \bigl(\sbrac{v_{1} + \alpha_{1}}{v_{2} +
  \alpha_{2}}_{i} \bigr).
\]
Hence the bracket $\tcbrac{\cdot}{\cdot}$ is indeed globally well-defined. 
Using the local definition of the bracket and the splitting, as well
as the fact that $dB_{i}=\omega$, it is easy to show that
\begin{equation} \label{twist_bracket}
\begin{split}
\tcbrac{s(v_{1}) + \alpha_{1}}{s(v_{2}) +
  \alpha_{2}} &= s \bigl ([v_{1},v_{2}] \bigr) +
\L_{v_{1}}\alpha_{2}- \L_{v_{2}}\alpha_{1} \\
 & \quad -\half d\innerprodm{v_{1} + \alpha_{1}}{v_{2} + \alpha_{2}}
-\ip{2}\ip{1}\omega.
\end{split}
\end{equation}
The bracket $\tcbrac{\cdot}{\cdot}$ is called the \textbf{twisted Courant bracket}. 
A analogous construction using the standard Dorfman bracket (\ref{std_dorf}) 
on $C_{i}$ gives the \textbf{twisted Dorfman
  bracket}:
\begin{equation} \label{twist_bracket_dorf}
\tdbrac{s(v_{1}) + \alpha_{1}}{s(v_{2}) + \alpha_{2}}=
s\bigl([v_{1},v_{2}]\bigr) + \L_{v_{1}}\alpha_{2} - \ip{2}d\alpha_{1} -\ip{2}\ip{1}\omega.
\end{equation}
These brackets were studied in detail by \v{S}evera and Weinstein
\cite{Severa-Weinstein,Severa1}.

It is straightforward to check that $C \to M$ equipped with the
aforementioned bilinear form, anchor, and bracket
$\tcbrac{\cdot}{\cdot}$ is an exact Courant algebroid (Definition
\ref{courant_algebroid}). Just as in Lie algebroid case, the construction of $C$
is independent of the choice of trivialization up to a splitting-preserving
isomorphism.

A direct calculation shows that
\[
-\omega(v_{1},v_{2},v_{3}) =
\innerprodp{\tcbrac{s\left(v_{1}\right)}{s\left(v_{2}\right)}}{s(v_{3})}.
\]
Hence, the Courant algebroid $C$ has
\v{S}evera class $[\omega]$. Of course, we are interested in the
special case when $\omega$ is a 2-plectic structure.
We summarize the above discussion with the following proposition:
\begin{prop} \label{2plectic_courant}
Let $(M,\omega)$ be a $2$-plectic manifold. Up to isomorphism, there
exists a unique exact Courant
algebroid $C$ over $M$, with
bilinear form $\innerprodp{\cdot}{\cdot}$,  
anchor map $\rho$, and 
bracket $\tcbrac{\cdot}{\cdot}$ 
given in Eqs.\ \ref{twist_innerprod}, \ref{twist_anchor}, and
\ref{twist_bracket}, respectively,
and  equipped with a splitting whose curvature is $-\omega$.
\end{prop}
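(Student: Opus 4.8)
The plan is to establish existence by the explicit patching construction developed in this section, and to deduce uniqueness from \v{S}evera's classification \cite{Severa1} combined with the affine structure on the space of splittings. For existence, I would first fix a good cover $\{U_i\}$ of $M$. Because $\omega$ is closed, applying the Poincar\'e lemma on the contractible sets $U_i$ and $U_{ij}$ produces $2$-forms $B_i \in \Omega^2(U_i)$ and $1$-forms $A_{ij} \in \Omega^1(U_{ij})$ satisfying the trivialization conditions (\ref{cocycle2}). Over each patch I take the standard Courant algebroid $C_i = TU_i \oplus T^\ast U_i$ of Example \ref{standard} and glue the $C_i$ along the transition maps $G_{ij}$. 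The compatibility of the pairing, anchor, and bracket under $G_{ij}$ is exactly what was checked in the running computation above: the shift by $dA_{ij}$ is skew, so it preserves $\innerprodp{\cdot}{\cdot}$, and each $dA_{ij}$ is closed, so it preserves the standard Courant bracket; the cocycle identity $G_{ij}G_{jk}=G_{ik}$ follows from $dA_{ij}+dA_{jk}=dA_{ik}$, a consequence of (\ref{cocycle2}). This yields a vector bundle $C$ with a globally defined bilinear form, anchor $\rho$, and bracket $\tcbrac{\cdot}{\cdot}$; that $(C,\innerprodp{\cdot}{\cdot},\rho,\tcbrac{\cdot}{\cdot})$ obeys the five axioms of Definition \ref{courant_algebroid} is a local assertion, which I would reduce patchwise to the corresponding property of the standard Courant algebroid. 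Exactness of $0 \to T^\ast M \to C \to TM \to 0$ is built into the definitions of $\rho$ and the inclusion.

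Next I would verify the splitting. The bundle map $s(v_x)=v_x-B_i(v_x)$ is well defined because on $U_{ij}$ the difference $B_j-B_i=dA_{ij}$ coincides with the shift implemented by $G_{ij}$, and it is isotropic by the computation (\ref{split_innerprod}), using that any two splittings differ by a $2$-form and hence by a skew term. Finally, the local formula (\ref{twist_bracket}) together with $dB_i=\omega$ gives $\innerprodp{\tcbrac{s(v_1)}{s(v_2)}}{s(v_3)}=-\omega(v_1,v_2,v_3)$, so the curvature of $s$ is $-\omega$ as claimed; in particular the \v{S}evera class of $C$ (Definition \ref{Severa_class}) is $[\omega]$.

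For uniqueness, let $(C',s')$ be any exact Courant algebroid over $M$ with a splitting whose curvature is $-\omega$, so that its \v{S}evera class is also $[\omega]$. \v{S}evera's classification then supplies an isomorphism of exact Courant algebroids $F \maps C \iso C'$. To promote $F$ to a splitting-preserving isomorphism, I would compare the two splittings $F\circ s$ and $s'$ of $C'$. Since the space of splittings is affine over $\Omega^2(M)$ (as recorded after Definition \ref{connection}), they differ by a $2$-form $B$ via (\ref{2-form_action}), and because changing a splitting by $B$ alters its curvature by $dB$, the equality of their curvatures forces $dB=0$. A closed $B$ determines a $B$-field automorphism of $C'$ carrying $F\circ s$ to $s'$, and composing it with $F$ produces the desired splitting-preserving isomorphism. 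Independence of the construction of $C$ from the initial trivialization is the special case of this argument in which $C'=C$ and both splittings arise from trivializations of the same $\omega$.

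I expect the uniqueness step to be the main obstacle: the existence side is essentially assembling computations already carried out in the text, whereas upgrading the abstract \v{S}evera isomorphism to one that respects the chosen splittings requires the affineness of the splitting space, the transformation law for curvature under a change of splitting, and the fact that $B$-field transforms are Courant automorphisms exactly when $B$ is closed. Care must also be taken with sign conventions so that the curvature of $s$ comes out to $-\omega$ rather than $+\omega$.
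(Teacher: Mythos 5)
Your proof is correct. The existence half coincides with the paper's own treatment: Proposition \ref{2plectic_courant} is stated as a summary of the patching construction that precedes it (trivialization of $\omega$, gluing the standard Courant algebroids $C_i$ by the transition maps $G_{ij}$, the splitting $s(v)=v-B_i(v)$, and the computation of its curvature), and your verification of the cocycle identity, of the invariance of the pairing and bracket under $G_{ij}$, and of the curvature formula are exactly those steps. Where you go beyond the paper is uniqueness: the text merely asserts, by analogy with the Lie algebroid case, that the construction is independent of the trivialization up to splitting-preserving isomorphism, whereas you supply an actual argument --- \v{S}evera's classification gives an isomorphism $F$ between any two exact Courant algebroids with class $[\omega]$, the two splittings $F\circ s$ and $s'$ differ by a $2$-form $B$ whose exterior derivative measures the change in curvature, equality of curvatures forces $dB=0$, and a closed $B$-field transform is a Courant automorphism carrying one splitting to the other. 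This is the standard and correct way to close that gap, and it in fact proves the stronger statement that \emph{any} exact Courant algebroid equipped with a splitting of curvature $-\omega$ is isomorphic, as a split algebroid, to the one constructed. One small imprecision: two different trivializations of $\omega$ produce a priori two different glued bundles, so independence of the trivialization is an instance of the general two-algebroid comparison rather than ``the special case $C'=C$''; this does not affect the validity of the argument.
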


\subsection*{Lie $2$-algebras from Courant algebroids}
Next, we describe how a Courant algebroid gives a Lie 2-algebra. From
here on, we shall describe a Lie 2-algebra using the terminology given
in Prop.\ \ref{L2A}, i.e.\ as a 2-term chain complex, equipped with a
bracket and a Jacobiator.

Recall that the space of global sections of a transitive Lie
algebroid associated to a closed 2-form gives  a Lie algebra.
As we shall see, the global sections of a  Courant algebroid
form a Lie 2-algebra. Given any Courant algebroid $C \rightarrow M$ with bilinear form
$\innerprod{\cdot}{\cdot}$, bracket $\cbrac{\cdot}{\cdot}$, and anchor
$\rho \maps C \to TM$, one can construct a
$2$-term chain complex 
\[
L \quad = \quad \cinf(M) \stackrel{D}{\rightarrow} \Gamma(C),
\]
with differential $D=\rho^{\ast}d$ where $d$ is the de Rham differential. The bracket
$\cbrac{\cdot}{\cdot}$ on global sections can be extended to a chain
map $\sbrac{\cdot}{\cdot} \maps L \tensor L \to L$. If $e_1,e_2$ are
degree 0 chains then $\sbrac{e_{1}}{e_{2}}$ is the original bracket.
If $e$ is a degree 0 chain and $f,g$ are degree 1 chains, then we
define:
\begin{align*}
\sbrac{e}{f} &= -\sbrac{f}{e} = \half \innerprod{e}{D f}  \\
\sbrac{f}{g}&=0.
\end{align*}   
It was shown by Roytenberg and Weinstein \cite{Roytenberg-Weinstein}
that this extended bracket gives a $L_{\infty}$-algebra. Roytenberg's
later work \cite{Roytenberg_graded,Roytenberg_L2A} implies that a
brutal truncation of this $L_{\infty}$-algebra is a Lie 2-algebra
whose underlying complex is $L$.  
For the Courant algebroid
$C$ associated to a 2-plectic manifold, their result implies:
\begin{theorem}\label{courant_L2A}
If $C$ is the exact Courant algebroid given in Proposition \ref{2plectic_courant}
then there is a  Lie $2$-algebra  
$L_{\infty}(C)=(L,\blankbrac, J)$ where:
\begin{itemize}
\item $L_{0}=\Gamma(C)$,
\item $L_{1}=\cinf(M)$,
\item the differential $L_{1} \stackrel{d}{\to} L_{0}$ is the de Rham differential
\item{the bracket $\sbrac{\cdot}{\cdot}$ is 
\[
[e_{1},e_{2}]= \tcbrac{e_{1}}{e_{2}} \quad  \text{in degree 0}
\]
and
\[
  [e,f]=-[f,e]=\half \innerprodp{e}{df} \quad \text{in degree 1},
\]
}
\item the Jacobiator is the linear map $J\maps \Gamma(C) \tensor \Gamma(C) \tensor \Gamma(C) \to \cinf(M)$ defined by
\begin{align*} 
J(e_{1},e_{2},e_{3}) &= -T(e_{1},e_{2},e_{3})\\
&=-\frac{1}{6}
\left(\innerprodp{\tcbrac{e_1}{e_2}}{e_3} +
\innerprodp{\tcbrac{e_3}{e_1}}{e_2} \right. \\
 & \left.\quad + \innerprodp{\tcbrac{e_2}{e_3}}{e_1} \right). 
\end{align*}
\end{itemize}
\end{theorem}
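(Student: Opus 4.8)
The plan is to derive Theorem~\ref{courant_L2A} as a specialization of the general construction of Roytenberg and Weinstein \cite{Roytenberg-Weinstein}, which produces an $L_{\infty}$-algebra from any Courant algebroid, together with Roytenberg's observation \cite{Roytenberg_graded,Roytenberg_L2A} that a two-term truncation of that $L_{\infty}$-algebra is a Lie $2$-algebra. Concretely, I would fix the data of Proposition~\ref{2plectic_courant}, namely the exact Courant algebroid $C$, its anchor $\rho$, pairing $\innerprodp{\cdot}{\cdot}$, and twisted bracket $\tcbrac{\cdot}{\cdot}$, form the complex $L_{1}=\cinf(M)\xrightarrow{D}L_{0}=\Gamma(C)$ with $D=\rho^{\ast}d$, extend the bracket by the stated formulas, and set $J=-T$. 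The remaining task is to verify that this triple $(L,\blankbrac,J)$ satisfies the axioms of a Lie $2$-algebra in the sense of Proposition~\ref{L2A}, each of which I expect to translate into one of the five defining identities of Definition~\ref{courant_algebroid}.

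The verification would go axiom by axiom. Skew-symmetry of $\blankbrac$ in degree $0$ is immediate from that of $\tcbrac{\cdot}{\cdot}$, and in the mixed degree it is built into the definition $[e,f]=-[f,e]=\half\innerprodp{e}{Df}$. For the chain-map property I would use that, since $\innerprodp{e}{Df}=\rho(e)(f)$, the degree-$1$ bracket is $[e,f]=\half\rho(e)(f)$, so I must check $D[e,f]=\tcbrac{e}{Df}$; this reduces to the identity $\tcbrac{e}{Dg}=\half D\!\left(\rho(e)g\right)$, which follows from Courant axioms~2,~3, and~5 (and the relation \eqref{dorfman} between the skew and Dorfman brackets). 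Next, skew-symmetry of $J$ follows from the cyclic symmetry of $T$, and the chain-homotopy property on a degree-$0$ triple is exactly axiom~1: since the inputs lie in $L_{0}$ the term $J\circ d$ vanishes, leaving $\tcbrac{e_{1}}{\tcbrac{e_{2}}{e_{3}}}-\tcbrac{\tcbrac{e_{1}}{e_{2}}}{e_{3}}-\tcbrac{e_{2}}{\tcbrac{e_{1}}{e_{3}}}=-DT(e_{1},e_{2},e_{3})=D\,J(e_{1},e_{2},e_{3})$. When a single input is a degree-$1$ chain, the corresponding homotopy identity would come from axiom~5 applied to the definition of the degree-$1$ bracket.

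The main obstacle, and the step demanding the most care, will be the coherence relation Equation~\ref{big_J} for the Jacobiator; this is the pentagon-type condition that genuinely distinguishes a Lie $2$-algebra from a mere skew bracket equipped with a Jacobiator homotopy, and it has no analogue in the Lie algebroid discussion of Section~\ref{symplectic_sec}. My approach would be to expand both sides in terms of $T$ and the pairing $\innerprodp{\cdot}{\cdot}$ and show the equality is forced by iterating axiom~1 while using axioms~2 and~5 to commute anchors past brackets and pairings; since this is precisely the content of Roytenberg's truncation theorem, the cleanest route is to invoke \cite{Roytenberg_L2A} for the general coherence identity and merely record that our $C$, $\tcbrac{\cdot}{\cdot}$, and $T$ are of the required form. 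Finally, I would note that the explicit twisted bracket and pairing of Eqs.\ \ref{twist_innerprod} and \ref{twist_bracket} ensure that $\blankbrac$ and $J=-T$ are assembled entirely from the $2$-plectic data of $(M,\omega)$, so that the abstract structure maps agree with the formulas displayed in the statement.
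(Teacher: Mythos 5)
Your proposal is correct and follows essentially the same route as the paper: the paper's own proof simply invokes Roytenberg and Weinstein's construction of an $L_{\infty}$-algebra from a Courant algebroid together with Roytenberg's truncation result (citing Example 5.4 of \cite{Roytenberg_L2A} and Section 4 of \cite{Roytenberg_graded}), noting that $l_{2}$ vanishes in degree $>1$ and $l_{3}$ in degree $>0$ so the structure descends to the two-term complex $\cinf(M)\stackrel{D}{\to}\Gamma(C)$. The additional axiom-by-axiom checks you sketch (e.g.\ $\tcbrac{e}{Df}=\half D(\rho(e)f)$) are correct but are not carried out in the paper, which relies entirely on the cited results.
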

More precisely, the theorem follows from Example 5.4 of
\cite{Roytenberg_L2A} and Section 4 of \cite{Roytenberg_graded}. On
the other hand, the original construction of Roytenberg and Weinstein
gives a $L_{\infty}$-algebra on the complex:
\[
0 \rightarrow \ker D \stackrel{\iota}{\rightarrow} \cinf(M)
\stackrel{D}{\rightarrow}\Gamma(C),
\]
with trivial structure maps $l_{n}$ for $n > 3$. Moreover, the map
$l_{2}$ (corresponding to the bracket $\sbrac{\cdot}{\cdot}$ given above) is trivial in
degree $>1$ and the map $l_{3}$ (corresponding to the Jacobiator $J$) is
trivial in degree $>0$. Hence these maps induce the above Lie 2-algebra
structure on $\cinf(M)\stackrel{D}{\rightarrow}\Gamma(C)$.

\subsection*{The algebraic relationship between 2-plectic and
  Courant} 
Associated to any 2-plectic manifold $(M,\omega)$, is a Lie 2-algebra $\Lie(M,\omega)$
(Thm.\ \ref{main_thm}). In Prop.\ \ref{semistrict}, we described this
Lie 2-algebra as a 2-term chain complex $L =(L_{1} \xto{d} L_{0})$
equipped with a bracket $\blankbrac$ and Jacobiator $J$ where:
\begin{itemize}
\item $L_{0} =\hamn{1}$ is the space of Hamiltonian 1-forms,
\item $L_{1}=\cinf(M)$,
\item the differential $L_{1} \stackrel{d}{\to} L_{0}$ is the de Rham differential,
\item the bracket $\blankbrac$ is $\brac{\alpha}{\beta}=\omega(v_{\alpha},v_{\beta},\cdot)$ in degree 0
  and trivial otherwise,
\item the Jacobiator is given by the linear map $J\maps \ham \tensor \ham \tensor 
\ham \to \cinf$, where $J(\alpha,\beta,\gamma) = \omega(v_{\gamma},v_{\beta},v_{\alpha})$.
\end{itemize}
\noi Also associated to $(M,\omega)$, is the exact
Courant algebroid
$(C,\tcbrac{\cdot}{\cdot},\innerprodp{\cdot}{\cdot},\rho)$ described
in Prop.\ \ref{2plectic_courant},
equipped with a splitting $s \maps TM \to C$ whose curvature is $-\omega$. 
From this Courant algebroid, we obtain the Lie 2-algebra $\Lie(C)$
described in Thm.\ \ref{courant_L2A}.

We now describe the relationship between $\Lie(M,\omega)$ and $\Lie(C)$.
We understand this as the 2-plectic analogue 
of the relationship described in Sec.\ \ref{symplectic_sec}
between the Poisson algebra of a symplectic
manifold and the Lie algebra associated to the transitive Lie algebroid over the manifold.
\begin{theorem} \label{main_courant_thm} Let $(M,\omega)$ be a $2$-plectic
  manifold and let $C$ be its corresponding Courant
  algebroid. Let $L_{\infty}(M,\omega)$ and $L_{\infty}(C)$ be the 
  Lie 2-algebras corresponding to $(M,\omega)$ and $C$,
  respectively.  There exists a morphism of Lie 2-algebras embedding
  $L_{\infty}(M,\omega)$ into $L_{\infty}(C)$.
\end{theorem}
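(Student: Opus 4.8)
The plan is to construct an explicit morphism $(\phi, \Phi) \maps \Lie(M,\omega) \to \Lie(C)$ in the sense of Definition \ref{homo} and then verify it is injective. Since both Lie 2-algebras share the same degree 1 space $\cinf(M)$ and the same differential, the natural guess is to build $\phi$ from the splitting $s \maps TM \to C$. In degree 0, a Hamiltonian 1-form $\alpha$ has an associated Hamiltonian vector field $v_\alpha$ satisfying $d\alpha = -\ip{\alpha}\omega$; I would define
\[
\phi_0 \maps \hamn{1} \to \Gamma(C), \quad \phi_0(\alpha) = s(v_\alpha) + \alpha,
\]
and take $\phi_1 = \id$ on $\cinf(M)$. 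First I would check that $\phi=(\phi_0,\phi_1)$ is a chain map, i.e.\ that $\phi_0(d f) = D(f)$ for $f \in \cinf(M)$. Here the left side is $s(v_{df}) + df = s(0) + df = df$ since the Hamiltonian vector field of an exact form vanishes (noted after Def.\ \ref{hamiltonian}), while $D = \rho^{\ast} d$ is the differential of $\Lie(C)$; identifying $\rho^{\ast}$ with the cotangent inclusion $\jmath$, the two agree.

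Next I would produce the chain homotopy $\Phi \maps L \tensor L \to \cinf(M)$ measuring the failure of $\phi_0$ to strictly preserve the bracket. Using the explicit twisted Courant bracket \ref{twist_bracket} applied to $\phi_0(\alpha) = s(v_\alpha) + \alpha$ and $\phi_0(\beta) = s(v_\beta) + \beta$, I compute
\[
\tcbrac{\phi_0(\alpha)}{\phi_0(\beta)} = s([v_\alpha,v_\beta]) + \L_{v_\alpha}\beta - \L_{v_\beta}\alpha - \half d\innerprodm{v_\alpha+\alpha}{v_\beta+\beta} - \ip{\beta}\ip{\alpha}\omega,
\]
whereas $\phi_0(\brac{\alpha}{\beta}) = s(v_{\brac{\alpha}{\beta}}) + \brac{\alpha}{\beta}$. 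By Proposition \ref{bracket_prop}, $v_{\brac{\alpha}{\beta}} = [v_\alpha,v_\beta]$ and $\brac{\alpha}{\beta} = \ip{\beta}\ip{\alpha}\omega$, so the $s$-components match and the difference is an exact 1-form in the image of $d$. This exact form is precisely $d\Phi(\alpha,\beta)$ for a suitable function $\Phi(\alpha,\beta)$; the Cartan-calculus identities (together with $\alpha,\beta$ Hamiltonian) let me pin down $\Phi$ up to the ambiguity of a closed form, and I expect $\Phi(\alpha,\beta) = \half(\ip{\alpha}\beta - \ip{\beta}\alpha)$ or a closely related expression.

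The main obstacle will be the coherence equation in Definition \ref{homo} relating $\phi_1 \circ J$, the Jacobiator $J'$ of $\Lie(C)$, and the chosen $\Phi$. This requires comparing the 2-plectic Jacobiator $J(\alpha,\beta,\gamma) = \omega(v_\gamma,v_\beta,v_\alpha)$ with $J'(\phi_0\alpha,\phi_0\beta,\phi_0\gamma) = -T(\phi_0\alpha,\phi_0\beta,\phi_0\gamma)$ from Theorem \ref{courant_L2A}, expanded via the symmetric pairing and the twisted bracket, plus the $\Phi$-terms. I would evaluate $T$ on the sections $s(v_\bullet)+\bullet$ using \ref{split_innerprod} and \ref{twist_bracket}, and the key simplification is that the isotropy condition $\innerprodp{s(v_1)}{s(v_2)} = 0$ kills the pure $s$-$s$ contributions, leaving $\omega$-dependent terms that should reproduce $J$ after accounting for the $\Phi$ corrections. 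This is where careful bookkeeping of signs and the $\tfrac16$ factor in $T$ is essential; I would lean on Lemma \ref{tech_lemma} (with $m=3$) to handle the exact-form identities. Finally, injectivity of the morphism is immediate: $\phi_1 = \id$ is injective, and $\phi_0(\alpha) = 0$ forces $\alpha = 0$ by projecting onto the cotangent summand, so $(\phi,\Phi)$ embeds $\Lie(M,\omega)$ into $\Lie(C)$.
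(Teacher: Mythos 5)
Your proposal matches the paper's proof essentially step for step: the same $\phi_0(\alpha)=s(v_\alpha)+\alpha$, $\phi_1=\id$, the same chain-map check via the vanishing of $v_{df}$, and the same homotopy, which the paper takes to be $\Phi(\alpha,\beta)=-\half\innerprodm{v_\alpha+\alpha}{v_\beta+\beta}=-\half(\ip{\alpha}\beta-\ip{\beta}\alpha)$ (the opposite sign to your first guess, fixed by the direction of the homotopy in Def.\ \ref{homo}). The only small divergence is in the coherence check: the identity actually needed there is one for $\iota_{[v_\alpha,v_\beta]}\gamma+\cp$ (the paper's Lemma \ref{calc2}, derived from $\iota_{[u,v]}=\L_u\iota_v-\iota_v\L_u$) rather than Lemma \ref{tech_lemma} with $m=3$, which controls $d\,\iota(v_\alpha\wedge v_\beta\wedge v_\gamma)\omega$ instead — but both are routine Cartan-calculus computations and your overall strategy is sound.
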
 

Before we prove the theorem, we introduce some technical
lemmas to ease the calculations. Recall from Eq.\ \ref{skew} that the
formula for the standard skew-symmetric pairing
on $\X(M) \oplus \Omega^{1}(M)$:
\[
\innerprodm{v_{1} + \alpha_{1}}{v_{2} + \alpha_{2}} =
\ip{1}\alpha_{2} - \ip{2}\alpha_{1}.
\]
In what follows, by the symbol ``$\cp$'' we mean cyclic permutations of the symbols $\alpha,\beta,\gamma$.
\begin{lemma}\label{calc1}
If $\alpha, \beta \in \ham$ with corresponding Hamiltonian vector fields
$v_{\alpha},v_{\beta}$, then
$\L_{v_{\alpha}}\beta=\brac{\alpha}{\beta} + d \ip{\alpha}\beta$.
\end{lemma}
\begin{proof} Since $\L_{v} = \iota_v d + d \iota_v$,
\[  \L_{v_{\alpha}}{\beta} = 
\ip{\alpha} d \beta + d \ip{\alpha} \beta =
-\ip{\alpha}\ip{\beta} \omega + d \ip{\alpha} \beta =
\brac{\alpha}{\beta} + d \ip{\alpha} \beta .\]
\end{proof}

\begin{lemma} \label{calc2}
If $\alpha,\beta,\gamma \in \ham$ with corresponding Hamiltonian vector fields
$v_{\alpha},v_{\beta},v_{\gamma}$, then
\begin{align*}
\iota_{[v_{\alpha},v_{\beta}]} \gamma + \cp &= 
-3\ip{\alpha}\ip{\beta}\ip{\gamma}\omega +
\ip{\alpha}d\innerprodm{v_{\beta}+\beta}{v_{\gamma} + \gamma}\\ & \quad  +
\ip{\gamma}d\innerprodm{v_{\alpha}+\alpha}{v_{\beta} + \beta} +
\ip{\beta}d\innerprodm{v_{\gamma}+\gamma}{v_{\alpha} + \alpha}.
\end{align*}
\end{lemma}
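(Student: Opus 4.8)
The plan is to prove Lemma \ref{calc2} by a direct computation, expanding the left-hand side using standard Cartan calculus and the Hamiltonian condition $d\alpha = -\ip{\alpha}\omega$. First I would rewrite each term $\iota_{[v_{\alpha},v_{\beta}]}\gamma$ using the identity $[v_\alpha,v_\beta] = \L_{v_\alpha}v_\beta$, together with the commutator formula $\iota_{[u,v]} = \L_u \iota_v - \iota_v \L_u$ (this is Eq.\ \ref{commutator} for vector fields, where the sign simplifies). This gives
\[
\iota_{[v_{\alpha},v_{\beta}]}\gamma = \L_{v_\alpha}\iota_{v_\beta}\gamma - \iota_{v_\beta}\L_{v_\alpha}\gamma.
\]
Since $\iota_{v_\beta}\gamma$ is a function, $\L_{v_\alpha}\iota_{v_\beta}\gamma = \ip{\alpha}d\ip{\beta}\gamma$, and I would use Lemma \ref{calc1} to replace $\L_{v_\alpha}\gamma = \brac{\alpha}{\gamma} + d\ip{\alpha}\gamma$.

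Next I would substitute these expansions into the cyclic sum over $\alpha,\beta,\gamma$. The key is to organize the resulting terms into two types: those of the form $\ip{\cdot}d\innerprodm{\cdot}{\cdot}$ (which should reproduce the three exact-looking terms on the right-hand side) and those built purely from contractions of $\omega$ (which should collapse to $-3\ip{\alpha}\ip{\beta}\ip{\gamma}\omega$). For the latter, I would use that $\brac{\beta}{\gamma} = \ip{\gamma}\ip{\beta}\omega$ by Definition \ref{bracket_def}, so terms like $\iota_{v_\beta}\brac{\alpha}{\gamma}$ become $\ip{\beta}\ip{\gamma}\ip{\alpha}\omega$; careful bookkeeping of the antisymmetry of $\omega$ under permuting the three interior products should yield the factor $-3$. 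For the former, I would recognize that the combination $\ip{\alpha}d\ip{\beta}\gamma - \ip{\beta}d\ip{\alpha}\gamma$ plus analogous terms assembles into $\ip{\cdot}d\innerprodm{\cdot}{\cdot}$ once I unfold the definition $\innerprodm{v_{\beta}+\beta}{v_{\gamma}+\gamma} = \ip{\beta}\gamma - \ip{\gamma}\beta$ from Eq.\ \ref{skew}.

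The main obstacle I expect is the sign and index bookkeeping: keeping track of which interior products act on $\omega$ in which order (since each transposition contributes a sign), and making sure the cross-terms $\ip{\alpha}d\ip{\beta}\gamma$ pair up correctly across the cyclic permutations to form the symmetric pairing $\innerprodm{\cdot}{\cdot}$ rather than leaving stray exact pieces. I would handle this by writing out all three cyclic copies fully before simplifying, grouping the $d\ip{\cdot}{\cdot}$ terms by which vector field contracts them, and only then applying the antisymmetry of $\omega$. A useful consistency check is that the right-hand side, when one applies $d$ to both sides or contracts with a fourth Hamiltonian vector field, should be compatible with Lemma \ref{tech_lemma} at $m=3$; this provides a way to catch sign errors without redoing the whole computation. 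Once the terms are correctly collected, the stated identity follows directly, completing the proof.
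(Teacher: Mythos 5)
Your proposal follows essentially the same route as the paper's proof: expand each $\iota_{[v_{\alpha},v_{\beta}]}\gamma$ via the commutator identity $\iota_{[v_{\alpha},v_{\beta}]}=\L_{v_{\alpha}}\ip{\beta}-\ip{\beta}\L_{v_{\alpha}}$ together with Lemma \ref{calc1}, then sum the three cyclic copies and regroup the $\omega$-contractions (giving the $-3$ term) and the $\ip{\cdot}d\ip{\cdot}$ cross-terms into the pairings $\innerprodm{\cdot}{\cdot}$ via Eq.\ \ref{skew}. The only slip is calling $\innerprodm{\cdot}{\cdot}$ the ``symmetric pairing''---it is the skew-symmetric one, as your own formula $\innerprodm{v_{\beta}+\beta}{v_{\gamma}+\gamma}=\ip{\beta}\gamma-\ip{\gamma}\beta$ correctly shows.
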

\begin{proof}
The identity $\iota_{[v_{\alpha},v_{\beta}]}=\L_{v_{\alpha}}\ip{\beta}
-\ip{\beta}\L_{v_{\alpha}}$ and Lemma \ref{calc1} imply:
\begin{align*}
\iota_{[v_{\alpha},v_{\beta}]} \gamma &=
\lie{\alpha}{\ip{\beta}}\gamma - {\ip{\beta}}\lie{\alpha}\gamma\\
&=\lie{\alpha}{\ip{\beta}}\gamma -\ip{\beta}
\left(\brac{\alpha}{\gamma} + d \ip{\alpha}\gamma\right)\\
&=\ip{\alpha}d\ip{\beta}\gamma -\ip{\beta}\ip{\gamma}\ip{\alpha}\omega 
  - \ip{\beta}d\ip{\alpha}\gamma,
\end{align*}
where the last equality follows from the definition of the bracket.

Therefore we have:
\begin{align*}
\iota_{[v_{\gamma},v_{\alpha}]} \beta  
&=\ip{\gamma}d\ip{\alpha}\beta -\ip{\alpha}\ip{\beta}\ip{\gamma}\omega 
  - \ip{\alpha}d\ip{\gamma}\beta,\\
\iota_{[v_{\beta },v_{\gamma}]}\alpha  
&=\ip{\beta}d\ip{\gamma}\alpha -\ip{\gamma}\ip{\alpha}\ip{\beta}\omega 
  - \ip{\gamma}d\ip{\beta}\alpha,
\end{align*}
and Eq.\ \ref{skew} implies
\[
\ip{\alpha}d\ip{\beta}\gamma - \ip{\alpha}d\ip{\gamma}\beta=
\ip{\alpha}d\innerprodm{v_{\beta}+\beta}{v_{\gamma} + \gamma}.
\]
The statement then follows.
\end{proof}

\begin{lemma}\label{calc3}
If $\alpha, \beta \in \ham$ with corresponding Hamiltonian vector fields
$v_{\alpha},v_{\beta}$, then
\[
\lie{\alpha}{\beta}-\lie{\beta}{\alpha}=2\brac{\alpha}{\beta} + d\innerprodm{v_{\alpha}
  + \alpha}{v_{\beta} + \beta}.
\]
\end{lemma}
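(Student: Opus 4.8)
The plan is to prove Lemma \ref{calc3} directly via a short computation using Cartan's magic formula $\L_v = \iota_v d + d\iota_v$ and the definition of the bracket $\brac{\alpha}{\beta}=\ip{\beta}\ip{\alpha}\omega$. The key observation is that this is essentially a symmetrization of the computation carried out in Lemma \ref{calc1}. Recall from that lemma that $\L_{v_{\alpha}}\beta = \brac{\alpha}{\beta} + d\ip{\alpha}\beta$. The same reasoning with $\alpha$ and $\beta$ interchanged gives $\L_{v_{\beta}}\alpha = \brac{\beta}{\alpha} + d\ip{\beta}\alpha$.

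First I would subtract these two expressions. Using the skew-symmetry of the bracket from Proposition \ref{bracket_prop}, namely $\brac{\beta}{\alpha}=-\brac{\alpha}{\beta}$, the difference becomes
\[
\lie{\alpha}{\beta}-\lie{\beta}{\alpha} = 2\brac{\alpha}{\beta} + d\ip{\alpha}\beta - d\ip{\beta}\alpha.
\]
It then remains only to identify the exact term. By the definition of the standard skew-symmetric pairing in Eq.\ \ref{skew}, we have $\innerprodm{v_{\alpha}+\alpha}{v_{\beta}+\beta} = \ip{\alpha}\beta - \ip{\beta}\alpha$. Hence $d\ip{\alpha}\beta - d\ip{\beta}\alpha = d\innerprodm{v_{\alpha}+\alpha}{v_{\beta}+\beta}$, which yields exactly the claimed identity.

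There is essentially no obstacle here; this is a routine two-line verification once Lemma \ref{calc1} and the skew-symmetry of $\brac{\cdot}{\cdot}$ are in hand. The only point requiring a little care is matching the sign convention in the pairing \eqref{skew} so that the combination $\ip{\alpha}\beta - \ip{\beta}\alpha$ (rather than its negative) appears, but this is immediate from reading off the formula $\innerprodm{v_1+\alpha_1}{v_2+\alpha_2}=\ip{1}\alpha_2 - \ip{2}\alpha_1$ with the substitutions $v_1+\alpha_1 = v_{\alpha}+\alpha$ and $v_2+\alpha_2=v_{\beta}+\beta$. The lemma is really just a repackaging of the Cartan calculus computation into the language of the Courant-algebroid pairing, preparing for its use in the proof of Theorem \ref{main_courant_thm}.
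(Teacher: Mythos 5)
Your proof is correct and follows exactly the route the paper takes: the paper's proof is the one-line remark that the identity "follows immediately from Lemma \ref{calc1} and Eq.\ \ref{skew}," which is precisely the subtraction-and-skew-symmetry argument you spell out. No issues.
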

\begin{proof}
Follows immediately from Lemma \ref{calc1} and Eq.\ \ref{skew}.
\end{proof}

We have all we need to give a proof of Thm.\ \ref{main_courant_thm}.
\begin{proof}[Proof of Theorem \ref{main_courant_thm}]
Let
\begin{gather*}
L= \cinf(M) \stackrel{d}{\to} \ham,\\
\lbrac{\cdot}{\cdot} \maps L \tensor L\to L,\\
 J_{L} \maps L \tensor L \tensor L  \to L
\end{gather*}
denote the underlying chain complex, bracket, and Jacobiator of the Lie
2-algebra $L_{\infty}(M,\omega)$. Similarly,
\begin{gather*}
L' = \cinf(M) \stackrel{d}{\to} \Gamma(C),\\
\lpbrac{\cdot}{\cdot} \maps L' \tensor L' \to L',\\
J_{L'} \maps L' \tensor L' \tensor L' \to L'
\end{gather*}
denotes the underlying chain
complex, bracket, and Jacobiator of the Lie 2-algebra
$L_{\infty}(C)$.

We construct a Lie 2-algebra morphism from $L_{\infty}(M,\omega)$ to $L_{\infty}(C)$.
Recall from Def. \ref{homo}, that such a morphism consists of
\begin{itemize}
\item{a chain map $\phi \maps L \to L'$, and}
\item{a chain homotopy $\Phi \maps L \tensor L \to L'$ from the chain
  map
\[     \begin{array}{ccl}  
     L \tensor L & \to & L'   \\
     x \tensor y & \longmapsto & \phi \left( [x,y] \right)
  \end{array}
\]
to the chain map
\[     \begin{array}{ccl}  
     L \tensor L & \to & L'   \\
     x \tensor y & \longmapsto & \left [ \phi(x),\phi(y) \right]^{\prime},
  \end{array}
\]
}
\end{itemize}
such that the following equation holds:
\begin{equation} \label{coherence}
\begin{array}{l}
\phi_1(J(x,y,z))- J^{\prime}(\phi_0(x),\phi_0(y), \phi_0(z)) = \\
\Phi(x,[y,z]) -\Phi([x,y],z) - \Phi(y,[x,z]) - [\Phi(x,y),\phi_0(z)]^{\prime}\\
+ [\phi_0(x), \Phi(y,z)]^{\prime}- [\phi_0(y),\Phi(x,z)]^{\prime}.
\end{array}
\end{equation}

Let $s \maps TM \to C$ be the splitting. Let $\phi_{0} \maps
\ham \to \Gamma(C)$ be given by
\[
\phi_{0}(\alpha)=s(v_\alpha) +\alpha,
\]
where $v_{\alpha}$ is the Hamiltonian vector field corresponding to
$\alpha$.
Let $\phi_{1}\maps \cinf(M) \to \cinf(M)$ be the identity.
Then $\phi \maps L \to L'$ is a chain
map, since the Hamiltonian vector field of an exact 1-form is zero.
Let $\Phi \maps \ham \tensor \ham \to \cinf(M)$ be given
by
\[
\Phi(\alpha,\beta)=-\half \innerprodm{v_{\alpha} + \alpha}{v_{\beta} + \beta}.
\]

Now we show $\Phi$ is a well-defined chain homotopy in the sense of
Def.\ \ref{homo}. We have
\begin{equation}\label{lpbrac_eq}
\begin{split}
\lpbrac{\phi_{0}(\alpha)}{\phi_{0}(\beta)}&=  \tcbrac{s(v_{\alpha}) +
  \alpha}{s(v_{\beta}) + \beta}\\
&=  s([v_{\alpha},v_{\beta}]) + \L_{v_{\alpha}}\beta - \L_{v_{\beta}}\alpha
 -\ip{\beta}\ip{\alpha} \omega\\
  &\quad - \half d \innerprodm{v_{\alpha} + \alpha }{v_{\beta} +
    \beta}\\
  &=s([v_{\alpha},v_{\beta}]) + \brac{\alpha}{\beta} + \half d
  \innerprodm{v_{\alpha} + \alpha }{v_{\beta} + \beta}\\
&=s([v_{\alpha},v_{\beta}]) + \lbrac{\alpha}{\beta} - d\Phi(\alpha,\beta).
\end{split}
\end{equation}
The second line above is just the definition of the twisted Courant
bracket (Eq.\ \ref{twist_bracket}), while the 
second to last line follows from Lemma \ref{calc3} and 
Def.\ \ref{bracket_def} of the bracket $\{\cdot,\cdot\}$.
By Prop.\ \ref{bracket_prop}, the Hamiltonian vector
field of $\brac{\alpha}{\beta}$ is $[v_{\alpha},v_{\beta}]$. Hence we
have:
\[
\phi_{0}(\lbrac{\alpha}{\beta}) -\lpbrac{\phi_{0}(\alpha)}{\phi_{0}(\beta)}
=d\Phi(\alpha,\beta).
\]

In degree 1, the bracket $\lbrac{\cdot}{\cdot}$ is trivial. It
follows from the definition of $\lpbrac{\cdot}{\cdot}$ that
\[
\phi_{1}(\lbrac{\alpha}{f})- 
\lpbrac{\phi_{0}(\alpha)}{\phi_{1}(f)} = -\half \innerprodp{s(v_{\alpha}) + \alpha}{df}.
\]
From Eq.\ \ref{split_innerprod}, we have
\[
\innerprodp{s(v_{\alpha}) + \alpha}{df}=\innerprodp{s(v_{\alpha}) +
  \alpha}{s(0)+ df}= \ip{\alpha}df.
\]
Therefore 
\[
\phi_{1}(\lbrac{\alpha}{f})- 
\lpbrac{\phi_{0}(\alpha)}{\phi_{1}(f)} = \Phi(\alpha,df),
\]
and similarly
\[
\phi_{1}(\lbrac{f}{\alpha})- 
\lpbrac{\phi_{1}(f)}{\phi_{0}(\alpha)} = \Phi(df,\alpha).
\]
Therefore $\Phi$ is a chain homotopy.

It remains to show the coherence condition (Eq.\ \ref{coherence} in
Definition \ref{homo}) is satisfied.  First we rewrite the Jacobiator
$J_{L'}$ using the second to last line of (\ref{lpbrac_eq}):
\begin{align*}
  J_{L'}(\phi_0(\alpha),\phi_0(\beta), \phi_0(\gamma))&=-\frac{1}{6}
  \innerprodp{\lpbrac{\phi_{0}(\alpha)}{\phi_0(\beta)}}{\phi_0(\gamma)}
  +
  \cp \\
  &=-\frac{1}{6}\innerprodp{
    s([v_{\alpha},v_{\beta}])
    +\brac{\alpha}{\beta}-d\Phi(\alpha,\beta)}{s(v_{\gamma})+\gamma}
  \\ & \quad +  \cp.
\end{align*}
From the definition of the bracket $\brac{\cdot}{\cdot}$ and the symmetric
pairing, we have
\begin{equation} \label{sub1}
  J_{L'}(\phi_0(\alpha),\phi_0(\beta), \phi_0(\gamma))=
 -\frac{1}{2}\ip{\gamma}\ip{\beta}\ip{\alpha}\omega 
-\frac{1}{6} \bigl( \iota_{[v_{\alpha},v_{\beta}]} \gamma-
\ip{\gamma}d\Phi(\alpha,\beta) + \cp \bigr).
\end{equation}
Lemma \ref{calc2} implies
\begin{equation} \label{lemma_implies}
\iota_{[v_{\alpha},v_{\beta}]} \gamma + \cp = -3 \ip{\alpha}\ip{\beta}\ip{\gamma}\omega 
- \bigl (2 \ip{\gamma}d\Phi(\alpha,\beta) + \cp \bigr),
\end{equation}
so  Eq.\ \ref{sub1} becomes
\[
 J_{L'}(\phi_0(\alpha),\phi_0(\beta), \phi_0(\gamma))=
\ip{\alpha}\ip{\beta}\ip{\gamma}\omega + \bigl(\frac{1}{2}
\ip{\gamma}d\Phi(\alpha,\beta) + \cp \bigr).
\]
By definition, $J_{L}(\alpha,\beta,\gamma)=\ip{\alpha}\ip{\beta}\ip{\gamma}\omega$.
Therefore, in this case, the left-hand side of Eq.\ \ref{coherence} is
\begin{equation}\label{LHS}
\phi_1(J_{L}(\alpha,\beta,\gamma)) - J_{L'}(\phi_0(\alpha),\phi_0(\beta), \phi_0(\gamma)) =
-\frac{1}{2}\ip{\gamma}d\Phi(\alpha,\beta) + \cp.
\end{equation}

Since the brackets and homotopy $\Phi$ are skew-symmetric, the right-hand side of Eq.\
\ref{coherence} can be rewritten as:
\begin{equation}\label{RHS}
\bigl (\Phi(\alpha,\lbrac{\beta}{\gamma}) + \cp \bigr)-
\bigl ( \lpbrac{\Phi(\alpha,\beta)}{\phi_{0}(\gamma)}  + \cp \bigr).
\end{equation}
Consider the first term in Eq.\ \ref{RHS}. The Hamiltonian vector field corresponding to
$\lbrac{\beta}{\gamma}=\brac{\beta}{\gamma}$ is
$[v_{\beta},v_{\gamma}]$. Therefore the definition of $\Phi$ implies
\[
\Phi(\alpha,\lbrac{\beta}{\gamma}) + \cp =-\frac{3}{2}\ip{\gamma}\ip{\beta}\ip{\alpha}\omega
+\frac{1}{2}\bigl( \iota_{[v_{\beta},v_{\gamma}]}\alpha + \cp\bigr).
\]
It then follows from Lemma \ref{calc2} (see Eq.\ \ref{lemma_implies}) that
\[
\Phi(\alpha,\lbrac{\beta}{\gamma}) + \cp = -  \ip{\gamma}d\Phi(\alpha,\beta) + \cp.
\]
By definition of the bracket $\lpbrac{\cdot}{\cdot}$, the second term
in Eq.\ \ref{RHS} can be written as
\[
\lpbrac{\Phi(\alpha,\beta)}{\phi_{0}(\gamma)}  + \cp = 
-\frac{1}{2} \ip{\gamma}d\Phi(\alpha,\beta) + \cp.
\]
Hence the coherence condition:
\[
\phi_1(J_{L}(\alpha,\beta,\gamma)) -
J_{L'}(\phi_0(\alpha),\phi_0(\beta), \phi_0(\gamma)) =\\
\Phi(\alpha,\lbrac{\beta}{\gamma}) - 
\lpbrac{\Phi(\alpha,\beta)}{\phi_{0}(\gamma)}  + \cp
\]
is satisfied, and $(\phi,\Phi) \maps L_{\infty}(M,\omega)
\to L_{\infty}(C)$ is a morphism of Lie 2-algebras.
\end{proof}

We now focus on a particular sub-Lie 2-algebra of
$L_{\infty}(C)$. The following definition is due to
\v{S}evera \cite{Severa1} and is a generalization of Def.\ \ref{preserve_split_def}:
\begin{definition} \label{courant_conn_preserve_def}
Let $C$ be the exact Courant algebroid given in Prop.\ \ref{2plectic_courant}
equipped with a splitting $s \maps TM \to C$. We say 
a section $e=s(v)+\alpha$ {\bf preserves the splitting} iff 
$\forall v' \in \X(M)$
\[
\tdbrac{e}{s(v')}=s([v,v']).
\]
The subspace of sections that preserve the splitting is 
denoted as $\Gamma(C)^{s}$.
\end{definition}
Note that the twisted Dorfman bracket is used in the above definition
rather than the twisted Courant bracket. Since it satisfies the Jacobi
identity, it gives a `strict' adjoint action on sections of $C$. 
The 2-plectic analogue of Proposition \ref{preserve_lie_alg} is:
\begin{prop}
If $C$ is the exact Courant algebroid given in Proposition \ref{2plectic_courant}
equipped with the splitting $s \maps TM \to C$, then there is
a  Lie $2$-algebra
$L_{\infty}(C)^{s}=(L,\blankbrac, J)$ where:
\begin{itemize}
\item $L_{0}=\Gamma(C)^{s}$,
\item $L_{1}=\cinf(M)$,
\item the differential $L_{1} \stackrel{d}{\to} L_{0}$ is the de Rham differential
\item{the bracket $\sbrac{\cdot}{\cdot}$ is 
\[
[e_{1},e_{2}]= \tcbrac{e_{1}}{e_{2}} \quad  \text{in degree 0}
\]
and
\[
  [e,f]=-[f,e]=\half \innerprodp{e}{df} \quad \text{in degree 1},
\]
}
\item the Jacobiator is the linear map $J\maps \Gamma(C) ^{s} \tensor \Gamma(C) ^{s} \tensor \Gamma(C) ^{s} \to \cinf(M)$ defined by
\begin{align*} 
J(e_{1},e_{2},e_{3}) &= -T(e_{1},e_{2},e_{3})\\
&=-\frac{1}{6}
\left(\innerprodp{\tcbrac{e_1}{e_2}}{e_3} +
\innerprodp{\tcbrac{e_3}{e_1}}{e_2} \right. \\
 & \left.\quad + \innerprodp{\tcbrac{e_2}{e_3}}{e_1} \right). 
\end{align*}
\end{itemize}
\end{prop}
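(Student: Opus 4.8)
The plan is to realize $L_{\infty}(C)^{s}$ as a sub-Lie 2-algebra of the Lie 2-algebra $L_{\infty}(C)$ of Theorem \ref{courant_L2A}. Once I show that the $2$-term complex $\cinf(M) \xrightarrow{d} \Gamma(C)^{s}$ is closed under the differential, the bracket, and the Jacobiator of $L_{\infty}(C)$, the defining identity \ref{big_J} for a Lie 2-algebra holds on the subcomplex automatically, since it already holds in the ambient $L_{\infty}(C)$. So the entire content of the proof is the verification of this closure, and for that I first need an explicit description of $\Gamma(C)^{s}$.

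First I would identify $\Gamma(C)^{s}$ concretely. Writing a section as $e = s(v) + \alpha$ and applying the twisted Dorfman bracket formula \ref{twist_bracket_dorf} to $e$ and $s(v')$ gives
\[
\tdbrac{s(v)+\alpha}{s(v')} = s([v,v']) - \iota_{v'}d\alpha - \iota_{v'}\iota_{v}\omega.
\]
Comparing with Definition \ref{courant_conn_preserve_def}, the section $e$ preserves the splitting if and only if $\iota_{v'}(d\alpha + \iota_{v}\omega) = 0$ for all $v' \in \X(M)$, i.e. if and only if $d\alpha = -\iota_{v}\omega$. By nondegeneracy of $\omega$ this says precisely that $\alpha$ is Hamiltonian with Hamiltonian vector field $v = v_{\alpha}$ (Definition \ref{hamiltonian}). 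Hence $\Gamma(C)^{s} = \{ s(v_{\alpha}) + \alpha \mid \alpha \in \ham \}$ is exactly the image of the chain map $\phi_{0}$ constructed in the proof of Theorem \ref{main_courant_thm}.

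Next I would check closure under all structure maps. The differential sends $f$ to the section $s(0) + df$; since $d(df) = 0 = -\iota_{0}\omega$, this lies in $\Gamma(C)^{s}$, so $\cinf(M) \xrightarrow{d} \Gamma(C)^{s}$ is a genuine subcomplex. For the degree-$0$ bracket, the cleanest route mirrors Proposition \ref{preserve_lie_alg}: the twisted Dorfman bracket satisfies the Jacobi (Leibniz) identity, so combining it with the Jacobi identity for vector fields shows that if $e_{1},e_{2} \in \Gamma(C)^{s}$ have anchors $v_{1},v_{2}$, then $\tdbrac{e_{1}}{e_{2}}$ again preserves the splitting, with anchor $[v_{1},v_{2}]$; since the Courant and Dorfman brackets differ by $\half D\innerprodp{e_{1}}{e_{2}}$ (Eq. \ref{dorfman}), an exact $1$-form that automatically lies in $\Gamma(C)^{s}$, the Courant bracket $\tcbrac{e_{1}}{e_{2}}$ also lies in $\Gamma(C)^{s}$. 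Alternatively, Eq. \ref{lpbrac_eq} already exhibits $\tcbrac{\phi_{0}(\alpha)}{\phi_{0}(\beta)} = \phi_{0}\!\left(\brac{\alpha}{\beta} - d\Phi(\alpha,\beta)\right)$, which lies in $\im \phi_{0} = \Gamma(C)^{s}$ by Proposition \ref{bracket_prop} (noting that the exact form $d\Phi(\alpha,\beta)$ has vanishing Hamiltonian vector field). Finally, the degree-$1$ bracket $[e,f] = \half \innerprodp{e}{df}$ and the Jacobiator $J$ take values in $\cinf(M)$ by construction, so no condition beyond those already checked is needed there.

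With closure established, I would conclude that the restrictions of $\blankbrac$ and $J$ to the subcomplex $\cinf(M) \xrightarrow{d} \Gamma(C)^{s}$ are exactly the maps listed in the statement, and that they inherit identity \ref{big_J} from $L_{\infty}(C)$; hence the subcomplex is the asserted Lie 2-algebra $L_{\infty}(C)^{s}$. The only substantive step is the degree-$0$ closure under the twisted Courant bracket, which is where the interplay between the Dorfman-bracket Jacobi identity and the Hamiltonian condition $d\alpha = -\iota_{v}\omega$ does the work; every other verification is formal once $\Gamma(C)^{s}$ is identified with the Hamiltonian sections.
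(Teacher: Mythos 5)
Your proposal is correct and follows essentially the same route as the paper: closure of the subcomplex under the differential and under the twisted Courant bracket via the Jacobi identity of the twisted Dorfman bracket together with the fact that the two brackets differ by an exact $1$-form killed by $\tdbrac{\cdot}{s(v')}$, after which the Lie 2-algebra axioms are inherited from Theorem \ref{courant_L2A}. Your preliminary identification of $\Gamma(C)^{s}$ with the Hamiltonian sections is a harmless addition that the paper defers to the proof of Theorem \ref{lie_2_alg_iso}.
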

\begin{proof}
  Let $v'$ be a vector field on $M$. By the definition
  of the twisted Dorfman bracket (Eq.\ \ref{twist_bracket_dorf}), it follows
  that $\tdbrac{df}{s(v')}=0$ $\forall f\in \cinf(M)$. Hence the
  complex $L$ is well-defined. 
 We now show that
  $\Gamma^{s}(C)$ is closed under the twisted Courant
  bracket.
Suppose $e_{1}$ and $e_{2}$ are sections preserving the
splitting. Let $e_{i}=s(v_{i})+\alpha_{i}$. Since the twisted Dorfman
bracket and the Lie bracket of vector fields satisfy the Jacobi
identity, we have:
\[
\tdbrac{\tdbrac{e_{1}}{e_{2}}}{s(v')}=  s([[v_{1},v_{2}],v']).
\]
From  Eq.\ \ref{dorfman}, we have the identity:
\[
\tcbrac{e_{1}}{e_{2}}=\tdbrac{e_{1}}{e_{2}} - \half d\innerprodp{e_{1}}{e_{2}}.
\]
Therefore:
\begin{align*}
\tdbrac{\tcbrac{e_{1}}{e_{2}}}{s(v')} &= 
\tdbrac{\tdbrac{e_{1}}{e_{2}}}{s(v')} - \half \tdbrac{d\innerprodp{e_{1}}{e_{2}}}{s(v')}\\
&=s([[v_{1},v_{2}],v']).
\end{align*}
It follows from Theorem \ref{courant_L2A} that the Lie 2-algebra
axioms are satisfied.
\end{proof}

This next result is essentially a corollary of Thm.\
\ref{main_courant_thm}. However, it is important since
it is the 2-plectic analogue of Prop.\
\ref{lie_alg_iso}. 
\begin{theorem} \label{lie_2_alg_iso}
$L_{\infty}(M,\omega)$ and $L_{\infty}(C)^{s}$ are isomorphic
as Lie 2-algebras.
\end{theorem}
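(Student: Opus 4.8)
The plan is to leverage Theorem \ref{main_courant_thm}, which already provides a morphism $(\phi,\Phi)\maps L_\infty(M,\omega) \to L_\infty(C)$, and to show that this morphism factors through the sub-Lie 2-algebra $L_\infty(C)^s$ as an isomorphism onto it. First I would argue that the image of $\phi_0$ lands in $\Gamma(C)^s$. Recall $\phi_0(\alpha)=s(v_\alpha)+\alpha$ where $v_\alpha$ is the Hamiltonian vector field of $\alpha$. Using the formula for the twisted Dorfman bracket (Eq.\ \ref{twist_bracket_dorf}) and the defining property $d\alpha=-\iota_{v_\alpha}\omega$, I would compute $\tdbrac{s(v_\alpha)+\alpha}{s(v')}$ directly and check it equals $s([v_\alpha,v'])$ for every $v'\in\X(M)$; the key cancellation is that the terms $-\iota_{v'}d\alpha$ and $-\iota_{v'}\iota_{v_\alpha}\omega$ sum to zero precisely because $\alpha$ is Hamiltonian. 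Hence $\phi_0(\alpha)\in\Gamma(C)^s$, so $\phi$ corestricts to a chain map into $L_\infty(C)^s$, and $(\phi,\Phi)$ corestricts to a Lie 2-algebra morphism into $L_\infty(C)^s$ (the bracket and Jacobiator on $L_\infty(C)^s$ are the restrictions of those on $L_\infty(C)$, so the coherence condition proved in Thm.\ \ref{main_courant_thm} still holds verbatim).

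Next I would establish that this corestricted morphism is an isomorphism. Since $\phi_1=\id$ on $\cinf(M)$ in degree 1, it suffices to show $\phi_0\maps \ham \to \Gamma(C)^s$ is a bijection. Injectivity is immediate: if $s(v_\alpha)+\alpha=s(v_\beta)+\beta$ then, applying the anchor $\rho$ (Eq.\ \ref{twist_anchor}), $v_\alpha=v_\beta$, and since $s$ is a section this forces $\alpha=\beta$. For surjectivity, I would take an arbitrary $e=s(v)+\alpha\in\Gamma(C)^s$ and show $\alpha$ is Hamiltonian with $v_\alpha=v$. Writing out the condition $\tdbrac{s(v)+\alpha}{s(v')}=s([v,v'])$ using Eq.\ \ref{twist_bracket_dorf}, the degree-preserving (section $s$) part matches automatically, and the $\Omega^1(M)$-valued part gives $\L_v(\text{the }s(v')\text{-covector component})-\iota_{v'}d\alpha-\iota_{v'}\iota_v\omega=0$; since $s(v')$ has trivial covector part by construction, this reduces to $\iota_{v'}(d\alpha+\iota_v\omega)=0$ for all $v'$, whence $d\alpha=-\iota_v\omega$. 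This says exactly that $\alpha$ is Hamiltonian with Hamiltonian vector field $v$, i.e.\ $e=\phi_0(\alpha)$. Thus $\phi_0$ is onto $\Gamma(C)^s$.

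Finally, I would invoke the fact that a Lie 2-algebra morphism whose underlying chain map is an isomorphism in each degree is an isomorphism of Lie 2-algebras: the inverse chain map $\phi^{-1}$ together with the homotopy $-\Phi\circ(\phi^{-1}\tensor\phi^{-1})$ assembles into an inverse morphism, and the coherence conditions for the inverse follow formally from those of $(\phi,\Phi)$. This gives $L_\infty(M,\omega)\cong L_\infty(C)^s$.

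The main obstacle I anticipate is the surjectivity computation: one must carefully track the covector-valued component of the twisted Dorfman bracket of an arbitrary section against $s(v')$, being attentive to the sign conventions and to the fact that $s(v')=v'-B_i(v',\cdot)$ locally (Eq.\ \ref{canonical_split}), so that the pairing against $s(v')$ is not simply pairing against $v'$. The saving grace is that the nondegeneracy of $\omega$ (the defining $2$-plectic property) is exactly what converts the identity $\iota_{v'}(d\alpha+\iota_v\omega)=0$, holding for all $v'$, into the Hamiltonian condition $d\alpha=-\iota_v\omega$, and simultaneously guarantees the uniqueness of $v_\alpha$ needed for injectivity. I would also remark that this isomorphism is the precise $2$-plectic analogue of Prop.\ \ref{lie_alg_iso}, with $\Gamma(C)^s$ playing the role of the splitting-preserving sections of the Atiyah algebroid.
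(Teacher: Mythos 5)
Your proposal is correct and takes essentially the same route as the paper's proof: both corestrict the morphism $(\phi,\Phi)$ of Theorem \ref{main_courant_thm} and observe that $\tdbrac{s(v)+\alpha}{s(v')}=s([v,v'])$ for all $v'$ is equivalent to $\iota_{v'}(d\alpha+\iota_v\omega)=0$, so that $\Gamma(C)^{s}$ is exactly the image of the injective chain map $\phi$. Your extra details (injectivity via the anchor, the formal inversion of the morphism) only flesh out steps the paper leaves implicit.
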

\begin{proof}
Recall that in Theorem \ref{main_courant_thm} we constructed a morphism of
Lie 2-algebras given by a chain map $\phi \maps L_{\infty}(M,\omega) \to
L_{\infty}(C)$:
\[
\phi_{0}(\alpha)=s(v_{\alpha}) + \alpha, \quad \phi_{1}=\id,
\]
and a homotopy $\Phi \maps \ham \tensor \ham \to \cinf(M)$:
\[
\Phi(\alpha,\beta)=-\half \innerprodm{v_{\alpha} + \alpha}{v_{\beta} + \beta}.
\]
Let $v' \in \X(M)$ and $e = s(v) + \alpha$. 
By definition of the twisted Dorfman bracket, $\tdbrac{e}{s(v')}=s[v,v']$ if and only if 
$\iota_{v'} \bigl(d\alpha +\iota_{v}\omega \bigr)=0$.
Hence a section  of $C$
preserves the splitting if and only if it lies in the image of the
chain map $\phi$. Since this map is also injective, the statement follows.
\end{proof}

Theorem \ref{lie_2_alg_iso} suggests that we interpret the Lie 2-algebra $L_{\infty}(C)^{s}$ as
the prequantization of the Lie 2-algebra of ``observables''
$L_{\infty}(M,\omega)$.  Clearly, these results further support the
idea that exact Courant algebroids play the role of higher Atiyah
algebroids \cite{Bressler-Chervov,Gualtieri:2007}.  However,
interpreting $L_{\infty}(C)^{s}$ as `operators' or as infinitesimal
symmetries of a $\U(1)$-gerbe with 2-connection is still a work in progress. It is likely
that significant progress would be made by solving the larger problem
of how to integrate an exact Courant algebroid to a Lie 2-groupoid.

\section{Central extensions of Lie 2-algebras} \label{2plectic_extend}
In this section, we constructing the 2-plectic
version of the Kostant-Souriau central extension, which we discussed
in Sec.\ \ref{symplectic_sec}.
First some preliminary definitions:
\begin{definition}
A Lie 2-algebra $(L,\blankbrac, J)$ is {\bf trivial} iff $L_{1}=0$.
\end{definition}
\noi Any Lie algebra $\g$ gives a trivial Lie
2-algebra whose underlying  complex is
\[
0 \to \g.
\]
In particular, the Lie algebra of Hamiltonian vector fields $\Xham$ is a trivial Lie 2-algebra.
\begin{definition}
A Lie 2-algebra $(L,\blankbrac, J)$
is {\bf abelian} iff $\blankbrac=0$ and $J=0$.
\end{definition}
\noi Hence an abelian Lie 2-algebra is just a 2-term chain complex. 

\begin{definition} \label{extend_def}
If $L$, $L'$, and $L''$ are Lie
2-algebras, then
$L'$ is a {\bf strict extension} of $L''$ by $L$
iff there exists Lie 2-algebra morphisms
\[
(\phi,\Phi) \maps L \to L', \quad
(\phi',\Phi') \maps L' \to L''
\]
such that the chain maps $\phi$, $\phi'$
give a short exact sequence of the underlying chain complexes
\[
L \stackrel{\phi}{\to} L' \stackrel{\phi'}{\to} L''.
\]
We say $L'$ is a {\bf strict central
  extension} of $L''$ iff
$L'$ is an extension of
$L''$ by $L$ and
\[
\left [ \im \phi, L' \right]' =0.
\]
\end{definition}
\begin{remark}
These definitions will be sufficient for our work here. However,
they are, in general, too strict. For example, one can have homotopies
between morphisms between Lie 2-algebras, and therefore we should
consider sequences that are only exact up to homotopy as
``exact''. Fully weak extensions for degree-wise finite-dimensional
Lie $n$-algebras have recently been described as particular homotopy
pushouts in the closed model category of differential graded (dg)
algebras \cite{Urs_nlab}.  The opposite of this model structure is
taken to be, by definition, a presentation of the
$(\infty,1)$-category of degree-wise finite-dimensional
$L_{\infty}$-algebras. For infinite-dimensional Lie $n$-algebras, such
as the ones we consider here, it is likely that one can find a
suitable definition in a similar manner by using a closed model
category structure on the category of dg co-algebras.
\end{remark}

We would like to understand how $L_{\infty}(M,\omega)$ is a central extension
of $\Xham$ as a Lie 2-algebra. Our first two results are quite general
and hold for any 2-plectic manifold $(M,\omega)$.
\begin{prop} \label{gen_extend}
If $(M,\omega)$ is a 2-plectic manifold, then
the Lie 2-algebra $L_{\infty}(M,\omega)$ is a central extension of the
trivial Lie 2-algebra $\Xham$ by the abelian Lie 2-algebra
\[
\cinf(M) \stackrel{d}{\to} \cOmega^{1}(M),
\]
consisting of smooth functions and closed 1-forms.
\end{prop}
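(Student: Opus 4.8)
The plan is to exhibit explicit strict morphisms realizing $L_{\infty}(M,\omega)$ as a central extension in the sense of Definition \ref{extend_def}. Write $L''=\Xham$ for the trivial Lie $2$-algebra concentrated in degree $0$, write $L$ for the abelian Lie $2$-algebra $\cinf(M) \stackrel{d}{\to} \cOmega^{1}(M)$, and recall from Proposition \ref{semistrict} that $L_{\infty}(M,\omega)$ has underlying complex $\cinf(M)\stackrel{d}{\to}\ham$. First I would define the chain maps: let $\phi \maps L \to L_{\infty}(M,\omega)$ be the inclusion $\cOmega^{1}(M) \embed \ham$ in degree $0$ and the identity on $\cinf(M)$ in degree $1$; let $\phi' \maps L_{\infty}(M,\omega) \to L''$ send $\alpha \mapsto v_{\alpha}$ in degree $0$ and be zero in degree $1$. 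Both are chain maps: since every function $df$ is closed its Hamiltonian vector field vanishes, so $\phi'_{0}\circ d = 0 = d''\circ\phi'_{1}$, while $\phi_{0}\circ d = d\circ\phi_{1}$ is immediate.

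Next I would check that the underlying sequence of complexes is short exact. In degree $1$ it reads $\cinf(M) \stackrel{\id}{\to} \cinf(M) \to 0$, which is exact. In degree $0$ it reads $0 \to \cOmega^{1}(M) \embed \ham \xrightarrow{\alpha \mapsto v_{\alpha}} \Xham \to 0$; surjectivity is the definition of $\Xham$, injectivity of the inclusion is clear, and exactness in the middle is the crucial point: a $1$-form $\alpha \in \ham$ satisfies $v_{\alpha}=0$ precisely when $d\alpha = -\ip{\alpha}\omega = 0$, i.e.\ precisely when $\alpha$ is closed. Thus $\ker(\alpha\mapsto v_{\alpha}) = \cOmega^{1}(M)$, the image of $\phi_{0}$.

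I would then verify that $\phi$ and $\phi'$ are genuine Lie $2$-algebra morphisms (Definition \ref{homo}), taking both chain homotopies $\Phi$ and $\Phi'$ to be zero. Since $L$ is abelian and every form in $\im\phi$ is closed (so its Hamiltonian vector field vanishes), both the degree-$0$ bracket $\blankbrac$ and the Jacobiator $J$ of $L_{\infty}(M,\omega)$ vanish on $\im\phi$; hence the homotopy equation and the coherence equation of Definition \ref{homo} hold with $\Phi=0$. For $\phi'$, the target $L''$ has trivial degree $1$, forcing $\Phi'=0$ and $J''=0$, so coherence is automatic; the only content is that $\phi'$ respects the bracket, and this is exactly Proposition \ref{bracket_prop}, namely $v_{\brac{\alpha}{\beta}}=[v_{\alpha},v_{\beta}]$, which says $\phi'_{0}(\brac{\alpha}{\beta}) = [\phi'_{0}(\alpha),\phi'_{0}(\beta)]''$.

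Finally, centrality $\left[\im\phi, L_{\infty}(M,\omega)\right]' = 0$ follows at once: the bracket of $L_{\infty}(M,\omega)$ is trivial except in degree $0$, and there $\brac{\gamma}{\alpha} = \ip{\alpha}\ip{\gamma}\omega = 0$ whenever $\gamma \in \cOmega^{1}(M)$ is closed, since then $v_{\gamma}=0$. I do not expect a serious obstacle; the whole argument rests on the single equivalence $v_{\alpha}=0 \Leftrightarrow d\alpha = 0$ together with Proposition \ref{bracket_prop}, and the only care required is the bookkeeping of checking each clause of Definitions \ref{homo} and \ref{extend_def} against the abelian and trivial structures.
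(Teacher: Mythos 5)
Your proof is correct and follows the same route as the paper: the same short exact sequence of complexes $\bigl(\cinf(M)\stackrel{\id}{\to}\cinf(M)\to 0\bigr)$ over $\bigl(\cOmega^1(M)\embed\ham\stackrel{p}{\to}\Xham\bigr)$, with both morphisms strict and centrality coming from the vanishing of Hamiltonian vector fields of closed $1$-forms. You simply spell out the exactness and coherence checks that the paper leaves implicit.
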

\begin{proof}
Consider the following short exact sequence of complexes:
\begin{equation}\label{ses}
    \xymatrix{
        \cOmega^1(M) \ar[r]^{\jmath} & \ham \ar[r]^{p}  & \Xham\\
        \cinf(M) \ar[u]^{d}\ar[r]^{\id}  & \cinf(M) \ar[u]^{d}
        \ar[r]& 0 \ar[u]}
\end{equation}
The map $\jmath \maps \cOmega^{1}(M) \to \ham$ is the inclusion, and
\[
p \maps \ham \to \Xham, \quad p(\alpha)=v_{\alpha}
\]
takes a Hamiltonian 1-form to its corresponding vector field. It
follows from Prop. \ref{bracket_prop} that $p$ preserves the bracket. In fact,
all of the horizontal chain maps give strict Lie 2-algebra morphisms
(i.e.\ all homotopies are trivial). The Hamiltonian vector field
corresponding to a closed 1-form is zero. Thus,
if $\alpha$ is closed, then for all $\beta \in \ham$
we have $\sbrac{\alpha}{\beta}_{L_{\infty}(M,\omega)}=\brac{\alpha}{\beta}=0$.
Hence $L_{\infty}(M,\omega)$ is a central extension of $\Xham$. 
\end{proof}

\begin{prop} \label{L2A_extend_1} 
Let $(M,\omega)$ be a 2-plectic manifold. Given $x\in M$, there
is a Lie 2-algebra  $L_{\infty}(\Xham,x)=(L,\blankbrac,J_{x})$ where
\begin{itemize}
\item{$L_{0}=\Xham$,}
\item{$L_{1}=\R$,}
\item{the differential $L_{1} \stackrel{d}{\to} L_{0}$ is trivial ($d=0$),}
\item{the bracket $\blankbrac$ is the Lie bracket on $\Xham$
in degree 0 and trivial in all other degrees}
\item{the Jacobiator is the linear map 
\[
J_{x} \maps \Xham \tensor
    \Xham \tensor \Xham \to \R
\] 
defined by
\[
J_{x}(v_{1},v_{2},v_{3})= \ip{1}\ip{2}\ip{3}\omega\vert_{x}.
\]
}
\end{itemize}
Moreover, $J_{x}$ is a 3-cocycle in the Chevalley-Eilenberg cochain
complex \linebreak $\Hom(\Lambda^{\bullet}\Xham, \R)$.
\end{prop}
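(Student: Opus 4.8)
The plan is to verify the two claims of Proposition \ref{L2A_extend_1} separately: first that $(L,\blankbrac,J_x)$ genuinely satisfies the Lie 2-algebra axioms of Proposition \ref{L2A}, and second that $J_x$ is a Chevalley--Eilenberg 3-cocycle. I would treat these together, since the cocycle condition on $J_x$ is exactly the residue of the big coherence equation \ref{big_J} once the differential $d$ and the bracket are specialized.

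First I would check that the data define a Lie 2-algebra. Because $L_1 = \R$, $d=0$, and the bracket vanishes in all positive degrees, the chain map and chain homotopy conditions degenerate substantially. The bracket $\blankbrac$ is the Lie bracket on $\Xham$ in degree $0$, which satisfies the ordinary Jacobi identity strictly; I would observe that $J_x$ being $\R$-valued and $\Xham$ being concentrated in degree $0$ means the Jacobiator is automatically a well-defined skew-symmetric chain homotopy (its source and target chain maps both land in $L_1 = \R$ after applying $d=0$, so the homotopy condition $d\circ J_x = (\text{Jacobi defect})$ reads $0 = 0$ in degree $0$, since the strict Jacobi identity already holds). The substantive content is therefore the single coherence equation \ref{big_J}, which I would reduce by hand. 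Since $\blankbrac$ is trivial in positive degrees and $J_x$ takes values in the (central, degree-$1$) summand $\R$ on which all brackets with it vanish, every term of \ref{big_J} of the form $[x,J(y,z,w)]$ collapses to zero. What survives is precisely the alternating sum of $J_x$ evaluated on nested brackets of vector fields, namely the statement that $J_x$ is a cocycle.

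Next I would make the cocycle computation explicit. The Chevalley--Eilenberg differential of $J_x \in \Hom(\Lambda^3 \Xham,\R)$, since the representation is trivial, is
\[
(\delta J_x)(v_1,v_2,v_3,v_4) = \sum_{i<j} (-1)^{i+j} J_x([v_i,v_j],v_1,\ldots,\hat v_i,\ldots,\hat v_j,\ldots,v_4).
\]
Substituting $J_x(u_1,u_2,u_3) = \iota(u_1\wedge u_2\wedge u_3)\omega\vert_x$, I would recognize the right-hand side as the evaluation at $x$ of the expression governed by Lemma \ref{tech_lemma} with $m=4$. Indeed, Lemma \ref{tech_lemma} states that $d\iota(v_1\wedge\cdots\wedge v_4)\omega$ equals (up to the global sign $(-1)^4=1$) exactly this alternating sum of interior products against brackets $[v_i,v_j]$. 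Thus the would-be coboundary $\delta J_x$ equals $d\,\iota(v_1\wedge\cdots\wedge v_4)\omega$ evaluated at $x$. But $\iota(v_1\wedge\cdots\wedge v_4)\omega$ is a $(3-4+1)$-form, i.e.\ it is a function is too low; more precisely $\omega$ is a $3$-form and contracting four vector fields gives $0$ identically, so $d\iota(v_1\wedge\cdots\wedge v_4)\omega = 0$. Hence $\delta J_x = 0$, and the Lie $2$-algebra coherence holds simultaneously.

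The main obstacle I anticipate is bookkeeping: matching the signs and index conventions of the Chevalley--Eilenberg differential to those of Lemma \ref{tech_lemma}, and confirming that the Hamiltonian vector fields $v_i$ close under the Lie bracket so that $[v_i,v_j] \in \Xham$ (guaranteed by Proposition \ref{bracket_prop}, which identifies $v_{\brac{\alpha_i}{\alpha_j}} = [v_{\alpha_i},v_{\alpha_j}]$), making each term of $\delta J_x$ well-defined. Once I confirm that a $3$-form contracted with a degree-$4$ multivector vanishes, the argument is essentially immediate: the cocycle identity is Lemma \ref{tech_lemma} at $m=4$ read at the single point $x$, and the fact that $\omega\vert_x$ is closed is not even needed here since the relevant form vanishes for degree reasons. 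I would therefore present the verification of axioms as a short reduction, and devote the bulk of the writeup to displaying the Lemma \ref{tech_lemma} identity and observing its vanishing.
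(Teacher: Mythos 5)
Your proof is correct, and it reaches the conclusion by a genuinely different route than the paper. The paper's own proof is very short: it observes that the bracket satisfies Jacobi strictly, notes that the coherence equation \ref{big_J} for $J_{x}$ follows "immediately" by evaluating at $x$ the already-established coherence equation for the Jacobiator of $L_{\infty}(M,\omega)$ from Prop.\ \ref{semistrict}, and then obtains the \emph{cocycle} statement not by computation but by invoking the Baez--Crans classification theorem (Thm.\ 55 of \cite{HDA6}), which says that for a complex $\R \to \g$ with $d=0$, trivial bracket in positive degrees, and trivial representation, Eq.\ \ref{big_J} is \emph{equivalent} to $J_{x}$ being a Chevalley--Eilenberg 3-cocycle. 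You instead prove the cocycle condition directly: after correctly reducing Eq.\ \ref{big_J} to the alternating sum $\sum_{i<j}(-1)^{i+j}J_{x}([v_{i},v_{j}],\ldots)$ (your observation that all terms of the form $[x,J(y,z,w)]$ die because the bracket is trivial on the degree-1 summand is exactly right), you identify that sum, up to an overall sign, with the right-hand side of Lemma \ref{tech_lemma} at $m=4$ evaluated at $x$, whose left-hand side $d\,\iota(v_{1}\wedge\cdots\wedge v_{4})\omega$ vanishes identically since a 3-form contracted with a 4-vector is zero. (Your parenthetical aside that the closedness of $\omega$ is not needed here is also accurate; Lemma \ref{tech_lemma} only uses $\L_{v_{i}}\omega=0$.) What your approach buys is self-containedness and transparency --- the cocycle identity is exposed as the degenerate $m=4$ instance of the same identity that underlies the whole $L_{\infty}$-structure, and no external classification theorem is needed; what the paper's approach buys is brevity, since both the coherence equation and its equivalence with the cocycle condition are outsourced to results already on hand. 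The only point worth tightening in a final writeup is the lift from vector fields to forms: given $v_{1},\ldots,v_{4}\in\Xham$ you should say explicitly that you choose Hamiltonian 1-forms $\alpha_{i}$ with $v_{\alpha_{i}}=v_{i}$ in order to apply Lemma \ref{tech_lemma} and Prop.\ \ref{bracket_prop}, noting that $J_{x}$ depends only on the vector fields, not on the chosen lifts.
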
 
\begin{proof}
  We have a bracket defined on a complex with trivial differential
  that satisfies the Jacobi identity ``on the nose''. Hence to show
  $L_{\infty}(\Xham,x)$ is a Lie 2-algebra, it sufficient to show that the
  Jacobiator $J_{x}(v_{1},v_{2},v_{3})$ satisfies Eq.\ \ref{big_J} in
  Def.\ \ref{L2A} for $x \in M$. This follows immediately from Thm.\
  \ref{semistrict}. The classification theorem of Baez and Crans
(Thm. 55 in \cite{HDA6}) implies that $J_{x}$ satisfying Eq.\
\ref{big_J} in the definition of a Lie 2-algebra is equivalent to
$J_{x}$  being a 3-cocycle with values in the trivial representation.
\end{proof}

Recall that in the symplectic case, if the manifold is connected, then 
the Poisson algebra is a central extension of the Hamiltonian vector
fields by the Lie algebra $\u(1) \cong \R$. The categorified analog of
the Lie algebra $\u(1)$ is the abelian Lie 2-algebra $b\u(1)$ whose
underlying chain complex is simply
\[
\R \to 0.
\]
It is natural to suspect that, under
suitable topological conditions, the abelian Lie algebra $\cinf(M)
\stackrel{d}{\to} \cOmega^{1}(M)$ introduced in Prop. \ref{gen_extend}
is related to $b\u(1)$. 

Let us first assume that the 2-plectic manifold is connected. Note
that the Jacobiator $J_{x}$ of the Lie 2-algebra $L_{\infty}(\Xham,x)$ 
introduced in Prop.\ \ref{L2A_extend_1} depends
explicitly on the choice of $x \in M$. However, if $M$ is connected,
then the cohomology class $J_{x}$ represents as a 3-cocycle does not
depend on $x$. This fact has important implications for $L_{\infty}(\Xham,x)$:

\begin{prop} \label{L2A_extend_2}
If $(M,\omega)$ is a connected 2-plectic manifold and 
$J_{x}$ is the 3-cocycle given in Prop.\ \ref{L2A_extend_1}, then
the cohomology class $[J_{x}]\in H^{3}_{\mathrm{CE}}(\Xham,\R)$
is independent of the choice of $x \in M$. Moreover, given any other point
$y \in M$, the Lie 2-algebras $L_{\infty}(\Xham,x)$ and
$L_{\infty}(\Xham,y)$ are quasi-isomorphic.
\end{prop}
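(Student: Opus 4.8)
The plan is to prove both assertions at once by showing that the difference $J_x - J_y$ is an \emph{exact} Chevalley--Eilenberg cochain, and that the explicit coboundary exhibiting this exactness induces a quasi-isomorphism $L_\infty(\Xham,x) \to L_\infty(\Xham,y)$. First I would observe that for any $v_1,v_2,v_3 \in \Xham$, the function $g \mapsto \iota_{v_1}\iota_{v_2}\iota_{v_3}\omega$ is a well-defined smooth function on $M$, namely the 0-form $\iota(v_1 \wedge v_2 \wedge v_3)\omega$ (this is exactly $\brac{\alpha_1}{\alpha_2}$-type data, living in $L_{n-1}$ in the language of Thm.\ \ref{main_thm}, but here a genuine function since we contract three Hamiltonian vector fields into a 3-form). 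Thus $J_x(v_1,v_2,v_3) - J_y(v_1,v_2,v_3) = \bigl(\iota(v_1\wedge v_2 \wedge v_3)\omega\bigr)(x) - \bigl(\iota(v_1\wedge v_2 \wedge v_3)\omega\bigr)(y)$.

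Next I would exhibit a 2-cochain whose coboundary is $J_x - J_y$. Since $M$ is connected, pick a smooth path from $y$ to $x$; then for any function $h \in \cinf(M)$ the quantity $h(x) - h(y)$ is computed by integrating $dh$ along the path. Define the $\R$-valued 2-cochain $\eta \in \Hom(\Lambda^2 \Xham, \R)$ by $\eta(v_1,v_2) = \bigl(\iota(v_1 \wedge v_2)\mu\bigr)$ for a suitable primitive-type construction; more precisely, I would use that the \emph{function} $F_{v_1,v_2,v_3} := \iota(v_1\wedge v_2 \wedge v_3)\omega$ satisfies, by Lemma \ref{tech_lemma} with $m=3$ together with $\L_{v_i}\omega = 0$ (Lemma \ref{L_thm}), the identity that its de Rham differential equals $\pm \iota([v_i,v_j]\wedge\cdots)\omega$ summed with signs. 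Evaluating the Chevalley--Eilenberg coboundary $\delta_{\mathrm{CE}}$ of the candidate 2-cochain $\eta(v_1,v_2) = \bigl(\iota(v_1\wedge v_2)\omega\bigr)\!\cdot$ (a linear functional obtained by pairing the resulting 1-form against the fixed path, i.e.\ integrating along it) should reproduce exactly $F_{v_1,v_2,v_3}(x) - F_{v_1,v_2,v_3}(y)$, because the CE differential assembles the bracket terms $\iota_{[v_i,v_j]}$ which match the de Rham differential of $F$ via the cited lemma, and $h(x)-h(y) = \int_\gamma dh$. This is the computation I expect to be the main obstacle: matching the Chevalley--Eilenberg combinatorial signs $\sum_{i<j}(-1)^{i+j}$ against the signs produced by Lemma \ref{tech_lemma}, and confirming the path-integral functional is genuinely a 2-cochain (linear and alternating in $v_1,v_2$), so that $\delta_{\mathrm{CE}}\eta = J_x - J_y$.

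Granting $[J_x] = [J_y]$, the independence of the class is immediate. For the quasi-isomorphism I would invoke the classification of Baez and Crans cited in the proof of Prop.\ \ref{L2A_extend_1} (Thm.\ 55 of \cite{HDA6}): two Lie 2-algebras built from the same Lie algebra $\Xham$, the same trivial representation on $\R$, and cohomologous 3-cocycles are equivalent, and over a field this equivalence is a quasi-isomorphism. Concretely, the coboundary $\eta$ with $\delta_{\mathrm{CE}}\eta = J_x - J_y$ supplies the chain homotopy $\Phi$ in the data of a Lie 2-algebra morphism (Def.\ \ref{homo}): one takes $\phi_0 = \id_{\Xham}$, $\phi_1 = \id_{\R}$, and $\Phi = \eta$, and the coherence equation (the displayed equation in Def.\ \ref{homo}) becomes precisely $\phi_1(J_x(v_1,v_2,v_3)) - J_y(v_1,v_2,v_3) = (\delta_{\mathrm{CE}}\eta)(v_1,v_2,v_3)$, which holds by construction. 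Since $\phi$ is the identity on homology, it is a quasi-isomorphism, proving that $L_\infty(\Xham,x)$ and $L_\infty(\Xham,y)$ are quasi-isomorphic.
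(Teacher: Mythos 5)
Your proposal is correct and follows essentially the same route as the paper: the paper's proof defines exactly the 2-cochain you describe, $c(v_\alpha,v_\beta)=\int_{\Gamma}\omega(v_\alpha,v_\beta,\cdot)$ for a path $\Gamma$ joining the two points, computes $\delta_{\mathrm{CE}}c = J_y - J_x$ by combining the fundamental theorem of calculus with the $m=3$ identity $d\,\iota_{v_\alpha}\iota_{v_\beta}\iota_{v_\gamma}\omega = -\omega([v_\alpha,v_\beta],v_\gamma,\cdot)+\omega([v_\alpha,v_\gamma],v_\beta,\cdot)-\omega([v_\beta,v_\gamma],v_\alpha,\cdot)$ (Prop.\ \ref{no_jacobi}, i.e.\ Lemma \ref{tech_lemma} with $m=3$), and then cites Baez--Crans for the quasi-isomorphism. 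Your explicit morphism $(\phi=\id,\ \Phi=\eta)$ is just an unpacking of that citation (the mixed-degree bracket terms in the coherence equation of Def.\ \ref{homo} vanish since both brackets are trivial off degree $0$), so it is a harmless elaboration rather than a different argument.
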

\begin{proof}
To prove that $[J_{x}]$ is independent of $x$, we use a construction
similar to the proof of Prop.\ 4.1 in \cite{Brylinski:1990}. 
The Chevalley-Eilenberg differential
\[
\delta \maps \Hom(\Lambda^{n}
\Xham, \R)\to \Hom(\Lambda^{n+1} \Xham, \R)
\]
 is defined by
\[
(\delta c)(v_{1},\ldots,v_{n+1}) = \sum_{1 \leq i < j \leq n}
(-1)^{i+j}c([v_{i},v_{j}],v_{1},\cdots,
\hat{v}_{i},\cdots,\hat{v}_{j},\ldots,v_{n+1}).
\]
Note that if $c$ is an arbitrary 2-cochain then 
\[
(\delta c) (v_{\alpha},v_{\beta},v_{\gamma})   =-c([v_{\alpha},v_{\beta}],v_{\gamma}) +
c([v_{\alpha},v_{\gamma}],v_{\beta}) - c([v_{\beta},v_{\gamma}],v_{\alpha}).
\]
Now let $y \in M$. Let $\Gamma \maps [0,1] \to M$ be a path from $x$ to
$y$. Given $v_{\alpha},v_{\beta} \in \Xham$, define
\[
c(v_{\alpha},v_{\beta}) = \int_{\Gamma} \omega(v_{\alpha},v_{\beta},\cdot).
\]
Clearly, $c$ is a 2-cochain. We claim
\[
J_{y}(v_{\alpha},v_{\beta},v_{\gamma})
-J_{x}(v_{\alpha},v_{\beta},v_{\gamma}) = (\delta c) (v_{\alpha},v_{\beta},v_{\gamma}) 
\]
The failure of $\brac{\cdot}{\cdot}$ to satisfy the Jacobi identity implies
\[
d\ip{\alpha}\ip{\beta}\ip{\gamma}\omega = 
\brac{\alpha}{\brac{\beta}{\gamma}} -
    \brac{\brac{\alpha}{\beta}}{\gamma} 
    -\brac{\beta}{\brac{\alpha}{\gamma}},
\]
and, from the definition of $\brac{\cdot}{\cdot}$, we have
\[
d\ip{\alpha}\ip{\beta}\ip{\gamma}\omega = 
-\omega([v_{\alpha},v_{\beta}],v_{\gamma},\cdot)  
+\omega([v_{\alpha},v_{\gamma}],v_{\beta},\cdot)  
- \omega([v_{\beta},v_{\gamma}],v_{\alpha},\cdot).  
\]
Integrating both sides of the above equation gives
\begin{align*}
\int_{\Gamma} d\ip{\alpha}\ip{\beta}\ip{\gamma}\omega  &=
J_{y}(v_{\alpha},v_{\beta},v_{\gamma})
-J_{x}(v_{\alpha},v_{\beta},v_{\gamma})\\ 
&= -\int_{\Gamma}\omega([v_{\alpha},v_{\beta}],v_{\gamma},\cdot)  
+ \int_{\Gamma} \omega([v_{\alpha},v_{\gamma}],v_{\beta},\cdot)  
- \int_{\Gamma}\omega([v_{\beta},v_{\gamma}],v_{\alpha},\cdot)\\
&= (\delta c) (v_{\alpha},v_{\beta},v_{\gamma}).
\end{align*}

It follows from Thm.\ 57 in Baez and Crans \cite{HDA6} that
$[J_{x}]=[J_{y}]$ implies $L_{\infty}(\Xham,x)$ and $L_{\infty}(\Xham,y)$ 
are quasi-isomorphic (or `equivalent' in their terminology). 
\end{proof}

Now we impose further conditions on our 2-plectic manifold.  From here
on, we assume $(M,\omega)$ is 1-connected (i.e.\ connected and simply
connected). This is the 2-plectic analogue of the requirement that the
symplectic manifold in Sec.\ \ref{symplectic_sec} be connected. It will
allow us to construct several elementary, yet interesting,
quasi-isomorphisms of Lie 2-algebras.

\begin{prop} \label{L2A_extend_3}
If $M$ is a 1-connected manifold, then the abelian
Lie 2-algebra $\cinf(M) \stackrel{d}{\to} \cOmega^{1}(M)$ is
quasi-isomorphic to $b\u(1)$.
\end{prop}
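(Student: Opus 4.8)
The plan is to exhibit an explicit quasi-isomorphism between the two-term complexes underlying these abelian Lie 2-algebras. Recall that $b\u(1)$ has underlying complex $\R \to 0$, concentrated so that $L_1 = \R$ and $L_0 = 0$, while the complex in question is $\cinf(M) \stackrel{d}{\to} \cOmega^1(M)$ with $L_1 = \cinf(M)$ and $L_0 = \cOmega^1(M)$. Both are abelian, so the brackets and Jacobiators are trivial, and a morphism of Lie 2-algebras between them is simply a chain map (with trivial accompanying homotopy $\Phi$, which automatically satisfies the coherence condition since everything in sight vanishes). Therefore the entire content of the proposition reduces to computing the homology of $\cinf(M) \stackrel{d}{\to} \cOmega^1(M)$ and checking it matches that of $\R \to 0$.

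First I would compute the homology in each degree. In degree 0, the homology is $\cOmega^1(M)/d\cinf(M)$, which is exactly the first de Rham cohomology $H^1_{\mathrm{dR}}(M)$ since $\cOmega^1(M)$ denotes the closed 1-forms. Because $M$ is simply connected, $H^1_{\mathrm{dR}}(M) = 0$, so the degree 0 homology vanishes. In degree 1, the homology is $\ker\bigl(d\maps \cinf(M) \to \cOmega^1(M)\bigr)$, namely the locally constant functions; since $M$ is connected, this is precisely $\R$. Thus the homology of $\cinf(M)\stackrel{d}{\to}\cOmega^1(M)$ is $\R$ concentrated in degree 1, matching the homology of $b\u(1) = (\R \to 0)$.

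Next I would produce an explicit chain map realizing the isomorphism on homology. The natural direction is a map $(\psi_1, \psi_0) \maps \bigl(\cinf(M)\stackrel{d}{\to}\cOmega^1(M)\bigr) \to b\u(1)$, where $\psi_0 \maps \cOmega^1(M) \to 0$ is the zero map and $\psi_1 \maps \cinf(M) \to \R$ sends a function to its value at some chosen basepoint (equivalently, by connectedness, the constant value of a locally constant function, extended to all smooth functions). Commutativity with the differentials is immediate since the target differential is zero. By the computation above, $\psi$ induces the identity on $\R$ in degree 1 and the (forced) isomorphism $0 \to 0$ in degree 0, hence is a quasi-isomorphism. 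I would then invoke the remark following the definition of quasi-isomorphism in the excerpt: since every vector space is free, quasi-isomorphism coincides with chain homotopy equivalence, so this also gives a genuine equivalence of Lie 2-algebras.

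I do not expect any serious obstacle here; the proposition is essentially a bookkeeping exercise in de Rham theory combined with the observation that abelian Lie 2-algebra morphisms are just chain maps. The one point requiring mild care is the convention on grading (the de Rham differential acts as a degree $-1$ operator, so that functions sit in degree 1 and 1-forms in degree 0), and confirming that the Poincaré-lemma-type input is supplied precisely by the 1-connectedness hypothesis: connectedness kills $H_1$ down to $\R$, and simple connectedness kills $H_0$ (degree 0 homology) via $H^1_{\mathrm{dR}}(M) = 0$. Both halves of the ``1-connected'' assumption are genuinely used.
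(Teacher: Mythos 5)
Your proposal is correct and follows essentially the same route as the paper: the paper's proof exhibits exactly the chain map $(\mathrm{ev}_{x}, 0)$ from $\cinf(M) \stackrel{d}{\to} \cOmega^{1}(M)$ to $\R \to 0$ and asserts it is a quasi-isomorphism, which is precisely your map $(\psi_1,\psi_0)$. Your additional verification that connectedness gives degree 1 homology $\R$ and simple connectedness kills the degree 0 homology via $H^1_{\mathrm{dR}}(M)=0$ is exactly the implicit content of the paper's one-line argument.
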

\begin{proof}
Let $x \in M$. The chain map
\[
    \xymatrix{
        \cinf(M) \ar[d]_{\mathrm{ev}_{x}} \ar[r]^{d} & \cOmega^{1}(M) \ar[d]\\
        \R \ar[r]  & 0
}
\]
is a quasi-isomorphism.
\end{proof}

\begin{prop} \label{L2A_extend_4} 
If $(M,\omega)$ is a 1-connected
  2-plectic manifold and $x \in M$, then the Lie 2-algebras
  $L_{\infty}(M,\omega)$ and $L_{\infty}(\Xham,x)$ are quasi-isomorphic.
\end{prop}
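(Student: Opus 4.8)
The plan is to construct an explicit quasi-isomorphism between the two Lie 2-algebras by composing the equivalences already established in this section. I would work towards a zig-zag of quasi-isomorphisms, assembling the following diagram of Lie 2-algebras and morphisms. On one side we have the central extension from Proposition \ref{gen_extend}, realized by the short exact sequence of complexes (\ref{ses})
\[
\cOmega^{1}(M) \xrightarrow{\jmath} \ham \xrightarrow{p} \Xham, \qquad
\cinf(M) \xrightarrow{\id} \cinf(M) \to 0,
\]
which exhibits $L_{\infty}(M,\omega)$ as a strict central extension of the trivial Lie 2-algebra $\Xham$ by the abelian Lie 2-algebra $\cinf(M) \xrightarrow{d} \cOmega^{1}(M)$. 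On the other side we have $L_{\infty}(\Xham,x)$ from Proposition \ref{L2A_extend_1}, which is a strict central extension of $\Xham$ by $b\u(1)$. Since $M$ is $1$-connected, Proposition \ref{L2A_extend_3} gives a quasi-isomorphism between the two abelian kernels $\cinf(M) \xrightarrow{d} \cOmega^{1}(M)$ and $b\u(1)$. The strategy is to lift this quasi-isomorphism of kernels, together with the identity on the common quotient $\Xham$, to a quasi-isomorphism of the total extensions.

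First I would build an explicit chain map $\phi \maps L_{\infty}(M,\omega) \to L_{\infty}(\Xham,x)$. In degree $0$ it should be $p \maps \ham \to \Xham$, $\alpha \mapsto v_{\alpha}$; in degree $1$ it should be the evaluation-at-$x$ map $\mathrm{ev}_{x} \maps \cinf(M) \to \R$, $f \mapsto f(x)$. This is a chain map because $\mathrm{ev}_{x}$ is compatible with the two differentials: the Hamiltonian vector field of $d f$ is $v_{df}=0$, so $p \circ d = 0 = d_{\Xham} \circ \mathrm{ev}_{x}$. The key point is that this map induces an isomorphism on homology. In degree $0$, $H_{0}(L_{\infty}(M,\omega)) = \ham/d\cinf(M) \cong \Xham$ since every closed $1$-form on the $1$-connected manifold $M$ is exact, so $p$ descends to an isomorphism onto $H_{0}(L_{\infty}(\Xham,x))=\Xham$. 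In degree $1$, $H_{1}(L_{\infty}(M,\omega)) = \ker\bigl(d \maps \cinf(M) \to \ham\bigr)$ consists of the locally constant, hence globally constant, functions (again using connectedness), which $\mathrm{ev}_{x}$ maps isomorphically onto $\R = H_{1}(L_{\infty}(\Xham,x))$. Thus $\phi$ is already a quasi-isomorphism on underlying complexes.

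Next I would verify that $\phi$ is a genuine morphism of Lie 2-algebras in the sense of Definition \ref{homo}. I expect the chain homotopy $\Phi \maps \ham \tensor \ham \to \R$ to be trivial, $\Phi = 0$, so that $\phi$ is a strict morphism. The bracket condition then reads $\phi_0(\brac{\alpha}{\beta}) = \sbrac{\phi_0(\alpha)}{\phi_0(\beta)}$, i.e. $v_{\brac{\alpha}{\beta}} = [v_{\alpha},v_{\beta}]$, which is exactly Proposition \ref{bracket_prop}. The remaining coherence condition compares the two Jacobiators: with $\Phi=0$ it demands
\[
\phi_1\bigl(J_{L_{\infty}(M,\omega)}(\alpha,\beta,\gamma)\bigr) = J_{x}\bigl(v_{\alpha},v_{\beta},v_{\gamma}\bigr).
\]
By Proposition \ref{semistrict} the left Jacobiator is $J(\alpha,\beta,\gamma)=\ip{\alpha}\ip{\beta}\ip{\gamma}\omega$, a function on $M$, and applying $\phi_1 = \mathrm{ev}_{x}$ evaluates it at $x$, giving $\ip{\alpha}\ip{\beta}\ip{\gamma}\omega\vert_{x}$, which is precisely $J_{x}(v_{\alpha},v_{\beta},v_{\gamma})$ as defined in Proposition \ref{L2A_extend_1}. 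Hence the coherence equation holds on the nose, and $(\phi,0)$ is a strict Lie 2-algebra morphism.

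The main obstacle, and the step deserving the most care, is confirming that a \emph{strict} morphism which is a quasi-isomorphism on underlying complexes genuinely serves as a quasi-isomorphism of Lie 2-algebras in the sense required here. By the definition given after Definition \ref{homo}, a Lie 2-algebra morphism is a quasi-isomorphism precisely when its underlying chain map induces an isomorphism on homology, so the homological computation in the second paragraph completes the argument once the morphism axioms are checked. If one prefers, the result can instead be assembled from Proposition \ref{L2A_extend_2} and Proposition \ref{L2A_extend_3} by comparing the two central extensions of $\Xham$ via the classification of Baez and Crans (their Theorem 57, as invoked in Proposition \ref{L2A_extend_2}): both extensions are classified by the class $[J_x] \in H^{3}_{\mathrm{CE}}(\Xham,\R)$, and matching these classes yields the quasi-isomorphism. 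I would present the direct construction as the primary proof, since it produces the equivalence explicitly and makes transparent that the $2$-plectic Jacobiator, evaluated at a point, is exactly the Chevalley-Eilenberg $3$-cocycle governing the categorified Kostant-Souriau extension.
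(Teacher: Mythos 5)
Your construction is exactly the paper's proof: the strict morphism with $\phi_0 = p$, $\phi_1 = \mathrm{ev}_x$, trivial homotopy $\Phi = 0$, the coherence condition reducing to $\mathrm{ev}_x$ applied to the Jacobiator giving $J_x$, and the homology computation using that $M$ is connected and simply connected so that closed $1$-forms are exact. The alternative route via the classification of extensions is a nice observation but, as you say, the direct construction is the right primary argument and matches the paper.
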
 

\begin{proof}
  We construct a quasi-isomorphism from $L_{\infty}(M,\omega)$ to
  $L_{\infty}(\Xham,x)$.  There is a chain map
\[
    \xymatrix{
        \cinf(M) \ar[d]_{\mathrm{ev}_{x}} \ar[r]^{d} & \ham \ar[d]^{p}\\
        \R \ar[r]^{0}  & \Xham
}
\]
with $\mathrm{ev}_{x}(f)=f(x)$ and $p(\alpha)=v_{\alpha}$. Since
$p$ preserves the bracket, we take $\Phi$ in Def. \ref{homo} to be
the trivial homotopy. Eq.\ \ref{coherence} holds
since:
\[
\mathrm{ev}_{x}(\omega(v_{\gamma},v_{\beta},v_{\alpha}))=J_{x}(v_{\alpha},v_{\beta},v_{\gamma}),
\]
and therefore we have constructed a Lie 2-algebra morphism. Since $M$
is connected, the homology of the complex $\cinf(M) \stackrel{d}{\to}
\ham$ is just $\R$ in degree 1 and $\ham/d\cinf(M)$ in degree 0.  The
kernel of the surjective map $p$ is the space of closed 1-forms, which
is $d\cinf(M)$ since $M$ is simply connected.
\end{proof} 

We can summarize the results given in Props.\ \ref{gen_extend}
\ref{L2A_extend_1} \ref{L2A_extend_3}, and \ref{L2A_extend_4} with
the following commutative diagram:
\[
\xymatrix{ \cOmega^{1}(M) \ar @{~>}[rd] \ar[rr]^{\jmath} && \ham
  \ar @{~>}[rd]_{p} \ar[rr]^{p} && \Xham \ar[rd]^{\id}\\
  &0 \ar[rr]  && \Xham \ar[rr] && \Xham \\
  \cinf(M) \ar'[r][rr] \ar[uu]^{d} \ar @{~>}[rd]_{\mathrm{ev}_{x}} && \cinf(M) \ar'[u][uu]
  \ar @{~>}[rd]_{\mathrm{ev}_{x}} \ar'[r][rr] && 
  0 \ar'[u][uu]\ar[rd] \\
  & \R \ar[uu] \ar[rr] && \R \ar[rr] \ar[uu] && 0 \ar[uu] }
\]
The back of the diagram shows $L_{\infty}(M,\omega)$ as the central extension of
the trivial Lie 2-algebra $\Xham$. The front shows 
$L_{\infty}(\Xham,x)$ as a central extension of $\Xham$ by
$b\u(1)$. The morphisms going from back to front are all
quasi-isomorphisms. Thus we have the 2-plectic analogue of the
Kostant-Souriau central extension:
\begin{theorem}
If $(M,\omega)$ is a 1-connected 2-plectic manifold, then
$L_{\infty}(M,\omega)$ is quasi-isomorphic to a central extension of
the trivial Lie 2-algebra $\Xham$ by $b\u(1)$.
\end{theorem}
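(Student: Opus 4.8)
The plan is to assemble the final theorem purely from the four propositions that precede it, since each piece of the central extension has already been constructed. The statement to prove is that for a 1-connected 2-plectic manifold $(M,\omega)$, the Lie 2-algebra $L_{\infty}(M,\omega)$ is quasi-isomorphic to a central extension of the trivial Lie 2-algebra $\Xham$ by $b\u(1)$. My strategy is to identify that central extension explicitly as $L_{\infty}(\Xham,x)$ for a chosen basepoint $x \in M$, and then verify separately that (i) $L_{\infty}(\Xham,x)$ genuinely is a central extension of $\Xham$ by $b\u(1)$, and (ii) $L_{\infty}(M,\omega)$ is quasi-isomorphic to it.

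First I would fix any $x \in M$ and invoke Proposition \ref{L2A_extend_4}, which already gives a quasi-isomorphism $L_{\infty}(M,\omega) \simeq L_{\infty}(\Xham,x)$ under exactly the 1-connectedness hypothesis we are assuming. This settles half the claim immediately. Then I would exhibit $L_{\infty}(\Xham,x)$ as a strict central extension of $\Xham$ by $b\u(1)$ in the sense of Definition \ref{extend_def}. Here the underlying short exact sequence of complexes is
\[
\xymatrix{
\R \ar[r]^{\id} \ar[d] & \R \ar[d] \\
0 \ar[r] & \Xham,
}
\]
realizing $b\u(1) = (\R \to 0)$ as the sub-Lie-2-algebra sitting in degree 1, with quotient the trivial Lie 2-algebra $\Xham$ in degree 0. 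The inclusion $b\u(1) \hookrightarrow L_{\infty}(\Xham,x)$ and the projection $L_{\infty}(\Xham,x) \to \Xham$ are strict Lie 2-algebra morphisms because $d=0$ and the bracket of $L_{\infty}(\Xham,x)$ is the Lie bracket on $\Xham$ in degree 0 and trivial elsewhere. Centrality is automatic: the degree-1 part $\R$ brackets trivially with everything, so $[\im \phi, L_{\infty}(\Xham,x)] = 0$. The nontrivial content of the extension is carried entirely by the Jacobiator $J_x(v_1,v_2,v_3) = \iota_{v_1}\iota_{v_2}\iota_{v_3}\omega\vert_x$, which Proposition \ref{L2A_extend_1} already verified to be a genuine 3-cocycle making this a Lie 2-algebra.

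Combining these two observations completes the proof: $L_{\infty}(M,\omega)$ is quasi-isomorphic (via Proposition \ref{L2A_extend_4}) to $L_{\infty}(\Xham,x)$, which is a strict central extension of $\Xham$ by $b\u(1)$. I would also remark that Proposition \ref{L2A_extend_2} guarantees this conclusion is independent of the basepoint $x$, since the class $[J_x] \in H^3_{\mathrm{CE}}(\Xham,\R)$ does not depend on $x$ for connected $M$, so the extension is well-defined up to quasi-isomorphism. The commutative diagram displayed just before the theorem already organizes exactly this data, with the back face showing $L_{\infty}(M,\omega)$ as an extension by $\cinf(M)\xrightarrow{d}\cOmega^1(M)$ and the front face showing $L_{\infty}(\Xham,x)$ as the extension by $b\u(1)$, linked by the quasi-isomorphisms of Propositions \ref{L2A_extend_3} and \ref{L2A_extend_4}.

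Honestly, there is no serious obstacle here — the theorem is a corollary assembled from the preceding four propositions, and the only real work lies in those propositions (particularly the basepoint-independence argument in Proposition \ref{L2A_extend_2}, which imports the integration trick from Brylinski, and the quasi-isomorphism in Proposition \ref{L2A_extend_4}, which uses simple-connectedness to identify the kernel of $p$ with $d\cinf(M)$). The one point deserving care in the writeup is making sure the notion of ``central extension by $b\u(1)$'' is read through Definition \ref{extend_def} with the correct short exact sequence of complexes, so that the degree-1 copy of $\R$ is what plays the role of $b\u(1)$; the potential for confusion is that the \emph{literal} extension appearing in Proposition \ref{gen_extend} is by $\cinf(M)\xrightarrow{d}\cOmega^1(M)$, and it is only \emph{after} the quasi-isomorphism of Proposition \ref{L2A_extend_3} that this abelian Lie 2-algebra is replaced by $b\u(1)$. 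Stating the theorem as ``quasi-isomorphic to a central extension'' rather than ``is a central extension'' is precisely what accommodates this replacement.
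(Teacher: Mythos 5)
Your proof is correct and follows essentially the same route as the paper, which states this theorem as an immediate consequence of Propositions \ref{L2A_extend_1}--\ref{L2A_extend_4} organized in the commutative diagram preceding it: the front face exhibits $L_{\infty}(\Xham,x)$ as the central extension of $\Xham$ by $b\u(1)$, and Proposition \ref{L2A_extend_4} supplies the quasi-isomorphism from $L_{\infty}(M,\omega)$. Your explicit verification of centrality and of the strictness of the inclusion and projection is a reasonable elaboration of what the paper leaves implicit.
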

Also, from Prop.\ \ref{lie_2_alg_iso} we know that
$L_{\infty}(M,\omega)$ is isomorphic to the Lie 2-algebra
$L_{\infty}(C)^{s}$ consisting of sections of the
Courant algebroid $C$ which preserve a chosen splitting $s
\maps TM \to C$. Therefore:
\begin{corollary}
If $(M,\omega)$ is a 1-connected 2-plectic manifold, then
$L_{\infty}(C)^{s}$ is quasi-isomorphic to a central extension of
the trivial Lie 2-algebra $\Xham$ by $b\u(1)$.
\end{corollary}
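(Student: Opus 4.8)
The plan is to chain together the two equivalence results that have already been established. The final statement, a corollary, asks us to show that for a 1-connected 2-plectic manifold $(M,\omega)$, the Lie 2-algebra $L_{\infty}(C)^{s}$ of sections of the Courant algebroid preserving the splitting is quasi-isomorphic to a central extension of $\Xham$ by $b\u(1)$. The preceding theorem already delivers exactly this conclusion for $L_{\infty}(M,\omega)$, so the entire content of the corollary is to transport that result across an isomorphism.

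First I would invoke Theorem \ref{lie_2_alg_iso}, which states that $L_{\infty}(M,\omega)$ and $L_{\infty}(C)^{s}$ are isomorphic as Lie 2-algebras. Next I would invoke the immediately preceding theorem, which asserts that $L_{\infty}(M,\omega)$ is quasi-isomorphic to a central extension of the trivial Lie 2-algebra $\Xham$ by $b\u(1)$. The key observation is that an isomorphism of Lie 2-algebras is, in particular, a quasi-isomorphism (it induces an isomorphism on homology trivially), and that quasi-isomorphism is transitive: composing the isomorphism $L_{\infty}(C)^{s} \cong L_{\infty}(M,\omega)$ with the quasi-isomorphism from $L_{\infty}(M,\omega)$ to the central extension yields a quasi-isomorphism from $L_{\infty}(C)^{s}$ to that same central extension. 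Since the target central extension is unchanged, nothing further needs to be verified about its structure.

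The proof is therefore essentially a one-line composition argument, and there is no real obstacle — the work has all been done in Theorem \ref{lie_2_alg_iso} and in the preceding theorem. The only point requiring the tiniest care is confirming that quasi-isomorphisms compose and that an isomorphism counts as one; both are immediate from the definition of quasi-isomorphism as a morphism inducing isomorphisms on homology, since the composite of two isomorphisms on homology is again an isomorphism. I would write:

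\begin{proof}
By Theorem \ref{lie_2_alg_iso}, the Lie 2-algebras $L_{\infty}(M,\omega)$ and $L_{\infty}(C)^{s}$ are isomorphic, and in particular the isomorphism is a quasi-isomorphism. The preceding theorem states that $L_{\infty}(M,\omega)$ is quasi-isomorphic to a central extension of the trivial Lie 2-algebra $\Xham$ by $b\u(1)$. Since quasi-isomorphisms induce isomorphisms on homology, and the composite of isomorphisms on homology is again an isomorphism, the composite morphism
\[
L_{\infty}(C)^{s} \iso L_{\infty}(M,\omega) \to \text{(central extension of $\Xham$ by $b\u(1)$)}
\]
is a quasi-isomorphism. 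Hence $L_{\infty}(C)^{s}$ is quasi-isomorphic to a central extension of $\Xham$ by $b\u(1)$.
\end{proof}
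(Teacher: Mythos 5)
Your proof is correct and follows exactly the route the paper takes: the corollary is stated immediately after the central-extension theorem precisely because it follows by composing that quasi-isomorphism with the isomorphism $L_{\infty}(M,\omega) \cong L_{\infty}(C)^{s}$ of Theorem \ref{lie_2_alg_iso}. The paper treats this as immediate and gives no separate proof, so your one-line composition argument matches its reasoning.
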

A comparison of the above corollary to
the results discussed in Sec.\ \ref{symplectic_sec} suggests that $L_{\infty}(C)^{s}$ be
interpreted as the quantization of $L_{\infty}(M,\omega)$ with
$b\u(1)$ giving rise to the quantum phase.

Finally, note that a splitting of the short exact sequence of complexes 
\[
    \xymatrix{
        0 \ar[r] & \Xham \ar[r]^{\id}  & \Xham\\
        \R \ar[u]\ar[r]^{\id}  & \R \ar[u]^{0} 
        \ar[r]& 0 \ar[u]}
\]
is the identity map in degree 0 and the trivial map in degree
1. Obviously the splitting preserves the bracket but does not preserve
the Jacobiator. Indeed, the failure of the splitting to be a strict
Lie 2-algebra morphism between $\Xham$ and $L_{\infty}(\Xham,x)$ is
due to the presence of the 3-cocycle $J_{x}$.

\section*{Summary}
Many new results have been given in this chapter, so we conclude with
a brief summary. We defined a
prequantized $n$-plectic manifold to be an integral $n$-plectic
manifold equipped with Deligne $n$-cocycle.
If $(M,\omega)$
is a 0-connected, prequantized symplectic manifold, then 
there exists a principal $U(1)$-bundle over $M$ 
equipped with a connection whose curvature is $\omega$, and a corresponding Atiyah
algebroid $A \to M$ equipped with a splitting such that
the Lie algebra of sections of $A$ which preserve the splitting
is isomorphic to a central extension of the Lie algebra of Hamiltonian
vector fields:
\[
\u(1) \to \cinf(M) \to \Xham.
\]
This central extension gives a cohomology class in
$H^{2}_{\mathrm{CE}}(\Xham,\R)$ which can be represented by the
symplectic form evaluated at a point in $M$.

Analogously, if $(M,\omega)$
is a 1-connected, prequantized 2-plectic manifold, then 
there exists a $U(1)$-gerbe over $M$ 
equipped with a connection and curving whose 3-curvature is $\omega$, 
and a corresponding exact Courant algebroid $C \to M$
equipped with a splitting such that
the Lie 2-algebra of sections of $C$ which preserve the splitting
is quasi-isomorphic to a central extension of the (trivial) Lie 2-algebra of Hamiltonian
vector fields:
\[
b\u(1) \to L_{\infty}(\Xham) \to \Xham.
\]
This central extension gives a cohomology class in
$H^{3}_{\mathrm{CE}}(\Xham,\R)$ which can be represented by the
2-plectic form evaluated at a point in $M$.

\chapter{Geometric quantization of 2-plectic manifolds} 
\label{quantization_chapter}
In the previous chapter, we first considered prequantization for symplectic
manifolds, and then generalized the procedure to 2-plectic manifolds.
We were primarily concerned with prequantizing the algebra of
observables, i.e.\ the Poisson algebra in the symplectic case, and the
Lie 2-algebra of Hamiltonian 1-forms in the 2-plectic case.
In this chapter, we switch our focus from quantizing observables to 
quantizing states.

Prequantization is a simple and elegant construction. However,
numerous examples in symplectic geometry show that it
is only the first step of a two-part process. Full quantization
involves using additional structures in order to construct the correct
space of quantum states. This process was developed over time by
considering particular examples. We
suspect that the development of a complete geometric quantization procedure  for
2-plectic manifolds will follow a similar
path. In this chapter, we generalize aspects of the quantization process
for symplectic manifolds to the 2-plectic case by
using the higher geometric structures introduced
in earlier chapters. The result is a simple procedure 
for quantizing 2-plectic manifolds, which we apply to
a particular example of interest. To the best of our knowledge, this
is the first geometric quantization procedure ever developed for such manifolds.

Let us provide some motivation for why additional work
beyond prequantization is needed in order to obtain the correct
quantum states.
In the last chapter, we described a prequantized symplectic manifold
as a symplectic manifold equipped with principal $\U(1)$-bundle with connection.
A natural choice for the  quantum state space is
the space of square-integrable global sections of the Hermitian line
bundle associated to the principal bundle. This is often called the
`prequantum Hilbert space'. It comes equipped with an inner product given by
integrating the fiber-wise Hermitian inner product of sections with
respect to the symplectic volume form. However, from the physicist's point of
view, this space is too large to be the space of quantum states of a
physical system.

For example, recall that the cotangent bundle of a manifold is a
symplectic manifold, equipped with its canonical symplectic
structure $\omega = \sum_{i}dp_{i} \wedge dq^{i}$. It is, in fact, an
integral symplectic manifold since $\omega$ is exact. The
sections in the prequantized Hilbert space locally look like functions $f(q^{i},p_{i})$ of $2n$
variables corresponding to the ``position'' coordinates $q^{i}$ of the base manifold 
and the ``momentum'' coordinates $p_{i}$ of the fibers. 
These functions can have arbitrarily
small support, and hence, when interpreted as wavefunctions on a
classical phase space, give probability densities which violate the
Heisenberg uncertainty condition. To get around this problem, one
reduces the size of the Hilbert space by taking
the subspace consisting of those sections satisfying
$\partial f/\partial p_{i}=0$. Hence,  the number of
``variables'' is reduced from $2n$ to $n$, by only considering those
sections constant along the fibers.

Consider another example that is perhaps more mathematically interesting.
The coadjoint orbits of the Lie group $\SU(2)$ correspond to 2-spheres
centered about the origin in
$\su(2)^{\ast} \cong \R^3$. Each orbit is a symplectic manifold
equipped with what is known as the `KKS symplectic
form'. This 2-form is integral if the radius of the sphere is one-half of a non-negative integer.
On each integral orbit, we have the prequantized Hilbert space,
consisting of global square-integrable sections of a
Hermitian line bundle. This Hilbert space is infinite
dimensional. However, we can equip the orbit with a
complex structure and consider only holomorphic sections i.e.\ those
sections which locally are functions $f(z^{i},\bar{z}^{i})$ satisfying
$\partial f/ \partial \bar{z}^i=0$. This smaller space of holomorphic sections is much more interesting.
First, it is finite-dimensional. Moreover, it is an irreducible
representation of $\SU(2)$. This way of obtaining representations from
coadjoint orbits by geometric quantization is quite general, and is known as Kirillov's
orbit method \cite{Kirillov:2004}. Note that, again, the size of the
prequantum space is reduced by decreasing the number of variables.

Hence, it is important to consider prequantized symplectic manifolds
equipped with additional structure in order to cut down the number of
admissible sections in the prequantum Hilbert space. In both of the
above examples, the extra structure corresponds to a special
integrable distribution called a `polarization'. We introduced
real $k$-polarizations for $n$-plectic manifolds in Chapter
\ref{nplectic_geometry} precisely for this reason, and we see that
real 1-polarizations appeared in our first example. The second
example employed the use of a `complex polarization'. These 
structures certainly play an important role in symplectic geometry \cite[Chap.\
5]{Woodhouse:1991}. Unfortunately, it is not yet clear how to generalize them to the
$n$-plectic case. Hence, we only will consider real $k$-polarizations
for $n>1$.

For symplectic manifolds, the output from quantization is a Hilbert space of quantum
states. As we will see, the output from quantizing a 2-plectic manifold
is a category of quantum states. In the last section of
this chapter, we consider in detail an example
in which the states correspond to objects in a representation
category. This suggests that 2-plectic quantization can categorify Kirillov's
orbit method.

\section{Geometric quantization of symplectic manifolds} \label{geo_quant_sym_sec}
As usual, it is instructive to consider the symplectic case first.
Consider a prequantized symplectic manifold $(M,\omega,\xi)$, where
$\xi$ is a Deligne 1-cocycle. Recall from Example \ref{n=1} in Chap.\ \ref{stacks_chapter}
that $\xi=(g,\theta)$ is specified by an open cover $\{U_{i}
\}$ of $M$, local 1-forms $\theta_{i} \in \Omega^{1}(U_{i})$, and $\U(1)$-valued functions
$g_{ij} \maps U_{i} \cap U_{j} \to \U(1)$ satisfying certain cocycle conditions.
In this chapter, we realize this 1-cocycle as the transition functions and local
connection forms of a Hermitian line bundle $(L,\inp{\cdot}{\cdot})$ equipped with a 
connection $\conn$. We let $\Gamma(L)_c$ denote the smooth sections of $L$ with compact support.
The prequantum Hilbert space is defined to be the completion of $\Gamma(L)_{c}$ with
respect to the inner product $(\sigma_1,\sigma_2)=\int_{M}\inp{\sigma_1}{\sigma_2}\omega^{n}$.

Recall from Def.\ \ref{k-polarization} that a real polarization on $M$
is a foliation $F$ of $M$ whose leaves are immersed Lagrangian submanifolds.
\begin{definition}
A {\bf quantized symplectic manifold} is a prequantized symplectic
manifold $(M,\omega,\xi)$ equipped with a real polarization $F$.
\end{definition}

\subsection{The Bohr-Sommerfeld variety} 
We use Deligne cocycles in some parts of this section,
rather than the more traditional language of bundles, in order to make
the analogy with the 2-plectic case as clear as possible.
In the 2-plectic case, we use Deligne cocycles, rather than
stacks directly, since the cocycles behave better under pullbacks and restrictions.

Given a quantized symplectic manifold $(M,\omega,\xi,F)$, let $D_{F} \ss TM$
denote the corresponding involutive distribution. A good candidate for
the quantum Hilbert space is the space constructed from those
sections of $\Gamma(L)_{c}$ which are covariantly constant along each
leaf of $F$:
\[
H= \bigl \{ \sigma \in \Gamma(L)_{c} ~ \vert ~ \conn_{v}\sigma =0 ~ \forall v \in \Gamma(D_{F}) \bigr \}.
\]

Unfortunately, the topology of the leaves will often force this space to be trivial.
For example, if the leaves of the foliation
are not compact, then we must have
$\sigma=0$ for all $\sigma \in H$. Otherwise, the integral of $\inp{\sigma}{\sigma}\omega^{n}$ will diverge.

There are additional topological obstructions which are more interesting.
Let $\Lambda \subseteq M$ be a leaf of the foliation $F$.
Since the restriction $(L \vert_{\Lambda},\conn \vert_{\Lambda})$
is a flat Hermitian line bundle, it is completely determined by its
holonomy representation 
\[
\oint\conn \vert_{\Lambda}\maps \pi_{1}(M) \to \U(1).
\] 
If $\sigma$ is a section of $L$ which is covariantly
constant along $F$, then $\sigma \vert_{\Lambda}$ is a covariantly
constant global section of $(L \vert_{\Lambda},\conn
\vert_{\Lambda})$. Hence $\sigma \vert_{\Lambda}$ is either
zero, or $(L \vert_{\Lambda},\conn \vert_{\Lambda})$ is the
trivial bundle with trivial connection, i.e.\  $\oint
\conn \vert_{\Lambda}=1$. 

So, we should consider only the leaves on which the restricted bundle has trivial
holonomy. In the language of Section \ref{holonomy_sec}, 
these are the leaves $\Lambda \stackrel{i}{\to}M$ with the property
that given a map $\sigma \maps S^{1} \to \Lambda$, the corresponding
holonomy (Def.\ \ref{holonomy_def}) of the  Deligne 1-cocycle $\xi \vert_{\Lambda}=i^{\ast} \xi$ is
trivial: $\hol(\xi \vert_{\Lambda},\sigma)=1$.
This leads us to the following definition.
\begin{definition}
Let $(M,\omega,\xi,F)$ be a quantized symplectic manifold. The
{\bf Bohr-Sommerfeld variety} {\boldmath $V_{\mathrm{BS}}$} associated to $F$ is the
union of all leaves $\Lambda$ of $F$ which satisfy 
\[
\hol(\xi \vert_{\Lambda},\sigma)=1
\]
for all maps $\sigma \maps S^{1} \to \Lambda$.
\end{definition}

The relation with the Bohr-Sommerfeld conditions from physics comes from the fact that
$\Lambda$ is contained in the Bohr-Sommerfeld variety if and only if for every loop
$\gamma$ in $\Lambda \cap U_{i}$:
\[
\exp \left (\i \oint_{\gamma} \theta_{i} \right)=1 \Leftrightarrow
\oint_{\gamma} \theta_{i}= 2\pi n_{\gamma}, \quad n_{\gamma} \in \Z,
\]
where $\theta_{i}$ is the local connection 1-form on $U_{i}$.

The use of Bohr-Sommerfeld varieties in geometric quantization was
developed considerably by \'{S}niatycki \cite{Sniatycki:1975}. He showed
that the correct quantum Hilbert space is the completion of the space of
sections of $L \vert_{V_{\mathrm{BS}}}$ which are covariantly constant
along each leaf contained in the variety. In general, 
such a section will not be the pullback of a global smooth section of $L \to
M$. Instead, it corresponds to a `distributional section' of $L$ 
\cite{Sniatycki:1975}[Sec.\ 5]. \'{S}niatycki's work motivates the
next definition.
\begin{definition}
Let $(M,\omega,\xi,F)$ be a quantized symplectic manifold,
$V_{\mathrm{BS}}$ be the corresponding Bohr-Sommerfeld variety, and
$L \vert_{V_{\mathrm{BS}}}$ be Hermitian line bundle associated to the
Deligne 1-cocycle $\xi \vert_{V_{\mathrm{BS}}}$.
The {\bf quantum state space} $Q(V_{\mathrm{BS}})$is the space of sections of 
$L \vert_{V_{\mathrm{BS}}}$ which are covariantly constant
along each leaf contained in $V_{\mathrm{BS}}$.
\end{definition}

\subsection{Example: $\R^2\setminus\{0\}$} \label{harm_osc_sec}
In this example, we will construct the quantum state space associated to
the punctured plane $M=\R^2\setminus\{0\}$ equipped with the 2-form 
\[
\omega = r dr \wedge dt, 
\]
with $0 < r < \infty, ~ 0 \leq t < 2\pi.$
Since $\omega =d\theta$, where 
\[
\theta=H dt, \quad H=\frac{1}{2}r^2,
\]
we see $(M,\omega)$ is an integral symplectic manifold. Hence $\theta$
is a connection 1-form on the trivial Hermitian line bundle $L=M
\times \C$. 

There is an obvious foliation $F$ of $M$ whose leaves are
concentric circles of radius $R >0$ about the origin. 
Since $\omega$ is a volume form on $M$, our discussion in
Example \ref{Rn_example} implies $F$ is a polarization.
The corresponding distribution $D_{F}$ is the vector field $\partial/\partial t$.

Let us first consider global sections of $L$ covariantly constant
along the leaves of $F$ in order to see why the Bohr-Sommerfeld
variety enters the picture.
Such a section $\psi$ must satisfy:
\[
\conn_{\del/\del t} \psi =0.
\]
Since $\conn = d + \i \cdot \theta$, this is equivalent to
$\psi$ satisfying the differential equation
\[
\frac{\del \psi}{\del t} = -\frac{\i}{2}r^2\psi,
\]
which has solutions of the form 
\[
\psi(r,t) = \exp(-\frac{\i}{2}r^2 t) g(r).
\]
However, such a solution must also satisfy:
\[
\psi(r,t)=\psi(r,t+2\pi).
\]
Hence, $\psi(r,t)$ must vanish if $\frac{r^2}{2}$ is not an
integer, and therefore no non-trivial smooth solution exists.

Now let us consider the Bohr-Sommerfeld variety associated to $F$.
Let the leaf $S^{1}_{R}$ correspond to a circle of radius $R$.
The Bohr-Sommerfeld condition implies
\[
\oint_{S^{1}_{R}} \theta = \frac{1}{2}R^2\int^{2 \pi}_{0} d t \in 2\pi\Z. 
\]
Hence, the variety corresponds to the integer level sets of $H$:
\[
V_{\mathrm{BS}}= \bigcup_{n \in \N ^{+}} H^{-1}(\{n\}),
\]
and the quantum state space $Q(V_{\mathrm{BS}})$ consists of linear combinations of
functions of the form
\[
\psi_{n}(t)=\exp(-nt \i) g(\sqrt{2n}).
\]

This quantized symplectic manifold is closely related to the
quantization of the simple harmonic oscillator. We can interpret $M$
as the classical phase space of the oscillator, and $H$ as a
Hamiltonian function which measures the energy of the oscillator.
It takes the familiar form $H=\frac{1}{2}(p^2 +q^2)$ in cartesian coordinates. 
The level sets of $H$ are the leaves
of the foliation and correspond to the classically allowed states in
phase-space with constant energy $\frac{1}{2}R^2$. 
The Bohr-Sommerfeld condition restricts the allowed 
states of the oscillator to those in $V_{\mathrm{BS}}$ thereby
quantizing the energy of the oscillator. The quantum values for
the energy are the non-negative integers. The sections $\psi_{n}(t)$ represent the
quantum states which satisfy the
Schr\"{o}dinger equation
\[
\i \cdot  \frac{\del \psi}{\del t} = \hat{H} \psi.
\]

Strictly speaking, this is not the correct quantization of the simple
harmonic oscillator, since its quantum energy states are actually $n + 1/2$.
Obtaining these shifted values for the energy requires using a more sophisticated
approach involving the `meta-plectic correction'
\cite{Sniatycki:1975}, \cite{Woodhouse:1991}[Ch.\ 10]. 

\section{Categorified geometric quantization} \label{cat_quant_section}
Now we present the 2-plectic analogue of the previously discussed
quantization process.
We start with a prequantized 2-plectic manifold $(M,\omega,\xi)$, where
$\xi$ is a Deligne 2-cocycle. 
From Example \ref{n=2} in Chap.\ \ref{stacks_chapter}, we know
that $\xi=(g, A, B)$ is specified by an open cover
$\{U_{i}\}$ of $M$, local 2-forms $B_{i} \in \Omega^{2}(U_{i})$,
local 1-forms $A_{ij} \in \Omega^{1}(U_{i} \cap U_{j})$,  
and $\U(1)$-valued functions
$g_{ijk} \maps U_{i} \cap U_{j} \cap U_{k} \to \U(1)$ satisfying certain cocycle conditions.
Recalling Definition \ref{2-line_stack_assoc},
we realize this cocycle as the 2-line stack $\HVB^{\xi}$ equipped with a
2-connection. 

We defined real $k$-polarizations for
$n$-plectic manifolds in Def.\ \ref{k-polarization}. Recall that,
unlike the symplectic case, there are several ways to define
orthogonal complements for $n$-plectic manifolds. Hence, there are
different ways to generalize the notion of Lagrangian submanifold, and therefore
real polarization, to the $n$-plectic case. For the 2-plectic case,
we can consider either 1-polarizations or 2-polarizations.
Regardless, the definitions in the previous section for symplectic
manifolds naturally generalize:
\begin{definition} \label{quant_2-plectic_def}
A {\bf quantized 2-plectic manifold} is a prequantized 2-plectic
manifold $(M,\omega,\xi)$ equipped with a real $k$-polarization $F$.
\end{definition}
\noi The next definition uses the notion of 2-holonomy for a Deligne 2-cocycle (Def.\
\ref{2-hol_2-conn_def}). 
\begin{definition}\label{2-plectic_BS_def}
Let $(M,\omega,\xi,F)$ be a quantized 2-plectic manifold. The
{\bf Bohr-Sommerfeld variety} {\boldmath $V_{\mathrm{BS}}$} associated to $F$ is the
union of all leaves $\Lambda$ of $F$ which satisfy 
\[
\hol(\xi \vert_{\Lambda},\sigma)=1
\]
for all maps $\sigma \maps \Sigma^{2} \to \Lambda$,
where $\Sigma^{2}$ is a compact, oriented 2-manifold.
\end{definition}

The Bohr-Sommerfeld variety is, by construction, a disjoint union of
immersed submanifolds in $M$. The inclusion map
$V_{\mathrm{BS}} \xrightarrow{i} M$ is smooth, and we can pull-back
the Deligne 2-cocycle $\xi$ to $V_{\mathrm{BS}}$. If $\xi$ is defined
with respect to an open cover $\{U_{i}\}$ of $M$, then $\xi
\vert_{V_{\mathrm{BS}}}$ is a 2-cocycle with respect to the cover
$\{U_{i} \cap V_{ \mathrm{BS}} \}$.
In analogy with the
symplectic case, we consider global sections of the 2-line stack
$\HVB^{\xi}$ over $V_{\mathrm{BS}}$, where by $\xi$ we mean $\xi \vert_{V_{\mathrm{BS}}}$.
Proposition
\ref{2-line_stack_conn_prop} implies that the category of such global
sections is equivalent to the category of $\xi
\vert_{V_{\mathrm{BS}}}$-twisted Hermitian vector bundles over
$V_{\mathrm{BS}}$. In Definition \ref{twisted_flat_def}, we described what it means for a twisted
bundle to be twisted-flat. We interpret twisted-flatness to be the 2-plectic
analogue of covariantly constant. 

Let $(E_{i}, \conn_{i},\phi_{ij})$ be a $\xi
\vert_{V_{\mathrm{BS}}}$-twisted Hermitian vector
bundle over the Bohr-Sommerfeld variety. Recall from Def.\ \ref{twistedvb_conn},
that such a bundle is given by the following data: Over
each open set $V_{i}=U_{i} \cap V_{\mathrm{BS}}$,
a Hermitian vector bundle with connection $(E_{i},\conn_{i})$, 
and, over each intersection $V_{i} \cap V_{j}$,
an isomorphism $\phi_{ij}$ between the pullbacks of bundles $E_{j}$
and $E_{i}$. The isomorphisms $\phi_{ij}$ are required to satisfy
compatibility relations with the 1-forms $A_{ij} \vert_{V_{\mathrm{BS}}}$ on $V_{i} \cap
V_{j}$, and with the $\U(1)$-valued functions $g_{ijk}
\vert_{V_{\mathrm{BS}}}$ on $V_{i} \cap V_{j} \cap V_{k}$.

We can pull this twisted bundle
back to any leaf $\Lambda \ss V_{\mathrm{BS}}$ in the obvious
way, resulting in a bundle twisted by
$\xi \vert_{\Lambda}=(g \vert_{\Lambda}, A \vert_{\Lambda}, B \vert_{\Lambda})$. It is twisted-flat iff
the equality
\[
\conn^2_{i} \vert_{\Lambda} - \i \cdot B_{i} \vert_{\Lambda} \tensor \id=0.
\]
holds for all $i$. 
Twisted bundles satisfying the above for all leaves
$\Lambda \ss V_{\mathrm{BS}}$ form a full subcategory of $\HVB^{\xi}(V_{\mathrm{BS}})$.
Hence, we have a categorified analogue of the quantum state space:
\begin{definition} \label{quant_state_cat_def}
Let $(M,\omega,\xi,F)$ be a quantized 2-plectic manifold and
$V_{\mathrm{BS}}$ be the corresponding Bohr-Sommerfeld variety.
The {\bf quantum state category} $\Quant(V_{\mathrm{BS}})$ is the 
subcategory of $\HVB^{\xi}(V_{\mathrm{BS}})$
consisting of twisted Hermitian vector bundles that are twisted-flat
along each leaf contained in ${V_{\mathrm{BS}}}$.
\end{definition}

\subsection{Example: $\R^3\setminus\{0\}$} \label{cat_quant_example_sec}
In this section, we consider an example in detail which will reveal
several interesting aspects of our quantization procedure for
2-plectic manifolds. We construct the quantum state category
associated to the manifold $M=\R^3\setminus\{0\}$ equipped with the
2-plectic form
\[
\omega = \frac{1}{r^2} dx^{1} \wedge dx^{2} \wedge dx^{3},
\] 
where $r$ is given by the usual Euclidean norm.
In analogy with the example involving the symplectic manifold
$\R^{2}\setminus\{0\}$, we will see how the Bohr-Sommerfeld variety is
used to overcome certain topological obstructions.

One reason for considering the 3-form $\omega$ is because 
$\omega = d B$, where
\[
B = \frac{1}{r^{2}} (x dy\wedge dz + y dz \wedge dx + z dx \wedge
dy).
\]
Restricting the 2-form $B$ to a sphere centered about the origin gives
the famous KKS symplectic form. We mentioned this symplectic structure
and the role it plays in representation theory in the introduction to
the chapter. We shall make use of this fact later on in Sec.\ \ref{rep_theory}.

We prequantize $(M,\omega)$ with the Deligne 2-cocycle
$\xi=(1,0, B)$. More precisely, we choose a good open cover $\{U_{i} \}$ of $M$ and
consider $\xi$ as the restriction of $\xi$ to this cover.

\subsubsection*{Global sections of $\HVB^{\xi}$}
Before we proceed further, let us characterize the global sections of
$\HVB^{\xi}$ i.e.\ $\xi$-twisted Hermitian vector bundles over $M$.
Let $(E_{i},\conn_{i},\phi_{ij})$ be such a bundle. Since $\xi=(1,0,
B)$ projects to the trivial class in $H^{2}(M,\sh{U(1)})$, we are dealing
with trivially twisted vector bundles with connection.
Let $\HVB_{\conn}(M)$ denote the category whose objects are Hermitian
vector bundles over $M$ equipped with connection.
The following proposition says we can identify trivially twisted
bundles with ordinary bundles.
\begin{prop} \label{just_conn_prop}
The categories $\HVB^{\xi}(M)$ and $\HVB_{\conn}(M)$ are equivalent.
\end{prop}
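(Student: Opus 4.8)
The plan is to exhibit an explicit equivalence of categories in both directions and check they are mutually inverse up to natural isomorphism. Since the Deligne cocycle is $\xi = (1, 0, B)$ with trivial transition functions $g_{ijk} = 1$ and trivial connecting $1$-forms $A_{ij} = 0$, an object of $\HVB^{\xi}(M)$ is by Definition \ref{twistedvb_conn} a collection of Hermitian bundles with connection $(E_i, \conn_i)$ over each $U_i$, together with isomorphisms $\phi_{ij} \maps E_j\vert_{U_{ij}} \iso E_i\vert_{U_{ij}}$ satisfying $\phi_{ij}\conn_j - \conn_i\phi_{ij} = \i \cdot A_{ij} \tensor \phi_{ij} = 0$ and $\phi_{ik}^{-1}\circ\phi_{ij}\circ\phi_{jk} = g_{ijk} = \id$. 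First I would observe that these are exactly the conditions that $\phi_{ij}$ are connection-preserving and satisfy the strict cocycle identity on triple overlaps, which is precisely the descent data needed to glue the local bundles with connection into a single global object of $\HVB_{\conn}(M)$.

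The cleanest route is to invoke the descent machinery already established in the excerpt rather than gluing by hand. The stack $\HVB$ of Hermitian vector bundles is known to be a stack over $M$, and Proposition \ref{trivial_twist_prop} (together with the discussion in Appendix \ref{stack_appendix}) shows that when the twisting cocycle is trivial, the category of twisted bundles is precisely the category of descent data, which the stack condition identifies with the global category. I would argue that the same reasoning applies verbatim once we enrich $\HVB$ to the stack $\HVB_{\conn}$ of bundles \emph{with connection}: this is again a stack (connections and connection-preserving morphisms satisfy the sheaf/descent conditions locally), and the $\xi$-twisted-with-connection data for $\xi = (1,0,B)$ is exactly a descent datum for $\HVB_{\conn}$ relative to the cover $\{U_i\}$. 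Since a stack's category of global sections is equivalent to its category of descent data, we obtain the equivalence $\HVB^{\xi}(M) \simeq \HVB_{\conn}(M)$ directly.

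Concretely, the functor $\HVB^{\xi}(M) \to \HVB_{\conn}(M)$ sends $(E_i, \conn_i, \phi_{ij})$ to the glued bundle $E = \coprod_i E_i / \!\sim$, where sections are identified via the $\phi_{ij}$; since each $\phi_{ij}$ is connection-preserving (because $A_{ij} = 0$), the local connections $\conn_i$ agree on overlaps and descend to a global connection $\conn$ on $E$, and the Hermitian structures glue because the $\phi_{ij}$ preserve them. The inverse functor restricts a global $(E, \conn)$ to the cover, with $\phi_{ij}$ the identity gluing maps; these trivially satisfy the cocycle and the vanishing condition $\phi_{ij}\conn_j - \conn_i\phi_{ij} = 0$. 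On morphisms both functors act in the evident way, and one checks the round trips are naturally isomorphic to the identities using the stack's gluing-of-morphisms property (the prestack condition guarantees connection-preserving morphisms glue uniquely).

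The main obstacle I anticipate is not conceptual but bookkeeping: one must confirm that $\HVB_{\conn}$ genuinely forms a stack, i.e.\ that connections glue as well as the underlying bundles and that connection-preserving morphisms satisfy the sheaf condition. The subtlety is that a connection is extra structure not captured by the bare stack $\HVB$, so the argument of Proposition \ref{2-line_stack_prop} must be re-run with the connection data carried along; this is exactly the content noted in the proof of Proposition \ref{2-line_stack_conn_prop}, namely that the construction is ``essentially identical'' to the one in Appendix \ref{stack_appendix}. I would therefore reduce the proof to that appendix result: once $\HVB_{\conn}$ is a stack and $\xi = (1,0,B)$ presents trivial gluing data, the equivalence with the global category is immediate, and the fact that $B$ itself is a globally defined $2$-form means no genuine twisting survives to obstruct gluing.
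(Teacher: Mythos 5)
Your proposal is correct and follows essentially the same route as the paper: both recognize that for $\xi=(1,0,B)$ a twisted bundle is exactly a descent datum for Hermitian bundles with connection, and both obtain the equivalence from the stack (gluing) property of such bundles, the only cosmetic difference being that the paper writes the functor in the restriction direction and verifies it is full, faithful, and essentially surjective, while you describe the gluing functor and its inverse. Your explicit flag that one must check connections and connection-preserving morphisms themselves satisfy descent is the one point the paper's proof passes over silently, so your write-up is, if anything, slightly more careful.
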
 
\begin{proof}
Since $\xi=(1,0,\i \cdot B)$, Def.\ \ref{twistedvb_conn} implies that an object of
$\HVB^{\xi}(M)$ is given by a Hermitian vector bundle with connection
$(E_{i},\conn_{i})$ on each $U_{i}$, an isomorphism $\phi_{ij} \maps
E_{j} \vert_{U_{ij}} \iso E_{i} \vert_{U_{ij}}$, which preserves the
connection $\phi_{ij} \conn_{j}  =
\conn_{i} \phi_{ij}$ on $U_{ij}$, such that
$\phi_{ik}^{-1} \phi_{ij} \phi_{jk}=1$ on $U_{ijk}$ 
A morphism $(E_{i},\conn_{i},\phi_{ij}) \to (E'_{i},\conn'_{i},\phi'_{ij})$ is given by a
collection of bundle morphisms $E_{i}
\xrightarrow{f_{i}} E'_{i}$ which preserve the connection $\conn'_{i}
f_{i} = f_{i}\conn_{i}$, satisfying $f_{i} \phi_{ij} = \phi'_{ij}
f_{j}$ on $U_{ij}$. 

Now, consider the functor $F \maps \HVB_{\conn}(M) \to \HVB^{\xi}(M)$ which
sends a vector bundle $(E,\conn)$ to the trivially twisted bundle $(E
\vert_{U_{i}},\conn \vert_{U_{i}}, \phi_{ij}=\id)$, and a morphism $f$
to its restriction on each $U_{i}$. We shall show $F$ is full, faithful,
and essentially surjective, and hence gives an equivalence of categories.
For essential surjectivity, we must show that given $(E_{i},\conn_{i},\phi_{ij})$
there exists an object $(E,\conn)$ such that $F(E,\conn)$ is isomorphic
to $(E_{i},\conn_{i},\phi_{ij})$. By unraveling Def.\ \ref{stack_def} for a
stack, we see that the above data for a trivially twisted bundle implies there exists a Hermitian
vector bundle with connection $(E,\conn)$ on $M$ and connection
preserving isomorphisms $E \vert_{U_{i}} \xto{\psi_{i}} E_{i}$ on $U_{i}$ such
that $\phi_{ij} \psi_{j} = \psi_{i}$ on $U_{ij}$. Hence, the $\psi_{i}$
give an isomorphism in $\HVB^{\xi}(M)$ between $F(E,\conn)$ and $(E_{i},\conn_{i},\phi_{ij})$.

It's clear that $F$ is faithful (i.e.\ injective on morphisms). For
fullness, we must show $F \maps \Hom(E,E') \to \Hom(F(E),F(E'))$ is
surjective. Let $E \vert_{U_{i}} \xto{f_{i}} E' \vert_{U_{i}}$ denote
a morphism between $F(E)$ and $F(E')$. Since
$\phi_{ij}=\phi'_{ij}=\id$, it follows from the definition of morphism
that $f_{i}=f_{j}$ on each $U_{ij}$. Since morphisms between bundles
form a sheaf, there exists a unique global morphism $E \xto{f} E'$
such that $f \vert_{U_{i}} =f_{i}$. Hence, the proposition is proven.
\end{proof}

\subsubsection*{Topological considerations}
There is an obvious foliation $F$ of $M$ whose leaves $S^{2}_{R}$ are
concentric spheres of radius $R >0$ about the origin. 
Since $\omega$ is a volume form on $M$, our discussion in
Example \ref{Rn_example} implies that $F$ is a 2-polarization.
Hence, $(M,\omega,\xi,F)$ is a quantized 2-plectic manifold.


To see why the Bohr-Sommerfeld variety is needed,
let us consider global sections of $\HVB^{\xi}$ which are twisted-flat
along the leaves of $F$. By Prop.\ \ref{just_conn_prop}, any global
section can be thought of as a Hermitian vector bundle $E\to M$ with
connection $\conn$. Let $E \vert_{R}$ denote the restriction of this bundle
to a leaf $S^{2}_{R}$. By definition, $E \vert_{R}$ is twisted-flat if its
curvature satisfies
$\conn^{2} \vert_{R} = \i \cdot B \vert_{R} \tensor \id$. 

The next proposition implies $B \vert_{R}$ 
must be an integral 2-form.
\begin{prop} \label{just_lines_prop}
If $E$ is a rank $n$ Hermitian vector bundle with connection $\conn$
on $S^{2}\vert_{R}$ with curvature $\conn^{2}= \i\cdot B \vert_{R} \tensor \id$, then there is an
isomorphism of bundles
\[
E \iso  L_{1} \oplus L_{2} \oplus \cdots \oplus L_{n}
\]
where $L_{i}$ is a Hermitian line bundle with connection whose curvature
2-form is $B \vert_{R}$. Moreover, the $L_{i}$'s are all isomorphic as line
bundles with connection.
\end{prop}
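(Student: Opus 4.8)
The statement asserts that a rank $n$ Hermitian vector bundle $E$ with connection $\conn$ on the sphere $S^2_R$, whose curvature satisfies $\conn^2 = \i \cdot B\vert_R \tensor \id$, splits as a direct sum of $n$ isomorphic Hermitian line bundles, each with curvature $B\vert_R$, and that as a preliminary consequence $B\vert_R$ must be integral. My approach is to first reduce the existence of a single line bundle with the prescribed curvature to the Chern--Weil / integrality story on $S^2$, and then to peel off line subbundles inductively using the parallel structure forced by the curvature condition.

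First I would establish integrality of $B\vert_R$. Since $\conn^2 = \i \cdot B\vert_R \tensor \id$, taking the trace gives $\Tr(\conn^2) = n \cdot \i \cdot B\vert_R$. But $\frac{1}{2\pi\i}\Tr(\conn^2)$ is a representative of the first Chern class $c_1(E) \in H^2(S^2,\Z)$, so $\frac{n}{2\pi} B\vert_R$ represents an integral cohomology class; integrating over $S^2_R$ forces $\int_{S^2_R} B\vert_R \in \frac{2\pi}{n}\Z$, and in particular $B\vert_R$ is integral in the sense of Def.\ \ref{integral_def}. Having an integral closed 2-form on the simply-connected surface $S^2$, Prop.\ \ref{integral=curvature} yields a Deligne 1-cocycle, hence (by Example \ref{n=1}) a Hermitian line bundle $L_0$ with connection whose curvature is exactly $B\vert_R$. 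This is the candidate line bundle.

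Next I would produce the splitting. The key observation is that $E \tensor L_0^{\ast}$ carries the induced connection whose curvature is $\conn^2 - \i \cdot B\vert_R \tensor \id = 0$, so $E \tensor L_0^{\ast}$ is a \emph{flat} Hermitian bundle on $S^2_R$. Since $S^2$ is simply connected, a flat bundle has trivial holonomy and is therefore isomorphic, as a bundle with connection, to the trivial bundle $\underline{\C}^{\,n}$ with its trivial (product) connection. Untwisting, this gives an isomorphism of Hermitian bundles with connection $E \iso L_0^{\oplus n}$, which is precisely the desired decomposition $E \iso L_1 \oplus \cdots \oplus L_n$ with each $L_i \cong L_0$ as a line bundle with connection, and each curvature equal to $B\vert_R$. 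The ``moreover'' clause is then immediate since all summands are copies of the same $L_0$.

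The main obstacle I anticipate is the flatness-to-triviality step: I must be careful that the isomorphism is genuinely connection-preserving (an isomorphism of Hermitian bundles \emph{with connection}), not merely a smooth bundle isomorphism. This requires invoking simple connectivity of $S^2$ together with the standard fact that a flat Hermitian bundle on a simply connected base is determined by its holonomy representation $\pi_1(S^2) = 0$, hence is trivial with trivial connection; parallel transport along paths from a fixed basepoint provides the trivializing frame, and one checks it is globally well-defined precisely because $\pi_1$ vanishes. A secondary subtlety is confirming that tensoring with $L_0^{\ast}$ interacts correctly with the Hermitian structures so that unitarity is preserved throughout; this is routine but should be stated. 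Everything else reduces to the Chern--Weil integrality computation and the earlier Deligne-cohomology results, which I would cite rather than reprove.
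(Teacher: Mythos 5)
There is a genuine gap at the very first step, and it is load-bearing. From $\conn^{2}=\i\cdot B\vert_{R}\tensor\id$ your trace computation gives that $\frac{1}{2\pi}\Tr(\frac{1}{\i}\conn^{2})=\frac{n}{2\pi}B\vert_{R}$ represents the integral class $c_{1}(E)$, i.e.\ $\int_{S^{2}_{R}}B\vert_{R}\in\frac{2\pi}{n}\Z$. The inference ``and in particular $B\vert_{R}$ is integral'' does not follow: $\frac{2\pi}{n}\Z$ strictly contains $2\pi\Z$, so knowing that $n$ times the period is integral says nothing about the period itself. Nothing in the Chern--Weil data of $E$ alone (trace, or equivalently $\det E$) can detect integrality of $B\vert_{R}$; one genuinely needs to exploit the fact that the curvature is a \emph{central} endomorphism together with the simple connectivity of $S^{2}$. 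This is exactly what the paper's proof supplies: by Ambrose--Singer the holonomy Lie algebra is the diagonal $\u(1)^{n}$, and since $\pi_{1}(S^{2})=0$ the full holonomy group equals its identity component, so the Reduction Theorem splits $E$ into line bundles each of curvature $B\vert_{R}$, and \emph{only then} does integrality of $B\vert_{R}$ drop out. Without integrality you cannot invoke Prop.~\ref{integral=curvature} to manufacture $L_{0}$, and the rest of your argument has nothing to tensor against.

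The second half of your proposal is, by contrast, correct and is a genuinely different (and in one respect cleaner) route than the paper's: granting a line bundle $L_{0}$ with curvature $B\vert_{R}$, the bundle $E\tensor L_{0}^{\ast}$ is flat and unitary, hence trivial with trivial connection on the simply connected sphere, giving $E\cong L_{0}^{\oplus n}$ with the ``moreover'' clause for free, whereas the paper must separately argue via the exactness of the Deligne sequence and $H^{1}(S^{2},\U(1))=0$ that the summands produced by the reduction are mutually isomorphic. A clean repair of your argument that keeps its flavor: observe that $\End(E)=E\tensor E^{\ast}$ has curvature $[\i B\vert_{R}\tensor\id,\,\cdot\,]=0$, so the associated $\mathrm{PU}(n)$-bundle is flat and hence trivial on $S^{2}$; this exhibits $E\cong L\tensor\underline{\C}^{n}$ for a line bundle $L$ whose curvature must then be $B\vert_{R}$, which both proves integrality and yields the splitting in one stroke. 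As written, however, the proof is incomplete.
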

\noi The proposition can be proven using classical results from
differential geometry. Let $P$ be a principal $G$-bundle over a connected manifold
$M$ equipped with a  $\g$-valued connection 1-form $\theta$. 
Such a bundle is said to be {\bf reducible} to a principal $G'$-bundle
$P' \xto{\iota} P$ iff $G'$ is a subgroup of $G$,
and the inclusion map $\iota$ commutes with the group action of
$G'$. The connection $\theta$ reduces to a connection on $P'$
iff its pullback along the inclusion takes values in the Lie algebra
of $G'$.

Given  $p \in P$, let $H(p)$ denote the set of points in $P$ which are
joined by a piece-wise smooth horizontal path in $P$.
Let $\mathrm{Hol}_{p}(\theta)$ be the holonomy group based at $p \in P$
i.e.\ the subgroup of $G$ consisting of elements $g$ such that $p$ and
$pg$ are joined by a piece-wise smooth horizontal loop in
$P$. Similarly, let $\mathrm{Hol}^{0}_{p}(\theta)$ be the subgroup
consisting of those $g$ such that $p$ and $pg$ are connected by a
contractible horizontal loop. Both of these subgroups are, in fact,
Lie subgroups.
The following is Theorem 7.1 in Kobayashi-Nomizu \cite{Kobay-Nomizu}.
\begin{theorem}[Reduction Theorem]\label{reduction_thm}
A principal $G$-bundle $P$ with connection $\theta$ is reducible to a
principal bundle with total space $H(p)$ and structure group 
$\mathrm{Hol}_{p}(\theta)$. Furthermore, $\theta$ reduces to a connection
on $H(p)$.
\end{theorem}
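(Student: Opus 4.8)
The plan is to follow the classical argument of Kobayashi and Nomizu, taking as given the fact already asserted in the text that $\mathrm{Hol}_{p}(\theta)$ is a Lie subgroup of $G$. Write $H=\mathrm{Hol}_{p}(\theta)$, let $\mathfrak{h}$ be its Lie algebra, and let $\pi \maps P \to M$ be the bundle projection. The first step is to establish two structural facts about the subset $H(p) \ss P$. I would first show that $H(p)$ is stable under the right action of $H$: if $q \in H(p)$ is joined to $p$ by a horizontal path and $a \in H$, then right translation by $a$ carries that path to a horizontal path from $pa$ to $qa$ (using that the horizontal distribution is $G$-invariant), and concatenating with a horizontal loop realizing $a$ exhibits $qa \in H(p)$. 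Conversely, if $q,q'$ lie in the same fiber of $\pi$ and both belong to $H(p)$, then $q'=qa$ for a unique $a \in G$, and composing the horizontal path from $p$ to $q'$ with the right-$a$-translate of the reversed path from $p$ to $q$ produces a horizontal path from $p$ to $pa$, so $a \in H$. Thus each fiber of $\pi$ meeting $H(p)$ does so in a single $H$-orbit.

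Next I would equip $H(p)$ with the structure of a principal $H$-bundle over $M$. Since $M$ is connected, hence path-connected, every $x \in M$ can be joined to $\pi(p)$ by a path, and horizontally lifting that path from $p$ produces a point of $H(p)$ over $x$; together with the orbit description above, this shows $\pi$ restricts to a surjection $H(p) \to M$ whose fibers are exactly the $H$-orbits. To produce the differentiable structure I would work locally: over a coordinate neighborhood $U$ of a point $x_{0} \in M$, choose a smoothly varying family of paths from $x_{0}$ to the points of $U$ and horizontally lift a fixed $q_{0} \in H(p) \cap \pi^{-1}(x_{0})$ along them, obtaining a smooth local section $s \maps U \to H(p)$. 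The map $U \times H \to H(p)$, $(x,a) \mapsto s(x)\,a$, then serves as a local trivialization, and I would check that the transition maps between two such charts are smooth maps into $H$. \emph{This is the main obstacle of the proof:} controlling the manifold structure of $H(p)$ when $H$ sits inside $G$ only as an immersed (possibly non-embedded) Lie subgroup, so that the topology and smooth structure on $H(p)$ must be defined by these charts rather than inherited as a subspace of $P$. One must verify that the result is independent of the choices and that the inclusion $H(p) \embed P$ is a smooth immersion; this is precisely the step where the Lie-subgroup property of $H$, granted above, is indispensable.

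Finally, I would show that $\theta$ reduces to a connection on $H(p)$, which by the definition of reduction given in the text amounts to checking that its pullback takes values in $\mathfrak{h}$. The key observation is that at any $q \in H(p)$ the tangent space $T_{q}H(p)$ decomposes as the horizontal subspace of the connection together with the vertical vectors generated by $\mathfrak{h}$: horizontal curves through $q$ remain in $H(p)$ by definition, so the full horizontal subspace is tangent to $H(p)$, while the vertical part of $T_{q}H(p)$ is exactly the tangent to the $H$-orbit, namely the fundamental vector fields of elements of $\mathfrak{h}$. Pulling back $\theta$ along the inclusion $\iota$, I would verify that $\iota^{\ast}\theta$ annihilates horizontal vectors and sends the fundamental vector field of $X \in \mathfrak{h}$ to $X$; hence $\iota^{\ast}\theta$ is $\mathfrak{h}$-valued and satisfies the two defining axioms of a connection form on the principal $H$-bundle $H(p)$. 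Since the reduced horizontal distribution coincides with the original one restricted to $H(p)$, the reduced connection has the same parallel transport, completing the proof.
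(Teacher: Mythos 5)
The paper provides no proof of this statement at all---it simply cites Theorem 7.1 of Kobayashi--Nomizu---and your proposal is a faithful reconstruction of exactly that classical argument: stability of $H(p)$ under the right action of $\mathrm{Hol}_{p}(\theta)$, the fiberwise orbit description, local trivializations built from horizontal lifts along smoothly varying paths, and the verification that $\iota^{\ast}\theta$ is $\mathfrak{h}$-valued and satisfies the connection axioms. You also correctly isolate the one genuinely delicate step (endowing $H(p)$ with a smooth structure when $\mathrm{Hol}_{p}(\theta)$ is merely an immersed Lie subgroup, so the charts must define rather than inherit the topology), which is precisely where Kobayashi--Nomizu invest their effort, so your sketch matches the intended proof.
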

\noi Next, we recall the Ambrose-Singer Theorem.
\begin{theorem}[\cite{Ambrose-Singer}]
If $\Omega$ is the curvature 2-form of a principal $G$-bundle $P$
with connection $\theta$, then the Lie algebra of $\mathrm{Hol}_{p}(\theta)$
is the subspace of $\g$ spanned by all elements of the form
$\Omega_{q}(v_1,v_2)$, where $q \in H(p)$ and $v_{1},v_{2}$ are
horizontal tangent vectors at $q$.
\end{theorem}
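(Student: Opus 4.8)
The plan is to reproduce the classical argument of Ambrose and Singer, as presented in Kobayashi--Nomizu, which deduces the identity by a dimension count on a leaf of an auxiliary distribution on $P$. Write $\g$ for the Lie algebra of $G$, and for $A \in \g$ let $A^{\ast}$ denote the fundamental vertical vector field on $P$ generated by $A$; the map $A \mapsto A^{\ast}_{q}$ is a linear isomorphism of $\g$ onto the vertical subspace of $T_{q}P$. First I would apply the Reduction Theorem (Thm.~\ref{reduction_thm}) to replace $P$ by the holonomy bundle $H(p)$ and $G$ by $\mathrm{Hol}_{p}(\theta)$, along which $\theta$ restricts to a connection. After this reduction the curvature $\Omega$ takes values in $\g = \mathrm{Lie}(\mathrm{Hol}_{p}(\theta))$, so it suffices to prove that the subspace $\g' \ss \g$ spanned by the elements $\Omega_{q}(v_{1},v_{2})$ (with $q \in H(p)$ and $v_{1},v_{2}$ horizontal) equals $\g$; the inclusion $\g' \ss \g$ is clear.

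Next I would check that $\g'$ is an ideal of $\g$, and in particular a subalgebra. Under the right action of $a \in G$ the horizontal subspaces are permuted and the curvature transforms by $\Omega_{qa}((R_{a})_{\ast}v_{1},(R_{a})_{\ast}v_{2}) = \Ad(a^{-1})\,\Omega_{q}(v_{1},v_{2})$. Since $H(p)$ is $G$-invariant, the span $\g'$ is invariant under $\Ad(G)$; differentiating this invariance gives $[\g,\g'] \ss \g'$, so $\g'$ is an ideal and hence closed under its own bracket.

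With $\g'$ in hand I would form the distribution $D$ on $H(p)$ given at each point $q$ by $D_{q} = \mathrm{Hor}_{q} \oplus \{ A^{\ast}_{q} : A \in \g' \}$, a distribution of constant rank $\dim M + \dim \g'$, and verify that it is involutive by testing brackets on a spanning set. For standard horizontal fields $X,Y$ (horizontal lifts of vector fields on $M$), the structure equation $\Omega = d\theta + \half[\theta,\theta]$ together with $\theta(X)=\theta(Y)=0$ gives $\theta([X,Y]) = -\Omega(X,Y)$, so the vertical component of $[X,Y]$ is the fundamental field of $-\Omega(X,Y) \in \g'$ and $[X,Y]$ is a section of $D$. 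For $A \in \g'$ and $X$ a standard horizontal field, right-invariance of $X$ yields $[A^{\ast},X] = 0$; and for $A,B \in \g'$ one has $[A^{\ast},B^{\ast}] = [A,B]^{\ast}$, which lies in $D$ because $\g'$ is a subalgebra. By the Frobenius theorem $D$ is integrable.

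Let $N$ be the maximal connected integral manifold of $D$ through $p$. Because $D$ contains the horizontal distribution, every piecewise-smooth horizontal curve from $p$ stays in $N$, so $H(p) \ss N$; as $N$ is a leaf it is an initial submanifold, hence the inclusion is a smooth immersion and $\dim H(p) \le \dim N$. But $\dim N = \dim M + \dim \g'$, while $H(p)$ is a principal $\mathrm{Hol}_{p}(\theta)$-bundle over the connected base $M$, so $\dim H(p) = \dim M + \dim \g$. Thus $\dim \g \le \dim \g'$, and with $\g' \ss \g$ this forces $\g' = \g$, proving the theorem. I expect the main obstacle to be the involutivity step: getting the sign in the structure equation correct, confirming that the horizontal--vertical bracket genuinely vanishes for invariant horizontal lifts, and ensuring that the curvature values used to close the horizontal--horizontal bracket indeed lie in $\g'$ rather than merely in $\g$. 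The $\Ad$-invariance argument making $\g'$ an ideal is the conceptual crux on which closedness of $D$ under brackets depends.
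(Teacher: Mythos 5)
The paper states this theorem without proof, citing Ambrose--Singer (Theorem 8.1 in Kobayashi--Nomizu, Ch.~II), so there is no internal argument to compare against; your proposal is a correct reproduction of exactly that classical proof. Every step checks out: the reduction to the holonomy bundle $H(p)$ so that $\Omega$ becomes $\mathrm{Lie}(\mathrm{Hol}_{p}(\theta))$-valued, the $\Ad$-equivariance $\Omega_{qa}((R_{a})_{\ast}v_{1},(R_{a})_{\ast}v_{2})=\Ad(a^{-1})\Omega_{q}(v_{1},v_{2})$ making $\g'$ an ideal (which is precisely what closes the vertical--vertical brackets $[A^{\ast},B^{\ast}]=[A,B]^{\ast}$ in the involutivity check), the vanishing of $[A^{\ast},X]$ for right-invariant horizontal lifts, the structure-equation identity $\theta([X,Y])=-\Omega(X,Y)$ for horizontal $X,Y$, and the final dimension count $\dim M+\dim\g=\dim H(p)\le\dim N=\dim M+\dim\g'$ forcing $\g'=\g$.
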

Now we give the proof of our proposition.

\begin{proof}[Proof of Proposition \ref{just_lines_prop}]
Let $(P,\theta)$ be the principal
$\U(n)$-bundle with connection whose associated bundle is $E$. 
Let $p \in P$. Since the curvature of $E$ is $\i \cdot B \vert_{R} \tensor \id$, the Ambrose-Singer Theorem
implies the Lie algebra of $\mathrm{Hol}_{p}(\theta)$ is
\begin{equation} \label{lie_alg_eq}
\underset{n}{\underbrace{\u(1) \times \cdots \times \u(1)}},
\end{equation}
where $n=\mathrm{rank}(E)$.
The reduced holonomy group $\mathrm{Hol}^{0}_{p}(\theta)$ is the
connected component of $\mathrm{Hol}_{p}(\theta)$ containing the
identity. Therefore its Lie algebra is also (\ref{lie_alg_eq}) and
hence 
\[
\mathrm{Hol}^{0}_{p}(\theta) = \underset{n}{\underbrace{\U(1) \times \cdots \times \U(1)}}.
\]
Since $S^{2}$ is simply connected, $\mathrm{Hol}_{p}(\theta)=\mathrm{Hol}^{0}_{p}(\theta)$.
Therefore, by the Reduction Theorem, $P$ reduces to a $\U(1) \times
\cdots \times \U(1)$ bundle, which implies that $E$ is isomorphic to a
direct sum of line bundles 
\[
E'=L_{1} \oplus \cdots \oplus L_{n}.
\]

Indeed, if $g_{ab} \maps U_{ab} \to \U(n)$ are local transition
functions for $E$, the above isomorphism implies that there exists local functions $f
\maps U_{a} \to \U(n)$ and transition functions
\[
\begin{array}{c}
h_{ab} \maps U_{ab} \to \U(1) \times \cdots \times \U(1)\\
 x \mapsto (h^{1}_{ab}(x),\ldots,h^{n}_{ab}(x)),
\end{array}
\]
such that $h_{ab} = f_{a} g_{ab} f_{b}^{-1}$. If $\Omega'_{a}$ and $\Omega_{a}$ are the
local curvature 2-forms for $E'$ and $E$,
respectively, then
\[
\Omega'_{a} = f_{a} \Omega_{a} f^{-1}_{a} = \i f_{a} B \vert_{R} \cdot I
f^{-1}_{a} =\i B \vert_{R} \cdot I,
\]
where $I$ is the identity matrix.
Hence, the connection $\conn_{i}$ on the line bundle $L_{i}$ induced by the reduction
has curvature $B \vert_{R}$.

Finally, we show that all the line bundles $(L_{i},\conn_{i})$ are
isomorphic. We do so by showing that their local data of transition
functions and 1-forms all represent the same class in the degree 1
Deligne cohomology of $S^{2}$.
In the proof of Prop.\ \ref{curvature=integral}, 
we showed that the sequence (\ref{long_exact_sequence}) is exact.
Hence, the following sequence is exact:
\[
0 \to H^{1}(S^{2},\U(1)) \to H^{1}(S^{2},D^{\bullet}_{1})
\stackrel{\kappa}{\to} Z^{2}(S^{2}) \stackrel{f}{\to} H^{2}(S^{2},\U(1)),
\]
which relates cohomology with $\U(1)$-coefficients to the Deligne
cohomology group $H^{1}(S^{2},D^{\bullet}_{1})$. The map $\kappa$
sends a Deligne class to the closed 2-form corresponding to
its curvature. The Universal Coefficient Theorem implies:
\[
H^{1}(S^{2},\U(1)) \cong \Hom(H_{1}(S^2,\Z),\U(1)) =0.
\]
Hence the curvature map $\kappa$ is injective. Therefore line
bundles with the same curvature are isomorphic. This completes the proof.
\end{proof}

\subsubsection*{Constructing the Bohr-Sommerfeld variety}
The quantum state category is the subcategory of sections of
the stack $\HVB^{\xi}$ which are twisted-flat over the leaves 
contained in the Bohr-Sommerfeld variety. Proposition
\ref{just_lines_prop} implies that we have no hope of finding such
sections if the 2-form $B$ is not integral, since it must be the
curvature of a line bundle.  Remarkably, 
the 2-plectic Bohr-Sommerfeld variety, obtained by categorifying the
symplectic definition, resolves this issue.  
\begin{prop}
The 2-form $B$ restricts to an integral 2-form on a leaf
of the foliation $F$ if and only if the leaf is contained in the
Bohr-Sommerfeld variety.
\end{prop}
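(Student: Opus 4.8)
The plan is to reduce the claim to the equivalence between two elementary conditions: the integrality of $B \vert_{S^{2}_{R}}$ as a closed 2-form, and the triviality of the 2-holonomy of the Deligne 2-cocycle $\xi = (1,0,B)$ restricted to $S^{2}_{R}$. The key observation is that because $\xi = (1,0,B)$ has trivial transition functions and trivial 1-forms, its restriction to a leaf $S^{2}_{R}$ is the Deligne 2-cocycle $(1,0,B\vert_{S^{2}_{R}})$, which is precisely the image $\iota(B\vert_{S^{2}_{R}})$ under the inclusion $\iota \maps \Omega^{2}(S^{2}_{R}) \to H^{2}(S^{2}_{R},D^{\bullet}_{2})$ discussed in Section \ref{holonomy_sec}. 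So the 2-holonomy question becomes a statement about $\iota$ and the exact sequence (\ref{long_exact_seq}).

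First I would invoke the 2-holonomy formalism directly. By Definition \ref{holonomy_def} and Definition \ref{2-hol_2-conn_def}, the 2-holonomy of $\xi\vert_{S^{2}_{R}}$ over a map $\sigma \maps \Sigma^{2} \to S^{2}_{R}$ is obtained by finding a 2-form $\alpha$ on $\Sigma^{2}$ with $\iota(\alpha) = [\sigma^{\ast}\xi\vert_{S^{2}_{R}}]$ and exponentiating $\i\int_{\Sigma^{2}}\alpha$. Since $\xi\vert_{S^{2}_{R}} = (1,0,B\vert_{S^{2}_{R}})$, one may take $\alpha = \sigma^{\ast}(B\vert_{S^{2}_{R}})$, so that $\hol(\xi\vert_{S^{2}_{R}},\sigma) = \exp\bigl(\i\int_{\Sigma^{2}}\sigma^{\ast}B\bigr)$. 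Requiring this to equal $1$ for all such $\sigma$ is equivalent to requiring $\int_{\Sigma^{2}}\sigma^{\ast}B \in 2\pi\Z$ for every map from a compact oriented surface. Because $S^{2}_{R}$ is a 2-sphere, $H_{2}(S^{2}_{R},\Z)\cong\Z$ is generated by the fundamental class, and every such integral is an integer multiple of $\int_{S^{2}_{R}} B$; so the condition collapses to $\int_{S^{2}_{R}} B \in 2\pi\Z$.

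The second half is to identify this periodicity condition with integrality in the sense of Definition \ref{integral_def}. On $S^{2}_{R}$ we have $H^{2}(S^{2}_{R},\Zi)\cong H^{2}(S^{2}_{R},\Z)\cong\Z$, and the composition (\ref{coeff}) sends the generator to $\tpi\i$ times the generator of $H^{2}_{\mathrm{dR}}(S^{2}_{R})$, normalized by integration over the fundamental class. Thus $B\vert_{S^{2}_{R}}$ is integral precisely when $\i\cdot[B\vert_{S^{2}_{R}}]$ lies in the image of this map, which amounts to $\int_{S^{2}_{R}} B \in \tpi\Z$, matching the Bohr-Sommerfeld periodicity condition exactly. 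Combining the two halves gives the desired equivalence. I would phrase the whole argument through the exact sequence (\ref{long_exact_seq}): the leaf lies in $V_{\mathrm{BS}}$ iff $\iota(B\vert_{S^{2}_{R}})$ is the trivial Deligne class, and by exactness $\ker\iota = Z^{2}(S^{2}_{R})_{\mathrm{int}}$, so triviality of the class is literally the integrality of $B\vert_{S^{2}_{R}}$.

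The main obstacle is the bookkeeping in matching normalizations: the factor of $\tpi\i$ in Definition \ref{integral_def}, the $\i$ appearing in the holonomy exponential of Definition \ref{holonomy_def}, and the sign conventions in the curvature map $\kappa$ must all be tracked carefully so that ``$\int_{S^{2}_{R}} B \in \tpi\Z$'' really is the condition that both $\hol(\xi\vert_{S^{2}_{R}},\cdot)=1$ and integrality assert. The cleanest route, and the one I would follow, is to avoid recomputing holonomy from scratch and instead cite exactness of (\ref{long_exact_seq}) to get $\ker\iota = Z^{2}(S^{2}_{R})_{\mathrm{int}}$ for free, reducing the proof to the single observation that a leaf lies in $V_{\mathrm{BS}}$ iff $\iota(B\vert_{S^{2}_{R}}) = 0$.
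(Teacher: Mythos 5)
Your proposal is correct and follows essentially the same route as the paper's proof: both identify the 2-holonomy of $(1,0,B)$ over $\sigma \maps \Sigma^{2} \to S^{2}_{R}$ with $\exp\bigl(\i\int_{\Sigma^{2}}\sigma^{\ast}B\bigr)$ and then reduce integrality of $B\vert_{R}$ to the single condition $\int_{S^{2}_{R}}B \in 2\pi\Z$ using the fact that $H_{2}(S^{2}_{R})$ is generated by the fundamental class. The only cosmetic difference is that you collapse the all-$\sigma$ condition via the degree of $\sigma$ (and offer the $\ker\iota = Z^{2}(S^{2}_{R})_{\mathrm{int}}$ repackaging), whereas the paper decomposes the fundamental class of $\Sigma^{2}$ into smooth singular simplices; the content is the same.
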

\begin{proof}
Let $S^{2}_{R}$ be a leaf and assume $B \vert_{R}$ is integral.
Recall from Def.\ \ref{integral_def} this means that the class $[B\vert_{R}]$ is in the image of the map
\[
H^{2}(S^{2}_{R},2\pi\Z) \to H^{2}(S^{2}_{R},\R) \iso H^{2}_{\mathrm{dR}}(S^{2}_{R}).
\]
There are canonical isomorphisms which identify singular cohomology
with smooth singular cohomology for arbitrary coefficients, and $\R$-valued 
smooth singular cohomology with de Rham cohomology \cite{Warner}[Sec.\ 5.34].
Using these isomorphisms, $B \vert_{R}$ is integral if and only if
\[
\int_{\Delta^{2}} s^{\ast} B \vert_{R} \in 2\pi \Z
\]
for all smooth simplicies $s \maps \Delta^{2} \to S^{2}_{R}$.
By Def.\ \ref{2-plectic_BS_def}, $S^{2}_{R}$ is contained in the
Bohr-Sommerfeld variety if and only if 
\[
\hol(\xi,\sigma)=1
\]
for all maps $\sigma \maps \Sigma^{2} \to S^{2}_{R}$, where
$\Sigma^{2}$ is a compact oriented 2-manifold. Definitions
\ref{holonomy_def} and \ref{2-hol_2-conn_def} imply $\hol(\xi,\sigma)=1$ if and only if 
\[
\int_{\Sigma^{2}} \sigma^{\ast} B \vert_{R} \in  2\pi \Z.
\]
Since $B \vert_{R}$ is integral, $\sigma^{\ast} B \vert_{R}$ is
integral for any such map $\sigma$. Let $\sum_{i} n_{i} s_{i}$ represent the fundamental class
in $H_{2}(\Sigma^{2}) \cong \Z$, where $s_{i} \maps \Delta^{2} \to
\Sigma^{2}$ are smooth simplicies. Then
\[
\int_{\Sigma^{2}} \sigma^{\ast} B \vert_{R} = \sum_{i} n_{i}
\int_{\Delta^{2}} s_{i}^{\ast} B \vert_{R} \in 2 \pi \Z.
\]
Hence, $S^{2}_{R}$ is contained in the variety.

Conversely, assume $S^{2}_{R}$ is a leaf in the Bohr-Sommerfeld
variety. Then, by taking $\sigma=\id$, we have
\begin{equation} \label{fund_class}
\int_{S^{2}_{R}} B \vert_{R} \in 2\pi \Z.
\end{equation}
We claim that this implies $B \vert_{R}$ is integral. Indeed, since
$S^{2}$ is simply connected, the Universal Coefficient Theorem implies
we have a commuting diagram
\[
\xymatrix{
H^{2}(S^{2}, 2 \pi \Z) \ar[d] \ar[r]^-{\sim} & \Hom(H_{2}(S^{2}), 2 \pi\Z) \ar[d]\\
H^{2}(S^{2},\R) \ar[r]^-{\sim} & \Hom(H_{2}(S^{2}), \R). 
}
\] 
Hence, $B \vert_{R}$ is integral if and only if for all classes $[s] \in
H_{2}(S^{2}_{R})$, we have 
\[
\int_{\Delta^{2}} s^{\ast}B \vert_{R} \in 2\pi \Z.
\]
Any such class is an integer multiple of the fundamental class
representing $S^{2}_{R}$. Therefore the integral (\ref{fund_class}) gives the
desired result.
\end{proof}
\begin{corollary}\label{integer_radius_cor}
A sphere with radius $R$ is contained in the
Bohr-Sommerfeld variety if and only if
\[
R \in \frac{1}{2} \Z.
\]
\end{corollary}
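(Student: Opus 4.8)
The plan is to reduce the statement to a single period computation. The proposition immediately preceding this corollary shows that the leaf $S^{2}_{R}$ lies in the Bohr-Sommerfeld variety if and only if the restricted 2-form $B\vert_{R}$ is integral in the sense of Def.\ \ref{integral_def}. So the entire content of the corollary is to determine for which radii $R$ the form $B\vert_{R}$ is integral, and then translate that condition into the stated arithmetic condition on $R$.

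First I would use the simple connectedness of $S^{2}$, exactly as in the proof of the preceding proposition (via the Universal Coefficient Theorem), to reduce integrality of $B\vert_{R}$ to the single numerical condition $\int_{S^{2}_{R}} B\vert_{R} \in 2\pi\Z$. Since $H_{2}(S^{2}_{R})\cong\Z$ is generated by the fundamental class, $B\vert_{R}$ pairs integrally with every 2-cycle precisely when it pairs integrally with the fundamental class; this is exactly Eq.\ \ref{fund_class} in the proof above.

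The key step is then to evaluate the period $\int_{S^{2}_{R}} B\vert_{R}$. I would recognize $x\,dy\wedge dz + y\,dz\wedge dx + z\,dx\wedge dy$ as the flux 2-form of the radial vector field $(x,y,z)$. Restricted to $S^{2}_{R}$, with outward unit normal $\tfrac{1}{R}(x,y,z)$, its integral is the flux $\int_{S^{2}_{R}} \tfrac{1}{R}(x^{2}+y^{2}+z^{2})\,dA = R \cdot \mathrm{Area}(S^{2}_{R}) = 4\pi R^{3}$. Multiplying by the scalar factor $\tfrac{1}{r^{2}}=\tfrac{1}{R^{2}}$ that appears in the definition of $B$ gives $\int_{S^{2}_{R}} B\vert_{R} = 4\pi R$. (Alternatively, one can parametrize the sphere and integrate directly, or simply observe that $B\vert_{R}$ equals $R$ times the standard normalized area form.)

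Finally, the condition $4\pi R \in 2\pi\Z$ is equivalent to $2R\in\Z$, i.e.\ $R\in\tfrac{1}{2}\Z$, which is the assertion. I do not expect any genuine obstacle here: the reduction to a single period is handed to us by the previous proposition, and the remaining computation is an elementary flux integral. The one point requiring a little care is keeping track of the normalization constants, namely the $\tfrac{1}{r^{2}}$ appearing in $B$ and the factor $2\pi$ in the integrality condition, since the conclusion $R\in\tfrac{1}{2}\Z$ depends on obtaining the period exactly as $4\pi R$ rather than, say, $2\pi R$.
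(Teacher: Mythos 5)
Your proposal is correct and follows essentially the same route as the paper: the preceding proposition reduces membership in the Bohr--Sommerfeld variety to integrality of $B\vert_{R}$, and the paper's proof then simply asserts $\int_{S^{2}_{R}} B\vert_{R} = 4\pi R \in 2\pi\Z$, which is exactly the period you compute (your flux argument just fills in the one-line calculation the paper leaves implicit). No gaps.
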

\begin{proof}
Such a sphere is contained in the variety if and only if $B
\vert_{R}$ is integral, i.e.\ if and only if
\[
\int_{S^{2}_{R}} B \vert_{R} =4 \pi R \in 2\pi \Z.
\]
\end{proof}
\noi Hence the variety is, precisely, the subspace
\[
V_{\mathrm{BS}} = \coprod^{\infty}_{n=1} S^{2}_{n/2}.
\]

\subsubsection*{The quantum state category}
Now we can characterize the quantum state category $\Quant(V_{\mathrm{BS}})$, i.e.\ the
subcategory of $\HVB^{\xi}(V_{\mathrm{BS}})$ whose objects are
those $\xi$-twisted Hermitian bundles 
 over $V_{\mathrm{BS}}$ which are twisted-flat along each leaf.
The results obtained in the previous sections imply:
\begin{theorem} \label{quant_thm}
There is a one-to-one correspondence between
isomorphism classes of objects in $\Quant(V_{\mathrm{BS}})$ and isomorphism classes of 
Hermitian vector bundles (with connection) over the Bohr-Sommerfeld
variety whose restriction to any leaf $S^{2}_{n/2}$ is of the form
\[
L \oplus L \oplus \cdots \oplus L
\]
where $L$ is a line bundle with curvature $B \vert_{n/2}$.
\end{theorem}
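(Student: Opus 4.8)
The plan is to reduce the statement to the two structural results already established: the equivalence $\HVB^{\xi}(M) \simeq \HVB_{\conn}(M)$ of Proposition \ref{just_conn_prop}, and the leafwise decomposition of Proposition \ref{just_lines_prop}. The guiding observation is that both the triviality of the $\U(1)$-part of $\xi$ and the twisted-flatness condition survive restriction to $V_{\mathrm{BS}}$, so the whole problem becomes a question about ordinary Hermitian vector bundles with connection on the disjoint union of spheres $V_{\mathrm{BS}} = \coprod_{n \geq 1} S^{2}_{n/2}$.

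First I would note that $\xi|_{V_{\mathrm{BS}}} = (1, 0, B|_{V_{\mathrm{BS}}})$ again has trivial $\U(1)$-valued cocycle. The proof of Proposition \ref{just_conn_prop} uses only that the cocycle $g$ equals $1$, together with the stack gluing axiom (Def.\ \ref{stack_def}) and the fact that bundle morphisms form a sheaf; neither step requires the cover to be good or the base to be connected. Hence the same functor $F$, sending an ordinary bundle $(E, \conn)$ to the trivially twisted datum $(E|_{V_{i}}, \conn|_{V_{i}}, \phi_{ij} = \id)$ with $V_{i} = U_{i} \cap V_{\mathrm{BS}}$, yields an equivalence $\HVB^{\xi}(V_{\mathrm{BS}}) \simeq \HVB_{\conn}(V_{\mathrm{BS}})$. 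Since $V_{\mathrm{BS}}$ is a disjoint union of immersed spheres, an object of the latter category is exactly a choice of Hermitian vector bundle with connection on each $S^{2}_{n/2}$.

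Next I would translate the defining condition of $\Quant(V_{\mathrm{BS}})$ (Def.\ \ref{quant_state_cat_def}) across this equivalence. An object $(E_{i}, \conn_{i}, \phi_{ij})$ is twisted-flat along a leaf $\Lambda = S^{2}_{n/2}$ (Def.\ \ref{twisted_flat_def}) iff $\conn^{2}_{i}|_{\Lambda} - \i \cdot B_{i}|_{\Lambda} \tensor \id = 0$ for all $i$. For an object in the image of $F$ this reads $\conn^{2}|_{n/2} = \i \cdot B|_{n/2} \tensor \id$ on each leaf, a condition on the global bundle. Because $\Quant(V_{\mathrm{BS}})$ is a full subcategory of $\HVB^{\xi}(V_{\mathrm{BS}})$ cut out by an isomorphism-invariant condition on objects, $F$ restricts to an equivalence between $\Quant(V_{\mathrm{BS}})$ and the full subcategory $\mathcal{C}$ of $\HVB_{\conn}(V_{\mathrm{BS}})$ consisting of those $(E, \conn)$ with $\conn^{2}|_{n/2} = \i \cdot B|_{n/2} \tensor \id$ on every leaf. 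Applying Proposition \ref{just_lines_prop} leaf by leaf shows that $(E, \conn)$ lies in $\mathcal{C}$ iff its restriction to each $S^{2}_{n/2}$ is isomorphic to $L \oplus \cdots \oplus L$ with $L$ a line bundle of curvature $B|_{n/2}$; the reverse implication is immediate, since a direct sum of line bundles of curvature $B|_{n/2}$ has curvature $\i \cdot B|_{n/2} \tensor \id$. Thus $\mathcal{C}$ is precisely the category of bundles described in the statement.

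Finally, an equivalence of categories induces a bijection on isomorphism classes of objects, so the bijection between isomorphism classes in $\Quant(V_{\mathrm{BS}})$ and in $\mathcal{C}$ is exactly the asserted one-to-one correspondence. The main obstacle I anticipate is the bookkeeping in the third step: one must confirm that twisted-flatness, which is phrased pointwise on the cover $\{V_{i}\}$ via the local curvatures $\conn^{2}_{i}$ and local $2$-forms $B_{i}$, descends under $F$ to the single global curvature equation on each leaf, and that this holds uniformly across the infinitely many components of $V_{\mathrm{BS}}$. Everything else is a direct application of the two cited propositions.
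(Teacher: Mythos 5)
Your proposal is correct and follows essentially the same route as the paper's proof: untwist via (the proof of) Proposition \ref{just_conn_prop} applied over $V_{\mathrm{BS}}$, translate twisted-flatness into the global curvature equation $\conn^{2}\vert_{n/2} = \i \cdot B\vert_{n/2}\tensor \id$ on each leaf, apply Proposition \ref{just_lines_prop} leafwise, and reverse the argument for the converse. Your extra care in checking that twisted-flatness is isomorphism-invariant and that the equivalence restricts to the full subcategories is a slightly more explicit packaging of the same argument, not a different one.
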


\begin{proof}
Let $(E_{i},\conn_{i},\phi_{ij})$ be an object in
$\Quant(V_{\mathrm{BS}})$. 
It is a bundle twisted by the trivial cocycle $(1,0,\i \cdot B)$
on the cover $\{V_{i}\}$, where 
\[
V_{i}=U_{i} \cap  V_{\mathrm{BS}} = \coprod^{\infty}_{n=1} U_{i} \cap S^{2}_{n/2}
\]
Hence, the proof of Prop.\ \ref{just_conn_prop} can be used to show there exists a
Hermitian vector bundle $E$ over $V_{\mathrm{BS}}$, unique up to
isomorphism, with connection $\conn$ such that 
\[
(E \vert_{V_{i}},\conn \vert_{V_{i}},\id)  \cong
(E_{i},\conn_{i},\phi_{ij}).
\]
Since $(E_{i},\conn_{i},\phi_{ij})$ is twisted flat, the restriction of
the curvature of the bundle $(E,\conn)$ to a leaf $S^{2}_{n/2}$ 
satisfies
\[
\conn^{2} \vert_{n/2} = \i \cdot B \vert_{n/2} \tensor \id. 
\]
Hence, Prop. \ref{just_lines_prop} implies that the
restriction of $E$ to $S^{2}_{n/2}$ is isomorphic to direct sum of
line bundles 
\[
\bigoplus^{k}_{i=1} L_{i},
\]
Here, $k \geq 0$, and each $L_{i}$ is the line bundle,
unique up to isomorphism, with curvature $B \vert_{n/2}$.

By reversing this argument, any Hermitian vector bundle over $V_{\mathrm{BS}}$ 
whose restriction to a leaf is isomorphic to the direct sum above
represents a unique isomorphism class of trivially twisted bundles
in $\Quant(V_{\mathrm{BS}})$
\end{proof}

\section{Applications to representation theory} \label{rep_theory}
As previously mentioned, some of the most important applications of geometric
quantization lie in representation theory. Here we present evidence that the categorified
geometric quantization of 2-plectic manifolds has similar 
uses. Roughly, the idea is the following:
In ordinary geometric quantization, sections in the quantum space $Q(V_{\mathrm{BS}})$
correspond to vectors in a representation of a Lie group. In
categorified geometric quantization, sections in the quantum category
$\Quant(V_{\mathrm{BS}})$ correspond to representations i.e.\ objects
in a representation category of a Lie group.

In particular, we describe this correspondence in detail for the
example $M = \R^{3}\setminus\{0\}$ considered in the previous section. The
2-spheres in the associated Bohr-Sommerfeld variety are
special coadjoint orbits of the Lie group $\SU(2)$, via the
identification $\su(2)^{\ast} \cong \R^3$. These 2-spheres equipped
with the restriction of the 2-form $B$ are sympletic manifolds, and,
through ordinary geometric quantization, they give irreducible
representations of $\SU(2)$. As we will see, these facts imply that
the quantum state category, obtained via the categorified quantization
of $M$, is closely related to the category of finite dimensional
representations of $\SU(2)$.

Let $S^{2}_{n/2}$ be the 2-sphere of radius $n/2$,
i.e.\ a leaf in the Bohr-Sommerfeld variety. By identifying $S^3$ 
with the unit sphere in $\C^{2}$, we use the Hopf fibration $S^{3} \to S^{2}_{n/2}$:
\[
\bigl (Z_{0},Z_{1} \bigr) \mapsto \frac{n}{2} \bigl(Z_{1} \bar{Z}_{0} + Z_{0}
\bar{Z}_{1}, ~ i Z_{1} \bar{Z}_{0} - iZ_{0}\bar{Z}_{1}, ~ Z_{0} \bar{Z}_{0} - Z_{1}
\bar{Z}_{1} \bigr)
\]
to identify $S^{2}_{n/2}$ with $\CP^{1}$. Choosing the affine
coordinate $w =Z_{1}/Z_{0}$, 
the 2-form 
\[
B \vert_{n/2} = \frac{4}{n^{2}} (x dy\wedge dz + y dz \wedge dx + z dx \wedge
dy)
\]
becomes
\[
B \vert_{n/2} = n \i \frac{dw \wedge d\bar{w}}{(1+ w\bar{w})^{2}}.
\]
(See Woodhouse \cite{Woodhouse:1991} Sections 3.5, 8.4, and 9.2 for
details.)

Recall that the hyperplane bundle $H\to \CP^{1}$ is the holomorphic line bundle
whose fiber over each point $[Z_{0},Z_{1}] \in \CP^{1}$ is the dual
space of the corresponding line in $\C^{2}$. The curvature of this bundle is $B \vert_{1/2}$.
Hence, $B \vert_{n/2}$ is the curvature of the tensor product
\[
H^{\tensor n} \to \CP^{1}.
\]
In fact, up to isomorphism, $H^{\tensor n}$ is the unique holomorphic line bundle with
this curvature. Let $(\zeta^{0},\zeta^{1})$ be the coordinates on the dual space $\C^{2^{\ast}}$.
It can be shown using standard complex analysis that the
global holomorphic sections $\Gamma(H^{\tensor n}) _{\mathrm{h}}$ are the degree $n$
homogeneous polynomials in the variables $(\zeta^{0},\zeta^{1})$
\cite{Griffiths-Harris}[Sec.\ 1.3].

There is an action of the group $\SU(2) \ss \mathrm{SL}(2,\C)$ on the polynomials
$\Gamma(H^{\tensor n})_{\mathrm{h}}$ which is induced by its obvious action on
$\C^{2}$. In fact, for each $n$, $\Gamma(H^{\tensor n})_{\mathrm{h}}$ is an
irreducible representation, which represents the unique isomorphism
class of irreducible representations of dimension $n+1$
\cite{Kirillov:2004}[Sec.\ A3.2]. Moreover, any finite dimensional
representation of $\SU(2)$ is isomorphic to a finite direct sum of
irreducibles. Hence, any such representation is isomorphic to
the holomorphic global sections of a direct sum
\[
H^{\tensor n_{1}} \oplus H^{\tensor n_{2}} \oplus \cdots \oplus H^{\tensor n_{k}}
\]
of line bundles over $\CP^{1}$. Note that the trivial bundle over $\C
P^{1}$ is the line bundle $H^{\tensor n}$ with $n=0$. Its global sections
are the holomorphic functions on $\CP^{1}$, i.e.\ the constants $\C$.

Now we show how all of this is related to the quantization of the
2-plectic manifold $\R^{3} \setminus \{0\}$.
Theorem \ref{quant_thm} implies that an isomorphism class of objects in the
quantum state category $\Quant(V_{\mathrm{BS}})$ can be identified
with a collection of bundles:
\begin{align*}
k_{1} \cdot L_{1} & \to S^{2}_{1/2} \\
k_{2} \cdot L_{2} & \to S^{2}_{1} \\
k_{3} \cdot L_{3} & \to S^{2}_{3/2} \\
\vdots& \quad \vdots \\
k_{n} \cdot L_{n} & \to S^{2}_{n/2} \\
\vdots& \quad \vdots 
\end{align*}
which are unique up to isomorphism.
Here, $k_{n}$ is a non-negative integer, and $k_{n} \cdot L_{n}$ is the direct sum of line bundles 
\[
k_{n} \cdot L_{n} = \underset{k_{n}}{\underbrace{L_{n} \oplus L_{n} \oplus \cdots \oplus L_{n}}},
\]
where $L_{n}$ is the line bundle with curvature $B \vert_{n/2}$.
By identifying each sphere with $\CP^{1}$, the above discussion
implies we can identify each line bundle with a tensor power of the
hyperplane bundle $H^{\tensor n}$. By taking global holomorphic sections, 
each copy of $H^{\tensor n}$ is then identified with
$\mathrm{Sym}^{n}(\C^{2^{\ast}})$, the space of degree $n$
homogeneous polynomials in 2 variables, which is a $(n+1)$-dimensional
irreducible representation of $\SU(2)$. (See Figure \ref{fig}.)
\begin{figure}[h!] \label{fig}
 \centering
\[
\xymatrix{
k_{1} \cdot L_{1} \ar[r]& S^{2}_{1/2} & \ar @{|->}[r]  && k_{1} \cdot
H\ar[r]& \CP^{1} & \ar @{|->}[r]^-{\Gamma_{\mathrm{h}}}&& k_{1} \cdot \mathrm{Sym^{1}}(\C^{2^{\ast}})\\
 k_{2} \cdot L_{2} \ar[r]& S^{2}_{1} & \ar @{|->}[r]  && k_{2} \cdot
 H^{\tensor 2}\ar[r]& \CP^{1}& \ar @{|->}[r]^-{\Gamma_{\mathrm{h}}}&& k_{2} \cdot \mathrm{Sym^{2}}(\C^{2^{\ast}})\\
  k_{3} \cdot L_{3} \ar[r]& S^{2}_{3/2} & \ar @{|->}[r]  && k_{3}
  \cdot H^{\tensor 3}\ar[r]& \CP^{1}  &\ar @{|->}[r]^-{\Gamma_{\mathrm{h}}}&& k_{3} \cdot \mathrm{Sym^{3}}(\C^{2^{\ast}})\\
  \vdots & \vdots &  && \vdots & \vdots & && \vdots\\
  k_{n} \cdot L_{n} \ar[r]& S^{2}_{n/2} & \ar @{|->}[r]  && k_{n} \cdot H^{\tensor n}\ar[r]& \CP^{1} &\ar @{|->}[r]^-{\Gamma_{\mathrm{h}}}&& k_{n} \cdot \mathrm{Sym^{n}}(\C^{2^{\ast}})\\
  \vdots & \vdots & && \vdots & \vdots & && \vdots
}
\]
  \caption{The quantum state given by the
    collection of vector bundles $\{k_{1} \cdot L_{1} \to S^{2}_{1/2}$,
    $k_{2} \cdot L_{2} \to S^{2}_{1}, \ldots\}$ is identified 
    with the representation $k_{1} \cdot \mathrm{Sym^{1}}(\C^{2^{\ast}})
    \oplus k_{2} \cdot \mathrm{Sym^{2}}(\C^{2^{\ast}}) \oplus \cdots$
    of $\SU(2)$.}
\end{figure}

\noi Note this procedure gives all finite-dimensional representations of
$\SU(2)$ except for those built using the 1-dimensional trivial representation. This is
because the sphere of radius 0 (the origin) is not contained in the
Bohr-Sommerfeld variety. Hence, we have proven:

\begin{theorem}\label{rep_theory_thm}
There is a one-to-one correspondence between
isomorphism classes of objects in the quantum state category $\Quant(V_{\mathrm{BS}})$ and
isomorphism classes of finite-dimensional representations
of $\SU(2)$ whose decomposition into irreducibles does not contain the
trivial representation.
\end{theorem}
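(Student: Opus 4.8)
The plan is to combine the explicit characterization of the quantum state category $\Quant(V_{\mathrm{BS}})$ given in Theorem \ref{quant_thm} with the classical representation-theoretic facts about $\SU(2)$ recalled in this section. By Theorem \ref{quant_thm}, an isomorphism class of objects in $\Quant(V_{\mathrm{BS}})$ corresponds bijectively to an isomorphism class of a Hermitian vector bundle (with connection) over $V_{\mathrm{BS}} = \coprod_{n=1}^{\infty} S^{2}_{n/2}$ whose restriction to each leaf $S^{2}_{n/2}$ is isomorphic to a direct sum $L_{n}^{\oplus k_{n}}$, where $L_{n}$ is the unique line bundle (up to isomorphism) of curvature $B\vert_{n/2}$ and $k_{n} \geq 0$. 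Since $V_{\mathrm{BS}}$ is a disjoint union, such a bundle is precisely a choice of one non-negative integer $k_{n}$ for each $n \geq 1$, with only finitely many nonzero (finite rank). So the first step is to observe that the set of isomorphism classes of objects in $\Quant(V_{\mathrm{BS}})$ is in bijection with finitely-supported sequences $(k_{1},k_{2},\ldots)$ of non-negative integers.

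Next I would set up the other side of the correspondence. Every finite-dimensional representation of $\SU(2)$ decomposes as a finite direct sum of irreducibles, and the irreducibles are classified by dimension: for each $n \geq 0$ there is a unique isomorphism class $\mathrm{Sym}^{n}(\C^{2^{\ast}})$ of dimension $n+1$. Hence an isomorphism class of finite-dimensional representations of $\SU(2)$ whose irreducible decomposition omits the trivial representation $\mathrm{Sym}^{0}$ is exactly a choice of multiplicity $k_{n} \geq 0$ for each $n \geq 1$, again finitely supported. This gives the same indexing set as in the previous paragraph.

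The heart of the proof is then to identify these two parametrizations leaf-by-leaf. I would use the Hopf fibration to identify each $S^{2}_{n/2}$ with $\CP^{1}$, under which (as recalled here, following Woodhouse \cite{Woodhouse:1991}) the 2-form $B\vert_{n/2}$ becomes the curvature of the holomorphic line bundle $H^{\tensor n}$. Since on $\CP^{1}$ the curvature determines a holomorphic line bundle up to isomorphism (the hyperplane class generates $\mathrm{Pic}(\CP^{1}) \cong \Z$), the line bundle $L_{n}$ of curvature $B\vert_{n/2}$ is identified with $H^{\tensor n}$. Taking global holomorphic sections $\Gamma_{\mathrm{h}}$ sends $H^{\tensor n}$ to the degree-$n$ homogeneous polynomials $\mathrm{Sym}^{n}(\C^{2^{\ast}})$ (by the computation in \cite{Griffiths-Harris}), which is the irreducible $\SU(2)$-representation of dimension $n+1$. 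Therefore the datum $L_{n}^{\oplus k_{n}} \to S^{2}_{n/2}$ matches the summand $k_{n}\cdot \mathrm{Sym}^{n}(\C^{2^{\ast}})$, and assembling over all $n \geq 1$ produces the desired bijection. Finally I would note that the origin $r=0$, i.e.\ the sphere of radius $0$, is excluded from $V_{\mathrm{BS}}$ by Corollary \ref{integer_radius_cor}, which forces $k_{0}$ to have no leaf on which to live and so accounts precisely for the absence of the trivial representation.

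The main obstacle is not in the bookkeeping but in making the identification of $L_{n}$ with $H^{\tensor n}$ genuinely canonical and compatible with the $\SU(2)$-structure: one must be careful that the isomorphism $S^{2}_{n/2} \cong \CP^{1}$ provided by the Hopf map intertwines the geometric data ($B\vert_{n/2}$, the line bundle, its connection) with the holomorphic data ($H^{\tensor n}$ and the induced $\SU(2)$-action) in a way that is independent of choices. Since the statement is only a correspondence of isomorphism classes rather than an equivalence of categories, it suffices to track curvature classes and multiplicities, so this compatibility reduces to the already-cited facts about line bundles on $\CP^{1}$ and the orbit method for $\SU(2)$ \cite{Kirillov:2004}; nonetheless this is where one should be most careful to avoid an implicit choice that would break the uniqueness of the isomorphism class.
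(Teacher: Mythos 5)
Your proof is correct and follows essentially the same route as the paper: both invoke Theorem \ref{quant_thm} to reduce to collections $k_{n}\cdot L_{n} \to S^{2}_{n/2}$, identify each leaf with $\CP^{1}$ via the Hopf fibration so that $L_{n}$ becomes $H^{\tensor n}$, pass to holomorphic sections to obtain $\mathrm{Sym}^{n}(\C^{2^{\ast}})$, and account for the absence of the trivial representation by the exclusion of the origin from $V_{\mathrm{BS}}$. Your explicit remark that the sequences $(k_{n})$ must be finitely supported is a point the paper leaves implicit, but it does not change the argument.
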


\chapter{Summary and future work} \label{conclusions}
We conclude this thesis by providing a summary which ties together the
main results of the previous chapters. Along the way, we make some brief remarks regarding
open problems and possible directions for future research.

\subsection*{Polarizations on $n$-plectic manifolds}
In Chapter \ref{nplectic_geometry}, we presented the basic
geometric facts needed for our study of $n$-plectic
manifolds. In particular, we considered the
$n$-plectic analogues for Lagrangian submanifolds and real polarizations.
There are at least $n$ different ways to generalize the definition of a Lagrangian
submanifold to $n$-plectic geometry. This is due to the fact that
there are $n$ different ways to define the notion of orthogonal complement
on an $n$-plectic vector space (Def. \ref{ortho_comp}). 
Since real polarizations in symplectic
geometry are foliations whose leaves are Lagrangian submanifolds, we
have at least $n$ different kinds of real polarizations on an $n$-plectic
manifold (Def.\ \ref{k-polarization}). Polarizations play an important role in geometric
quantization, but it is not clear which
definition of polarization for $n$-plectic manifolds is ``best'' in this context. 

Moreover, it is unknown if an $n$-plectic analogue of a complex
polarization exists. It is possible that one could use ideas from generalized complex
geometry \cite{Gualtieri:2007} and the theory of `higher Dirac
structures' \cite{Zambon:2010ka} to help develop such polarizations.

\subsection*{Lie $n$-algebras from $n$-plectic manifolds}
In Chapter \ref{algebra_chapter}, we showed that an $n$-plectic
structure on $M$ induces a bracket on the space of Hamiltonian
$(n-1)$-forms. The bracket is skew-symmetric, but only satisfies the Jacobi
identity up to homotopy.
We proved that this bracket gives a Lie $n$-algebra $\Lie(M,\omega)$, whose underlying $n$-term chain complex
consists of Hamiltonian $(n-1)$-forms and all
differential forms of lower degrees (Thm. \ref{main_thm}).
When $n=1$, the Hamiltonian forms are the
smooth functions, and the Lie $1$-algebra
is just the underlying Lie algebra of the usual Poisson algebra of a
symplectic manifold. For certain 2-plectic manifolds,
our previous work with Baez and Hoffnung implies that
the associated Lie 2-algebra can be used to describe the ``observable
algebra'' of the classical bosonic string \cite{Baez:2008bu}.

In Appendix \ref{leibniz_appendix}, we showed that an $n$-plectic
manifold also gives a dg Leibniz algebra $\Leib(M,\omega)$ on the same
complex (Prop.\ \ref{n-plectic_Leibniz}). Its bracket satisfies
Jacobi, but is skew-symmetric only up to homotopy.  For the 2-plectic
case, we showed $\Lie(M,\omega)$ and $\Leib(M,\omega)$ are isomorphic
in Roytenberg's category of weak Lie 2-algebras (Thm.\
\ref{isomorphism_L2A_thm}). The objects of this category are 2-term
$L_{\infty}$-algebras whose structure maps are skew-symmetric only up
to homotopy.  In general, we would like to conjecture that some sort
of equivalence such as this holds for $n>2$. Unfortunately, it is not
clear in what category this should occur. Indeed, developing a theory
of weak Lie $n$-algebras is an open problem.  Perhaps by studying the
relationships between the structures specifically on $\Lie(M,\omega)$
and $\Leib(M,\omega)$ for arbitrary $n$ one could get a sense of what
explicit coherence conditions would be needed to give a good
definition.

On the other hand, there are structures known as
`Loday-$\mathbf{\infty}$ algebras' (or sh Leibniz algebras)
\cite{Ammar:2008} that generalize the definition of an
$L_{\infty}$-algebra by, again, relaxing the skew symmetry condition
on the structure maps. However, this time the skew symmetry is not
required to hold up to homotopy.  Hence any dg Leibniz algebra is a
Loday-$\infty$ algebra. Any $L_{\infty}$-algebra is as well. Therefore
there may be an isomorphism between $\Lie(M,\omega)$ and
$\Leib(M,\omega)$ in this category for $n \geq 2$.

\subsection*{Lie 2-algebras from compact simple Lie groups} 
In Chapter \ref{lie_group_chapter}, we considered compact simple Lie groups
as 2-plectic manifolds. Any compact simple Lie group $G$ admits a 1 parameter family of canonical 2-plectic
structures $\{\nu_{k}\}$, given by non-zero multiples of
Cartan 3-form (Ex.\ \ref{Lie_group_example}). We proved that the
associated Lie 2-algebra $\Lie(G,\nu_{k})$ contains a sub Lie 2-algebra consisting of left invariant Hamiltonian
1-forms (Cor.\ \ref{left_inv_L2A}). We showed that this sub-algebra is not equivalent to
$\Lie(G,\nu_{k})$, however, it is isomorphic to the so-called string Lie 2-algebra
associated to $(G,\nu_{k})$ (Thm. \ref{string_Lie_Thm}). The string Lie 2-algebra plays an
important role in string theory and in the theory of loop groups.

Our results suggest a close link between these areas and 2-plectic
geometry. There are many possible directions for future work here. In
particular, it would be interesting to understand the relationship
between $\Lie(G,\nu_{k})$ and the algebra of observables for certain
string theory models called `WZW models'. 

\subsection*{Gerbes, 2-line stacks, and 2-bundles}
In Chapter \ref{stacks_chapter}, we presented the technical tools needed to
develop a geometric quantization theory for 2-plectic manifolds.
The work of Brylinski \cite{Brylinski:1993} implies that if $(M,\omega)$ is a 2-plectic manifold 
and $\omega$ is an integral 3-form, then $\omega$ can be realized as the 2-curvature of 
a $\U(1)$-gerbe equipped with a 2-connection. 
If $\{U_{i}\}$ is an open cover of $M$, then
locally, a $\U(1)$-gerbe with 2-connection
is determined by a Deligne 2-cocycle i.e\ a collection of $\U(1)$-valued transition functions $g_{ijk}$ on
$U_{i} \cap U_{j} \cap U_{k}$, 1-forms $A_{ij}$ on $U_{i} \cap U_{j}$, and
2-forms $B_{i}$ on $U_{i}$, with $dB_{i}=\omega$, satisfying certain
compatibility conditions (Ex.\ \ref{n=2}).


Every principal $\U(1)$-bundle with connection has an associated Hermitian
line bundle with connection. Similarly, we showed that every $\U(1)$-gerbe with
2-connection over a 2-plectic manifold has an associated 2-line stack
with 2-connection (Prop.\ \ref{2-line_stack_conn_prop}). The
category of global sections of the 2-line stack is equivalent to the
category of Hermitian vector bundles twisted by the gerbe's Deligne
2-cocycle $\xi=(g_{ijk},A_{ij},B_{i})$.
Such a twisted bundle is given locally by a collection of
Hermitian vector bundles $E_{i}$ with connection $\conn_{i}$ (Def.\ \ref{twistedvb_conn}). The
twisting by $\xi$ characterizes the
obstruction to gluing these bundles together into a global 
bundle over $M$. A $\xi$-twisted Hermitian
vector bundle is twisted-flat if, for each $E_{i}$, the curvature $\conn^{2}_{i}$ is equal
to $\i \cdot B_{i} \tensor \id$ (Def.\ \ref{twisted_flat_def}). This is the 2-plectic analogue of a
flat section of a Hermitian line-bundle.
We also showed that there is a good notion of  holonomy (Def.\ \ref{2-hol_2-conn_def}) for
2-line stacks equipped with 2-connection, given by Carey, Johnson, and Murray's
formula for the 2-holonomy of the Deligne class $[\xi]$ \cite{Carey:2004}.
The 2-holonomy plays an important role in our quantization procedure
for 2-plectic manifolds. In particular, it is used in our definition for the 2-plectic version of the
Bohr-Sommerfeld variety (Def.\ \ref{2-plectic_BS_def}). 

It is not obvious, at first glance, why twisted Hermitian vector bundles
are the 2-plectic analogues of sections of a Hermitian line bundle.
In the same chapter, we sketch an argument supporting this point of
view using Bartels' work in 2-bundle theory \cite{Bartels:2004}. 
It becomes clear that twisted Hermitian vector bundles should be understood as sections of a
2-vector bundle of rank 1. It would be very interesting to make this argument more precise, and perhaps recast
our results within the context of 2-vector bundles. For example, for
line bundles, one can consider different kinds of sections  e.g.\
smooth, square-integrable, etc.
Similarly, the 2-bundle approach might suggest that we consider 2-lines stacks
whose sections are more general than twisted bundles, e.g.\ twisted
coherent sheaves. This could have important consequences for the output
of our geometric quantization procedure for 2-plectic manifolds.

\subsection*{2-Plectic prequantization and Courant algebroids}
We defined a prequantized 2-plectic manifold to be a 2-plectic manifold
equipped with a Deligne 2-cocycle (Def.\ \ref{n-plectic_prequant_def}). 
This 2-cocycle can be realized
geometrically as a $\U(1)$-gerbe with 2-connection, or as its associated
2-line stack. This is in complete analogy with the symplectic case, where
we prequantize using either a principal $\U(1)$-bundle or a Hermitian
line bundle. In Section \ref{symplectic_sec} , we first recall how to
prequantize the Poisson algebra on a symplectic manifold
equipped with a principal $\U(1)$-bundle $P$ with connection. By
prequantizing, we mean faithfully representing the Poisson
algebra as linear differential operators.
This is done by considering the Atiyah algebroid $A$ associated to
$P$. There is an injective Lie algebra morphism from the Poisson
algebra to the Lie algebra of global sections of $A$, which identifies the Poisson
algebra with those invariant vector fields on $P$ whose flows preserve
the connection (Prop.\ \ref{lie_alg_iso}). 

For the 2-plectic case, we described a known construction which gives a Courant algebroid $C$
over a prequantized 2-plectic manifold $(M,\omega)$ equipped with a
$\U(1)$-gerbe with 2-connection (Sec.\ \ref{geometric}). 
In this case, $C$ is a vector bundle over $M$ whose sections are locally
given by vector fields and 1-forms on $M$. Its space of global sections
form a Lie 2-algebra. There is a short exact sequence
of vector bundles over $M$
\[
T^{\ast}M  \to C \to TM,
\]
whose splittings $TM \to C$ correspond to 2-connections on the
$\U(1)$-gerbe over $M$. We prove the existence of an injective Lie
2-algebra morphism from the Lie 2-algebra $\Lie(M,\omega)$ of
observables on $M$ to the Lie 2-algebra $\Lie(C)$ of global sections
of $C$ (Thm.\ \ref{main_courant_thm}).
This morphism identifies $\Lie(M,\omega)$ with a sub-Lie 2-algebra of
$\Lie(C)$, which, in a certain sense, preserves the 2-connection of the
gerbe (Thm.\ \ref{lie_2_alg_iso}). We interpret this sub-algebra as the prequantization of
$\Lie(M,\omega)$. Also, we show that this construction
gives the higher analogue of the well known Kostant-Souriau central
extension in symplectic geometry (Sec.\ \ref{2plectic_extend}).

This prequantization process gives an interesting
relationship between Courant algebroids and prequantized 2-plectic manifolds.
Let us give two possible directions for future work based on
these results.

\begin{enumerate}
\item{
Sections of the Atiyah algebroid $A$ over a prequantized symplectic
manifold equipped with the principal $\U(1)$-bundle are differential operators on a Hilbert
space. This Hilbert space is constructed from global sections of
the associated Hermitian line bundle.
The higher analogue of this Hilbert space is the category of   
global sections of the 2-line stack associated to a $\U(1)$-gerbe.
In what way, if at all, do sections of the Courant algebroid over a prequantized
2-plectic manifold act as ``operators'' on this higher analogue of a
Hilbert space?
}

\item{
Recall that sections of the Atiyah algebroid are 
infinitesimal $U(1)$-equivariant symmetries of the corresponding
principal $U(1)$-bundle. Integration gives the `gauge groupoid' over $M$,
whose elements correspond to the equivariant automorphisms
of the principal bundle \cite{CdS-Weinstein:1999}[Sec.\ 17.1]. 
Our results suggest that the Courant
algebroid is the higher analogue of the Atiyah algebroid.
So, how can we understand sections of the Courant algebroid on a
prequantized 2-plectic manifold as infinitesimal automorphisms of
the corresponding $U(1)$-gerbe? In other words, what is the Lie
2-groupoid that integrates this Courant algebroid, and how does it
act as the `gauge 2-groupoid' of the $U(1)$-gerbe?
}
\end{enumerate}

\subsection*{2-plectic quantization and representation theory}
In the last chapter, we categorifed \'{S}niatycki's \cite{Sniatycki:1975} 
quantization procedure for symplectic manifolds, which
employs Bohr-Sommerfeld varieties to overcome topological obstructions
that arise when using real polarizations (Sec.\ \ref{cat_quant_section}). This categorification gives 
a simple procedure for quantizing a 2-plectic manifold, and the
resulting output is a category of quantum states (Def.\ \ref{quant_state_cat_def}). An object of
this category is a twisted Hermitian vector bundle over the
Bohr-Sommerfeld variety (Def.\ \ref{2-plectic_BS_def})  whose restriction to each leaf contained in the variety
is twisted-flat.

In Section \ref{cat_quant_example_sec}, we considered an interesting
example: $M=\R^3 \setminus \{0\}$ equipped with a volume form
$\omega=dB$. We quantized $M$ by equipping it with the trivial Deligne
2-cocycle $\xi=(1,0,B)$, and a 2-polarization whose leaves are spheres
centered about the origin.  The restriction of the 2-form $B$ to such
a sphere is the KKS symplectic form, which arises in Kirillov's orbit
method for constructing representations of Lie groups. This is not
surprising, since $\R^3$ is isomorphic to the dual of the Lie algebra
$\su(2)$, and each sphere is isomorphic to a coadjoint orbit.  We then
showed that in this example, a leaf of the polarization is contained
in the Bohr-Sommerfeld variety if and only if it is a sphere of radius
$n/2$, where $n$ is an integer (Cor.\ \ref{integer_radius_cor}). The orbit method identifies such a
sphere with the irreducible representation of $\SU(2)$ whose dimension
is $n+1$.

Next, we proved that any twisted bundle in the associated category of quantum
states is isomorphic to a direct sum of line bundles over spheres
contained in the variety (Thm.\ \ref{quant_thm}). The fact that the twisted bundle is
twisted-flat 
on each sphere implies that each of these line bundles must be
isomorphic to a tensor power of the hyperplane bundle over $\CP^{1}$.
This allowed us to identify a quantum state with a representation of
$\SU(2)$. More precisely, we proved that isomorphism
classes of objects in the quantum state category are in one-to-one
correspondence with isomorphism classes of finite-dimensional representations
of $\SU(2)$ whose decomposition into irreducibles does not contain the
trivial representation (Thm.\ \ref{rep_theory_thm}).

It is unfortunate that we are unable to obtain the
trivial representation via our quantization procedure. However, it is
not surprising. Our procedure identifies spheres of radius
$n/2$ with irreducibles of dimension  $n+1$. Hence, the trivial
representation corresponds to the origin in $\su(2)^{\ast}$, which is not in $M$. In some sense,
this identification needs to be shifted so that the sphere of radius
$n/2$ is identified with the irreducible representation of dimension $n$. This is very
similar to the $1/2$ shift which arises in the usual geometric
quantization of the simple harmonic oscillator (Sec.\ \ref{harm_osc_sec}).

We believe we have just scratched the surface of a deeper 
relationship between representation theory and the geometric
quantization of 2-plectic manifolds. Indeed, our example suggests that
2-plectic quantization can give a categorifed analogue of the orbit method.
We conclude by mentioning two related directions for future work along these
lines.
\begin{enumerate}
\item{
It is well known that closed integral forms on a
manifold $M$ can be mapped to closed integral forms on $LM$, the space
of free loops of $M$, by a process called `transgression'. Moreover,
this process sends a $\U(1)$-gerbe equipped with 2-connection on $M$
to a principal $\U(1)$-bundle with connection on $LM$
\cite{Brylinski:1993}[Ch.\ 6]. This suggests
that the categorifed geometric quantization of a 2-plectic manifold
may, in some way, correspond to ordinary geometric quantization on
$LM$. (We are overlooking subtleties here, such as the fact that
transgression need not preserve non-degeneracy.)
For example, perhaps there is some 2-plectic structure on $\su(2)$ whose
quantization gives a category of quantum states, with objects
corresponding to certain representations of the loop group $L\SU(2)$
obtained by applying the orbit method to the loop algebra $L\su(2)$.
}

\item{
Much work has been done on quantizing the conjugacy classes of compact
simple Lie groups via a variety of methods, all of which rely on the
Cartan 3-form in some way \cite{Meinrenken:2009,Mohrdieck:2004}.
The output of these quantization procedures gives information about the representation
theory of the corresponding loop group. Every compact simple Lie group,
equipped with the Cartan 3-form, is a 2-plectic manifold.
Hence, it is natural to suspect that 2-plectic quantization of Lie
\textit{groups} is also related to the representation theory of  loop groups.
We have preliminary results which suggest that such a relationship exists, although,
even in the simple case of $\SU(2)$, many issues remain unresolved.
}
\end{enumerate}

\bibliographystyle{abbrv}
\bibliography{bibfile}

\appendix

\chapter{Other algebraic structures on $n$-plectic manifolds} \label{leibniz_appendix}

There are other structures besides Lie $n$-algebras which can generalize the
Poisson bracket to $n$-plectic manifolds.  Here we show that any
$n$-plectic manifold gives rise to another kind of algebraic structure
known as a differential graded (dg) Leibniz algebra. A dg Leibniz
algebra is a graded vector space equipped with a degree $-1$ differential
and a bilinear bracket that satisfies a Jacobi-like identity, but does
not need to be skew-symmetric.  There is an interesting relationship between
the bilinear bracket on the Lie $n$-algebra and the bracket on the
corresponding dg Leibniz algebra. When $n=2$, these algebras can be
compared directly as objects in Roytenberg's category of `weak Lie
2-algebras' \cite{Roytenberg_L2A}. A weak Lie 2-algebra is a Lie 2-algebra whose $k=2$ bracket
satisfies skew-symmetry only up to a chain homotopy. This homotopy
must satisfy compatibility relations with the homotopy controlling the
failure of the Jacobi identity. We show that the Lie 2-algebra and the
2-term dg Leibniz algebra arising from a 2-plectic manifold are
isomorphic as weak Lie 2-algebras. We are unable to extend this result to
the $n>2$ case, since there is currently no definition available for weak $L_{\infty}$-algebras.

\section{dg Leibniz algebras}
In symplectic geometry, every
function $f \in \cinf(M)$ is Hamiltonian.  We also have the equality:
\begin{equation} \label{poisson}
\brac{f}{g}= \ip{f}dg= \L_{v_{f}}g
\end{equation}
for all $f,g \in \hamn{0}=\cinf(M)$. Hence $\brac{f}{\cdot}$ is a
degree zero derivation on $\hamn{0}$, which makes
$(\hamn{0},\blankbrac)$ a Poisson algebra. In general, for $n >1$, 
an equality such as Eq.\ \ref{poisson} does not hold, and
Hamiltonian forms are obviously not closed under wedge product.
Therefore, we shouldn't expect the Lie $n$-algebra $\Lie(M,\omega)$ to
behave like a Poisson algebra. But we do have the following simple lemma:
\begin{lemma} \label{der_lemma}
Let $(M,\omega)$ be an $n$-plectic manifold. If $\alpha,\beta \in
\hamn{n-1}$ are Hamiltonian forms, then
\[
\L_{v_{\alpha}} \beta= \brac{\alpha}{\beta} + d \ip{\alpha} \beta.
\]
\end{lemma}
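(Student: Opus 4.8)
The plan is to derive the identity directly from Cartan's magic formula together with the Hamiltonian condition on $\beta$. First I would apply the Lie-derivative formula (Eq.\ \ref{Lie}) in the case of the single vector field $v_{\alpha}$, for which $\deg{v_{\alpha}}=1$, to obtain the standard Cartan relation
\[
\L_{v_{\alpha}}\beta = \ip{\alpha} d\beta + d\ip{\alpha}\beta.
\]

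Next I would use that $\beta \in \hamn{n-1}$ is Hamiltonian, so by Definition \ref{hamiltonian} its differential is $d\beta = -\ip{\beta}\omega$. Substituting this into the first term gives
\[
\ip{\alpha}d\beta = -\ip{\alpha}\ip{\beta}\omega .
\]

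The final step is sign bookkeeping: interior products with vector fields are odd operators of degree $-1$ and therefore anticommute, so $\ip{\alpha}\ip{\beta}\omega = -\ip{\beta}\ip{\alpha}\omega$. Hence $-\ip{\alpha}\ip{\beta}\omega = \ip{\beta}\ip{\alpha}\omega$, which is precisely $\brac{\alpha}{\beta}$ by Definition \ref{bracket_def}. Combining the three displays yields $\L_{v_{\alpha}}\beta = \brac{\alpha}{\beta} + d\ip{\alpha}\beta$, as claimed.

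There is essentially no obstacle here beyond tracking the sign conventions; the content is entirely formal, and the same computation already appears as the $n=2$ special case in Lemma \ref{calc1}. The only point requiring care is to confirm that the anticommutativity of the two interior products produces exactly the sign needed to convert $-\ip{\alpha}\ip{\beta}\omega$ into the bracket as defined with the contraction order $\ip{\beta}\ip{\alpha}\omega$.
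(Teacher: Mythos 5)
Your proof is correct and is essentially identical to the paper's own argument: the paper likewise applies Cartan's formula $\L_{v_{\alpha}}\beta = \ip{\alpha}d\beta + d\ip{\alpha}\beta$, substitutes $d\beta = -\ip{\beta}\omega$, and identifies $-\ip{\alpha}\ip{\beta}\omega$ with $\brac{\alpha}{\beta}=\ip{\beta}\ip{\alpha}\omega$ via the anticommutativity of interior products. Your explicit attention to that last sign is the only point of substance, and you handle it correctly.
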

\begin{proof}
Definitions \ref{hamiltonian} and \ref{bracket_def} imply:
\begin{align*}
\L_{v_{\alpha}} \beta &= \ip{\alpha} d\beta + d\ip{\alpha} \beta \\
&= -\ip{\alpha} \ip{\beta}\omega + d\ip{\alpha} \beta \\
&=\brac{\alpha}{\beta} + d \ip{\alpha} \beta.
\end{align*}
\end{proof}

Lemma \ref{der_lemma} suggests that we interpret the $(n-1)$-form
$\L_{v_{\alpha}}\beta$ as a type of bracket on $\hamn{n-1}$, equal to
the bracket $\brac{\cdot}{\cdot}$ modulo boundary
terms. To this end, we consider an algebraic structure known as a
differential graded (dg) Leibniz algebra. 

\begin{definition} \label{dg_leibniz}
A {\bf differential graded Leibniz algebra}
$(L,\delta,\bbrac{\cdot}{\cdot})$ is a graded vector
space $L$ equipped with a degree -1 linear map $\delta \maps L \to
L$ and a degree 0 bilinear map $\bbrac{\cdot}{\cdot} \maps L \tensor L
\to L$ such that the following identities hold:
\begin{gather}
\delta \circ \delta =0 \\
\delta \bbrac{x}{y} = \bbrac{\delta x}{y} +
(-1)^{\deg{x}}\bbrac{x}{\delta y} \label{bracket_der}\\
\bbrac{x}{\bbrac{y}{z}}= \bbrac{\bbrac{x}{y}}{z} + (-1)^{\deg{x}\deg{y}}
\bbrac{y}{\bbrac{x}{z}}\label{leib_jacobi},
\end{gather}
for all $x,y,z \in L$.
\end{definition}
In the literature, dg Leibniz algebras are also called dg Loday algebras.
This definition presented here is equivalent to the one given by Ammar and Poncin
\cite{Ammar:2008}. Note that the second condition given in the
definition above can be interpreted as the Jacobi identity. Hence 
if the bilinear map $\bbrac{\cdot}{\cdot}$ is skew-symmetric,
then a dg Leibniz algebra is a DGLA.

We now show that every $n$-plectic manifold gives a dg Leibniz
algebra.
\begin{prop} \label{n-plectic_Leibniz}
Given an $n$-plectic manifold $(M,\omega)$, there is a differential
graded Leibniz algebra $\Leib(M,\omega)=(L,\delta,\bbrac{\cdot}{\cdot})$ with underlying
graded vector space 
\[
L_{i} =
\begin{cases}
\hamn{n-1} & i=0,\\
\Omega^{n-1-i}(M) & 0 < i \leq n-1,
\end{cases}
\]
and maps $ \delta \maps L \to L$, $\bbrac{\cdot}{\cdot} \maps L
\tensor L \to L$ defined as
\[
\delta(\alpha)=d \alpha,
\]
if $\deg{\alpha} > 0$ and
\[
\bbrac{\alpha}{\beta}=
\begin{cases}
\L_{v_{\alpha}}\beta & \text{if $\deg{\alpha}=0$,}\\
0 & \text{if $\deg{\alpha} >0$,}
\end{cases}
\]
where $v_{\alpha}$ is the Hamiltonian vector field associated to $\alpha$.
\end{prop}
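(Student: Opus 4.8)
The plan is to verify the three defining identities of a differential graded Leibniz algebra (Def.\ \ref{dg_leibniz}) for the proposed maps $\delta$ and $\bbrac{\cdot}{\cdot}$. The first identity, $\delta \circ \delta = 0$, is immediate since $\delta$ is just the de Rham differential $d$ (in positive degree) and $d^2 = 0$; one only checks the degree bookkeeping at the boundary between $L_1 = \Omega^{n-2}(M)$ and $L_0 = \hamn{n-1}$, which causes no trouble because $d$ of an $(n-2)$-form is an $(n-1)$-form that is automatically Hamiltonian (its Hamiltonian vector field is zero, as noted after Def.\ \ref{hamiltonian}).

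Next I would establish the Leibniz/derivation rule \eqref{bracket_der}, namely $\delta \bbrac{x}{y} = \bbrac{\delta x}{y} + (-1)^{\deg{x}} \bbrac{x}{\delta y}$. The only nontrivial case is $\deg{x} = 0$, i.e.\ $x = \alpha \in \hamn{n-1}$, since $\bbrac{x}{y}$ vanishes otherwise. Here $\bbrac{\alpha}{y} = \L_{v_\alpha} y$, and the key fact is that the Lie derivative commutes with the exterior derivative: $d \L_{v_\alpha} y = \L_{v_\alpha} dy$. Since $\delta \alpha = d\alpha$ is closed and hence has zero Hamiltonian vector field, the term $\bbrac{\delta \alpha}{y}$ drops out, and $\delta \bbrac{\alpha}{y} = d \L_{v_\alpha} y = \L_{v_\alpha}(dy) = \bbrac{\alpha}{\delta y}$, matching the right-hand side (the sign $(-1)^0 = 1$). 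I should check separately that when $y \in L_0$ so that $\L_{v_\alpha} y$ lands in $L_0$, the identity still reads correctly; this is a routine degree check using $[d, \L_{v_\alpha}] = 0$.

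The main work is the graded Jacobi identity \eqref{leib_jacobi}. Again the left side is nonzero only when $\deg{x} = 0$, so write $x = \alpha \in \hamn{n-1}$; the sign $(-1)^{\deg{x}\deg{y}} = 1$ since $\deg{x}=0$. The identity becomes $\L_{v_\alpha} \bbrac{y}{z} = \bbrac{\L_{v_\alpha} y}{z} + \bbrac{y}{\L_{v_\alpha} z}$, which I expect to reduce to the statement that $\L_{v_\alpha}$ acts as a derivation of the bracket $\bbrac{\cdot}{\cdot}$. When $\deg{y} = 0$ as well (so $y = \beta$), the crucial ingredient is the identity $[v_\alpha, v_\beta] = v_{\brac{\alpha}{\beta}}$ from Prop.\ \ref{bracket_prop}, combined with the standard Cartan formula $\L_{[v_\alpha,v_\beta]} = \L_{v_\alpha}\L_{v_\beta} - \L_{v_\beta}\L_{v_\alpha}$. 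Indeed $\bbrac{\L_{v_\alpha}\beta}{z} = \L_{v_{(\L_{v_\alpha}\beta)}} z$, and by Lemma \ref{der_lemma} $\L_{v_\alpha}\beta = \brac{\alpha}{\beta} + d\iota_{v_\alpha}\beta$, whose Hamiltonian vector field is exactly $[v_\alpha, v_\beta]$ (the exact part contributing nothing). So the right-hand side becomes $\L_{[v_\alpha,v_\beta]} z + \L_{v_\beta}\L_{v_\alpha} z$, which equals $\L_{v_\alpha}\L_{v_\beta} z = \L_{v_\alpha}\bbrac{\beta}{z}$ by the commutator formula — precisely the left side. The cases where $y$ or $z$ has positive degree are handled by the same computation, using that $\L_{v_\alpha}$ preserves form-degree and that the Hamiltonian vector field assignment is only defined on degree-zero elements. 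The hard part will be organizing these degree cases cleanly and confirming that every appeal to ``the Hamiltonian vector field of an exact form is zero'' is legitimate, so that the bracket remains well-defined and the commutator identity applies without stray boundary terms.
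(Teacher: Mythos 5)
Your proposal is correct and follows essentially the same route as the paper's proof: a case analysis on degrees, the commutation $d\L_{v_{\alpha}}=\L_{v_{\alpha}}d$ for the derivation rule, and $\L_{[v_{\alpha},v_{\beta}]}=\L_{v_{\alpha}}\L_{v_{\beta}}-\L_{v_{\beta}}\L_{v_{\alpha}}$ combined with $v_{\brac{\alpha}{\beta}}=[v_{\alpha},v_{\beta}]$ for the Jacobi identity. The only step the paper makes explicit that you leave implicit is the preliminary check that $\bbrac{\alpha}{\beta}=\L_{v_{\alpha}}\beta$ is again Hamiltonian in degree $0$ (via Lemma \ref{der_lemma}), so that the bracket actually lands in $L_{0}$.
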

\begin{proof}
If $\alpha,\beta \in L_{0}=\hamn{n-1}$ are Hamiltonian, then Lemma \ref{der_lemma}
implies
$d\bbrac{\alpha}{\beta}=d\brac{\alpha}{\beta}=-\iota_{[v_{\alpha},v_{\beta}]} \omega$.
Hence $\bbrac{\alpha}{\beta}$ is Hamiltonian. For $\deg{\beta} > 0$,
we have $\deg{\L_{v_{\alpha}}\beta}=\deg{\beta}$, since the Lie
derivative is a degree zero derivation. Hence $\bbrac{\cdot}{\cdot}$
is a bilinear degree 0 map. 

We next show that Eq.\ \ref{bracket_der} of Definition \ref{dg_leibniz}
holds. If $ \deg{\alpha} > 1$, then it holds
trivially. If $\deg{\alpha} =1$, then
$\bbrac{\alpha}{\beta}=\bbrac{\alpha}{\delta\beta}=0$ for all $\beta \in L$
by definition, and $\bbrac{\delta \alpha}{\beta}=0$ since the
Hamiltonian vector field associated to $d\alpha$ is zero. If
$\deg{\alpha}=0$ and $\deg{\beta}=0$, then
$\deg{\bbrac{\alpha}{\beta}}=0$. Hence all terms in
\eqref{bracket_der} vanish by definition. The last case to
consider is
$\deg{\alpha}=0$ and $\deg{\beta} > 0$. We have 
\[
\delta \bbrac{\alpha}{\beta}= d \L_{v_{\alpha}}\beta =
\L_{v_{\alpha}} d\beta =\bbrac{\alpha}{\delta \beta}.
\]

Finally, we show the Jacobi identity \eqref{leib_jacobi}
holds. Let $\alpha, \beta, \gamma \in L$.  Then the left hand side of
\eqref{leib_jacobi} is $\bbrac{\alpha}{\bbrac{\beta}{\gamma}}$, while
the right hand side is $\bbrac{\bbrac{\alpha}{\beta}}{\gamma} +
(-1)^{\deg{\alpha}\deg{\beta}}\bbrac{\beta}{\bbrac{\alpha}{\gamma}}$. Note equality holds trivially
if $\deg{\alpha} >0$ or $\deg{\beta} >0$. Otherwise, we use the
identity
\[
\L_{[v_{1},v_{2}]} = \L_{v_{1}} \L_{v_{2}} - \L_{v_{2}}\L_{v_{1}},
\]
and the fact that $d\bbrac{\alpha}{\beta}=-\iota_{[v_{\alpha},v_{\beta}]} \omega$
to obtain the following equalities:
\begin{align*}
\bbrac{\alpha}{\bbrac{\beta}{\gamma}} &= \L_{v_{\alpha}}
\L_{v_{\beta}} \gamma\\
&=\L_{[v_{\alpha},v_{\beta}]}\gamma + \L_{v_{\beta}}
\L_{v_{\alpha}} \gamma\\
&=\bbrac{\bbrac{\alpha}{\beta}}{\gamma} + 
\bbrac{\beta}{\bbrac{\alpha}{\gamma}}.
\end{align*}
\end{proof}
One interesting aspect of the dg Leibniz structure is that it
interprets the bracket of Hamiltonian $(n-1)$-forms geometrically as
the change of an observable along the flow of a Hamiltonian vector
field. Leibniz algebras, in fact, naturally arise in a
variety of geometric settings e.g.\  in Courant algebroid theory
and, more generally, in the derived bracket
formalism \cite{KosSchwarz:2004}.  It would be interesting to compare
$\Leib(M,\omega)$ to the Leibniz algebras that appear in these other
formalisms.

\section{Weak Lie 2-algebras}
When $(M,\omega)$ is a symplectic manifold, $L_{\infty}(M,\omega)$ and
$\Leib(M,\omega)$ give the same Lie algebra: the Poisson algebra of functions.
It would be nice if we could show that for any $n$-plectic manifold, 
$\Lie(M,\omega)$ and $\Leib(M,\omega)$ are also ``the same'', i.e.\
equivalent as objects in some category containing both
$L_{\infty}$-algebras and dg Leibniz algebras. This may seem unlikely
at first since the brackets which induce these structures have
different properties. For example, $\brac{\cdot}{\cdot}$ is
skew-symmetric, while $\bbrac{\cdot}{\cdot}$, in general, is not.
However, we have the following proposition.
\begin{prop} \label{brackets_same_prop}
Let $(M,\omega)$ be an $n$-plectic manifold, and 
$\brac{\cdot}{\cdot}$ and $\bbrac{\cdot}{\cdot}$ be the brackets given
in Def.\ \ref{bracket_def} and Prop.\ \ref{n-plectic_Leibniz},
respectively. If $\alpha$ and $\beta$ are Hamiltonian $(n-1)$-forms,
then
\[
\bbrac{\alpha}{\beta} + \bbrac{\beta}{\alpha} = d \left( \ip{\alpha}
  \beta + \ip{\beta} \alpha \right).
\]
\end{prop}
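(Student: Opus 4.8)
The identity to establish is
\[
\bbrac{\alpha}{\beta} + \bbrac{\beta}{\alpha} = d \left( \ip{\alpha}
  \beta + \ip{\beta} \alpha \right),
\]
where by Proposition \ref{n-plectic_Leibniz} the Leibniz bracket on Hamiltonian $(n-1)$-forms is $\bbrac{\alpha}{\beta} = \L_{v_{\alpha}}\beta$. The natural approach is to compute each term directly using the Cartan formula $\L_{v} = \iota_{v}d + d\iota_{v}$ together with the defining property of Hamiltonian forms, $d\beta = -\ip{\beta}\omega$ (Def.\ \ref{hamiltonian}). This is essentially a one-line symmetrization argument, and the only subtlety is bookkeeping of the Lie-derivative identity already recorded in Lemma \ref{der_lemma}.

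First I would invoke Lemma \ref{der_lemma}, which gives
\[
\bbrac{\alpha}{\beta} = \L_{v_{\alpha}}\beta = \brac{\alpha}{\beta} + d\ip{\alpha}\beta,
\qquad
\bbrac{\beta}{\alpha} = \L_{v_{\beta}}\alpha = \brac{\beta}{\alpha} + d\ip{\beta}\alpha.
\]
Adding these two equations yields
\[
\bbrac{\alpha}{\beta} + \bbrac{\beta}{\alpha}
= \brac{\alpha}{\beta} + \brac{\beta}{\alpha}
 + d\bigl(\ip{\alpha}\beta + \ip{\beta}\alpha\bigr).
\]
The second step is then to eliminate the first two terms on the right: by the skew-symmetry of the bracket $\brac{\cdot}{\cdot}$ established in Proposition \ref{bracket_prop}(1), we have $\brac{\alpha}{\beta} + \brac{\beta}{\alpha} = 0$. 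This immediately collapses the expression to the desired right-hand side, completing the proof.

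I do not expect any serious obstacle here; the entire argument is a direct consequence of Lemma \ref{der_lemma} (which packages the Cartan-formula computation) and the skew-symmetry from Proposition \ref{bracket_prop}. The only point requiring mild care is to confirm that both $\alpha$ and $\beta$ are genuinely Hamiltonian so that Lemma \ref{der_lemma} applies to each ordering — which is exactly the hypothesis of the proposition. If one preferred a self-contained derivation bypassing Lemma \ref{der_lemma}, the alternative would be to write $\L_{v_{\alpha}}\beta + \L_{v_{\beta}}\alpha = \ip{\alpha}d\beta + \ip{\beta}d\alpha + d(\ip{\alpha}\beta + \ip{\beta}\alpha)$ and observe that $\ip{\alpha}d\beta + \ip{\beta}d\alpha = -\ip{\alpha}\ip{\beta}\omega - \ip{\beta}\ip{\alpha}\omega$ vanishes by antisymmetry of $\omega$; but routing through the two cited results is cleaner and is the approach I would present.
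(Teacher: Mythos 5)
Your proof is correct and follows essentially the same route as the paper, whose entire proof is the observation that the identity follows from the Cartan formula $\L_{v}=\iota_{v}d+d\iota_{v}$; your invocation of Lemma \ref{der_lemma} plus skew-symmetry of $\brac{\cdot}{\cdot}$ is just that computation packaged through results already in the text, and the "self-contained alternative" you mention is precisely the paper's argument.
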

\begin{proof}
The statement follows from the formula $\L_{v} = \iota_{v}d + d\iota_{v}$.
\end{proof}
So, we seek a category whose objects originate from
weakening, up to homotopy, both the skew-symmetric axiom and the Jacobi identity. 
Unfortunately, no such category exists,
unless $n=2$. In this case, by extending the work of Baez and Crans \cite{HDA6}, Roytenberg
\cite{Roytenberg_L2A} developed what are known as 2-term weak
$L_{\infty}$-algebras, or `weak Lie 2-algebras'. In a weak Lie
2-algebra, the skew symmetry condition on the maps given in Definition
\ref{Linfty} is relaxed. In particular, the bilinear map $l_{2} \maps
L \tensor L \to L$ is skew-symmetric only up to homotopy. 
This homotopy must satisfy a coherence condition, as well as compatibility
conditions with the homotopy that controls the failure of the Jacobi identity.
The goal of this section is to show that if $(M,\omega)$ is a
2-plectic manifold, then $\Lie(M,\omega)$ and $\Leib(M,\omega)$ are
isomorphic as weak Lie 2-algebras.

\begin{definition}[\cite{Roytenberg_L2A}] \label{WL2A_def}
A {\bf weak Lie 2-algebra} is a 2-term chain complex of vector spaces
$L = (L_1\stackrel{d}\rightarrow L_0)$ equipped with the following structure:
\begin{itemize}
\item a chain map $\blankbrac\maps L \tensor L\to L$ called the {\bf
bracket};
\item a chain homotopy $S \maps L\tensor L \to L$
from the chain map
\[     \begin{array}{ccl}  
     L \tensor L &\to& L   \\
     x \tensor y &\longmapsto& [x,y]  
  \end{array}
\]
to the chain map
\[     \begin{array}{ccl}  
     L \tensor L & \to & L   \\
     x \tensor y & \longmapsto & -[y,x]  
  \end{array}
\]
called the {\bf alternator};
\item a chain homotopy $J \maps L \tensor L \tensor L
 \to L$ 
from the chain map
\[     \begin{array}{ccl}  
     L \tensor L \tensor L & \to & L   \\
     x \tensor y \tensor z & \longmapsto & [x,[y,z]]  
  \end{array}
\]
to the chain map
\[     \begin{array}{ccl}  
     L \tensor L \tensor L& \to & L   \\
     x \tensor y \tensor z & \longmapsto & [[x,y],z] + [y,[x,z]]  
  \end{array}
\]
called the {\bf Jacobiator}.
\end{itemize}
In addition, the following equations are required to hold:
\begin{equation}
\begin{array}{c}
  [x,J(y,z,w)] + J(x,[y,z],w) +
  J(x,z,[y,w]) + [J(x,y,z),w] \\ + [z,J(x,y,w)] 
  = J(x,y,[z,w]) + J([x,y],z,w) \\ + [y,J(x,z,w)] + J(y,[x,z],w) + J(y,z,[x,w]),
\end{array}
\end{equation}
\begin{equation}
    J(x,y,z)+J(y,x,z)=-[S(x,y),z],
\end{equation}
\begin{equation} \label{WL2A_eq3}
    J(x,y,z)+J(x,z,y)=[x,S(y,z)]-S([x,y],z)-S(y,[x,z]),
\end{equation}
\begin{equation}
    {S(x,[y,z])} = S([y,z],x).
\end{equation}
\end{definition}

A weak Lie 2-algebra homomorphism is a chain map between the underlying
chain complexes that preserves the bracket up to coherent chain homotopy.
More precisely:
\begin{definition}[\cite{Roytenberg_L2A}]
\label{weak_homo}
Given Lie 2-algebras $L$ and $L'$ with bracket, alternator and 
Jacobiator $\blankbrac$, $S$, $J$ and $\blankbrac^\prime$, $S^\prime$, 
$J^\prime$ respectively, a {\bf homomorphism} from $L$ to $L'$
consists of:
\begin{itemize}
\item{a chain map $\phi=\left(\phi_{0},\phi_{1}\right) \maps L \to L'$, and}
\item{a chain homotopy $\Phi \maps L \tensor L \to L^{\prime}$ from the chain
  map
\[     \begin{array}{ccl}  
     L \tensor L & \to & L^{\prime}   \\
     x \tensor y & \longmapsto & \left [ \phi(x),\phi(y) \right]^{\prime},
  \end{array}
\]
to the chain map
\[     \begin{array}{ccl}  
     L \tensor L & \to & L^{\prime}   \\
     x \tensor y & \longmapsto & \phi \left( [x,y] \right)
  \end{array}
\]
}
\end{itemize}
such that the following equations hold:
\begin{equation}\label{homo_eq_1}
    S^{\prime}\left(\phi_{0}(x),\phi_{0}(y)\right)-\phi_{1}(S\left(x,y
    \right))=\Phi(x,y)+\Phi(y,x),
\end{equation}
\begin{equation}\label{homo_eq_2}
     \begin{array}{l}
       J^{\prime}\left(\phi_{0}(x),\phi_{0}(y),\phi_{0}(z)\right)-\phi_{1}\left(J\left(
       x,y,z \right)\right)\\
       =[\phi_{0}(x),\Phi(y,z)]^{\prime}-[\phi_{0}(y),\Phi(x,z)]^{\prime}-[\Phi(x,y),\phi_{0}(z)]^{\prime}\\
       -\Phi([x,y],z)-\Phi(y,[x,z])+\Phi(x,[y,z]).
     \end{array}
\end{equation}
\end{definition}
\noi The details involved in composing Lie 2-algebra homomorphisms are given
by Roytenberg \cite{Roytenberg_L2A}.  We say a Lie 2-algebra homomorphism
with an inverse is an {\bf isomorphism}.

Lie 2-algebras in the sense of Prop.\ \ref{L2A} are
weak Lie 2-algebras that satisfy skew-symmetry on the
nose. They are called semi-strict Lie 2-algebras in this context,
since the Jacobi identity may still fail to hold. More precisely:
\begin{definition}[\cite{Roytenberg_L2A}]
A weak Lie 2-algebra $(L,\blankbrac,S,J)$ is {\bf semi-strict}
iff $S=0$, and {\bf hemi-strict} iff $J=0$.
\end{definition}
Note that the bracket of a hemi-strict Lie 2-algebra satisfies a Jacobi identity of the form
\[ 
[x,[y,z]]-[[x,y],z] - [y,[x,z]] =0,
\]
but it is not necessarily skew-symmetric.
In fact, any hemi-strict Lie 2-algebra is a
2-term dg Leibniz algebra. For 2-plectic manifolds, we have
a converse:

\begin{prop}
If $(M,\omega)$ is a 2-plectic manifold, then $\Leib(M,\omega)$ is a
hemi-strict Lie 2-algebra with: 
\begin{itemize}
\item{underlying complex 
\[
L=\cinf(M)\stackrel{d}{\to} \hamn{1},
\]
}
\item{bracket given by
\[
[\alpha,\beta]=\bbrac{\alpha}{\beta}= \L_{v_{\alpha}} \beta,
\]
in degree 0, and
\begin{align*}
[\alpha,f]&=\bbrac{\alpha}{f}= \L_{v_{\alpha}} f, \\
[f,\alpha]&=\bbrac{f}{\alpha}=0
\end{align*}
in degree 1, 
}
\item{alternator given by:
\[
S(\alpha,\beta)=\ip{\alpha}\beta + \ip{\beta}\alpha,
\]
}
\item{Jacobiator given by:
\[
J(\alpha,\beta,\gamma)=0.
\]
}
\end{itemize}
\end{prop}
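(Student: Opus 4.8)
The plan is to verify directly that the four data -- the complex $L = (\cinf(M) \xto{d} \hamn{1})$, the bracket $\bbrac{\cdot}{\cdot}$, the alternator $S$, and the trivial Jacobiator $J=0$ -- satisfy all the axioms in Definition \ref{WL2A_def}. The key observation that makes the claim plausible is that Proposition \ref{n-plectic_Leibniz} already establishes $\Leib(M,\omega)$ as a dg Leibniz algebra, and by the remark following Definition \ref{WL2A_def}, a hemi-strict Lie 2-algebra is precisely a 2-term dg Leibniz algebra together with an alternator measuring the failure of skew-symmetry. So most of the work is already done; what remains is to check that $S$ is a genuine chain homotopy satisfying the coherence equations with $J=0$.

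First I would record the ingredients we may assume: that $\bbrac{\cdot}{\cdot}$ is a degree $0$ bilinear map satisfying the graded Leibniz/Jacobi identities (Prop.\ \ref{n-plectic_Leibniz}), that $\bbrac{\alpha}{\beta}=\L_{v_{\alpha}}\beta$ in degree $0$, and the crucial symmetrization formula from Proposition \ref{brackets_same_prop}, namely $\bbrac{\alpha}{\beta}+\bbrac{\beta}{\alpha}=d(\ip{\alpha}\beta+\ip{\beta}\alpha)=dS(\alpha,\beta)$. Next I would verify that $S(\alpha,\beta)=\ip{\alpha}\beta+\ip{\beta}\alpha$ lands in $\cinf(M)=L_1$ (it is a contraction of a $1$-form, hence a function) and that it is symmetric in its arguments. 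The chain homotopy condition from the bracket $x\tensor y\mapsto [x,y]$ to $x\tensor y\mapsto -[y,x]$ reduces, in degree $0$, exactly to $dS(\alpha,\beta)=\bbrac{\alpha}{\beta}+\bbrac{\beta}{\alpha}$, which is Proposition \ref{brackets_same_prop}. In degree $1$ the homotopy condition must be checked on pairs $\alpha\tensor f$ and $f\tensor \alpha$ using the boundary formula $S(dh,\cdot)$ and the degree-$1$ bracket values $[\alpha,f]=\L_{v_{\alpha}}f$, $[f,\alpha]=0$; these are short computations using $S(\alpha,f)=\ip{\alpha}f=0$ since $f$ is a function.

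Then I would turn to the four coherence equations of Definition \ref{WL2A_def}. With $J=0$, the first equation holds trivially, and the last, $S(x,[y,z])=S([y,z],x)$, holds by the manifest symmetry of $S$. The equation $J(x,y,z)+J(y,x,z)=-[S(x,y),z]$ becomes $0=-[S(\alpha,\beta),\gamma]$; since $S(\alpha,\beta)$ is a function, its image $dS(\alpha,\beta)$ is an exact $1$-form whose Hamiltonian vector field vanishes, so $[S(\alpha,\beta),\gamma]=\bbrac{S(\alpha,\beta)}{\gamma}=\L_{v_{dS(\alpha,\beta)}}\gamma=0$. The remaining equation \eqref{WL2A_eq3}, $J(x,y,z)+J(x,z,y)=[x,S(y,z)]-S([x,y],z)-S(y,[x,z])$, is the genuinely substantive one: it demands that $[\alpha,S(\beta,\gamma)]=S(\bbrac{\alpha}{\beta},\gamma)+S(\beta,\bbrac{\alpha}{\gamma})$, i.e.\ that $\L_{v_{\alpha}}$ acts as a derivation of the symmetric pairing $S$. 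I expect this to be the main obstacle, but it should follow from the Cartan calculus: expanding $\L_{v_{\alpha}}(\ip{\beta}\gamma+\ip{\gamma}\beta)$ via $\L_{v_{\alpha}}\iota_{v_{\beta}}=\iota_{[v_{\alpha},v_{\beta}]}+\iota_{v_{\beta}}\L_{v_{\alpha}}$ together with $v_{\bbrac{\alpha}{\beta}}=[v_{\alpha},v_{\beta}]$ (a consequence of $d\bbrac{\alpha}{\beta}=-\iota_{[v_{\alpha},v_{\beta}]}\omega$, established in the proof of Prop.\ \ref{n-plectic_Leibniz}) should match the right-hand side term by term. Once all four equations are checked, I would conclude by invoking Definition \ref{WL2A_def} that $(L,\bbrac{\cdot}{\cdot},S,0)$ is a weak Lie 2-algebra, and by the definition of hemi-strict that it is hemi-strict since $J=0$.
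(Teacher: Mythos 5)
Your proposal is correct and follows essentially the same route as the paper: the paper's proof simply states that the axioms of Definition \ref{WL2A_def} are verified by direct Cartan-calculus computations, singling out the identity $\L_{v}\iota_{w}\alpha=\iota_{[v,w]}\alpha+\iota_{w}\L_{v}\alpha$ as the key ingredient for Eq.\ \ref{WL2A_eq3}, exactly as you do. Your verification of the chain-homotopy condition via Proposition \ref{brackets_same_prop} and your derivation-of-$S$ computation for Eq.\ \ref{WL2A_eq3} (using $v_{\bbrac{\alpha}{\beta}}=[v_{\alpha},v_{\beta}]$) are the intended, and correct, details.
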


\begin{proof}
The axioms for a weak Lie 2-algebra given in Def.\ \ref{WL2A_def}  
are verified by straightforward calculations using the Cartan
calculus. In particular, the fact that Eq.\ \ref{WL2A_eq3} is
satisfied follows from the identity:
\[
\L_{v} \iota_{w} \alpha = \iota_{[v,w]} \alpha + \iota_{w} \L_{v} \alpha.
\]
\end{proof}

For a 2-plectic manifold, we view $\Lie(M,\omega)$ as a weak Lie 2-algebra
with trivial alternator. 
\begin{prop} \label{semi-strict_prop}
If $(M,\omega)$ is a 2-plectic manifold, then $\Lie(M,\omega)$ is a
semi-strict Lie 2-algebra with: 
\begin{itemize}
\item{underlying complex 
\[
L=\cinf(M)\stackrel{d}{\to} \hamn{1},
\]
}
\item{bracket given by
\[
[\alpha,\beta]=\brac{\alpha}{\beta}= \omega(v_{\alpha},v_{\beta},\cdot),
\]
in degree 0, and
\begin{align*}
[\alpha,f]&=0\\
[f,\alpha]&=0\\
\end{align*}
in degree 1, 
}
\item{alternator given by:
\[
S(\alpha,\beta)=0,
\]
}
\item{Jacobiator given by:
\[
J(\alpha,\beta,\gamma)=\omega(v_{\gamma},v_{\beta},v_{\alpha}).
\]
}
\end{itemize}
\end{prop}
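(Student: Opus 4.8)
We must verify that $\Lie(M,\omega)$, equipped with the data listed (complex $\cinf(M)\xto{d}\hamn{1}$, bracket $\brac{\cdot}{\cdot}$ in degree $0$ and trivial in degree $1$, alternator $S=0$, and Jacobiator $J(\alpha,\beta,\gamma)=\omega(v_{\gamma},v_{\beta},v_{\alpha})$), is a genuine weak Lie $2$-algebra in the sense of Definition \ref{WL2A_def}, and in particular is semi-strict ($S=0$).

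The plan is to check each axiom of Definition \ref{WL2A_def} directly, leaning heavily on results already established for the semi-strict Lie $2$-algebra $\Lie(M,\omega)$ of Proposition \ref{semistrict}. First I would observe that the bracket $\brac{\cdot}{\cdot}$ is genuinely skew-symmetric (Prop.\ \ref{bracket_prop}(1)), so setting the alternator $S=0$ is consistent: the two chain maps $x\tensor y\mapsto[x,y]$ and $x\tensor y\mapsto-[y,x]$ literally coincide, and $S=0$ is a (trivial) chain homotopy between them. This is what makes the structure semi-strict. Next I would note that $\brac{\cdot}{\cdot}$ is a chain map and that the Jacobiator $J$ given here is exactly the map $l_3$ from Theorem \ref{main_thm} (up to the identification in Prop.\ \ref{semistrict}); that it is a chain homotopy from $[x,[y,z]]$ to $[[x,y],z]+[y,[x,z]]$ is precisely the content of Proposition \ref{no_jacobi}, rewritten via $d J(\alpha,\beta,\gamma)=-d\iota(v_\alpha\wedge v_\beta\wedge v_\gamma)\omega$ and the identity $v_{\brac{\alpha}{\beta}}=[v_\alpha,v_\beta]$.

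The substance of the proof is then checking the four displayed coherence equations of Definition \ref{WL2A_def}. Since $S=0$, equations two, three, and four become respectively $J(x,y,z)+J(y,x,z)=0$, $J(x,y,z)+J(x,z,y)=0$, and $0=0$. The first two assert that $J(\alpha,\beta,\gamma)=\omega(v_\gamma,v_\beta,v_\alpha)$ is antisymmetric under swapping its first two and its last two arguments; both follow immediately from the total antisymmetry of the $3$-form $\omega$ together with skew-symmetry of the assignment $\alpha\mapsto v_\alpha$. The fourth equation $S(x,[y,z])=S([y,z],x)$ is vacuous. The remaining work is the big pentagon-type identity (equation one), which is exactly the same equation $\eqref{big_J}$ appearing in Proposition \ref{L2A}; since $\Lie(M,\omega)$ is already known to be a (semi-strict) Lie $2$-algebra in the sense of Prop.\ \ref{L2A} by Theorem \ref{main_thm} and Prop.\ \ref{semistrict}, this identity holds verbatim, as it is the coherence law for the Jacobiator.

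I expect the only genuine obstacle to be bookkeeping: confirming that the sign conventions and argument orderings in Roytenberg's Definition \ref{WL2A_def} match those in the Baez--Crans formulation of Prop.\ \ref{L2A}, so that equation one here really is the equation $\eqref{big_J}$ there rather than a reordered variant. Everything else reduces to the antisymmetry of $\omega$ and to results already proved. In short, I would structure the proof as: (i) note $S=0$ is legitimate because the bracket is strictly skew; (ii) cite Prop.\ \ref{no_jacobi} / Thm.\ \ref{main_thm} for $J$ being the correct Jacobiator chain homotopy; (iii) dispatch the three alternator-coherence equations using the antisymmetry of $\omega$; and (iv) reduce the final Jacobiator-coherence equation to Eq.\ $\eqref{big_J}$, which holds by Prop.\ \ref{semistrict}.
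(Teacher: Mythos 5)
Your proposal is correct and follows essentially the same route as the paper: the paper's proof simply observes that setting $S=0$ in Definition \ref{WL2A_def} recovers the semi-strict (Baez--Crans) notion of Lie 2-algebra and then cites Proposition \ref{semistrict}. Your version just spells out the axiom-by-axiom reduction (strict skew-symmetry of the bracket, total antisymmetry of $J$ from that of $\omega$, and the coherence law being Eq.\ \ref{big_J}) that the paper leaves implicit.
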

\begin{proof}
By setting $S=0$, in Def.\ \ref{WL2A_def}, we recover 
the usual notion of a Lie 2-algebra Def.\ \ref{LnA}.
Hence, the statement follows from Prop.\ \ref{semistrict}.
\end{proof}

The main result of this section is the
following theorem:
\begin{theorem} \label{isomorphism_L2A_thm}
If $(M,\omega)$ is a 2-plectic manifold, then $\Lie(M,\omega)$ and
$\Leib(M,\omega)$ are isomorphic as weak Lie 2-algebras.
\end{theorem}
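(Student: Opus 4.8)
The plan is to construct an explicit weak Lie 2-algebra isomorphism $(\phi,\Phi)\maps \Lie(M,\omega)\to\Leib(M,\omega)$. Both structures have the \emph{same} underlying chain complex $L=\cinf(M)\stackrel{d}{\to}\hamn{1}$, so the natural candidate for the chain map is the identity: take $\phi_{0}=\id$ on $\hamn{1}$ and $\phi_{1}=\id$ on $\cinf(M)$. This is trivially a chain map, and it has an obvious inverse, so once I verify the coherence equations of Def.\ \ref{weak_homo} the map will automatically be an isomorphism. The only nontrivial piece of data left to supply is the chain homotopy $\Phi\maps L\tensor L\to L'$ relating the two brackets.

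First I would compute the defect between the two brackets in degree $0$. By Lemma \ref{der_lemma} (equivalently Lemma \ref{calc1} in the main text), for Hamiltonian $1$-forms $\alpha,\beta$ we have $\bbrac{\alpha}{\beta}=\L_{v_{\alpha}}\beta=\brac{\alpha}{\beta}+d\ip{\alpha}\beta$. Thus the two degree-$0$ brackets differ by the \emph{exact} term $d\ip{\alpha}\beta$, which is precisely the kind of discrepancy a chain homotopy can absorb. This strongly suggests setting
\[
\Phi(\alpha,\beta)=\ip{\alpha}\beta,
\]
a bilinear map $\hamn{1}\tensor\hamn{1}\to\cinf(M)$ of the correct degree $+1$. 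I would then check that $\Phi$ is genuinely a chain homotopy from the chain map $x\tensor y\mapsto [\phi_0(x),\phi_0(y)]'=\bbrac{\alpha}{\beta}$ to $x\tensor y\mapsto\phi(\brac{\alpha}{\beta})$, i.e.\ that $d\Phi(\alpha,\beta)=\bbrac{\alpha}{\beta}-\brac{\alpha}{\beta}$ in degree $0$ and that the mixed-degree pieces match; both follow directly from Lemma \ref{der_lemma} and the definitions of the two brackets in degree $1$.

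The substantive work is verifying the two coherence conditions \eqref{homo_eq_1} and \eqref{homo_eq_2}. For \eqref{homo_eq_1}, the alternator of $\Lie(M,\omega)$ is $S=0$ (Prop.\ \ref{semi-strict_prop}) while that of $\Leib(M,\omega)$ is $S'(\alpha,\beta)=\ip{\alpha}\beta+\ip{\beta}\alpha$; the equation then reduces to $S'(\alpha,\beta)=\Phi(\alpha,\beta)+\Phi(\beta,\alpha)=\ip{\alpha}\beta+\ip{\beta}\alpha$, which holds \emph{on the nose}. This is exactly the content of Prop.\ \ref{brackets_same_prop}, so that step is essentially free. For \eqref{homo_eq_2}, the Jacobiators are $J(\alpha,\beta,\gamma)=\omega(v_{\gamma},v_{\beta},v_{\alpha})$ on the $\Lie$ side and $J'=0$ on the $\Leib$ side, so I must show that
\[
\omega(v_{\gamma},v_{\beta},v_{\alpha})
=[\phi_0(\alpha),\Phi(\beta,\gamma)]'-[\phi_0(\beta),\Phi(\alpha,\gamma)]'
-[\Phi(\alpha,\beta),\phi_0(\gamma)]'
-\Phi([\alpha,\beta],\gamma)-\Phi(\beta,[\alpha,\gamma])+\Phi(\alpha,[\beta,\gamma]).
\]

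The expected main obstacle is this last identity: it is a genuine Cartan-calculus computation tying together the contraction $\ip{\cdot}$, the Lie derivative, and the triple contraction $\ip{\alpha}\ip{\beta}\ip{\gamma}\omega$. I would expand each term using $[\phi_0(\alpha),g]'=\bbrac{\alpha}{g}=\L_{v_{\alpha}}g$ for $g\in\cinf(M)$ (degree-$1$ bracket of $\Leib$) and $[\Phi(\alpha,\beta),\phi_0(\gamma)]'=\bbrac{\ip{\alpha}\beta}{\gamma}=0$ (since $\ip{\alpha}\beta$ is a function and $\bbrac{f}{\cdot}=0$ in that degree), together with $\Phi([\alpha,\beta],\gamma)=\ip{[v_\alpha,v_\beta]}\gamma$ using $v_{\brac{\alpha}{\beta}}=[v_{\alpha},v_{\beta}]$ from Prop.\ \ref{bracket_prop}. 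After these substitutions the target reduces to an identity of the shape $\iota_{[v_\alpha,v_\beta]}\gamma+\cp=-3\,\ip{\alpha}\ip{\beta}\ip{\gamma}\omega+(\text{exact/contraction terms})$, which is precisely Lemma \ref{calc2}. So the hard step is really an exercise in carefully matching the bookkeeping of Lemma \ref{calc2} against the six terms on the right-hand side of \eqref{homo_eq_2}; once the signs and cyclic permutations are tracked, the identity closes. The remaining weak Lie 2-algebra axioms of Def.\ \ref{WL2A_def} I would verify directly for each of $\Lie(M,\omega)$ and $\Leib(M,\omega)$ (already handled by Props.\ \ref{semi-strict_prop} and the hemi-strict proposition), and finally note that the identity chain map with $\Phi=\ip{\cdot}(\cdot)$ has a strict inverse given by $\phi^{-1}=\id$ with homotopy $-\Phi$, establishing the isomorphism.
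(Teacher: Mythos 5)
Your proposal is correct and takes essentially the same route as the paper's proof: the identity chain map together with the homotopy $\Phi(\alpha,\beta)=\iota_{v_{\alpha}}\beta$, with Eq.\ \ref{homo_eq_1} reducing to Prop.\ \ref{brackets_same_prop} and Eq.\ \ref{homo_eq_2} closing via the Cartan identity $\L_{v}\iota_{w}\alpha=\iota_{[v,w]}\alpha+\iota_{w}\L_{v}\alpha$. The only cosmetic differences are the direction in which you orient the morphism (which at most flips the sign of $\Phi$ under the convention of Def.\ \ref{weak_homo}) and your appeal to Lemma \ref{calc2} where the paper expands the commutator identity directly.
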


\begin{proof}
Since the underlying chain complexes of $\Lie(M,\omega)$ and
$\Leib(M,\omega)$ are the same, we build a weak Lie 2-algebra
isomorphism (Def.\ \ref{weak_homo}) using the identity chain map
\[
\phi_{0}=\id, \quad \phi_{1}=\id.
\]  
Let $\Phi \maps \hamn{1} \tensor
\hamn{1} \to \cinf(M)$ be the map:
\[
\Phi(\alpha,\beta)=\ip{\alpha}\beta.
\]
Proposition \ref{brackets_same_prop} and
a straightforward calculation show that $\Phi$ gives a chain homotopy:
\[
\xymatrix{
L_0 \tensor L_1 \oplus L_1 \tensor L_0 \ar[d]_{\blankbrac  -\blankbrac'}
\ar[rr] &&  \ar @{-->}[dll]_{\Phi} L_0 \tensor L_0  \ar[d]^{\blankbrac  -\blankbrac'}\\
L'_1 \ar[rr]^{d} && L'_0
}
\]
where $L_{1}=L'_{1}=\cinf(M)$, $L_0=L_0'=\hamn{1}$, $\blankbrac$ is
the bracket on $\Leib(M,\omega)$, and $\blankbrac'$ is the bracket on $\Lie(M,\omega)$.

The alternator for $\Leib(M,\omega)$ is 
\[
S(\alpha,\beta)=\ip{\alpha}\beta + \ip{\beta}\alpha
=\Phi(\alpha,\beta)  + \Phi(\beta,\alpha).
\]
Since the alternator for $\Lie(M,\omega)$ is trivial, the above equality
implies Eq.\ \ref{homo_eq_1} in Def.\ \ref{weak_homo} holds.

Since the Jacobiator of $\Leib(M,\omega)$ is trivial, 
the left hand side of Eq.\ \ref{homo_eq_2} only involves the Jacobiator
$J'(\alpha,\beta,\gamma)=\omega(v_{\gamma},v_{\beta},v_{\alpha})$
of $\Lie(M,\omega)$. Using the definition of the brackets $\blankbrac$
and $\blankbrac'$, the right hand side of Eq.\ \ref{homo_eq_2}
becomes:
\[
\bbrac{\alpha}{\Phi(\beta,\gamma)} -
\bbrac{\beta}{\Phi(\alpha,\gamma)} - \bbrac{\Phi(\alpha,\beta)}{\gamma}\\
-\Phi(\brac{\alpha}{\beta},\gamma) - \Phi(\beta,\brac{\alpha}{\gamma})
+ \Phi(\alpha,\brac{\beta}{\gamma}).
\]
By expanding the above using $\Phi(\alpha,\beta)=\ip{\alpha}\beta$, 
$\bbrac{\Phi(\alpha,\beta)}{\gamma}=0$,
$\bbrac{\alpha}{\Phi(\beta,\gamma)}=\L_{v_{\alpha}}\Phi(\beta,\gamma)$, 
$\bbrac{\alpha}{\beta}=\omega(v_{\alpha},v_{\beta},\cdot)$, and the identity
\[
\L_{v} \iota_{w} \alpha = \iota_{[v,w]} \alpha + \iota_{w} \L_{v} \alpha,
\]
the
right hand side of Eq.\ \ref{homo_eq_2}
becomes:
\[
\ip{\beta} d \ip{\alpha} \gamma + \ip{\beta}\ip{\alpha} d \gamma -
\ip{\beta} d \ip{\alpha} \gamma +
2\omega(v_{\alpha},v_{\beta},v_{\gamma}).
\]
Since $\omega(v_{\alpha},v_{\beta},v_{\gamma})=-\ip{\beta}\ip{\alpha}d\gamma$, the
above expression simplifies to:
\[
\omega(v_{\alpha},v_{\beta},v_{\gamma})=-\omega(v_{\gamma},v_{\beta},v_{\alpha})=
-J'(\alpha,\beta,\gamma),
\]
which is the left hand side of Eq.\ \ref{homo_eq_2}.
Hence, $(\phi_{0},\phi_{1},\Phi)$ satisfies the axioms for an isomorphism
of weak Lie 2-algebras.
\end{proof}

\chapter{Twisted bundles and the proof of Proposition
  \ref{2-line_stack_prop}} \label{stack_appendix}
Recall Def.\ \ref{twistedvb} of a Hermitian vector bundle twisted by a
2-cocycle $g \in C^{2}(\cU,\sh{\U(1)})$ on an open cover
$\cU=\{U_{i}\}$ of a manifold $M$. Such an object is given by the
following data:
\begin{itemize}
\item{on each $U_{i}$, a Hermitian vector bundle
\[
(E_{i}, \inp{\cdot}{\cdot}_{i}),
\]
}
\item{on each $U_{ij}= U_{i} \cap U_{j}$, an isomorphism of Hermitian vector bundles
\[
\phi_{i j} \maps E_{j}\vert_{U_{i j}} \iso
E_{i} \vert_{U_{i j}},
\]
such that $\forall i, j, k \in \mathcal{I}$:
\[
\phi^{-1}_{ik}\circ \phi_{ij}\circ \phi_{jk}= g_{i j k} \cdot
\]
where $g_{i j k} \cdot$ is the automorphism of $E_{k} \vert_{U_{ijk}}$ 
corresponding to multiplication by
\[
g_{ijk} \maps U_{i} \cap U_{j} \cap U_{k} \to \U(1).
\]
}
\end{itemize}
Also, recall that a morphism $f \maps (E_{i},\phi_{i j}) \to
(E'_{i},\phi'_{i j})$ of $g$-twisted Hermitian vector bundles over $M$
consists of a collection of morphisms of Hermitian vector bundles
\[
f_{i} \maps E_{i}  \to E'_{i}, \quad 
\]
for each $i \in \mathcal{I}$ such that
\[
f_{i} \circ \phi_{i j} = \phi'_{i j} \circ f_{j}.
\]
In this section, we will prove Prop. \ref{2-line_stack_prop} from Chapter \ref{stacks_chapter}:
\begin{proposition*}
Given a 2-cocycle $g \in C^{2}(\cU,\sh{\U(1)})$ on a manifold $M$,
there exists a stack over $M$ whose category of global
sections is equivalent to the category $\HVB^{g}(M)$
of $g$-twisted Hermitian vector bundles over $M$.
\end{proposition*}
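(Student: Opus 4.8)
The plan is to construct the desired stack explicitly as a fibered category $\HVB^{g}$ over $M$, rather than invoking a general 2-descent theorem, so that the identification of its global sections becomes essentially tautological and the real work is isolated in the descent axiom. For each open set $V \ss M$, define $\HVB^{g}(V)$ to be the category of $g$-twisted Hermitian vector bundles over $V$ relative to the induced cover $\{V \cap U_{i}\}$: an object is a family $(E_{i},\phi_{ij})$ of Hermitian bundles $E_{i}$ on $V \cap U_{i}$ together with isomorphisms $\phi_{ij}\maps E_{j} \iso E_{i}$ on $V \cap U_{ij}$ satisfying $\phi^{-1}_{ik}\phi_{ij}\phi_{jk}=g_{ijk}$, and a morphism is a compatible family $(f_{i})$, exactly as in Definition \ref{twistedvb}. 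The pullback functor for an inclusion $W \stackrel{\iota}{\embed} V$ is restriction of the bundles and isomorphisms to the smaller open set, and the comparison natural isomorphisms $t_{\iota,\jmath}$ are taken to be identities, since restriction of sheaves of sections is strictly functorial (exactly as in Example \ref{sh}). Checking the fibered-category axioms of Definition \ref{fibered_cat_def} is then routine, and by construction the category of global sections of $\HVB^{g}$, namely its value at $V=M$, is precisely the category $\HVB^{g}(M)$ of $g$-twisted Hermitian vector bundles. Thus the only substantive task is to show that $\HVB^{g}$ is a stack.

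First I would verify that $\HVB^{g}$ is a prestack: for objects $x=(E_{i},\phi_{ij})$ and $y=(E'_{i},\phi'_{ij})$ over $V$, the presheaf $\sh{\Hom}_{\HVB^{g}}(x,y)$ is a sheaf. This follows because each $\sh{\Hom}(E_{i},E'_{i})$ is already a sheaf on $V \cap U_{i}$ and the compatibility relations $f_{i}\phi_{ij}=\phi'_{ij}f_{j}$ are local closed conditions; a compatible family of local morphisms therefore glues uniquely to a global morphism of twisted bundles.

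The heart of the argument, and the step I expect to be the main obstacle, is the descent condition for objects, i.e.\ the gluing axiom of Definition \ref{stack_def}. Given an open cover $\{V_{a}\}$ of an open set $V$ and the gluing data of Definition \ref{stack_def} (an object over the disjoint union, a descent isomorphism on double overlaps, and the coherence condition on triple overlaps), I must produce an object of $\HVB^{g}(V)$, unique up to isomorphism. The key observation is that this splits into a descent problem for each component bundle $E_{i}$ together with descent for the gluing isomorphisms $\phi_{ij}$, superimposed with the \emph{fixed} twisting cochain $g$, whose cocycle identity $\delta g = 1$ is exactly what keeps the twisted compatibility conditions consistent after gluing. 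Since $\HVB$ is itself a stack (its stack structure is inherited from $\Shf$ as in Prop.\ \ref{sh_stack}), each family of $E_{i}$'s descends to a bundle on $V\cap U_{i}$ and each family of $\phi_{ij}$'s descends to an isomorphism on $V \cap U_{ij}$; I then check that the descended isomorphisms still satisfy $\phi^{-1}_{ik}\phi_{ij}\phi_{jk}=g_{ijk}$, which holds because that identity is local and $g$ is left unchanged under the gluing. The genuinely delicate part is the bookkeeping: one must track the two covers $\{U_{i}\}$ and $\{V_{a}\}$ simultaneously and match the coherence diagram of Definition \ref{stack_def} against the cocycle identity for $g$; everything else reduces to the already-established stack property of $\HVB$.

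Finally I would note that the resulting stack is unique up to equivalence of stacks, so that taking $V=M$ exhibits the category of global sections of $\HVB^{g}$ as the category $\HVB^{g}(M)$, completing the proof. I would also remark that this hands-on gluing is the differential-geometric shadow of the algebraic fact that twisted coherent sheaves form a stack, established in \cite{Antieau:2009,Schapira:2006}, which can be cited as an alternative to the explicit verification above.
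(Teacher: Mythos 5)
Your proof is correct, but it takes a genuinely different route from the paper's. The paper constructs $\HVB^{g}$ by gluing the locally defined stacks $\HVB\vert_{U_{i}}$ along the identity equivalences $\varphi_{ij}=\id$, with multiplication by $g_{ijk}$ serving as the fibered isomorphisms $\mu_{ijk}\maps \varphi_{ij}\circ\varphi_{jk}\iso\varphi_{ik}$; the cocycle identity $\delta g=1$ is precisely the coherence condition needed to invoke the general gluing theorem for stacks (Theorem \ref{gluing_thm}), and the identification of global sections with twisted bundles is then carried out separately by comparing the descent category $\Des(\HVB^{g},\cU)$ with the category of twisted bundles via the equivalences $\varphi_{i}$ and the fibered isomorphisms $\eta_{ij}$. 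You instead define the stack fiberwise, so that the global-sections statement is tautological, and you push all the work into a direct verification of the prestack and descent axioms, which you correctly reduce to descent for each component bundle $E_{i}$ and for the gluing isomorphisms $\phi_{ij}$ using the known stack property of $\HVB$; the twisted relation $\phi^{-1}_{ik}\phi_{ij}\phi_{jk}=g_{ijk}$ survives gluing because it is a local identity between morphisms. Your argument is more elementary, avoiding the 2-categorical gluing theorem entirely, at the cost of not exhibiting the local equivalences $\HVB^{g}\vert_{U_{i}}\simeq\HVB\vert_{U_{i}}$, which the paper's construction produces for free and which are what justify calling $\HVB^{g}$ a 2-line stack. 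A small remark: in your construction the identity $\delta g=1$ is never actually used (the fiberwise categories and the descent argument make sense for an arbitrary 2-cochain, and you slightly overstate its role in keeping the gluing consistent), whereas in the paper's approach it is indispensable; its real content in your setting is exactly the local triviality just mentioned.
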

\noi As we will see, the proof follows from the fact that locally defined stacks
can be glued together to form a stack over $M$, in analogy with the
well-known result for sheaves.

We need to introduce some more machinery for stacks. First, just as we
have natural transformations between functors, we can define fibered
transformations between morphisms of fibered categories:
\begin{definition*}[\cite{Moerdijk:2002}] 
Let $(\phi,\alpha), (\psi,\beta) \maps \F \to \G$ be morphisms
between fibered categories. A {\bf fibered transformation} $\mu \maps
(\phi,\alpha) \to  (\psi,\beta)$ consists of natural transformations
\[
\mu_{U} \maps \phi_{U} \to \psi_{U},
\]
for each $U \ss M$, such that given an inclusion $i \maps V \to U$ of
open sets, the diagram of natural transformations 
\[
\xymatrix{
\phi_{V} i^{\ast} \ar[d]_{\mu_{V} i^{\ast}} \ar[r]^{\alpha_{i}}& i^{\ast} \phi_{U} \ar[d]^{i^{\ast}\mu_{U}} \\
\psi_{V} i^{\ast} \ar[r]^{\beta_{i}}& i^{\ast} \psi_{U}
}
\]
commutes. We say $\mu$ is a {\bf fibered isomorphism} if each
$\mu_{U}$ is a natural isomorphism.
\end{definition*}

Next, we describe the category of descent data associated to a fibered
category over $M$ and an open cover of $M$. One can think of this as
the data needed to glue together locally defined sections into a
global section.
\begin{definition}[\cite{Moerdijk:2002}]\label{descent_def}
Let $\F$ be a fibered category over $M$ and let $\cU=\{U_{i} \}$ be an
open cover of $M$. The category $\Des(M,\cU)$ of {\bf descent data}
has:
\begin{itemize}
\item{As objects, collections $(x_{i},\psi_{ij})$ where each $x_{i}$
    is an object of $\F(U_{i})$, and each
\[
\psi_{ij} \maps x_{j} \vert_{U_{ij}} \iso x_{i} \vert_{U_{ij}}
\]
is an isomorphism in $\F(U_{ij})$ required to satisfy the 
conditions
\begin{equation} \label{descent_def_eq}
\psi_{ik}^{-1} \circ \psi_{ij} \circ \psi_{jk} =\id
\end{equation}
in $\F(U_{ijk})$.
}

\item{As morphisms, $(x_{i},\psi_{ij}) \xto{f} (x'_{i},\psi'_{ij})$, a
    collection of morphisms
\[
x_{i} \xto{f_{i}} x'_{i}
\]
in $\F(U_{i})$ such that the diagram
\[
\xymatrix{
x_{j} \vert_{U_{ij}} \ar[d]_{\psi_{ij}}\ar[r]^{f_{j}} & x'_{j} 
\vert_{U_{ij}} \ar[d]^{\psi'_{ij}}\\
x_{i} \vert_{U_{ij}} \ar[r]^{f_{i}} &  x'_{i}\vert_{U_{ij}} 
}
\]
commutes in $\F(U_{ij})$.
}
\end{itemize}
\end{definition}
\noi Categories of descent data are sometimes used  directly in the definitions
for pre-stack and stack. We observe that if $\F$ is a fibered category, for any
open cover $\cU$, there is a functor $ D \maps \F(M) \to \Des(\F,\cU)$
which sends an object $x \in \F(M)$ to $(x \vert_{U_{i}}, \psi_{ij}=\id)$ in the descent category. If
$\F$ is a prestack, then this functor is fully faithful i.e.\ a bijection on morphisms. 
We have used variations of the next proposition in Chapters 5 and 7. 
\begin{prop}\label{descent_prop}
If $\F$ is a stack over $M$ and $\cU$ is an open cover of $M$, then
the above functor
\[
\F(M) \xto{D} \Des(\F,\cU)
\]
gives an equivalence of categories.
\end{prop}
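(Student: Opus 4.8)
The plan is to prove that $D\colon \F(M) \to \Des(\F,\cU)$ is an equivalence of categories by verifying it is essentially surjective and fully faithful, using the defining property of a stack (Definition \ref{stack_def}). The key observation is that objects of $\Des(\F,\cU)$ are precisely the data to which the stack gluing axiom applies. Specifically, an object $(x_i,\psi_{ij})$ of $\Des(\F,\cU)$ amounts to an object $x = (x_i) \in \F(\cU^{[0]})$ together with an isomorphism $\phi\colon p_2^\ast x \iso p_1^\ast x$ in $\F(\cU^{[1]})$ built from the $\psi_{ij}$, and the cocycle condition \eqref{descent_def_eq} is exactly the commutativity of the hexagon in $\F(\cU^{[2]})$ required by the stack definition. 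I would open by making this translation explicit: unravel the products $\F(\cU^{[0]}) = \prod_i \F(U_i)$ and $\F(\cU^{[1]}) = \prod_{i,j}\F(U_{ij})$, and check that the component $(\phi)_{ij}$ of $\phi$ corresponds under $p_1^\ast, p_2^\ast$ to $\psi_{ij}\colon x_j\vert_{U_{ij}} \iso x_i\vert_{U_{ij}}$, with the natural isomorphisms $t_{p,q}$ (which in the sheaf-type examples are identities) accounting for any coherence.

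First I would establish essential surjectivity. Given an object $(x_i,\psi_{ij})$ of $\Des(\F,\cU)$, translate it as above into the data $(x,\phi)$ satisfying the hexagon condition of Definition \ref{stack_def}. The stack axiom then produces an object $\tilde x \in \F(M)$, unique up to isomorphism, together with an isomorphism $\psi\colon p_0^\ast \tilde x \iso x$ in $\F(\cU^{[0]})$ compatible with $\phi$. Componentwise, $\psi$ gives isomorphisms $\tilde x\vert_{U_i} \iso x_i$ intertwining the identity gluing data on $\tilde x$ with the $\psi_{ij}$, which is precisely a morphism $D(\tilde x) \iso (x_i,\psi_{ij})$ in $\Des(\F,\cU)$. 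Hence $D$ hits every object up to isomorphism.

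Next I would handle full faithfulness. Faithfulness and fullness together follow from the prestack condition: since $\F$ is a stack, it is in particular a prestack, so the presheaves $\sh{\Hom}_\F(x,y)$ are sheaves. Given $\tilde x, \tilde y \in \F(M)$, a morphism $D(\tilde x) \to D(\tilde y)$ in $\Des(\F,\cU)$ is a collection of morphisms $f_i\colon \tilde x\vert_{U_i} \to \tilde y\vert_{U_i}$ agreeing on overlaps $U_{ij}$ (the commuting square with identity gluing maps), which is exactly a matching family of local sections of the $\Hom$-sheaf; the sheaf condition then yields a unique global morphism $f\colon \tilde x \to \tilde y$ restricting to the $f_i$, so $D$ is a bijection on Hom-sets. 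As noted in the text preceding the statement, this fully faithfulness is already available once $\F$ is a prestack, so the genuinely new content is essential surjectivity.

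The main obstacle will be the bookkeeping in the translation between the two formulations of the gluing data, namely tracking how the natural isomorphisms $t_{p_i,p_{jk}}$ in the hexagon of Definition \ref{stack_def} correspond to the strict cocycle identity $\psi_{ik}^{-1}\circ\psi_{ij}\circ\psi_{jk} = \id$ in \eqref{descent_def_eq}. In the sheaf-like fibered categories of interest (Example \ref{sh} and $\HVB$, $\Tor_{\U(1)}$) these $t_{p,q}$ are identities because restriction of sheaves is strictly functorial, so the hexagon collapses to the cocycle condition and the correspondence is clean; I would remark that the general case requires only that one absorb the coherence isomorphisms into the definition of $\phi$, which does not affect the logical structure of the argument. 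I would therefore carry out the identification carefully for the strict case and indicate that the general statement follows by the same reasoning once these canonical isomorphisms are inserted.
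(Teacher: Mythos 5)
Your proposal is correct and follows the same route as the paper's own (much terser) proof: essential surjectivity comes from the gluing axiom in Definition \ref{stack_def} applied to the descent datum, and full faithfulness comes from the prestack condition, exactly as you argue. The paper simply asserts these two points in two sentences, so your write-up is a careful expansion of the same argument rather than a different one.
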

\begin{proof}
Def.\ \ref{stack_def} implies that the objects $x_{i} \in \F(U_{i})$
given in the descent data can be glued together into a global object
which is unique up to isomorphism.
This implies that $D$ is essentially surjective, and hence an equivalence.
\end{proof}

Let $\F$ be a stack over $M$, and $U \ss M$ an open set. It is easy to
see that we can construct a new stack $\F \vert_{U}$ on $U$ which assigns to the open set $V \ss U$,
the category $\F(V)$. We say $F\vert_{U}$ is the stack $\F$ {\bf
  restricted} to $U$. The following theorem describes
how stacks themselves glue together.
\begin{theorem}[\cite{Schapira:2006}]\label{gluing_thm}
Let $\{U_{i}\}$ be a cover of $M$. Given the following data:
\begin{enumerate}
\item{for each $U_{i}$, a stack $\S_{i}$,}
\item{for each $U_{ij}=U_{i} \cap U_{j}$, an equivalence of stacks $\varphi_{ij} \maps \S_{j}
    \vert_{U_{ij}} \iso \S_{i} \vert_{U_{ij}}$, }
\item{for each $U_{ijk}$, a fibered isomorphism $\mu_{ijk} \maps
    \varphi_{ij} \circ \varphi_{jk} \iso \varphi_{ik}$,
such that, for each $U_{ijk}=U_{i} \cap U_{j} \cap U_{k}$, the diagram
\begin{equation}\label{gluing_thm_eq}
\xymatrix{
\varphi_{ij} \circ \varphi_{jk} \circ \varphi_{kl} \ar[d]_{\mu_{ijk}}
\ar[r]^-{\mu_{jkl}} & \varphi_{ij} \circ \varphi_{jl} \ar[d]^{\mu_{ijl}} \\
\varphi_{ik} \circ \varphi_{kl} \ar[r]^{\mu_{ikl}} & \varphi_{il}
}
\end{equation}
commutes,}
\end{enumerate}
there exists a stack $\S$ on $M$, equivalences of
stacks $\varphi_{i} \maps \S \vert_{U_{i}} \iso \S_{i}$, and fibered
isomorphisms $\eta_{ij} \maps \varphi_{ij} \iso \varphi_{i} \circ
\varphi_{j}^{-1}$ satisfying 
\begin{equation}\label{gluing_thm_eq2}
\xymatrix{
\varphi_{ij} \circ \varphi_{jk} \ar[d]_{\mu_{ijk}} \ar[r]^-{\eta_{jk}}& \varphi_{ij} \circ
(\varphi_{j} \circ \varphi^{-1}_{k}) \ar[d]^{\eta_{ij}} \\
\varphi_{ik} \ar[r]^{\eta_{ik}} & \varphi_{i} \circ \varphi^{-1}_{k}   
}
\end{equation}
The data $(\S,\varphi_{i},\eta_{ij})$ are unique up
to equivalence of stacks. Moreover,  this equivalence is unique up to
unique fibered isomorphism.
\end{theorem}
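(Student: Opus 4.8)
The plan is to construct $\S$ explicitly as a \emph{categorified descent category}, in direct analogy with the way a sheaf is glued from compatible local sections, but with all equalities between restriction functors replaced by the coherent isomorphisms $\mu_{ijk}$. Concretely, for an open set $V \ss M$ I would define the category $\S(V)$ as follows. An object is a family $\bigl(x_{i},\theta_{ij}\bigr)$ consisting of an object $x_{i} \in \S_{i}(V \cap U_{i})$ for each $i$, together with isomorphisms $\theta_{ij} \maps \varphi_{ij}(x_{j}) \iso x_{i}$ in $\S_{i}(V \cap U_{ij})$, subject to the cocycle condition
\[
\theta_{ik} \circ (\mu_{ijk})_{x_{k}} = \theta_{ij} \circ \varphi_{ij}(\theta_{jk})
\]
on each $V \cap U_{ijk}$; a morphism $(x_{i},\theta_{ij}) \to (x'_{i},\theta'_{ij})$ is a family of morphisms $f_{i} \maps x_{i} \to x'_{i}$ in $\S_{i}(V\cap U_{i})$ with $\theta'_{ij}\circ\varphi_{ij}(f_{j}) = f_{i}\circ\theta_{ij}$. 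The restriction functors and the coherence isomorphisms $t$ for $\S$ are induced componentwise from those of the $\S_{i}$, so that $\S$ is a fibered category over $M$ (Def.\ \ref{fibered_cat_def}).

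With $\S$ in hand, the first verification is that $\S$ is a stack in the sense of Def.\ \ref{stack_def}. That it is a prestack is immediate: a morphism in $\S(V)$ is a family $(f_{i})$ whose components satisfy a \emph{local} compatibility with the $\theta$'s, so since each $\sh{\Hom}_{\S_{i}}$ is a sheaf the presheaf $\sh{\Hom}_{\S}(x,y)$ inherits the sheaf property. For the gluing (stack) axiom, given descent data for $\S$ over a cover of $V$, I would glue each family of objects $\{x_{i}^{a}\}$ into a single $x_{i} \in \S_{i}(V\cap U_{i})$ using that $\S_{i}$ is a stack, then glue the $\theta_{ij}^{a}$ into $\theta_{ij}$ using that $\S_{i}$ is a prestack (isomorphisms form a sheaf); the cocycle condition for the glued $\theta_{ij}$ holds because it holds on each patch and morphisms form a sheaf, and uniqueness up to isomorphism is inherited from the uniqueness in each $\S_{i}$. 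This step, where the stackiness of each $\S_{i}$ is genuinely used, is the technical core.

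Next I would produce the equivalences and fibered isomorphisms. For $V \ss U_{i}$ one has $V \cap U_{j} = V \cap U_{ij}$, so projection onto the $i$-th component, $\varphi_{i}(x_{j},\theta_{jk}) = x_{i}$, is a well-defined functor $\S(V) \to \S_{i}(V)$. I claim each such functor is an equivalence of categories, whence $\varphi_{i} \maps \S\vert_{U_{i}} \iso \S_{i}$ is an equivalence of stacks (Def.\ \ref{equivalence_def}). Essential surjectivity: given $y \in \S_{i}(V)$, set $x_{i}=y$ and $x_{j}=\varphi_{ij}^{-1}(y\vert_{V\cap U_{ij}})$, with the $\theta$'s assembled from the counits of the $\varphi_{ij}$ and from the $\mu_{ijk}$; the cocycle condition for these $\theta$'s is exactly what (\ref{gluing_thm_eq}) guarantees. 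Fullness and faithfulness follow because the relation $\theta'_{ij}\circ\varphi_{ij}(f_{j})=f_{i}\circ\theta_{ij}$ determines every $f_{j}$ from $f_{i}$ (the $\varphi_{ij}$ being fully faithful) and, conversely, lets one extend any $f_{i}$ to a morphism of descent objects. The fibered isomorphisms $\eta_{ij}\maps \varphi_{ij}\iso \varphi_{i}\circ\varphi_{j}^{-1}$ are then supplied by the structure maps $\theta_{ij}$, and the compatibility square (\ref{gluing_thm_eq2}) reduces once more to the $\theta$-cocycle identity.

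Finally, uniqueness is itself an instance of descent, one categorical level up. Given another solution $(\S',\varphi_{i}',\eta_{ij}')$, the local equivalences $\varphi_{i}'^{-1}\circ\varphi_{i}\maps \S\vert_{U_{i}}\iso\S'\vert_{U_{i}}$ are intertwined over double overlaps by canonical fibered isomorphisms built from $\eta_{ij},\eta'_{ij}$ and $\mu_{ijk}$; these satisfy a cocycle condition on triple overlaps, so they glue --- by the \emph{morphism} version of the same gluing argument --- to a global equivalence $\S\iso\S'$, and the gluing data for a fibered transformation between two such equivalences glues to show this equivalence is unique up to unique fibered isomorphism. I expect the main obstacle throughout to be purely the coherence bookkeeping: checking at each stage that the single cocycle hypothesis (\ref{gluing_thm_eq}) on the $\mu_{ijk}$ propagates to the well-definedness of the reconstructed $\theta$-cocycles and morphism-cocycles. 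Since this is a standard (if lengthy) $2$-descent argument for the $2$-stack of stacks, in practice I would cite the treatment in Schapira \cite{Schapira:2006} and present the construction above as its conceptual skeleton.
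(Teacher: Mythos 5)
Your construction is correct and matches the paper's approach exactly: the thesis gives no proof of Theorem \ref{gluing_thm}, importing it from Schapira \cite{Schapira:2006}, and the cited proof is precisely your $2$-descent argument --- define $\S(V)$ as families $(x_{i},\theta_{ij})$ with the $\mu_{ijk}$-twisted cocycle condition, check the prestack and gluing axioms componentwise, obtain the $\varphi_{i}$ as projections and the $\eta_{ij}$ from the $\theta_{ij}$, and handle uniqueness by gluing equivalences one categorical level up. The only points left implicit in your sketch --- fixing quasi-inverses of the $\varphi_{ij}$ and normalizing repeated-index data such as $\varphi_{ii}$ via $\mu_{iii}$ --- are exactly the coherence bookkeeping you already flag, and they do not affect the argument.
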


\section*{Constructing the stack $\HVB^{g}$}
Recall that $\HVB$ is the stack on $M$ which assigns to each open set
$V$, the category of Hermitian vector bundles on $V$. Given an
inclusion $V \to U$, the corresponding functor $\HVB(U) \to \HVB(V)$
is just the pull-back of bundles. The
natural isomorphisms $(ij)^{\ast}\simeq j^{\ast}i^{\ast}$ described in
Def.\ \ref{fibered_cat_def} of fibered catgory are given by the
identity.

Let us now construct the stack $\HVB^{g}$ described in the statement
of Prop.\ \ref{2-line_stack_prop}. Let $g \in C^{2}(\cU,\sh{\U(1)})$
be a 2-cocycle defined on an open cover $\cU=\{U_{i} \}$ of
$M$. For each $i$, let
\[
\HVB_{i} = \HVB \vert_{U_{i}}
\]
be the stack of Hermitian bundles on $M$ restricted to the open set
$U_{i}$. By definition of restriction, we have an equality of stacks
\[
\HVB_{j} \vert_{U_{ij}} = \HVB_{i} \vert_{U_{ij}}
\]
for each $i$ and $j$, and therefore, an identity functor
\[
\HVB_{j} \vert_{U_{ij}} \xto{\varphi_{ij}=\id} \HVB_{i} \vert_{U_{ij}}.
\]
For any open subset $V$ of $U_{ijk}$, we define
a natural transformation between identity functors
\[
\id=\varphi_{ij _{V}} \circ \varphi_{jk_{V}} \xto{\mu_{ijk_{V}}} \varphi_{ik_{V}}=\id
\]
which sends a bundle $E \in \HVB(V)$ to the automorphism
\[
E \xto{g_{ijk} \vert_{V} \cdot} E.
\]
Here, $g_{ijk} \vert_{V} \cdot$ corresponds to multiplying sections of
$E$ by $g_{ijk} \vert_{V} \maps V \cap U_{i} \cap U_{j} \cap U_{k} \to \U(1)$.
It is easy to see that this gives a fibered isomorphism
\[
\varphi_{ij} \circ \varphi_{jk} \xto{\mu_{ijk}} \varphi_{ik}.
\]
The fact that $g$ satisfies the cocycle condition on each $U_{ijkl}$
implies $\mu_{ijk}$ satisfies Eq.\ \ref{gluing_thm_eq}.

Hence, it follows from Theorem \ref{gluing_thm}
that there exists a
stack $\HVB^{g}$ on $M$ with equivalences of stacks
\[
\vphi_{i} \maps \HVB^{g} \vert_{U_{i}} \iso \HVB_{i}=\HVB \vert_{U_{i}},
\]
and fibered isomorphisms
\[
\eta_{ij} \maps \vphi_{ij}=\id \iso \vphi_{i} \circ \vphi^{-1}_{j}
\]
satisfying Eq.\ \ref{gluing_thm_eq2}.

\section*{Global sections of $\HVB^{g}$ as twisted bundles}
Now we prove Prop.\ \ref{2-line_stack_prop} by showing that the
category $\HVB^{g}(M)$ of global sections of the stack $\HVB^{g}$ is equivalent to
the category $\mathsf{C}$ of $g$-twisted Hermitian vector bundles over $M$. Since $\HVB^{g}$ is a
stack, Prop.\ \ref{descent_prop} implies $\HVB^{g}(M)$ is equivalent to the
category of descent data $\Des(\HVB^{g},\cU)$, where $\cU$ is the open
cover used in defining the cocycle $g$. Hence, it is sufficent to show
that $\Des(\HVB^{g},\cU)$ is equivalent to $\mathsf{C}$.

We build a functor $\Des(\HVB^{g},\cU) \to \mathsf{C}$ in the following way.
Let $(x_{i},\psi_{ij})$ be an object in the category of descent data.
We use the stack morphisms $\vphi \maps \HVB^{g} \vert_{U_{i}} \iso \HVB_{i}$
to send the objects $x_{i} \in \HVB^{g}(U_{i})$ to Hermitian
vector bundles
\[
E_{i} = \vphi_{i}(x_{i}) \in \HVB(U_{i}).
\]
The fibered isomorphisms $\eta_{ij} \maps \id \iso \vphi_{i} \circ
\vphi^{-1}_j$ assign an isomorphism in $\HVB(U_{ij})$ to every object
in $\HVB(U_{ij})$. Given the objects $\vphi_{j}(x_{j}),\vphi_{j}(x_{i}) \in \HVB(U_{j})$, let the
corresponding isomorphisms be denoted
\begin{align*}
E_{j}=\vphi_{j}(x_{j}) &\xto{\eta_{ij}(x_{j})} \vphi_{i}(x_{j})\\
\vphi_{j}(x_{i}) &\xto{\eta_{ij}(x_{i})} \vphi_{i}(x_{i})=E_{i}.
\end{align*}
We have suppressed the restrictions to keep the notation under
control. This will not cause any problems, since the morphisms and
fibered transformations we are considering commute with the
restriction functors ``on the nose''.
We define isomorphisms
\[
\phi_{ij} \maps E_{j} \iso E_{i}, \quad \text{in $\HVB(U_{ij})$},
\]
by using the descent data $\psi_{ij} \maps x_{j} \vert_{U_{ij}} \iso x_{i} \vert_{U_{ij}}$,
and the commutative diagram
\[
\xymatrix{
E_{j} \ar[d]_{\vphi_{j}(\psi_{ij})} \ar @{-->}[drr]^{\phi_{ij}}
\ar[rr]^{\eta_{ij}(x_{j})} & &\vphi_{i}(x_{j}) \ar[d]^{\vphi_{i}(\psi_{ij})}\\
\vphi_{j}(x_{i}) \ar[rr]^{\eta_{ij}(x_{i})} && E_{i}
}
\]
in $\HVB(U_{ij})$, which is given by the naturality of $\eta_{ij}$.

We claim the isomorphisms of bundles $\phi_{ij}$ satisfy
\[
\phi^{-1}_{ik}\circ \phi_{ij}\circ \phi_{jk}= g_{i j k} \cdot
\]
on $U_{ijk}$. To show this, we write out 
convenient expressions for $\phi_{ij}$, $\phi_{jk}$, and $\phi_{ik}$:
\begin{align*}
\phi_{ij} &= \eta_{ij}(x_{i})\vphi_{j}(\psi_{ij})\\
\phi_{jk} &= \vphi_{j}(\psi_{jk}) \eta_{jk}(x_{k})\\
\phi_{ik} &=\vphi_{i}(\psi_{ik})\eta_{ik}(x_{k}).
\end{align*}
We then consider the following commutative diagram of bundle isomorphisms:
\[
\xymatrix{
E_{k} \ar[d]_{\mu_{ijk}} \ar[rr]^{\eta_{jk}(x_{k})} &&
\vphi_{j}(x_{k}) \ar[d]^{\eta_{ij}} \ar[rr]^{\vphi_{j}(\psi_{ij})} &&
\vphi_{j}(x_{i}) \ar[d]^{\eta_{ij}(x_{i})} \\
E_{k} \ar[rr]^{\eta_{ik}(x_{k})} && \vphi_{i}(x_{k})
\ar[rr]^{\vphi_{i}(\psi_{ij})} && E_{i}.
}
\]
The first square on the left-hand side follows from the fact that
$\eta_{ij}$ satisfies Eq.\ \ref{gluing_thm_eq2}, while the second square
follows from naturality. The commutativity of the diagram, combined
with the equality $\vphi_{j}(\psi_{ij} \circ \psi_{jk}) = \vphi_{j}(\psi_{ik})$ given by
Def.\ \ref{descent_def}, implies
\begin{align*}
\phi_{ij} \circ \phi_{jk} &=
\eta_{ij}(x_{i})\vphi_{j}(\psi_{ik})\eta_{jk}(x_{k})\\
&= \vphi_{i}(\psi_{ik})\eta_{ik}(x_{k}) \mu_{ijk}\\
&=\phi_{ik} g_{ijk}\cdot,
\end{align*}
where the last line follows by definition of $\phi_{ik}$ and
$\mu_{ijk}$.

Hence, we have a functor
\[
\Des(\HVB^{g},\cU) \xto{F} \mathsf{C},
\]
which sends an object $(x_{i},\psi_{ij})$ to the $g$-twisted bundle $(E_{i},\phi_{ij})$, as
defined above. On morphisms, $F$ sends
\[
(x_{i},\psi_{ij}) \xto{f} (x_{i}',\psi'_{ij})
\]
to
\[
(E_{i},\phi_{ij}) \xto{\vphi_{i}(f)} (E_{i}',\phi'_{ij}).
\]
The fact that $\vphi_{i}(f)$ satisfies the axioms for a twisted bundle
morphism follow from the naturality of $\eta_{ij}$. Finally, it is
easy to see that $F$ gives an equivalence of categories, since each
$\vphi_{i}$ is an equivalence of stacks. 

This completes the proof of Prop.\ \ref{2-line_stack_prop}.

\end{document}